\newcommand{\qvs}{\mathsf{QV}}
\mathchardef\mhyphen="2D
\DeclarePairedDelimiter\floor{\lfloor}{\rfloor}
\DeclarePairedDelimiter\ceil{\lceil}{\rceil}
\newcommand{\N}{\mathds{N}}
\newcommand{\F}{\mathds{F}}
\newcommand{\Z}{\mathbb{Z}}
\newcommand{\ske}{\mathsf{SKE}}
\newcommand{\rpke}{\mathsf{RPKE}}
\newcommand{\pke}{\mathsf{PKE}}
\newcommand{\dss}{\mathsf{DS}}
\newcommand{\hyb}{\mathsf{Hyb}}
\newcommand{\samp}{\leftarrow}
\newcommand{\reg}{\mathsf{R}}
\newcommand{\regi}[1]{\reg_{\mathsf{#1}}}
\newcommand{\io}{i\mathcal{O}}
\newtheorem{theorem}{Theorem}
 \newtheorem{lemma}{Lemma}
 \newtheorem{definition}{Definition}
\newcommand{\poly}{\mathsf{poly}}
\newcommand{\Ver}{\mathsf{Ver}}
\newcommand{\zo}{\{0, 1\}}
\newcommand{\negl}{\mathsf{negl}}
\newcommand{\subexp}{\mathsf{subexp}}
\newcommand{\keygen}{\mathsf{KeyGen}}
\newcommand{\cansbsp}{\mathsf{A}_{\mathsf{Can}}}
\newcommand{\adve}{\mathcal{A}}
\newcommand{\qmanon}[1]{\mathsf{PKQM-ANON}_{#1}}
\newcommand{\qmindresh}[1]{\mathsf{QM-FRESH-IND}_{#1}}
\newcommand{\qvsuntrac}[1]{\mathsf{QVS-PRIVACY}_{#1}}
\newcommand{\qmuntrac}[1]{\mathsf{PKQM-UNTRACE}_{#1}}
\newcommand{\bank}{\mathsf{Bank}}
\newcommand{\cfgame}[1]{\mathsf{PKQM-CF}_{#1}}
\newcommand{\qvsuniq}[1]{\mathsf{QVS-UNIQUE}_{#1}}
\newcommand{\sampfrm}[1]{\mathsf{SampleFullRank}(1^\lambda; #1)}
\newcommand{\tracegame}[1]{\mathsf{PKQM-TRACING}_{#1}}
\let\originalleft\left
\let\originalright\right
\renewcommand{\left}{\mathopen{}\mathclose\bgroup\originalleft}
\renewcommand{\right}{\aftergroup\egroup\originalright}
\title{Anonymous Public-Key Quantum Money and Quantum Voting}
\date{}
 \author{Alper \c{C}akan\thanks{Carnegie Mellon University. \texttt{acakan@andrew.cmu.edu}.} \and Vipul Goyal\thanks{NTT Research \& Carnegie Mellon University.  \texttt{vipul@vipulgoyal.org}} \and Takashi Yamakawa\thanks{NTT Social Informatics Laboratories, Tokyo, Japan. \texttt{takashi.yamakawa@ntt.com}.}}
\begin{document}
	
\maketitle
\begin{abstract}
     Quantum information allows us to build \emph{quantum money} schemes, where a bank can issue banknotes in the form of authenticatable quantum states that cannot be cloned or counterfeited: a user in possession of $k$ banknotes cannot produce $k + 1$ banknotes. Similar to paper banknotes, in existing quantum money schemes, a banknote consists of an unclonable quantum state and a classical serial number, signed by bank. Thus, they lack one of the most fundamental properties cryptographers look for in a currency scheme: privacy. In this work, we first further develop the formal definitions of privacy for quantum money schemes. Then, we construct the first public-key quantum money schemes that satisfy these security notions. Namely,
    \begin{itemize}
        \item Assuming existence of indistinguishability obfuscation and hardness of Learning with Errors, we construct a public-key quantum money scheme with anonymity against users \emph{and} traceability by authorities.
    \end{itemize}
    Since it is a policy choice whether authorities should be able to track banknotes or not, we also construct an \emph{untraceable} money scheme, where no one (not even the authorities) can track banknotes.
    \begin{itemize}
        \item Assuming existence of indistinguishability obfuscation and hardness of Learning with Errors, we construct a public-key quantum money scheme with untraceability.
    \end{itemize}
    Further, we show that the \emph{no-cloning principle}, a result of quantum mechanics, allows us to construct schemes, with security guarantees that are classically impossible, for a seemingly unrelated application: \emph{voting}!
    \begin{itemize}
        \item Assuming existence of indistinguishability obfuscation and hardness of Learning with Errors, we construct a universally verifiable quantum voting scheme with classical votes.
    \end{itemize}

    Finally, as a technical tool, we introduce the notion of \emph{publicly rerandomizable encryption with strong correctness}, where no adversary is able to produce a malicious ciphertext and a malicious random tape such that the ciphertext before and after rerandomization (with the malicious tape)  decrypts to different values! We believe this might be of independent interest.
    \begin{itemize}
        \item Assuming the (quantum) hardness of Learning with Errors, we construct a (post-quantum) classical \emph{publicly rerandomizable encryption scheme with strong correctness}.
    \end{itemize}
\end{abstract}

\newpage
\tableofcontents
\newpage

\section{Introduction}
The exotic nature of quantum mechanics allows us to build cryptographic primitives that were once unimaginable, or are outright impossible with classical information alone. For example, one of the most fundamental results of quantum mechanics, called \emph{the no-cloning principle}, shows that arbitrary unknown quantum states cannot be cloned. This simple principle, which provably has no counterpart in classical world since classical information can always be copied, allowed cryptographers to build applications that are impossible in a classical world. Starting with the seminal work of Wiesner \cite{Wie83} which introduced \emph{quantum money}, a plethora of work built exciting primitives based on the no-cloning principle. The examples include more realistic version of quantum money called \emph{public-key quantum money} \cite{AC12, Z19}, where any user can verify a banknote on their own without going to the central bank, or even more advanced notions such as copy-protecting software/functionalities where a user that is given some number of copies of a software cannot create more copies of it\footnote{Again, a classical software can always be copied, so this is impossible classically.} \cite{Aar09,CLLZ21,CG23}.

While quantum money is one of the most important notions in quantum cryptography, unfortunately existing schemes \cite{AC12,Z19} lack some of the most basic privacy and security guarantees that cryptographers look for in a currency scheme. In fact, in all known public-key quantum money schemes, a banknote consists of an unclonable quantum state and a classical serial number which is signed by the bank. However, this means that any party can track any banknote and learn when and where it was used simply by recording its serial number, meaning there is no privacy at all. For example, imagine a scenario where you pay a large sum of money to a merchant. If the merchant pools the data with other (adversarial) sources, it maybe feasible\footnote{In fact, even innocent amateur efforts that were built for fun, such as \href{https://www.wheresgeorge.com}{wheresgeorge.com}, have been able to track millions of paper dollar  banknotes all over the world.} for them to recover a history on many of these banknotes, potentially revealing your employer, clients and business partners, and even family members! Similarly an employer who pays you salary can potentially track where you travel to by collaborating with other sources and tracking, say, your spending at gas stations or restaurants. Indeed it is hard to imagine privacy in any aspect of your life, if all your spending can be traced. In fact, privacy and anonymity is the central focus in many cryptocurrency projects \cite{monero,Zcash}.

This state of affairs leaves open the following natural question:
\begin{quote}
\textit{Is it possible to construct a publicly verifiable quantum money scheme with privacy guarantees?}
\end{quote}
Let us emphasize that the above question is highly non-trivial for the following reason. To satisfy privacy guarantees, one needs to build a quantum money scheme where the users can create a new banknote so that the adversarial parties who have seen the banknote before will not be able to recognize it. However, such a task, while still challenging but doable for classical information (e.g. rerandomizable signatures \cite{camenisch2004signature}), seems to be at odds with the main point of quantum money: \emph{unclonability}!
While any user should be enabled to create new banknotes, somehow we also need to make sure that they cannot create $k+1$ valid banknotes if they started with $k$ banknotes because otherwise they can increase the amount of money they have at wish!

Going beyond privacy concerns, one useful property of the existing quantum money schemes is that since any party can track a banknote, in particular \emph{law enforcement} can also track a banknote. This brings us to our next natural question?
\begin{quote}
\textit{Is it possible to construct a publicly verifiable quantum money scheme with privacy guarantees against users, while still providing traceability for authorities?}
\end{quote}

We note that it is a political choice whether authorities should be able to track banknotes or not. Therefore, we also ask
\begin{quote}
\textit{Is it possible to construct a publicly verifiable quantum money scheme with privacy guarantees \emph{against everyone}, including the bank/authorities?}
\end{quote}

Finally, we observe an interesting connection between quantum money with privacy (which is impossible classically) and \emph{voting}. In both cases, we care about privacy and a security notion relating to \emph{non-increasibility}: in quantum money, users should not be able to increase their amount of money (in particular, given a single banknote, one should not be able to create two banknotes), and in voting, a single user should be able to vote once. Thus, we ask the following question:
\begin{quote}
\textit{Using quantum information, is it possible to construct voting schemes with advanced security guarantees that are not possible classically?}
\end{quote}

\subsection{Our Results}
In this work, we answer all of these open questions affirmatively.

We construct the first public-key quantum money scheme with anonymity (against users) and traceability (by the authorities). For anonymity, we require that a malicious user will not be able distinguish a banknote it has seen before from a freshly minted banknote. For tracing, we require that no malicious user can produce a banknote with a particular tag (which is hidden inside the serial numbers, except to authorities) without being given a banknote with that tag to begin with - meaning that law enforcement can perfectly track banknotes.
\begin{theorem}[Informal]
Assuming the existence of indistinguishability obfuscation ($\io$) and a \emph{publicly rerandomizable encryption scheme with strong correctness and publicly testable ciphertexts}, there exists a public-key quantum money scheme with anonymity and traceability.    
\end{theorem}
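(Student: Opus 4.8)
The plan is to build on a coset-state public-key quantum money scheme in the style of Zhandry~\cite{Z19} and Coladangelo--Liu--Liu--Zhandry~\cite{CLLZ21}, but to replace the usual signed serial number by a ciphertext of the assumed publicly rerandomizable, publicly testable encryption scheme that additionally hides a tracing tag. Concretely, a banknote will be a pair $(\ket{\psi},\sigma)$, where $\ket{\psi}=\ket{A_{s,s'}}:=|A|^{-1/2}\sum_{a\in A}(-1)^{\langle s',a\rangle}\ket{a+s}$ is a coset state for a uniformly random subspace $A\subseteq\F_2^{n}$ of dimension $n/2$ (with $n=n(\lambda)$ polynomial) and uniform shifts $s,s'$, and the serial number $\sigma$ is an encryption of the tuple $(A,s,s',\tau,\rho)$, with $\tau$ a fresh random tracing tag and $\rho$ a bank signature on $(A,\tau)$ (signatures are implied by the assumed primitives). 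The public verification key is $\io$ of a program $V$ that, on input $(\sigma,\mathrm{reg})$: runs the public ciphertext test on $\sigma$ and rejects on failure; decrypts $\sigma$ with the hardwired key to get $(A,s,s',\tau,\rho)$; verifies $\rho$ on $(A,\tau)$; and applies the canonical projective measurement onto $\ket{A_{s,s'}}$ to $\mathrm{reg}$ (membership in $A+s$, and in $A^{\perp}+s'$ after Hadamard), accepting iff every check passes. Minting samples everything fresh and encrypts; an authority traces by decrypting $\sigma$ and reading $\tau$. To anonymize a banknote, a user publicly rerandomizes $\sigma$; optionally the user also applies a uniformly random Pauli $X^{u}Z^{v}$ to $\ket{\psi}$ (producing $\ket{A_{s+u,s'+v}}$) and homomorphically updates the shift slots of $\sigma$ before rerandomizing, so that the refreshed quantum part is itself a fresh random coset state, while the tag and signature slots are never altered.

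\emph{Counterfeiting.} To show that $q$ banknotes cannot be turned into $q+1$ verifying ones, I would follow the $\io$-hybrid template of~\cite{CLLZ21}, adapted to our single obfuscated verifier: using $\io$ security together with semantic security of the encryption (the decryption key lives only inside $\io(V)$), pass to a hybrid in which $V$ accepts only coset states for the subspaces actually minted, and then invoke the monogamy-of-entanglement / direct-product hardness of coset states to bound the cloning probability. Signature unforgeability rules out attaching a valid serial to a never-minted subspace, and \emph{strong correctness} rules out the dual attack of maliciously rerandomizing an honest serial into one decrypting to a different, adversarially chosen subspace; public testability ensures any $\sigma^{*}$ accepted by $V$ is a well-formed ciphertext, so these guarantees really do apply to it.

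\emph{Anonymity and traceability.} Anonymity follows from a short hybrid over the serial: a rerandomization of $\sigma$ is indistinguishable (rerandomization security) from a fresh encryption of the same plaintext, and then (semantic security) from a fresh encryption of the plaintext of a freshly minted banknote, since the two plaintexts have the same format; an adversary without the decryption key cannot link the serials. For the quantum part, in the Pauli-refreshed variant the state is already an independent uniformly random coset state; otherwise one argues that one copy of $\ket{A_{s,s'}}$ for a secret random subspace is statistically indistinguishable from a fresh one to anyone who never learns $A$, combined with monogamy of entanglement to argue the adversary keeps no useful side information after returning its banknote. For traceability, suppose the adversary received banknotes with tags $\tau_1,\dots,\tau_q$ and outputs a verifying $(\ket{\psi^{*}},\sigma^{*})$ whose decrypted tag $\tau^{*}$ avoids all $\tau_i$: since $V$ accepts, $\sigma^{*}$ is a well-formed ciphertext decrypting to $(A^{*},s^{*},s'^{*},\tau^{*},\rho^{*})$ with $\rho^{*}$ a valid signature on $(A^{*},\tau^{*})$, so by unforgeability $(A^{*},\tau^{*})$ is a pair the bank minted, say in banknote $j$; if $j$ was given to the adversary then $\tau^{*}=\tau_j$, a contradiction, and otherwise the adversary produced a valid coset state for $A_j$ without ever receiving one, contradicting the one-shot unclonability of coset states. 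The only remaining loophole---silently rewriting the tag on a serial the adversary does hold---is again closed by strong correctness.

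\emph{Main obstacle.} The crux is the tension between the two demands: anonymity forces the serial to be fully rerandomizable and to leak nothing that survives rerandomization, while counterfeiting- and tracing-resistance require that a user be unable to rewrite the subspace or tag encoded in a serial --- reconciling these is exactly the job of the new primitive, so the heaviest technical work is its construction (carried out separately from LWE). Within the money construction itself, the delicate points are (i) pushing the $\io$-hybrids through so that $V$ accepts only minted subspaces while remaining functionally equivalent --- here public testability is what makes $V$'s behavior well-defined on all inputs, and strong correctness is what keeps the modified verifiers equivalent --- and (ii) routing the anonymity reduction around the fact that $\io(V)$ contains the decryption key, which requires the standard puncturing-style manoeuvre of replacing the decryption-based check inside $V$ by an equivalent one that does not depend on the secret key on the inputs that matter, so that semantic and rerandomization security of the encryption can be invoked.
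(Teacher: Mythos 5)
Your high-level architecture (RPKE ciphertexts as serial numbers, an obfuscated verifier, subspace/coset states, tracing by decryption, strong correctness to pin down what a maliciously rerandomized serial decrypts to) matches the paper's, but two of your core design choices create genuine gaps that the paper's construction is specifically engineered to avoid.

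First, you hardwire the RPKE decryption key into the publicly distributed $\io(V)$, because your verifier must decrypt $\sigma$ to learn which coset state to project onto. This breaks the anonymity reduction: to argue that a rerandomized serial is indistinguishable from a fresh encryption you must invoke rerandomization/semantic security against an adversary who holds a program containing $sk$, and $\io$ gives no guarantee that $sk$ is hidden there. Your proposed fix --- a ``puncturing-style manoeuvre'' replacing the decryption check by a key-independent one ``on the inputs that matter'' --- does not go through, because for anonymity the inputs that matter are all exponentially many rerandomizations of all serials; there is no functionally equivalent verifier that recovers the right subspace without decrypting, precisely because you placed the subspace inside the plaintext. The paper sidesteps this entirely: the subspace is derived \emph{from the ciphertext string itself} via a PRF ($T = \mathsf{SampleFullRank}(1^\lambda; F(K, id))$), so the real verification program $\mathsf{PMem}$ contains only the PRF key and never the RPKE decryption key; $sk$ is the separate tracing key and enters obfuscated programs only in intermediate hybrids.

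Second, your mechanism for refreshing the quantum part does not follow from the assumed primitive. Making the Pauli twirl ``optional'' is fatal: as the paper observes, an adversary can coherently run the old verification on the returned state, which projects onto the old coset state and distinguishes it from a fresh banknote almost perfectly. Making it mandatory requires the user to homomorphically update the $(s,s')$ slots of the plaintext so the serial stays consistent --- but a publicly rerandomizable encryption scheme with strong correctness guarantees exactly that rerandomization does \emph{not} change the plaintext, and gives you no slot-wise malleability; granting such malleability would also reopen the question of why the adversary cannot maul the subspace or tag slots (your signature covers $(A,\tau)$ but nothing can cover the shifts). Moreover, even a perfect Pauli refresh leaves the subspace $A$ fixed for the lifetime of the banknote. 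The paper's solution is different in kind: the obfuscated $\mathsf{PReRand}$ program rerandomizes the ciphertext and outputs the transition map $T_{id'}\cdot T_{id}^{-1}$, which the user applies coherently to re-encode the state into the genuinely new subspace associated with the new serial, with no plaintext malleability and no secret key in the public programs. A smaller point: an obfuscated classical program cannot ``apply the canonical projective measurement''; the verifier must be a classical membership predicate run coherently in two bases, as in the paper. Your counterfeiting and tracing sketches are otherwise in the right spirit (guess a collision index, bind identity under rerandomization via strong correctness, reduce to $1\to2$ unclonability / direct-product hardness), but they inherit the two problems above.
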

In fact, in our model, we separate \emph{authorities} into two completely independent entities: \emph{the bank} (who mints the banknotes) and \emph{the tracing authority}. Our scheme satisfies anonymity even against the bank, and satisfies unclonability (also called counterfeiting security) even against the tracing authority.

We note that previously, anonymous quantum money had been constructed only\footnote{Strictly speaking, \cite{bs21} calls their model \emph{almost-public}. They simply use an existing private-key quantum money scheme with pure states (more precisely, pseudorandom quantum states \cite{JLS18}), and an alleged banknote is compared to user's existing banknotes to verify it. This requires that the user always has more money than she can receive. From our point of view, this is not a public-key scheme.} in the private key setting \cite{mosca2010quantum,bs21,amr20}. However, aside from the impracticality of the private key setting, the anonymity notion in the private-key setting is also less meaningful: Once we are at the central bank to verify a banknote, we might as well ask them to replace our banknote with a fresh one. The previous solutions are based on \emph{Haar random states} or their computational version, \emph{pseudorandom states}; however, in the private-key setting, a trivial solution based on quantum fully homomorphic encryption also exists, where we can just encrypt banknotes and use the homomorphic encryption scheme to rerandomize them.

Going further, we construct the first public-key quantum money scheme with anonymity against all parties (including the bank and the authorities). We call this notion \emph{untraceability.}
\begin{theorem}[Informal]
Assuming the existence of $\io$, a \emph{publicly rerandomizable encryption scheme with strong correctness and publicly testable ciphertexts} and a non-interactive zero-knowledge (NIZK) argument system for $NP$, there exists a public-key quantum money scheme with \emph{untraceability} in the common \emph{random} string model.
\end{theorem}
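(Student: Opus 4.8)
The plan is to start from the anonymous-and-traceable construction of the previous theorem and \emph{remove the tracing trapdoor} while preserving anonymity and unclonability. Recall that in the traceable scheme a banknote is an unclonable quantum state $\ket{\$}$ together with a classical serial number that is a \emph{publicly rerandomizable ciphertext} (with strong correctness and publicly testable ciphertexts) encrypting a hidden tag, signed/obfuscated so that verification is public. The tracing authority's power came precisely from holding the decryption key of that rerandomizable encryption scheme. The first step is therefore to generate the public parameters of the rerandomizable encryption scheme \emph{without anyone retaining the secret key} — but since we are in the common \emph{random} string model, we cannot simply publish a structured public key. The natural fix is to sample the encryption public key via a NIZK-friendly procedure from the CRS, or better, to replace the "decrypt-the-tag" check inside the obfuscated verification circuit with a check that there \emph{exists} a valid opening/witness, enforced by a NIZK argument. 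Concretely: a banknote's serial number carries a NIZK proof (with respect to the common random string) that the rerandomizable ciphertext is well-formed and was obtained by honest rerandomization of a previously bank-signed ciphertext, and the obfuscated verifier checks the NIZK proof and the public ciphertext test instead of decrypting. Thus no global secret (no tracing key, no bank key that helps recognize notes) is needed at verification time.

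The key steps, in order, are: (1) Describe the modified $\mathsf{Gen}$/$\mathsf{QKeyGen}$/$\mathsf{Ver}$ where the verification circuit is an $\io$ obfuscation of a program that (a) runs the public ciphertext-validity test of the rerandomizable encryption scheme, (b) verifies a NIZK argument that the serial number is a legal rerandomization of a bank-authenticated ancestor, and (c) projects onto the unclonable state component exactly as in the traceable scheme; minting produces a fresh ciphertext, a signature, and a NIZK proof, and $\rerand$ rerandomizes the ciphertext and re-proves. (2) Prove \emph{correctness}: honest rerandomization yields ciphertexts that pass the public test (by strong correctness) and proofs that verify (by NIZK completeness), and the quantum state component is untouched, so verification accepts with probability negligibly close to $1$. (3) Prove \emph{unclonability against everyone}: reduce to the unclonability of the traceable scheme (or directly to the underlying monogamy-of-entanglement / coset-state hardness used there), using $\io$ security to swap the obfuscated verifier for one that hard-codes the challenge serial numbers, and using NIZK soundness to argue that any accepting forged serial number must be a genuine rerandomization of one of the given notes, so a successful cloner breaks the no-cloning property of the base scheme. (4) Prove \emph{untraceability}: the adversary sees a banknote, returns it (or returns a rerandomized banknote), and must distinguish it from a freshly minted one with a \emph{fresh, independent} serial number; here we use, in sequence, NIZK zero-knowledge (simulate the proof from the CRS without the rerandomization witness), semantic-security / rerandomizability of the encryption scheme (the rerandomized ciphertext is indistinguishable from a fresh encryption of the same or a dummy tag, even given the public key, because no one holds the secret key), and $\io$ to ensure the obfuscated verifier reveals nothing beyond its input–output behavior; a standard hybrid argument then shows the post-transaction banknote is indistinguishable from a fresh one, and this holds against the bank and the would-be tracing authority since neither possesses any trapdoor in this scheme.

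The main obstacle I expect is \textbf{making the common \emph{random} string work simultaneously with the obfuscated verification circuit and the rerandomizable encryption scheme}. A NIZK argument for $NP$ in the CRS model typically needs a common \emph{reference} (structured) string; obtaining it from a uniformly random string requires either a NIZK in the common random string model (available from LWE / correlation-intractable hashes, which is consistent with the stated assumptions) or a way to deterministically derive the structured parameters from randomness while still arguing security — and crucially we must ensure the derivation does \emph{not} leak a tracing trapdoor to anyone, including whoever would set up the string. Relatedly, one must check that the rerandomizable encryption scheme's public key can itself be placed in (or derived from) the common random string without a retained secret key, or else restructure so that the "ancestor authentication" is carried entirely by the bank's signature plus a NIZK of correct rerandomization, with the encryption layer only providing the hiding of the tag value — and then argue that hiding survives even though literally no party can decrypt. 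A secondary subtlety is quantitative: the hybrid in the untraceability proof must be interleaved carefully with the $\io$ hybrids so that the obfuscated circuit never needs to "know" the rerandomization witness, which is where strong correctness of the rerandomizable encryption is doing real work — it guarantees that swapping ciphertexts inside the obfuscated program cannot change the accept/reject behavior on the honest quantum state, so the $\io$ steps go through.
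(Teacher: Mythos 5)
Your proposal diverges from the paper in where the NIZK is attached, and this divergence opens two genuine gaps.

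First, and most importantly, you do not address the central difficulty of the untraceability game: the adversary \emph{is} the bank and outputs the verification key $vk$ (including the obfuscated membership and rerandomization programs) maliciously. Your hybrid for untraceability invokes ``$\io$ to ensure the obfuscated verifier reveals nothing beyond its input--output behavior,'' but $\io$ security is vacuous here --- the adversary built the program and knows its contents; the real threat is that a malicious $\mathsf{OPMem}$/$\mathsf{OPReRand}$ can be crafted so that the post-rerandomization \emph{quantum state} carries a signal depending on the old serial number. The paper's NIZK is a \emph{proof of projectiveness attached to the verification key}: it certifies that $\mathsf{OPMem}$ is an honest obfuscation of $\mathsf{PMem}_K$ for some $K$, so soundness forces verification to project onto a state determined solely by the current serial number; an (unbounded) hybrid challenger then exhaustively recovers $K$ and the argument reduces to statistical rerandomization of the ciphertext under the truly random public key embedded in the CRS. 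Your NIZK instead certifies ciphertext provenance and does not constrain the verification program at all, so your untraceability argument does not go through against a malicious bank.

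Second, your per-banknote NIZK that the serial number is ``a legal rerandomization of a bank-authenticated ancestor'' requires each rerandomizer to re-prove, hence to hold a witness containing the ancestor and the chain of rerandomization coins. Either every holder must pass this lineage to the next holder (destroying anonymity against prior holders outright), or you need controlled-malleable/updatable NIZKs, which is not among the stated assumptions. Relatedly, in your unclonability sketch, NIZK soundness only gives \emph{existence} of a lineage; the reduction must efficiently trace forged notes back to a common root, which the paper achieves by planting a real RPKE public key (with known secret key) into the CRS via the pseudorandom-public-keys property and decrypting --- a mechanism your outline gestures at in the ``obstacles'' paragraph but does not supply. You do correctly identify that the RPKE key must live in the common random string with no retained secret and that hiding must be argued without any decryptor; the paper's resolution is precisely the combination of pseudorandom public keys, simulatable testing keys, and statistical rerandomization security for truly random public keys.
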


Finally, we construct the \emph{first} voting scheme with universal verifiability (anyone in the world can verify any vote), privacy against all parties (including voting authority!) and uniqueness (i.e. no double-voting). We note that a voting scheme satisfying these three properties at the same time provably cannot exist in a classical voting scheme. Thus, the voting tokens of our scheme are quantum, but a cast vote is classical. 
\begin{theorem}[Informal]
    Assuming the existence of $\io$, a \emph{publicly rerandomizable encryption scheme with strong correctness and publicly testable ciphertexts} and a non-interactive zero-knowledge (NIZK) argument system for $NP$, there exists a quantum voting scheme with \emph{universal verifiability} and classical votes, in the common \emph{random} string model.
\end{theorem}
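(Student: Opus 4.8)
The plan is to build the voting scheme essentially on top of the untraceable public-key quantum money scheme of the previous theorem, reinterpreting a banknote as a single-use \emph{voting token} and a (classical) vote for candidate $c$ as the publicly verifiable transcript produced when a voter ``spends'' her token toward $c$ instead of toward another user. Concretely, I would recall that in that money scheme the common random string fixes an $\io$-obfuscated verification program, the NIZK common random string, and the public parameters of the publicly rerandomizable encryption scheme; a banknote is a coset-state-type quantum register together with a classical serial number that encrypts the coset description, and public verification runs the obfuscated coset-membership check on the register relative to the coset decrypted \emph{inside} the obfuscation. Spending rerandomizes the serial with the public rerandomization algorithm and attaches a NIZK proof that the new serial is a valid rerandomization of a serial that previously passed verification; strong correctness guarantees the rerandomized serial still decrypts to the same coset even against a malicious random tape, and publicly testable ciphertexts let a verifier reject malformed serials with no secret key.

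Then I would define the voting scheme: $\mathsf{Setup}$ outputs that common random string together with public lists of candidate identifiers and eligible-voter identifiers; the voting authority issues to each eligible voter one freshly minted token, playing exactly the role of the bank (and, as there, untrusted for privacy). To vote for $c$, a voter rerandomizes her token's serial to a fresh $s'$, derives a challenge basis from $(c,s')$, measures the coset register to obtain a classical answer $a$, and publishes the classical ballot $(s', c, a, \pi)$, where $\pi$ is a NIZK proof that $s'$ is a valid rerandomization of some legitimately issued serial and that $a$ is the honest answer for basis $(c,s')$ relative to the coset encrypted in $s'$. The deterministic public tally discards ballots with a failing proof or a failing (publicly testable) serial, discards duplicates, keeps only one ballot per distinct serial, and counts. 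Universal verifiability is then immediate: given only the common random string, the public issuance transcript, and the bulletin board, anyone re-runs this procedure, with every step made key-free by the obfuscated membership check and the publicly testable ciphertext property.

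For \emph{uniqueness} I would reduce to unclonability of the money scheme in the monogamy-of-entanglement form already established (the $\moe$-type lemma): a voter holding $k$ tokens who produces $k+1$ accepted ballots with pairwise distinct serials yields, by extracting coset answers through NIZK soundness, $k+1$ independently verifying half-banknotes, which either convert to $k+1$ valid banknotes or directly contradict the relevant direct-product/monogamy hardness; the per-serial deduplication in the tally kills the trivial ``submit the same ballot twice'' attack and forces any genuine double vote to use a fresh serial. For \emph{privacy against all parties, including the authority}, I would reduce to untraceability: the authority's view is exactly (issued serials) together with (posted ballots), and a hybrid argument replaces each ballot's rerandomized serial by a freshly encrypted one using rerandomization security of the encryption, simulates each $\pi$ by NIZK zero-knowledge, and swaps in alternative $\io$ programs, after which the ballots are independent of the token-voter assignment, so the posterior on who voted for whom is uniform subject to the published tally.

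I expect the main obstacle to be the step that makes the vote \emph{classical} while keeping uniqueness tight: the measurement collapsing the quantum token to the classical answer $a$ must be arranged so that one token yields essentially one usable answer — otherwise a voter generates many accepted ballots with distinct fresh serials — yet $a$ must remain unlinkable across rerandomizations and checkable by an $\io$ program without the coset trapdoor. Threading this requires the challenge basis to be bound to the fresh serial $s'$ in a way that is simultaneously (i) rerandomization-friendly, (ii) extractable through the NIZK so the reduction to the monogamy game goes through, and (iii) compatible with strong correctness, so that a malicious rerandomization tape cannot shift which coset the answer is checked against. Making these three requirements coexist, rather than the no-cloning argument itself, is where the real work lies.
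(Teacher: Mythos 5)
Your high-level architecture matches the paper's: the voting scheme is built on the untraceable money construction, privacy reduces to untraceability, and uniqueness is meant to reduce to a direct-product/monogamy statement for subspace states. But there is a genuine gap at exactly the point you flag in your last paragraph and then leave unresolved, and it is the crux of the theorem. With a single coset register per token, a basis derived deterministically from $(c,s')$, and deduplication \emph{per serial}, uniqueness fails outright: after measuring the register once (say in the computational basis) to obtain $v\in T_{s'_1}(\cansbsp)$, the adversary can call the public rerandomization program repeatedly to obtain fresh serials $s'_2,s'_3,\dots$ together with the transition maps $T_{s'_j}\cdot T_{s'_1}^{-1}$, and simply push the \emph{classical} string $v$ through these maps to get accepted answers for arbitrarily many pairwise-distinct serials. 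No cloning of quantum states is needed, so no reduction to direct-product hardness can go through. The paper's construction avoids this with three coupled choices you do not make: each token carries $2\lambda$ parallel subspace states; the measurement basis string is $c\|r$ where $r\in\zo^\lambda$ is a \emph{voter-chosen} random tag (not a function of the serial); and the tally deduplicates by the tag $r$, not by the serial. Then any two accepted ballots with distinct tags disagree in at least one basis position $i$, forcing the adversary to produce both a vector in $A_i$ and a vector in $A_i^\perp$ from a single copy of the $i$-th state, which is precisely direct product hardness (\cref{thm:dphard}).

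Two smaller issues. First, your per-ballot NIZK asserting that ``$a$ is the honest answer relative to the coset encrypted in $s'$'' cannot be produced by the voter, who holds neither the PRF key nor the decryption key and hence has no witness for that statement; it is also unnecessary, since the obfuscated membership program $\mathsf{OPMem}$ checks the answer publicly, which is how the paper achieves universal verifiability (the only NIZK in the paper is the one in the verification key proving the obfuscated program is well-formed/projective). Second, binding the basis to the serial rather than to a voter-chosen tag creates an adversarial degree of freedom (grinding over malicious rerandomization tapes to steer the basis) that strong correctness does not control, since strong correctness only pins down what the serial decrypts to. The voter-chosen tag sidesteps this entirely: $k$ tokens yield at most $k$ tags, full stop.
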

As discussed, such a scheme cannot exist classically. Even using quantum voting tokens, ours is the first to achieve these guarantees: there is no previous work achieving universal verifiability (or classical votes).

We note that while $\io$ is a strong assumption, all of our constructions above imply standard public-key quantum money, whose all existing constructions in the plain model also use $\io$\footnote{We note that there are some candidate constructions based on non-standard ad-hoc assumptions}. In fact, it is one of the key open questions in quantum cryptography to construct public-key quantum money without $\io$. Thus, unless a major breakthrough is achieved, our $\io$ assumption is necessary. 

To achieve our results, we also introduce the notion of \emph{publicly rerandomizable encryption with strong correctness}, where no adversary is able to produce a (malicious) ciphertext whose decryption result differs between before and after rerandomization (even with a maliciously chosen rerandomization randomness tape); and we construct such a scheme that is secure against quantum adversaries. Our schemes also satisfy public testability, where anyone can test whether a ciphertext is \emph{bad} (in the above sense, where rerandomization can lead to decrypting to a different message), such that there \emph{does not exist} a ciphertext that passes this test but decrypts to different values before/after rerandomization. We note that this notion is useful in constructions/proofs that use indistinguishability obfuscation, where merely computational hardness of finding bad ciphertexts would not be sufficient. Our results/constructions here are classical and we believe they might be of independent interest.

\begin{theorem}[informal]
    Assuming hardness of Learning with Errors (LWE) \cite{regev2009lattices}, there exits a \emph{publicly rerandomizable encryption with publicly testable ciphertexts and strong correctness}.
\end{theorem}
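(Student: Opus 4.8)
The plan is to realize the scheme from a standard LWE-based public-key encryption with a \emph{super-polynomial modulus-to-noise gap} (e.g.\ (dual-)Regev or GSW over $\Z_q$ with noise parameter $B$ and $q/B$ super-polynomial, the message placed in a component scaled by $q/p$ where $p$ is the message modulus), and to make rerandomization a homomorphic re-randomization: $\rerand(\pk,\mathsf{ct};r)$ outputs $\mathsf{ct} + \Enc(\pk,0;r)$, where the encryption of zero uses ``smudging'' noise of width $W$ with $B \ll W \ll q/p$. Two parts are then routine. IND-CPA security follows directly from LWE by the usual argument for the base scheme. The rerandomization/anonymity property --- a rerandomized ciphertext is statistically close to a fresh encryption of the same message --- follows from a smudging argument (plus leftover-hash/regularity for the first ciphertext component), and the same computation shows that over any $\poly(\lambda)$ rerandomizations (indeed any quasi-polynomially many, by enlarging $q$) the accumulated noise stays well below $q/(4p)$, so honest decryption is preserved throughout.

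The crux is strong correctness together with \emph{perfectly sound} public testing. The obstacle is that a malicious ciphertext may hide, relative to the real secret key, a noise term sitting just below the decryption boundary; a single adversarially chosen rerandomization tape then flips the decrypted message, and no key-free party can directly inspect the noise. Since we may not appeal to a trusted setup, a CRS-based NIZK of well-formedness is off the table, so the fix must be essentially structural. I would define the set of \emph{legal} ciphertexts to be exactly those in a canonical bounded-norm form --- e.g.\ ciphertexts written, via a gadget decomposition, as $\Enc(\pk,0;r') + (\text{encoding of } m)$ with $r'$ of explicitly bounded norm, possibly augmented by a homomorphic, statistically binding commitment that certifies a bounded representation of the underlying randomness/noise and can be updated under $\rerand$ --- and let the public test $\mathsf{Test}$ be the deterministic check of this form. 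One then argues: (i) every honestly generated and every rerandomized ciphertext is legal; (ii) $\mathsf{Test}$ is invariant under $\rerand$, so ``accept'' stays ``accept'' and ``reject'' stays ``reject''; and (iii) --- the heart of the matter --- every ciphertext accepted by $\mathsf{Test}$ has $\mathsf{noise}$ below, say, $q/(8p)$ with respect to \emph{any} admissible secret key, so that there does not exist an accepted ciphertext whose decryption can change under rerandomization with any tape. Because $\mathsf{Test}$ is a deterministic predicate on the ciphertext string and the binding of the auxiliary commitment is statistical, this guarantee is information-theoretic rather than merely computational, as the $\io$-based applications demand; meanwhile the hiding of the commitment need only be computational under LWE, so IND-CPA is untouched.

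The main obstacle I expect is exactly this last step: engineering the canonical ciphertext format (and any attached tag) so that it simultaneously (a) is closed under $\Enc$ and under $\rerand$, (b) is a deterministic public predicate, (c) retains enough hiding to keep IND-CPA under LWE alone, and yet (d) is \emph{unconditionally} sound --- every string it accepts genuinely has small noise with respect to the true decryption key. Threading the gadget decomposition, the noise-bounding certificate, and its homomorphic update so that all four properties hold at once, and doing so without any trusted setup, is where the real work lies; by contrast, IND-CPA and the statistical-rerandomization analysis are comparatively standard, and if a proof-carrying variant is used instead of the purely structural one, the analogous difficulty reappears as the need for a rerandomizable, perfectly sound argument of well-formedness whose parameters can live inside the honestly generated public key.
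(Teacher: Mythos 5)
Your starting point coincides with the paper's: Regev-style encryption with a subexponential modulus-to-noise ratio, rerandomization by adding a fresh encryption of zero, and IND-CPA plus statistical rerandomization via LWE and the leftover hash lemma. The gap is in the mechanism you propose for strong correctness and public testing, which is where essentially all of the work lies. First, you have no analogue of the paper's \emph{hidden shift}: the paper samples $L \samp [0, \lfloor q/16\rfloor]$ as part of the secret key and decrypts by testing $|c - s^\mathsf{T}\vec{a} - L| < \lfloor q/4\rfloor$, so that a ciphertext can change its decryption under rerandomization only if $c - s^\mathsf{T}\vec{a}$ lands in one of two windows of width $O(mB)$ around the unpredictable points $\lfloor q/4\rfloor + L$ and $q - \lfloor q/4\rfloor + L$. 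The unpredictability of $L$ is what makes bad ciphertexts hard to produce even for an adversary who knows $s$. Without some such randomization of the decryption threshold, plain strong correctness already fails --- as you yourself observe, one can plant $c - s^\mathsf{T}\vec{a}$ next to $q/4$ --- and your sketch supplies no replacement for this idea.

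Second, your public test is not yet a construction, and the purely ``structural'' version of it cannot work. Deciding whether a given $(\vec{a},c)$ has small noise relative to $s$, or whether it is expressible as $A r'$ plus an encoding of $m$ with $r'$ of bounded norm, is not a deterministic public predicate on the ciphertext string: it is a bounded-distance-decoding/ISIS-type problem, i.e., exactly what LWE asserts is infeasible without key material. Attaching a statistically binding commitment does not repair this: binding only makes the committed value unique; it does not publicly certify that the committed value is short or that it is consistent with the ciphertext, and for that you would need a perfectly sound, publicly verifiable, rerandomizable proof of well-formedness without setup --- which, as you concede in your last sentence, is the same difficulty relocated. You also impose a constraint the definition does not: in \cref{defn:rerandpketest} the testing key is produced by the honest $\mathsf{Setup}$ and may depend on the secret key, so publishing secret-dependent information in protected form is legitimate. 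That is precisely the paper's resolution: since the bad set consists of $O(mB) = \poly(\lambda)$ consecutive points around the subexponentially unpredictable target $\lfloor q/4\rfloor + L$, the testing key is a compute-and-compare obfuscation (itself from LWE) of the predicate $c - s^\mathsf{T}\vec{a} = \lfloor q/4\rfloor + L$, and $\mathsf{Test}$ evaluates it on the $O(mB)$ relevant shifts of its input. This yields the ``no accepted bad ciphertext exists'' guarantee with overwhelming probability over key generation --- which is what the $\io$-based applications require --- and CC-obfuscation security additionally gives the simulatable all-accept testing keys and pseudorandom encryption keys used in the untraceable schemes.
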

Thus, instantiating our constructions with our \emph{publicly rerandomizable encryption scheme with strong correctness}, and the NIZK argument system with the LWE-based (post-quantum) construction of Peikert-Shiehian \cite{peikert2019noninteractive} (which is in the common random string model), all of our quantum money and quantum voting schemes can be based on $\io$ and LWE.
\section{Technical Overview}
\subsection{Definitional Work}
We first review our model. In a quantum money scheme, we consider a bank that produces quantum banknotes that can be publicly verified, and the first security requirement is counterfeiting security (also called \emph{unclonability}). For counterfeiting security, we require that any efficient adversary that has the public verification key and $k$ banknotes produced by the bank, is not to able to produce $k + 1$ valid banknotes, for any $k$ (not a-priori bounded).

Prior work (\cite{mosca2010quantum,amr20,bs21}) has introduced the notion of anonymity for quantum money schemes (albeit in the privately verifiable money setting). In their anonymity security game (\cite{amr20,bs21}), an adversary observes some banknotes, and later he is (depending on a challenge bit) either given those banknotes in the same order or in a permuted order; and the security requirement is that he cannot tell which case it is. We note that one downfall of private-key setting (apart from impracticality) is that achieving anonymity is not as interesting: once at the central bank to verify a banknote, we might as well also ask the central bank to replace our banknote with a fresh one. 
We first move onto the publicly verifiable banknote setting (dubbed \emph{public-key quantum money} or PKQM for short), which is the ultimate goal in quantum money literature.  Further, we introduce a new security notion called \emph{fresh banknote indistinguishability security}. In this game, the adversary is given either the banknote it has seen before (in fact created itself in our model) or a freshly minted banknote. We require that the adversary cannot tell which case it is with probability better than $1/2 + \negl(\lambda)$. By a simple hybrid argument, one can see that this security notion implies the previous anonymity notions.

Going beyond anonymity, we also introduce the notion of \emph{traceability} for quantum money. While the notion of \emph{traceability} has a long history in cryptography, our work is the first to formally define it for quantum money. In this setting, we allow the bank/mint to include a \emph{tag} value in each banknote. In the security game, we require that an adversary that obtains some banknotes from the bank with various tag values is not able to produce a valid banknote with a new tag value. In fact, we require that even the number of banknotes with each tag value cannot be increased. Thus, our security notion implies that all that an adversary can do, while still outputting valid banknotes, is to permute the banknotes given to it, and drop some of them! This allows the authorities to correctly track banknotes. We will always consider traceability (by the authorities) alongside anonymity (against users), which makes the problem highly non-trivial.

Before moving onto our final security model, \emph{untraceability}, a couple of further remarks about our models is in order. First, we note that in our models, we consider two separate independent entities: the central bank/mint (who creates the banknotes) and a tracing authority. In real life, these entities can be two independent governmental organizations. We require that the bank can produce banknotes on its own (with no help from the tracing authority), while the tracing authority can trace banknotes on its own (with no help from the bank). In fact, we require that the anonymity holds against even the bank itself, who produced the banknotes in the first place! Further, we require that unclonability and tracing security also holds against the tracing authority.

Finally, we introduce the notion of \emph{untraceability} for the first time. In this setting, we require that there are no entities, including the bank or the government, that is able to track banknotes. Here, we require that fresh banknote security applies to even a malicious bank that is allowed to choose the banknotes and even the verification key maliciously. Due to our strong anonymity model, this security notion follows easily in the trusted setup model\footnote{We can imagine a trusted setup that honestly creates the quantum money scheme, but deliberately forgets the tracing key and only outputs the minting key secretly/directly to the central bank. Since our anonymity security applies even against the bank, the security follows.}. However, ad-hoc/application-based trusted structured setup models are highly undesirable. Thus, we consider \emph{untraceability} in the common \emph{random} string model, which is a much weaker and more realistic assumption. We refer the reader to \cref{sec:defn} for formal definitions.

\subsection{Anonymous and Traceable Construction}
\paragraph{\textbf{Background on Subspace State Quantum Money \cite{AC12,Z19}}} The starting point of our construction is the subspace-state based public-key quantum money scheme of Aaronson - Christiano \cite{AC12}, which originally used classical ideal oracles but was also proven secure using $\io$ by Zhandry \cite{Z19}. We note that, to date, subspace states (and their close relatives, such as subspace-coset states) are the only known public-key quantum money schemes with provable security based on standard assumptions.

A subspace-state is a state consisting of equal superposition over all elements of a subspace, that is, it is $\sum_{v \in A} \frac{1}{2^{n/4}}\ket{v} =\vcentcolon \ket{A}$, where $A$ is a linear subspace of the vector space $\F_2^n$. The subspace-state quantum money construction works as follows. The bank's secret key is simply a (post-quantum) classical signature scheme. To mint a banknote, the bank simply samples a random subspace $A$ of dimension $n/2$, and signs the obfuscated membership checking programs $\io(A), \io(A^\perp)$ for the subspace $A$ and its orthogonal complement $A^\perp$. Then, the banknote is the state $\ket{A}$, together with the progras and the signature on them. Here, we can consider the string $\io(A) || \io(A^\perp)$ as the serial number of the banknote. To verify an alleged banknote $(\ket{\psi}, \mathsf{sn} = \io(A) || \io(A^\perp), \mathsf{sig})$, one verifies the signature $\mathsf{sig}$ on the serial number $\mathsf{sn}$, then coherently runs the first program $\io(A)$ on the state $\ket{A}$, verifying that the output is $1$. After rewinding, we apply quantum Fourier transform (QFT) to the state $\ket{\psi}$, and this time coherently evaluate the second program $\io(A^\perp)$ to verify that it outputs $1$. \cite{AC12} shows that this primal/orthogonal basis (more accurately, called \emph{computational/Hadamard basis}) verification implements a projection onto $\ket{A}$, thus correctness follows. The counterfeiting security follows in two steps. First, they show that for a random subspace $A$ of dimension $n/2$; no adversary can produce $\ket{A}\otimes\ket{A}$ given $\ket{A}$ and the membership checking programs. In the general setting where any number of banknotes is in circulation, they show that the security of the bank's classical signature scheme is sufficient to reduce to the single banknote setting. After all, by signature security, any supposed banknote produced by the adversary will have to use one of the original serial numbers produced by the bank, thus the adversary cannot create its own banknotes by sampling subspace states: it has to try and clone one of the subspace-states produced by the bank, which is only produced as a single copy. Thus the reduction to $1\to2$ setting follows.

\paragraph{\textbf{Challenges for Anonymity}}
Observe that the subspace-based quantum money scheme is trivially trackable. Each valid banknote contains a serial number and the bank's signature on the serial number. Thus, much like paper banknotes, simply recording serial numbers is sufficient to track banknotes: whenever we receive and validate a banknote, we can simply search for the serial number in our database. 

An initial attempt to provide anonymity might be to draw from the classical literature, and for example to use \emph{re-randomizable signatures} \cite{camenisch2004signature}. We first note that, most of the existing constructions depend on number-theoretic assumptions and to the best of our knowledge, there are no known \emph{post-quantum} re-randomizable digital signature constructions in the plain model. Setting this issue aside, this initial attempt is still trivially broken, since while the signature is rerandomizable, the message (the serial number) will stay the same, which uniquely identifies the banknote. Going further, one might imagine a solution where the signature is rerandomized along with the message (serial number) inside, and we can imagine a scheme such that \emph{somehow} the subspace state can still be verified with this updated serial number, to preserve correctness of the scheme. While this would be challenging to achieve, it is still not sufficient. Observe that only information needed to perfectly recognize a banknote is its \emph{old} classical serial number. Imagine the following scenario. We (the adversary) have two banknotes $(\ket{A_1}, \mathsf{sn}_1, \mathsf{sig}_1), (\ket{A_2}, \mathsf{sn}_2, \mathsf{sg}_2)$. Later on, the challenger rerandomizes these banknotes, and gives us back one of the banknotes ($b \in \{1,2\})$ as $(\ket{\psi}, \mathsf{sn}', \mathsf{sig}')$. We can simply perform the computational basis/Hadarmard basis verification on $\ket{\psi}$ using the old serial number $\mathsf{sn}_1$ that we had recorded. Since this dual basis test implements a projection onto the subspace state $\ket{A_1}$, our test will accept $\ket{\psi}$ with probability $\braket{A_1}{\psi}$. In the case $b = 1$, this value equals $\braket{A_1}{A_1} = 1$. In the case $b = 2$, this value will be equal to $\braket{A_1}{A_2}$, which is exponentially small in expectation. Thus, using this simple test, we can recognize which banknote we got back almost perfectly! In fact, this test can performed \emph{coherently} by the Gentle Measurement Lemma \cite{aarlemma}, so that we do not even damage the banknote at all while doing the test!

Thus, we can see that the only way to provide anonymity is to actually change (rerandomize) the quantum state itself too! Here, a viable candidate could be using fully homomorphic encryption for quantum messages to rerandomize our quantum states, since homomorphism is known to be closely related to rerandomization in classical cryptography. However, this would completely forego the most  important property, public-verifiability, since the users cannot verify encrypted banknotes. Another idea could be to try and sign the quantum parts of the banknotes, and try to construct a signature and message rerandomizable signature scheme for quantum messages. However, it is known that signing quantum messages is impossible \cite{Alagic2021canyousignquantum}.

This brings us to the fundamental challenge in this task. We need to build a scheme where the users can create essentially a new quantum state, so that the adversarial parties who have seen the banknote before will not be able recognize it. However, at the same time we also need to make sure that these states are unclonable and that users cannot create more quantum states than what they already have! 

\paragraph{\textbf{Solution Step 1: Rerandomizing Quantum States}} First step of our solution is to build a new quantum money scheme (without anonymity), though the quantum state part of our scheme will still be essentially subspace states. In our scheme, the banknote will consist of a serial number $id$, and the quantum state will be a superposition over the elements of the subspace obtained by pseudorandomly applying a \emph{rotation-reflection-scaling} to the canonical $n/2$-dimensional subspace $\cansbsp = \mathsf{Span}(e_1, \dots, e_{n/2}) \subseteq \F_2^n$ (i.e. it is the subspace consisting of all vectors whose last $n/2$ components are $0$). More formally, the quantum part of the banknote will be the state $\ket{T\cdot \cansbsp} = \sum_{v \in \cansbsp} \frac{1}{2^{\lambda/4}} \ket{T(v)}$, where $T$ is the full rank linear mapping $\F_2^\lambda \to \F_2^\lambda$ sampled using the randomness $F(K, id)$ where $F$ is a pseudorandom function (PRF) and $id$ is the serial number of the banknote. The bank's secret key will simply be the PRF key $K$. Finally, the public key will be an obfuscated program $P$ that takes as input a vector $v \in \F_2^\lambda$, a bit $b \in \zo$ and a serial number $id$. The obfuscated program first recovers the hidden full rank linear map $T_{id}$ using the hardcoded PRF key $K$ and the input $id$. For $b = 0$, the program computes $w = T^{-1}(v)$, and verifies if its last $n/2$ entries are $0$ (that is, it checks $w \in \cansbsp$). For $b=1$, the program computes $w' = T^{\mathsf{T}}(v)$ and verifies if $w' \in \cansbsp^\perp$. We show that testing $P(\cdot, b=0, id)$ in superposition, then applying QFT and testing $P(\cdot, b=1, id)$ in superposition implements a projection onto the state $\ket{T_{id}\cdot \cansbsp}$, thus correctness is satisfied.

Now, we move onto adding anonymity to our construction. First observation is that the set of full rank linear mappings on $\F_2^n$ has a \emph{nice} structure (under composition/multiplication): it is the general linear group $\mathsf{GL}(n, \F_2)$. This means that we can re-randomize a (full-rank) map $T$ by simply multiplying with a fresh random (full-rank) map $T'$, due to group rerandomization property. However, we cannot allow the users to rerandomize using any $T'$, because the banknote would lose its consistency with the serial number-quantum state mapping induced by $F(K, \cdot)$, since the user will not be able to come up with the new serial number. In fact, in general, given $T, T', id$, it might not be possible to come up with a new serial number $id'$ such that the randomness $F(K, id')$ gives us the full rank map $T''$ satisfying $T'' = T' \cdot T$. To solve this, we include an obfuscated program $\mathsf{PReRand}$ as part of the public key that allows the user to renrandomize the banknote consistently. On input $id$, the program will first re-randomize $id$ using some $\mathsf{IdReRandomizeAlgorithm}$ to obtain $id'$, and then generate $T_{id'}$ using the randomness $F(K, id')$. Finally, it will output a canonical representation of the map $T^* = T'\cdot T^{-1}$, along with $id'$. Given $T^*$, the user can \emph{convert} its banknote $\ket{T\cdot \cansbsp}$ into $\ket{T' \cdot \cansbsp}$ by applying the mapping $T^*$ coherently/in-superposition. Note applying $T^*$ coherently is possible since (i) $T^*$ is a bijection and (ii) it is efficiently implementable in both directions $T^*, (T^*)^{-1}$ given the description $T^*$. Thus, we achieve a meaningful rerandomization algorithm while preserving correctness after rerandomization.


\paragraph{\textbf{Solution Step 2: Non-intersecting/Traceable Rerandomization Cones}}
Since the quantum part of our scheme are subspace states, to prove the unclonability of our scheme, we will need to reduce to the unclonability of subspace states, which are only $1\text{-}\mathsf{copy} \to 2\text{-}\mathsf{copy}$ unclonable.
Now, consider the \emph{chain} (or more accurately, the tree/cone) of rerandomizations for a serial number $id$ that consists of the initial serial number $id$, and then (each possible) rerandomization of $id$, and then rerandomizations of those values and so on. Let us denote this as $\mathsf{C}_{id}$. Now consider these cones for each of the $k$ initial banknotes that the adversary obtains, $\mathsf{C}_{id_1}, \dots, \mathsf{C}_{id_k}$. Now, observe that if any of these cones intersect at any point, say $\mathsf{C}_{id_i}$ and $\mathsf{C}_{id_j}$ at $id^*$, in a way that an efficient adversary can find, then the adversary can simply emulate the path (i.e. the malicious rerandomization random tape choices) that leads to this intersection on the banknotes $id_i, id_j$ to obtain \emph{two} copies of the exact same state $\ket{T_{id^*}\cdot \cansbsp}$! However, we need to reduce to the $1\text{-}\mathsf{copy}\to 2\text{-}\mathsf{copy}$ unclonability of our quantum states. Thus, we need to design a serial number re-randomization algorithm where the adversaries cannot find such intersections of the rerandomization cones. 

In fact going further, we observe that while \emph{non-intersecting} (to efficient adversaries) rerandomization cones is sufficient to prevent identical-copy attacks, it is not sufficient to actually give a reduction to $1 \to 2$ unclonability and prove security. Observe that to be able to achieve a reduction to $1 \to 2$ unclonability, the reduction algorithm will need to place the subspace state $\ket{A}$ it receives from its challenger at some index $\in [k]$, and once the adversary produces $k + 1$ banknotes, the reduction will need to somehow extract two identical copies of $\ket{A}$ using these states. This means the reduction itself actually needs to mount a partial tracing attack, so that it can track back the forged banknotes produced by the adversary to find two that are \emph{rooted} at the same initial subspace state, and undo the rerandomizations and convert these forged banknotes to two copies of $\ket{A}$.

To resolve this issue, we show that a \emph{rerandomizable encryption} scheme with \emph{strong rerandomization correctness} gives us exactly what we want. In our money scheme, the serial numbers will be rerandomizable public-key encryption (RPKE) ciphertexts (that encrypt \emph{tags}) and $\mathsf{IdReRandomizeAlgorithm}$ will simply be the rerandomization algorithm of the RPKE scheme (and the rest of the money scheme is as described above). By the rerandomization security of the RPKE scheme, we will be able to prove anonymity easily. Further, the notion of \emph{strong rerandomization correctness} gives us the \emph{tracable} non-intersecting cone property we needed to achieve unclonability. First, let us define \emph{strong rerandomization correctnes}, which is a notion we introduce for the first time and we believe it might be of independent interest. For this notion, we require that for any (malicious) ciphertext string $ct$ and (malicious) random tape $r$ chosen by an efficient adversary, we have $\mathsf{Dec}(sk, ct) = \mathsf{Dec}(sk, \mathsf{ReRand}(pk, ct; r))$. Applying this transitively, we can see that (efficient) cones of any two serial numbers will not intersect. Finally, the decryption algorithm of the RPKE scheme gives us exactly the traceability we wanted from our rerandomization cones to do our unclonability reduction! Thus, we are able to prove unclonability and anonymity security using a \emph{rerandomizable encryption} scheme with \emph{strong rerandomization correctness}. For technical reasons relating to use of $\io$, we actually require \emph{strong correctness with public testing} - see \cref{sec:techstrongrerand}.

\subsection{Untraceable Construction}
Our starting point is our anonymous and traceable quantum money scheme. The main challenging point in this setting is the fact that all parties (users and the bank) are mutually distrustful. The first issue is that, in the untraceability game, the adversary gets to choose all the values including the public key of the scheme (which includes $\mathsf{PReRand}$). Thus, the adversary can simply use the corresponding decryption key to decrypt the serial numbers and obtain the hidden tags it placed inside them, and hence track the banknotes. Further, even if we hypothetically trusted the adversary to not use the decryption key of the RPKE scheme to track banknotes, it can still create a malicious obfuscated program $\mathsf{PReRand}'$ so that the new banknote state we end up with after rerandomization includes a hidden signal that depends on the old serial number! Note that we cannot trust the user to rerandomize their serial numbers  themselves with no checks either, since that would break down our unclonability argument.

\paragraph{\textbf{Solution Step 1: Proofs of Projectiveness}} Our solution is to include, as part of the verification key, a non-interactive zero-knowledge (NIZK) argument showing that the money scheme (with its maliciously chosen keys) is still \emph{projective}. More formally, the NIZK argument will prove that there exists a (hidden) serial number - quantum state mapping $M$ such that computational/Hadamard basis verification using $\mathsf{PMem}(\cdot, b=0, id)/\mathsf{PMem}(\cdot, b=1, id)$ implements a projection onto the state $\ket{M(id)}$. Thus, any user (as part of banknote verification) can rerandomize a banknote and test again using the new $id'$, and once the test passes, the user is assured that he has the state $\ket{M(id')}$. If the test does not pass, either the initial banknote was not valid at all (i.e. a counterfeiting attack) or it was designed to be tracked (which also makes it invalid)! With the proof of projective, we are now assured that if the adversary really does not have the decryption key for RPKE, then $\ket{M(id')}$ is indistinguishable from $\ket{M(id^*)}$ where $id^*$ is a fresh ciphertext, since the indistinguishability would follow from indistinguishability of $id'$ versus $id^*$, so it still applies even though the adversary knows $M$. Finally, the zero knowledge property is needed since revealing $M$ to the users would allow them to clone states trivially. We note that post-quantum NIZKs based on LWE can be constructed in the common random string model \cite{peikert2019noninteractive}.

\paragraph{\textbf{Solution Step 2: Truly Random Public Keys}} 
However, there is still a major issue: obviously we cannot trust the adversary to not create a decryption key that corresponds to the public key. To solve this, we will simply use a truly random string as the public rerandomization key. However, we seem to be at a dead-end: we need to prove, at the same time, that no one can track banknotes, while the unclonability proof (as we discussed) needs to mount a partial tracking attack. Our solution is to construct a rerandomizable encryption scheme with \emph{strong rerandomization correctness with testing} that also satisfies (i) \emph{pseudorandom public (encryption) keys}, (ii) \emph{simulatable testing keys} (\cref{sec:techstrongrerand}) and (iii) \emph{statistical rerandomization security for truly uniformly random public keys}. This allows us to do the following: In the actual construction, the users use (part of) the common random string (CRS) as the public key of the RPKE scheme. For the unclonability proof, we rely on the pseudorandom public keys property of the RPKE scheme: The reduction simulates the cloning adversary while planting an actual public key of our RPKE scheme (which the reduction samples together with the corresponding decryption key $sk$) as the last part of the CRS, instead of a truly random string. Due to the pseudorandom public keys property of RPKE, this means that the adversary still succeeds, while the reduction can also mount the partial tracking attack using $sk$ as discussed before. Finally, we use the property (ii) because we cannot embed the ciphertext testing key inside the common uniformly random string, since such a key obviously cannot be pseudorandom. Thus, in the actual construction we use the simulated ciphertext testing key. 

For anonymity, we will be able to rely on property (iii) since we are rerandomizing the ciphertexts using a truly random string. More precisely, as discussed above, we first rely on the soundness of the NIZK argument system to show that if the adversary succeeds, there exists a mapping $M$ as defined above. However, for the challenger to start outputting $\ket{M(id')}$ directly instead of running adversary's malicious rerandomization program, we need to recover $M$. At this point, we switch to an unbounded challenger that does recover $M$ by trying all possibilities. The final step is to replace $\ket{M(id')}$ with $\ket{M(id^*)}$ relying on the rerandomization security of RPKE. Fortunately, by property (iii), we can indeed do so even with an unbounded challenger/reduction. Note that in the RPKE we end up constructing (\cref{sec:rpke}), we will only have computational rerandomization for honest public keys (which comes with a secret key), but we do have statistical rerandomization when a truly random string is used as the public key.

\subsection{Quantum Voting}
In a voting scheme, we would like to allow users to vote for various candidates, while ensuring that their choices are private (\emph{privacy)}, and also ensuring that they cannot vote twice (\emph{uniqueness}).  Finally, we would also like to ensure that the final tallies are calculated correctly. The best possible notion for the latter is the so-called \emph{universal verifiability} setting, where voters post their votes on a public bulletin board, and anyone in the world is able to verify all the posted votes (and hence the final tallies). We note that universal verifiability cannot be achieved by voting schemes such as the ones based on homomorphic encryption, since the votes will need to be decrypted by the owner of the secret key, and revealing this secret key would trivially defeat privacy of votes. In fact, the combination of all three of these properties, (i) universal verifiability, (ii) privacy, (iii) uniqueness, cannot be achieved by any fully classical voting scheme. If the casting of a particular voting token is indistinguishable (to everyone) from casting of a fresh voting token, then an adversary can simply vote twice as follows. It creates copies of its initial voting token, rerandomizes these copies, and votes twice (or more) using all these tokens. The bulletin board cannot reject the second vote, since due to privacy, from its viewpoint, these votes are indistinguishable from someone else's (who has not voted yet) vote. Note that the impossibility persists even in trusted setup and ideal oracle models - it only (and crucially) relies on cloning the voting tokens (i.e. rewinding an adversary). However, this means that the attack does not work if the voting tokens are (unclonable) quantum states. This brings us to our construction. Our construction will be similar to our untraceable quantum money construction. A voting token will be the tuple of states $\ket{T_{ct, 1} \cdot \cansbsp} \otimes \dots \otimes \ket{T_{ct, 2\cdot \lambda} \cdot \cansbsp}$ (together with the serial number $ct$) where $(T_{ct,1}, \dots, T_{ct,2\cdot \lambda})$ is a tuple of full rank linear maps sampled using the randomness $F(K, ct).$ After rerandomizing these states (same as in our money scheme, generalized to $2\cdot \lambda$ states), to vote for a particular candidate $c \in [\lambda]$, the user samples a random tag\footnote{A serial number for the cast votes that the voter creates} $r \samp \zo^{\lambda}$ and measures the $i$-th state in the computational basis if the $i$-th bit $(c || r)_i = 0$, and in the Hadamard basis (i.e. QFT-and-measure) if $(c || r)_i = 1$. The measurement results $v_1, \dots, v_{2\cdot \lambda}$ along with the rerandomized serial number $ct'$ and $c, r$ will be the user's (classical) vote for the candidate $c$. Anyone can verify a vote by simply checking $\mathsf{PMem}(vk, c || r, v_1 || \dots || v_{2\cdot \lambda}) = 1$. Thus, our scheme satisfies universal verifiability.

Our scheme also satisfies privacy, against all parties, including the government/voting authority that creates the voting tokens (and it is allowed to create the tokens and keys maliciously). The privacy proof follows in the same way as the untraceablity proof of our quantum money scheme. Finally, uniqueness. For this property, we rely on the so-called \emph{direct product hardness} of subspace states \cite{BenDavid2023quantumtokens}, which says that no efficient adversary, given $\ket{A}$, can produce vectors $v, w$ such that $v \in A$ (computational basis) and $w \in A^\perp$ (Hadamard basis). Through the same arguments as in the unclonability proof of our untraceable money scheme (where reduced unclonability of our scheme to $1 \to 2$ unclonability of subspace states), we will be able to reduce the uniqueness security of our scheme to the direct product hardness of subspace states. In particular, we will show that no efficient adversary, given $k$ voting tokens (modeling a group of malicious voters cooperating) can output $k+1$ valid cast votes, each with different tags $r$. We note that this in particular (with $k = 0$) means that non-eligible parties (i.e. parties who did not receive a token) will not be able vote at all. Thus, during tallying/verification, all one needs to do is (i) verify each submitted cast vote $V$ using the public key, (ii) check that the exact same tag $r$ has not appeared before\footnote{If it has, only the first counts.}.

Finally, we re-emphasize that the cast votes in our scheme are classical, meaning that the voters will not need to transmit quantum states to vote. 

We refer the reader to \cref{sec:qvs} for the full construction and the proofs.

\subsection{Rerandomizable Encryption with Strong Correctness}\label{sec:techstrongrerand}
In this section, we introduce the notion of rerandomizable public-key encryption with \emph{strong rerandomization correctness} and discuss our solution. For rerandomizable encryption, the most common security notion is the following. In the security game, the challenger samples a random a bit $b \in \zo$, and if $b = 0$, the adversary (who has $pk$) is given an honest encryption $ct \samp \rpke.\mathsf{Enc}(pk, m)$ of a message $m$ of its choice, and also an honest rerandomization $ct' \samp \rpke.\mathsf{ReRand}(pk, ct)$ of $ct$. If $b = 1$, the adversary is given the honest encryption $ct \samp \rpke.\mathsf{Enc}(pk, m)$, and an independent fresh encryption $ct' \samp \rpke.\mathsf{Enc}(pk, m)$. We require that no adversary can predict the challenge case $b$ with probability better than $1/2 + \negl(\lambda).$ A strengthening of this notion requires the same for any string $ct$ (rather than for honestly randomly sampled ciphertexts).

\paragraph{\textbf{Definitions}} However, as discussed previously, this rerandomization guarantee alone (while sufficient for anonymity) is not sufficient for our unclonability proofs. On top of this guarantee, we also require that an efficient adversary cannot come up with a malicious ciphertext and a malicious random tape for $\rpke.\mathsf{ReRand}$ that leads to different decryptions before and after rerandomization. We say that such a scheme satisfies (I) \emph{strong (rerandomization) correctness}. While this notion is interesting in its own right, it will not be sufficient in $\io$-based constructions. While efficient adversaries cannot come up with \emph{bad} ciphertexts, note that such ciphertexts might still exist, whereas in $\io$-based security proofs one needs to  prove that two circuits (e.g., the initial circuit and the one that rejects bad ciphertexts) are completely equivalent - as opposed to just proving the hardness of finding an input where the circuits have different output. Thus, we also introduce the notion of (II) \emph{strong correctness with public testing}, where the public key includes a testing key. We require that there \emph{does not exist} a malicious ciphertext $ct$ and a malicious random tape $s$ such that
\begin{itemize}
    \item $\rpke.\mathsf{Test}(pk, ct) = \mathsf{GOOD}$.
    \item $\rpke.\mathsf{Dec}(sk, ct) \neq \rpke.\mathsf{Dec}(sk, \rpke.\mathsf{ReRand}(pk, ct); s)$
\end{itemize}
Since the testing key is public, an encryption scheme satisfying this notion can be utilized even with $\io$-based proofs, since the original obfuscated circuit in the actual construction can already test and reject bad ciphertexts. 

We introduce two more stronger notions. First, we consider (III) \emph{strong correctness with simulatable (public) testing keys}. This notion is the same as our previous notion of public testing, with the addition that we split public key into two parts (encryption/rerandomization key and testing key), and we require that the testing key can be simulated (indistinguishable from real testing keys) using only the public encryption key (in fact, our constructions will need only the security parameter $1^\lambda$ to simulate). Such a notion is useful in our untraceable quantum money and voting schemes, since we will not be able to include the testing key in the truly uniformly random common string (CRS). We finally introduce (IV) \emph{strong correctness with simulatable all-accept testing keys.} In this case, we require that the scheme has a simulatable testing key such that it accepts \emph{all} strings $ct$ as good. This notion allows us to construct schemes (even with $\io$-based proofs) that do not use any ciphertext testing (or testing key) at all - at least in the original construction. Since the simulated keys always accept, the original constructions do not need to test, whereas in the proof (in an indistinguishable manner) we can start testing ciphertexts and reject bad ones, thus rely on the strong correctness guarantees discussed above.

We note that there are no existing post-quantum rerandomizable encryption schemes that satisfies any of these four security notions (even the simplest one, plain strong correctness (I)). We also note that approaches like fully homomorphic encryption or bootstrapping does not work, since they use honest randomness to \emph{refresh} cipertexts that have been honestly sampled and have been honestly computed on - no maliciousness is allowed in any part. Bad ciphertexts for such schemes can easily be produced efficiently.

In this work, we construct various schemes that satisfy all of the notions defined above, based solely on hardness of Learning with Errors \cite{regev2009lattices}.

\paragraph{\textbf{Background on Regev's PKE}} Our starting point is the well-known Regev's public-key encryption scheme \cite{regev2009lattices}, based on the Learning with Errors (LWE) problem. Let $n(\lambda), m(\lambda)$ denote the vector space dimension and the number of samples respectively, $\chi(\lambda)$ denote the error distribution, $B(\lambda)$ denote the bounds for the values in the support of $\chi(\lambda)$ (i.e. $\chi(\lambda) \subseteq [-B, B]$) and $q(\lambda)$ denote the modulus. The scheme is as follows, where all operations are performed modulo $q$ (i.e., in $\Z_q$). The public key is $(A, y)$ and the secret key is $s$, where $A \samp \Z_q^{n \times m}$, $s \samp \Z_q^n$, $e \samp \chi^m$ and $y^\mathsf{T} = s^\mathsf{T} \cdot A + e^\mathsf{T}$. To encrypt a bit $b \in \zo$, we sample a random \emph{binary} vector $r \samp \zo^{m}$ and output $(A\cdot r, y^\mathsf{T}\cdot r + b \cdot \floor{\frac{q}{2}})$. To rerandomize a ciphertext $ct = (\vec{a}, c)$, we simply add a fresh encryption of $0$ to it, which gives us $(\vec{a} + A\cdot r', c + y^\mathsf{T}\cdot r')$. Using LWE and leftover hash lemma \cite{impagliazzo1989pseudo}, one can show that rerandomization security is satisfied for \emph{any} $ct$. Finally, to decrypt $ct = (\vec{a}, c)$, we compute $|c - s^\mathsf{T}\cdot \vec{a}|$ and decrypt to $0$ if the value is less than $q/4$, and decrypt to $1$ otherwise.

\paragraph{\textbf{Thwarting Malicious Rerandomization Attacks with Hidden Shifts}} While Regev's PKE satisfies rerandomization security, it does not satisfy even the plain strong correctness property (I). Note that even honestly rerandomizing an honest ciphertext roughly $O(q/B)$ times is likely to produce a \emph{bad} ciphertext (decrypting to a different value after one more rerandomization). Thus, as our first step, we move to the \emph{subexponential modulus-to-noise ratio} LWE regime where $q = \subexp(\lambda)$, and $B = \poly(\lambda)$. Then, no polynomial number of honest rerandomizations on honest ciphertexts will produce bad ciphertexts. However, the adversary can still easily produce a malicious bad ciphertext: for example, consider $(\vec{a}, q/4 - 2 - k)$ where $\vec{a}$ is any vector and $k$ is a random value in $[- m\cdot B, m\cdot B]$. This ciphertext will decrypt to $0$ initially, whereas after rerandomizing it will decrypt to $1$ with probability roughly $1/k = 1/\poly(\lambda)$.

Our solution is to use \emph{hidden shifts}. We update our scheme so that the protected decryption key includes a hidden shift value $L$, selected uniformly at random from $[0, q/16].$ To decrypt $ct = (\vec{a}, c)$, we now test $|c - s^\mathsf{T}\cdot \vec{a} - \textcolor{red}{L}| < q/4$. Now, we prove that a ciphertext $(\vec{a}, c)$ can be bad only if
\begin{align*}
    &\text{Turns }0 \to 1: c - s^\mathsf{T}\cdot \vec{a} \in [q/4 - m \cdot B + L, q/4 - 1 + L]  \cup [3q/4 + 1 + L, 3q/4 + m \cdot B + L]\\
    &\text{Turns }1 \to 0: c - s^\mathsf{T}\cdot \vec{a} \in [q/4 + L, q/4 + m \cdot B - 1 + L] \cup [3q/4 - m\cdot B + 1 + L, 3q/4 + L]
\end{align*}
Since $L$ is unpredictable, no adversary (even given the secret $s$) will be able to produce a bad ciphertext except with subexponentially small probability.

\paragraph{\textbf{Testing Ciphertexts and Obfuscating Evasive Programs}} As discussed above, in most applications, we will also need to be able to publicly test if a given ciphertext is bad, that is, if it can decrypt to different values before and after (malicious) rerandomization. By above, (once $s$ and $\vec{a}$ are fixed) the set of bad ciphertexts $(\vec{a}, c)$ is actually \emph{evasive} - the program that checks for bad ciphertexts would output $0$ (meaning, \emph{not bad}) almost everywhere, except for a set of size $O(m\cdot B) = \poly(\lambda).$ This means that if we can obfuscate this program, we would be done. Unfortunately, obfuscating evasive programs in general is impossible \cite{barak2014obfuscation}. However, observe that in our case, the set of accepting points actually has a nice structure: they are consecutive! As a result, we actually only need point function obfuscation. We can simply use the point function obfuscation for the point $q/4 + L$. To test a ciphertext $(\vec{a}, c)$, we can simply run the point function with values $c - s^\mathsf{T}\cdot \vec{a} + m\cdot B, c - s^\mathsf{T}\cdot \vec{a} + m \cdot B + 1, \dots, c - s^\mathsf{T}\cdot \vec{a} - m \cdot B + 1$ for the lower two ranges, and then similarly with $c - s^\mathsf{T}\cdot \vec{a} + m\cdot B - 1 - q/2, \dots, c - s^\mathsf{T}\cdot \vec{a} - m \cdot B - q/2$ for the higher two ranges. Note that this is only $\leq 4 m \cdot B = \poly(\lambda)$ tests. The only remaining issue is that we are actually using the secret decryption key $s$ for testing. Our solution is that, instead of point function obfuscation, above we  use \emph{compute-and-compare (CC) obfuscation} \cite{WZ17} (which can be constructed from LWE) and obfuscate the compute-and-compare function $f_s(x) =^? \mathsf{TARGET}$ where $f_s((\vec{a}, c)) = c - s^\mathsf{T}a$ and the target point is $\mathsf{TARGET} = q/4 + L$. Since $q/4 + L$ is subexponentially unpredictable, by CC obfuscation guarantee, our public testing program will hide both $L$ and $s$!

\paragraph{\textbf{Pseudorandom Public Keys and Simulatable Testing Keys}} For our untraceable schemes, we need that the public keys of our scheme be pseudorandom. Our scheme described above can already be shown to have pseudorandom encryption keys $(A, y^\mathsf{T})$ by LWE. However, the ciphertext testing key is an obfuscated evasive program which outputs $0$ almost everywhere. Thus, it obviously is not a pseudorandom string. However, due to CC obfuscation security, we can actually simulate our testing key by obfuscating the program that always outputs $0$! This means that we can simply use the encryption key (which is pseudorandom) as our public encryption key, while users can simulate the ciphertext testing key on their own.

\section{Related Work}

Aaroson-Christiano \cite{AC12} constructs public-key quantum money from subspace states in the ideal classical oracle model. Zhandry \cite{Z19} later showed that the same construction is still secure when instantiated with indistinguishability obfuscation.

In the private-key setting; Ji et al \cite{JLS18} shows that pseudorandom quantum states can be constructed from one-way functions. Since pseudorandom quantum states are indistinguishable from Haar random states, which gives identical/pure state quantum money in the private-key setting, they automatically satisfy anonymity. Alagic et al \cite{amr20} constructs anonymous quantum money in private key setting with a stateful (updated each time a new banknote is issued) bank. Behera and Sattath \cite{bs21} gave a construction where banknotes are pseudorandom states and a received banknote is \emph{compared} to existing banknotes to verify.

Okamoto et al \cite{okamoto2008quantum} propose a quantum voting scheme (based conjugate coding) with quantum votes that satisfies privacy and uniqueness, but not universal verifiability (only the voting token creator can verify votes). There are also various classical voting schemes based on various assumptions \cite{adida2008helios, foo}, however, as discussed, they cannot satisfy all the desired properties at the same time.

\section{Preliminaries} 
\subsection{Notation}
Unless otherwise specified, adversaries are stateful quantum polynomial time (QPT) and our cryptographic assumptions are implicitly post-quantum. We write $\samp$ to denote a random sampling from some distribution or uniform sampling from a set.
 
\subsection{Digital Signature Schemes}
In this section we introduce the basic definitions of signatures schemes.

\begin{definition}\label{def:digdefn}
A digital signature scheme with message space $\mathcal{M}$ consists of the following algorithms that satisfy the correctness and security guarantees below.
\begin{itemize}
    \item $\mathsf{Setup}(1^\lambda):$ Outputs a signing key $sk$ and a verification key $vk$.
    \item $\mathsf{Sign}(sk, m):$ Takes the signing key $sk$, returns a signature for $m$.
    \item $\mathsf{Verify}(vk, m, s):$ Takes the public verification key $vk$, a message $m$ and supposed signature $s$ for $m$, outputs $1$ if $s$ is a valid signature for $m$.
\end{itemize}
\paragraph{Correctness}
We require the following for all messages $m \in \mathcal{M}$.
\begin{equation*}
    \Pr[\mathsf{Verify}(vk, m, s) = 1 : \begin{array}{c}
         sk, vk \samp \mathsf{Setup}(1^\lambda) \\
         s \samp \mathsf{Sign}(sk, m)
    \end{array}] = 1.
\end{equation*}
\paragraph{Adaptive existential-unforgability security under chosen message attack (EUF-CMA)}
Any QPT adversary $\adve$ with \emph{classical} access to the signing oracle has negligible advantage in the following game.
\begin{enumerate}
    \item Challenger samples the keys $sk, vk \samp \mathsf{Setup}(1)$.
    \item $\adve$ receives $vk$, interacts with the signing oracle by sending classical messages and receiving the corresponding signatures.
    \item $\adve$ outputs a message $m$ that it has not queried the oracle with and a forged signature $s$ for $m$.
    \item The challenger outputs $1$ if and only if $\Ver(vk, m, s) = 1$.
\end{enumerate}
If $\adve$ outputs the message $m$ before the challenger samples the keys, we call it \emph{selective EUF-CMA} security.
\end{definition}

\subsection{Puncturable Pseudorandom Functions}\label{sec:prf}
In this section, we recall puncturable pseudorandom functions.
\begin{definition}[\cite{SW14}]\label{defn:puncprf}
    A puncturable pseudorandom function (PRF) is a family of functions $\{F: \zo^{c(
\lambda)} \times \zo^{m(\lambda)} \to \zo^{n(\lambda)}\}_{\lambda \in \N^+}$ with the following efficient algorithms.
    \begin{itemize}
        \item $F.\mathsf{Setup}(1^\lambda):$ Takes in a security parameter and outputs a key in $\zo^{c(\lambda)}$.
        \item $F(K, x):$\footnote{We overload the notation and write $F$ to both denote  the function itself and the evaluation algorithm.} Takes in a key and an input, outputs an evaluation of the PRF.
        \item $F.\mathsf{Puncture}(K, S):$ Takes as input a key and a set $S \subseteq \zo^{m(\lambda)}$, outputs a punctured key.
    \end{itemize}
    
    We require the following.
    \paragraph{Correctness.} For all efficient distributions $\mathcal{D}(1^\lambda)$ over the power set $2^{\zo^{m(\lambda)}}$, we require
    \begin{equation*}
        \Pr[\forall x \not\in S~~F(K_S, x) = F(K, x): \begin{array}{c}
              S \samp \mathcal{D}(1^\lambda) \\
              K \samp \keygen(1^\lambda) \\
              K_S \samp \mathsf{Puncture}(K, S)
        \end{array}] = 1.
    \end{equation*}
    \paragraph{Puncturing Security} We require that any stateful QPT adversary $\adve$ wins the following game with probability at most $1/2 + \negl(\lambda)$.
    \begin{enumerate}
        \item $\adve$ outputs a set $S$.
        \item The challenger samples $K \samp \keygen(1^\lambda)$ and $ K_S \samp \mathsf{Puncture}(K, S)$
        \item The challenger samples $b \samp \zo$. If $b = 0$, the challenger submits $K_S, \{F(K, x)\}_{x \in S}$ to the adversary. Otherwise, it submits $K_S, \{y_s\}_{s \in S}$ to  the adversary where $y_s \samp \zo^{n(\lambda)}$ for all $s \in S$.
        \item The adversary outputs a guess $b'$ and we say that the adversary has won if $b' = b$.
    \end{enumerate}
\end{definition}

\begin{theorem}[\cite{SW14,GGM86,Z12}]\label{thm:puncprfexists}
Let $n(\cdot), m(\cdot)$ be efficiently computable functions.
\begin{itemize}
    \item If (post-quantum) one-way functions exist, then there exists a (post-quantum) puncturable PRF with input space $\zo^{m(\lambda)}$ and output space $\zo^{n(\lambda)}$.

\item If we assume subexponentially-secure (post-quantum) one-way functions exist, then for any $c > 0$, there exists a (post-quantum) $2^{-\lambda^c}$-secure\footnote{While the original results are for negligible security against polynomial time adversaries, it is easy to see that they carry over to subexponential security. Further, by scaling the security parameter by a polynomial and simple input/output conversions, subexponentially secure (for any exponent $c'$) one-way functions is sufficient to construct for any $c$ a puncturable PRF that is $2^{-\lambda^c}$-secure.} puncturable PRF against subexponential time adversaries, with PRF input space $\zo^{m(\lambda)}$ and output space $\zo^{n(\lambda)}$.
\end{itemize}
\end{theorem}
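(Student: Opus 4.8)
The plan is to instantiate the textbook tree-based PRF of \cite{GGM86}, puncture it along root-to-leaf paths following \cite{SW14}, and invoke \cite{Z12} for post-quantum security. Starting from a (post-quantum) one-way function, one first obtains a length-doubling pseudorandom generator $G\colon\zo^{\secparam}\to\zo^{2\secparam}$; every ingredient of this transformation---hardcore-bit extraction (Goldreich--Levin, which holds against quantum adversaries) and the hybrid that amplifies one hardcore bit into a full generator---is a \emph{straightline} (non-rewinding) reduction, so post-quantum one-wayness yields a post-quantum $G$, and a subexponentially secure one-way function yields a subexponentially secure $G$. Write $G(s)=(G_0(s),G_1(s))$ for the two length-$\secparam$ halves. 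The PRF key is a uniform seed $K\samp\zo^{\secparam}$ (so the key length is $c(\secparam)=\secparam$, which one may pad to any larger value), and for a node $v\in\zo^{\le m(\secparam)}$ define the GGM label $K_v:=G_{v_{|v|}}\!\big(\cdots G_{v_1}(K)\cdots\big)$ with the root carrying the label $K$. On input $x\in\zo^{m(\secparam)}$ we set $F(K,x):=\mathrm{adjust}(K_x)$, where $\mathrm{adjust}$ truncates to $n(\secparam)$ bits when $n(\secparam)\le\secparam$, and otherwise applies one further length-$n(\secparam)$ PRG. All of $F.\mathsf{Setup}$ and evaluation run in polynomial time because $m$ and $n$ are efficiently computable.

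For puncturing, given a set $S\subseteq\zo^{m(\secparam)}$ define its \emph{frontier} $\partial S$ to be the set of nodes $v$ that are not a prefix of any $x\in S$ but whose parent is a prefix of some $x\in S$; since each such $v$ is determined by an $x\in S$ and the length $|v|$, we have $|\partial S|\le m(\secparam)\cdot|S|$, which is polynomial whenever $S$ is (as it is for an efficiently sampled $S$). Let $\mathsf{Puncture}(K,S)$ output $K_S:=\big(S,\,\{(v,K_v)\}_{v\in\partial S}\big)$, computable in polynomial time. Correctness as required by \cref{defn:puncprf} is immediate and perfect: every $x\notin S$ has exactly one prefix $v\in\partial S$ (the point where its path first leaves the prefix-closure of $S$), from which $F(K,x)$ is recomputed by continuing the tree walk; and no $x\in S$ has a prefix in $\partial S$, so $K_S$ never lets one recompute $K_x$ for $x\in S$. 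Thus the correctness probability is exactly $1$.

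Puncturing security reduces to the pseudorandomness of $G$ by a hybrid argument over the tree. Fix the set $S$ output by the adversary. Process the internal above-frontier nodes---i.e.\ the proper prefixes of elements of $S$, at most $m(\secparam)|S|$ of them---in breadth-first order; when a node $u$ is processed, its parent's label is already uniform (or $u$ is the root, whose label $K$ is uniform), so by $G$-security one may replace the pair $(G_0(K_u),G_1(K_u))$ by a fresh uniform pair, which simultaneously randomizes both of $u$'s children. After all these steps every label at depth one below the prefix-closure of $S$---in particular every frontier label and every leaf label $K_x$ for $x\in S$---is uniform and independent, so $F(K,x)=\mathrm{adjust}(K_x)$ is (close to) uniform for $x\in S$ while $K_S$ is uniform; a final batch of at most $|S|$ PRG steps removes the residual $\mathrm{adjust}$ when $n(\secparam)>\secparam$. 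Running the analogous hybrid on the $b=1$ game (where the $F$-values are already uniform) likewise makes $K_S$ uniform, so the two games meet at a common fully-uniform hybrid. Each step is a single, straightline invocation of $G$-security, so the reduction is sound against QPT adversaries, and the total number of steps is $\poly(m(\secparam),|S|)=\poly(\secparam)$; hence every QPT adversary's advantage is $\poly(\secparam)\cdot\negl(\secparam)=\negl(\secparam)$, proving the first bullet. (If one also wants $F$ itself to be a post-quantum PRF under oracle access---not needed by \cref{defn:puncprf}, but natural---that is exactly the statement of \cite{Z12}, whose proof is the same level-by-level hybrid with the exponentially wide tree simulated lazily.)

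For the second bullet, run the whole construction with the security parameter rescaled to $\secparam':=\ceil{\secparam^{c/c'}}=\poly(\secparam)$, where $2^{-\secparam^{c'}}$ is the assumed one-way-function security; then $G$ is $2^{-(\secparam')^{c'}}=2^{-\secparam^{c}}$-secure against $2^{\secparam^{c}}$-time adversaries. Every reduction above---the one-way-function-to-PRG transformation, the GGM pseudorandomness hybrid, and the puncturing hybrid---is polynomial-time and straightline, and loses only a $\poly(\secparam)$ factor in advantage (the number of hybrid steps is polynomial in $m(\secparam)$ and in the adversary's query bound or the size of $S$), so the resulting puncturable PRF is $2^{-\secparam^{c}}$-secure, hence $\subexp(\secparam)$-secure, against quantum adversaries, with input/output lengths adjusted exactly as before. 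I expect the only real work to be the security bookkeeping rather than the construction: one must verify that the GGM and puncturing analyses go through against QPT adversaries without the naive exponential blow-up in the hybrid count---this is exactly what the lazy tree-sampling argument of \cite{Z12} provides, and it is essential that all reductions are straightline (no rewinding)---and, for the subexponential bullet, that composing the one-way-function-to-PRG transformation with the two tree hybrids costs only a fixed polynomial factor, so that $2^{-\secparam^{c}}$ security survives; handling a whole \emph{set} $S$ rather than a single point (defining $\partial S$ and checking it is at once a correct, polynomial-size punctured key, and ordering the above-frontier nodes for the hybrid) is the other place that needs a little care.
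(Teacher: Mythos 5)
Your proposal is correct and matches what the paper intends: the paper gives no independent proof of \cref{thm:puncprfexists}, deferring to the cited GGM tree construction punctured at the frontier nodes \cite{GGM86,SW14}, post-quantum security via straightline reductions and \cite{Z12}, and (per its footnote) the same security-parameter rescaling you describe for the subexponential bullet. Your write-up simply fills in the standard details of exactly that route, so there is nothing to reconcile.
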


\subsection{Indistinguishability Obfuscation}
In this section, we recall indistinguishability obfuscation.
\begin{definition}
    An indistinguishability obfuscation scheme $\io$ for a class of circuits $\mathcal{C} = \{\mathcal{C}_\lambda\}_\lambda$ satisfies the following.
    \paragraph{Correctness.} For all $\lambda, C \in \mathcal{C}_\lambda$ and inputs $x$,
    $\Pr[\Tilde{C}(x) = C(x): \Tilde{C} \samp \io(1^\lambda, C)] = 1$.

\end{definition}
    \paragraph{Security.} Let $\mathcal{B}$ be any QPT algorithm that outputs two circuits $C_0, C_1 \in \mathcal{C}$ of the same size, along with auxiliary information, such that $\Pr[\forall x ~ C_0(x)=C_1(x) : (C_0, C_1, \regi{aux}) \samp \mathcal{B}(1^\lambda)] \geq 1 - \negl(\lambda)$. Then, for any QPT adversary $\mathcal{A}$,
    \begin{align*}
      \bigg|&\Pr[\adve(\io(1^\lambda, C_0), \regi{aux}) = 1 :  (C_0, C_1, \regi{aux}) \samp \mathcal{B}(1^\lambda)] -\\ &\Pr[\adve(\io(1^\lambda, C_1), \regi{aux}) = 1 : (C_0, C_1, \regi{aux}) \samp \mathcal{B}(1^\lambda)]\bigg| \leq \negl(\lambda).  
    \end{align*}

\subsection{Compute-and-Compare Obfuscation}
In this section, we recall the notion of compute-and-compare obfuscation \cite{WZ17}.
\begin{definition}[Compute-and-compare program]
     Let $f: \zo^{a(\lambda)} \to \zo^{b(\lambda)}$ be a function and $y \in \zo^{b(\lambda)}$ be a target value. The following program $P$, described by $(f, y, z)$, is called a \emph{compute-and-compare program.}

    \paragraph{$P(x):$}
    Compute $f(x)$ and compare it to $y$. If they are equal, output $1$. Otherwise, output $0$.
\end{definition}

A distribution $\mathcal{D}$ of such programs (along with quantum auxiliary information $\regi{aux}$) is called sub-exponentially unpredictable if for any QPT adversary, given the auxiliary information $\regi{aux}$ and the description of $f$, the adversary can predict the target value $y$ with at most subexponential probability.

\begin{definition}\label{defn:ccobf}
A compute-and-compare obfuscation scheme for a class of distributions  consists of efficient algorithms $\mathsf{CCObf.Obf}$ and $\mathsf{CCObf.Sim}$ that satisfy the following. Consider any distribution $\mathcal{D}$ over compute-and-compare programs, along with quantum auxiliary input, in this class.

\paragraph{Correctness.}
For any function $(f, y, z)$ in the support of $\mathcal{D}$,
$\Pr[\forall x ~ D'(x) = D(x) :\allowbreak D' \samp \mathsf{CCObf.Obf}(f, y)]\allowbreak \geq 1 - \negl(\lambda)$.

\paragraph{All-Zero Simulation} Program output by $\mathsf{CCObf.Sim}$ satisfy $P(x) = 0$ for all $x$.

\paragraph{Security}
$(\mathsf{CCObf.Obf}(f, y), \regi{aux}) \approx (\mathsf{CCObf.Sim}(1^\lambda, |f|, |y|), \regi{aux})$ where $(f, y), \regi{aux} \samp \mathcal{D}(1^\lambda)$.
\end{definition}

\begin{theorem}[\cite{WZ17,CLLZ21}]\label{thm:ccobf}
    Assuming the hardness of LWE, there exists compute-and-compare obfuscation for any class of sub-exponentially unpredictable distributions.
\end{theorem}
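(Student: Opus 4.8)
The plan is to instantiate the known \emph{lockable obfuscation} primitive from LWE and verify the three requirements of \cref{defn:ccobf}. Concretely, I would use the compute-and-compare obfuscator of \cite{WZ17}: to obfuscate the program described by $(f,y)$, with $f$ a polynomial-size circuit and $y\in\zo^{b(\lambda)}$, one fixes a public LWE matrix together with a modulus $q$ and a noise bound, and publishes a lattice-based homomorphic (attribute-based-encryption-style) encoding of the wires of $f$ relative to that matrix, along with an encoding of the lock $y$. On input $x$ the evaluator homomorphically runs $f$ on the encoded input, subtracts the encoding of $y$, and rounds: the construction is arranged so that the resulting LWE sample has ``message part'' $0$ when $f(x)=y$, and is shifted by $\lfloor q/2\rfloor$ in at least one coordinate (hence rounds away from $0$) whenever $f(x)\neq y$, so that rounding recovers $\mathbf{1}[f(x)=y]$. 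The simulator $\mathsf{CCObf.Sim}(1^\lambda,|f|,|y|)$ simply outputs uniformly random matrices of the correct shapes; since a uniformly random lock is hit by no input, the resulting program outputs $0$ on every input, which is exactly the all-zero simulation property. For the distribution class, note that sub-exponential unpredictability is much stronger than the ``the lock has super-logarithmic pseudo-entropy given $(f,\regi{aux})$'' hypothesis used by \cite{WZ17}, so the same construction applies; and since the security reduction below treats the auxiliary input as an opaque string it passes along, the construction also covers \emph{quantum} $\regi{aux}$, as observed in \cite{CLLZ21}.

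Correctness is the first thing I would check, and the subtlety is that the obfuscated circuit must agree with $(f,y)$ on \emph{all} inputs — not merely on a random one — except with negligible probability over the obfuscation coins. This follows by noise bookkeeping: the error accumulated along any homomorphic evaluation path is bounded by a fixed quantity depending only on $f$ and the sampled noise, \emph{uniformly over} $x$; choosing the LWE parameters so that this bound stays below $q/4$ guarantees that a genuine match always rounds to $0$ and that the $\lfloor q/2\rfloor$ shift produced by a mismatched output coordinate is never cancelled by error. Hence with overwhelming probability over the obfuscation randomness the obfuscated program is functionally identical to the compute-and-compare program, as \cref{defn:ccobf} demands.

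The last property is security, $(\mathsf{CCObf.Obf}(f,y),\regi{aux})\approx(\mathsf{CCObf.Sim}(1^\lambda,|f|,|y|),\regi{aux})$, which I would establish by a hybrid argument that replaces the structured LWE encodings in the obfuscation by uniformly random ones, one block at a time, each step justified by decisional LWE; once every encoding is uniform, the program no longer depends on $y$ and (being an unhittable lock) outputs $0$ everywhere, i.e.\ is distributed exactly as the simulator's output. The step where the hypothesis on the distribution enters — and the part I expect to require care — is that each LWE switch must be indistinguishable \emph{even to a distinguisher holding $\regi{aux}$ and the description of $f$}: such a distinguisher could in principle detect the switch by exhibiting an input $x$ with $f(x)=y$ (the real program outputs $1$ there, the switched one does not), and that escape route is closed precisely because $y$ is sub-exponentially unpredictable given $(f,\regi{aux})$. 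Since the LWE reduction never inspects $\regi{aux}$ — it just relays $\regi{aux}$ together with an $\regi{aux}$-independent LWE instance to the distinguisher — nothing changes when $\regi{aux}$ is a quantum register, so the classical analysis of \cite{WZ17} transfers. As in the original work, the genuinely delicate part is the correctness bookkeeping (controlling noise growth so that ``functionally equivalent on all inputs'' holds), not the security reduction, which is a routine LWE hybrid.
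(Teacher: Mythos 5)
The paper does not actually prove this theorem: it is imported verbatim from \cite{WZ17} (with the extension to quantum auxiliary input observed in \cite{CLLZ21}), so there is no internal argument to compare yours against. Judged on its own terms, your sketch correctly identifies the construction (lockable/compute-and-compare obfuscation from LWE via lattice encodings of the wires of $f$ and of the lock $y$) and correctly isolates the two obligations of \cref{defn:ccobf}: all-input functional equivalence with overwhelming probability over the obfuscation coins (your noise-bookkeeping paragraph is the right argument), and indistinguishability of the obfuscation from the simulator's size-only output.

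Two points are imprecise enough to flag. First, your account of \emph{where unpredictability enters} is backwards. You argue that each LWE switch is safe because a distinguisher ``cannot exhibit an $x$ with $f(x)=y$''; but the real and simulated programs genuinely differ on $f^{-1}(y)$, and the \cite{WZ17} proof does not rule out a particular distinguishing strategy --- it shows the contrapositive: the lock $y$ is folded into the LWE secrets/targets of the final-level encodings, so a distinguisher between the real and randomized encodings is converted by the reduction into a \emph{predictor} of $y$ given $(f,\regi{aux})$, contradicting unpredictability. Without entropy in $y$ the LWE hybrid simply does not go through, independently of whether anyone can name a preimage. Relatedly, your claim that the simulator's all-zero property holds ``since a uniformly random lock is hit by no input'' is not right as stated (for $a(\lambda)\ge b(\lambda)$ a random $y$ is hit by some input with constant probability); the all-reject property of the random-matrix simulator is a separate union-bound argument over inputs. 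Second, you dismiss the sub-exponential unpredictability hypothesis as mere slack over \cite{WZ17}'s super-logarithmic requirement and assert that the quantum-$\regi{aux}$ case is automatic because the reduction ``relays'' $\regi{aux}$. That is the right intuition, but it is exactly the content of \cite{CLLZ21}'s contribution: one must check that the reduction never reuses or rewinds on $\regi{aux}$ and that the distinguisher-to-predictor step tolerates a quantum advice state, and the sub-exponential hypothesis is what that adaptation actually consumes --- it is part of why the theorem is stated for sub-exponentially unpredictable distributions rather than merely unpredictable ones, not an incidental strengthening.
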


\subsection{Learning with Errors}\label{sec:lwe}
\begin{definition}
    Let $m(\lambda), n(\lambda), q(\lambda)$  be integers and $\chi(\lambda)$ be a probability distribution over $\Z_q$. The $(m, n, q, \chi)$-LWE assumption says that the following distributions are indistinguishable to any QPT adversary
    \begin{equation*}
(A, s^\mathsf{T}\cdot A+e) \approx_c (A, u)
    \end{equation*}
    where $A \samp \Z_q^{n \times m}, s \samp \Z_q^{n}$, $e \samp \chi^m$ and $u \samp \Z_q^{m}$.
\end{definition}

\subsection{Subspace States}
A subspace state is $\ket{A} = \sum_{v \in A}\ket{v}$ where $A$ is a subspace of the vector space $\F_2^n$. We will overload the notation and usually write $A, A^\perp$ to also denote the membership checking programs for the subspace $A$ and its orthogonal complement $A^\perp$.
\begin{theorem}[\protect{$1\to 2$-Unclonability \cite{Z19}}]\label{thm:onetotwo}
    Consider the following game between a challenger and an adversary $\adve$.

    \paragraph{\underline{$\mathsf{Exp}_\adve(1^\lambda)$}}
    \begin{enumerate}
    \item The challenger samples a subspace $A \leq \F_2^\lambda$ of dimension $\lambda/2$.
    \item The challenger submits $\ket{A}, \io(A), \io(A^\perp)$ to $\adve$.
    \item The adversary outputs a (entangled) bipartite register $\reg_1, \reg_2$.
    \item The challenger applies the projective measurement $\{\ketbra{A}{A}, I - \ketbra{A}{A}\}$, and outputs $1$ if the measurement succeeds.  Otherwise, it outputs $0$.
    \end{enumerate}
    Then, for any QPT adversary $\adve$, we have $\Pr[\mathsf{Exp}_\adve(1^\lambda) = 1] \leq \negl(\lambda)$.
\end{theorem}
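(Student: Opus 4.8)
The statement is Zhandry's indistinguishability-obfuscation instantiation of the Aaronson--Christiano subspace-state money scheme, so the plan is to reconstruct that argument, which I would organize in two stages: an information-theoretic core in an oracle model, and an obfuscation bridge that ports it to the $\io$ world. For the first stage I would invoke the Aaronson--Christiano bound (in the quantitatively sharpened form of subsequent work): for a uniformly random $A \le \F_2^\lambda$ of dimension $\lambda/2$, an adversary given $\ket{A}$ together with (classical) oracle access to the membership functions of $A$ and of $A^\perp$ cannot, using polynomially many queries, output a bipartite state on which the measurement $\{\ketbra{A}{A}, I - \ketbra{A}{A}\}$ succeeds on both halves with non-negligible probability. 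It is often convenient to work with the equivalent \emph{direct-product} phrasing — that such an adversary cannot output $(u,v)$ with $u \in A \setminus \{0\}$ and $v \in A^\perp \setminus \{0\}$ — since after a successful $\ketbra{A}{A}$-measurement on both halves the state is exactly $\ket{A} \otimes \ket{A}$, and measuring the two halves in the computational and Hadamard bases produces such a pair up to a $\negl(\lambda)$ loss. This stage is a pure quantum query-complexity statement (a compression / one-way-to-hiding style argument), and I would cite it rather than redo it.

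For the second stage, the obstacle is that $\io(A)$ is not an oracle: the adversary holds an explicit circuit, which it may run coherently, inspect, and differentiate, with no notion of a query budget, so the first stage does not apply as is. Zhandry's tool here is \emph{subspace-hiding obfuscation}: an $\io$ of the membership circuit for a subspace is computationally indistinguishable from an $\io$ of the membership circuit for a uniformly random superspace of somewhat larger (but still bounded-away-from-full) dimension, and such a random superspace reveals essentially nothing about the original subspace beyond what a membership oracle would. I would use this to pass to a hybrid in which $\io(A)$ and $\io(A^\perp)$ are replaced by obfuscations of membership programs for uniformly random superspaces of $A$ and of $A^\perp$ (of dimension, say, $3\lambda/4$); by $\io$-security the adversary's cloning probability changes by at most $\negl(\lambda)$, and in that hybrid the adversary's advantage in cloning $\ket{A}$ is controlled by the first-stage bound (stated for the larger subspaces, which only coarsen the available membership tests). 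The dimensions of the superspaces have to be chosen carefully — large enough for the subspace-hiding/$\io$ step, yet small enough that the query lower bound still holds when the larger subspaces are substituted.

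The main obstacle is precisely this bridge: constructing and proving subspace-hiding obfuscation from plain $\io$, and verifying that once the obfuscated programs are swapped for the superspace versions the reduction genuinely reduces to the oracle-model statement, with no residual need for a trapdoor (e.g.\ a basis) for $A$. Everything else — the equivalence between the cloning and direct-product phrasings, the $\negl$ losses from zero vectors, and the dimension bookkeeping — is routine once that tool is in hand. I would note that an alternative route, if one wished to avoid oracle lower bounds entirely, is to go through the more recent monogamy-of-entanglement lemmas for (coset) subspace states, but Zhandry's subspace-hiding approach is the one that matches this precise statement.
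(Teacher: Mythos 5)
The paper does not prove this statement at all---it imports it verbatim from Zhandry \cite{Z19} as a preliminary---and your two-stage reconstruction (the Aaronson--Christiano information-theoretic cloning bound plus the subspace-hiding-obfuscation bridge built from $\io$) is exactly the structure of the proof in that reference, so the proposal is correct and matches the intended source. The one point to tighten is that after the subspace-hiding step the adversary holds \emph{explicit descriptions} of the random superspaces of $A$ and $A^\perp$ (of dimension about $3\lambda/4$) rather than oracle access, so the final step is the information-theoretic cloning bound for that setting rather than literally the query-complexity statement---but you already flag this bookkeeping yourself.
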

\begin{theorem}[\protect{Direct Product Hardness \cite{sattath2022uncloneable}}]\label{thm:dphard}
 Consider the following game between a challenger and an adversary $\adve$.

    \paragraph{\underline{$\mathsf{Exp}_\adve(1^\lambda)$}}
    \begin{enumerate}
    \item The challenger samples a subspace $A \leq \F_2^\lambda$ of dimension $\lambda/2$.
    \item The challenger submits $\ket{A}, \io(A), \io(A^\perp)$ to $\adve$.
    \item The adversary outputs two vectors $v, w \in \F_2^\lambda$.
    \item The challenger checks if $v \in A$ and $w \in A^\perp$, and outputs $1$ if so. Otherwise, it outputs $0$.
    \end{enumerate}
    Then, for any QPT adversary $\adve$, we have $\Pr[\mathsf{Exp}_\adve(1^\lambda) = 1] \leq \negl(\lambda)$.
\end{theorem}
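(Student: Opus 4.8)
This is the \emph{direct product hardness} of subspace states, which is a known result, so the plan is essentially to reconstruct its standard proof; I also flag why it is more delicate than the $1\to2$ unclonability of \cref{thm:onetotwo}. (Minor point: as literally written the game is won by $v=w=0$, since the zero vector lies in every subspace; we take the intended winning condition, $v\in A\setminus\{0\}$ and $w\in A^\perp\setminus\{0\}$.) The overall approach has two stages: (i) pass, at a computationally negligible cost, from the adversary that holds the obfuscated membership programs to an adversary in a purely information-theoretic game, and (ii) bound the success probability in that game by a monogamy/uncertainty argument on the single copy of $\ket A$.

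For stage (i) the natural tool is Zhandry's \emph{subspace-hiding obfuscation} \cite{Z19} (implied by $\io$): morally, a QPT adversary holding $\io(A),\io(A^\perp)$ learns only a bounded amount about $A$ beyond what is already present in $\ket A$, so up to negligible loss one may analyze an adversary that is additionally handed partial descriptions of $A$ (random super-/sub-spaces of the relevant dimensions). This is exactly where care is needed, and where the argument departs from the proof of \cref{thm:onetotwo}: for $1\to2$ unclonability the target $\ket A^{\otimes 2}$ remains hard after quotienting by the revealed subspace, whereas here one must never hand the adversary a nonzero subspace of $A$ (nor of $A^\perp$) in the clear — a single such vector, together with the ``free'' computational- and Hadamard-basis measurements of $\ket A$, would win outright, so the ``blow up the subspace and reveal it'' move would give only the vacuous bound $1$. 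Hence the hybrid must be arranged so that only one of $v,w$ is ever ``under threat'' from the revealed information at a time, with the other output ruled out by the computational hardness of extracting a nonzero element of $A$ (resp.\ $A^\perp$) from its obfuscated membership program alone. Getting this reduction exactly right is, I expect, the main obstacle.

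For stage (ii) — the information-theoretic core, now for a uniformly random $A\le\F_2^\lambda$ of dimension $\lambda/2$ with only limited, explicit side information — the governing fact is a monogamy/uncertainty principle on $\ket A$: producing a nonzero element of $A$ behaves like a computational-basis measurement, producing a nonzero element of $A^\perp$ like a Hadamard-basis measurement, and a single copy cannot support both. Concretely I would expand the two acceptance projectors, average over the random choice of $A$ (so that ``wrong subspace'' cross terms contribute only through the near-orthogonality of distinct subspace states), and fold in a query-counting bound on how much the obfuscation-derived side information can help — the Aaronson--Christiano / Ben-David--Sattath style estimate. Chaining the resulting negligible bound with the indistinguishability of the hybrids from stage (i) yields the theorem; a fully self-contained treatment would instead just cite \cite{sattath2022uncloneable,BenDavid2023quantumtokens}.
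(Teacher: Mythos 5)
The paper does not prove this statement: it is imported verbatim as a known result, attributed to \cite{sattath2022uncloneable} (see also \cite{BenDavid2023quantumtokens}), and is used only as a black box in the uniqueness proof of the voting scheme. So there is no in-paper proof to compare against, and your proposal must be judged against the standard proof in the literature. Your two-stage architecture (subspace-hiding obfuscation to pass to an information-theoretic game, then a monogamy/uncertainty bound on the single copy of $\ket{A}$) is the right one, and your observation that the naive hybrid is vacuous --- revealing a random $C\le A$ in the clear lets the adversary output $v\in C\setminus\{0\}$ and Hadamard-measure $\ket{A}$ to get $w$ --- is exactly the correct diagnosis of why this theorem is more delicate than \cref{thm:onetotwo}. (Your remark that $v=w=0$ wins the game as literally stated is also correct; the intended condition is $v,w\neq 0$.)

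However, you flag the resolution of this difficulty as ``the main obstacle,'' and the fix you gesture at --- arranging the hybrids so that only one of $v,w$ is under threat at a time, with the other ruled out by a separate computational extraction-hardness argument --- is not how the known proof goes, and I do not see how to make it work: one cannot appeal to ``hardness of extracting a nonzero element of $A^\perp$ from $\io(A^\perp)$ alone,'' since the adversary also holds $\ket{A}$, whose Hadamard-basis measurement yields such an element for free. The standard resolution (as in \cite{BenDavid2023quantumtokens}, and in \cite{CLLZ21} for the coset-state analogue) is to strengthen the winning predicate \emph{before} any replacement: sample $C\le A\le B$ with $\dim C=\lambda/4$ and $\dim B=3\lambda/4$ uniformly and independently of the adversary's view, and observe that a winning output already satisfies the stronger condition $v\in A\setminus C$ and $w\in A^\perp\setminus B^\perp$ except with probability $O(2^{-\lambda/4})$, because a fixed nonzero vector lands in a random such $C$ (resp.\ $B^\perp$) only with exponentially small probability. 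Only then does one replace $\io(A)$ by $\io(B)$ and $\io(A^\perp)$ by $\io(C^\perp)$; the strengthened predicate is still checkable by the challenger, revealing $B$ and $C$ in the clear is now harmless, and the resulting information-theoretic game is precisely the one bounded by Ben-David--Sattath. Without this winning-condition strengthening, your reduction collapses at exactly the point you identified, so the proposal has a genuine gap at its core.
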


\section{Rerandomizable Encryption}\label{sec:rpke}
In this section, we first recall rerandomizable encryption, then introduce some new security notions and give secure constructions.

Recall that a rerandomizable encryption (RPKE) scheme is a public-key encryption with an additional efficient randomized algorithm $\mathsf{ReRandomize}(pk, ct)$. We require rerandomization security.

\begin{definition}[Rerandomization Security]\label{defn:rerandpke}\label{defn:trurandstatrerand}\label{lem:rerandfreshct}
A rerandomizable encryption scheme $\rpke$ is said to satisfy rerandomization security\footnote{Some work consider the weaker notion where an honest ciphertext is rerandomized rather than adversarial $s$.} if for any efficient adversary $\adve$, we have
\begin{equation*}
    \Pr[b' = b : \begin{array}{c}
         pk, sk \samp \rpke.\mathsf{Setup}(1^\lambda)  \\
         s \samp \adve(pk) \\
         s_0 \samp \rpke.\mathsf{ReRand}(pk, s) \\
         s_1 \samp \rpke.\mathsf{Enc}(pk, 0^{p(\lambda)}) \\
         b' \samp \adve(s_b)
    \end{array}] \leq \frac{1}{2} + \negl(\lambda).
\end{equation*}
\end{definition}
We also introduce a related security notion called \emph{statistical rerandomization security for truly random public keys}. For this, we require that if a truly random $pk$ is used instead of $\rpke.\mathsf{Setup}$, then the above security holds for unbounded $\adve$.

Now we introduce the notion of \emph{strong correctness}, which says no efficient adversary can find a malicious ciphertext and randomness tape such that decryption of the ciphertext before and after rerandomization differ.
\begin{definition}[Strong Correctness Security]\label{defn:strongcor}
A rerandomizable encryption scheme $\rpke$ is said to satisfy strong correctness if for any efficient adversary $\adve$, we have
\begin{equation*}
    \Pr[m \neq m' : \begin{array}{c}
         pk, sk \samp \rpke.\mathsf{Setup}(1^\lambda)  \\
         ct, r \samp \adve(pk) \\
         ct' = \rpke.\mathsf{ReRand}(pk, ct; r) \\
         m = \rpke.\mathsf{Dec}(sk, ct) \\
         m' = \rpke.\mathsf{Dec}(sk, ct')
    \end{array}] \leq \negl(\lambda).
\end{equation*}
\end{definition}

We now introduce a stronger notion, called \emph{strong correctness with public testing}. This requires that there exists a public ciphertext testing procedure such that there simply does not exist a malicious ciphertext (i) that passes the verification and (ii) whose decryption result before and rerandomization can differ. This can be considered a \emph{statistical} version, and it will be a useful property in $\io$-based proofs.
\begin{definition}[Strong Correctness with Public Testing]\label{defn:rerandpketest}
A rerandomizable encryption scheme $\rpke$ is said to satisfy strong correctness with public testing if
    \begin{equation*}
    \Pr_{pk, sk \samp \rpke.\mathsf{Setup}(1^\lambda)}[\exists ct, r \text{ s.t. } \begin{array}{c}
     \rpke.\mathsf{Test}(pk, ct) = 1~~\wedge  \\
         \rpke.\mathsf{Dec}(sk, ct) \neq  \rpke.\mathsf{Dec}(sk, \rpke.\mathsf{ReRand}(pk, ct; r))
    \end{array} ] \leq \negl(\lambda).
\end{equation*}
\end{definition}

Finally, we introduce two more notions that are useful in our constructions. The first is called \emph{simulatable testing key} property. In this setting, we consider the public testing key $ptk$ separately from the public encryption key $pk$. We require that there is a test key simulation algorithm such that $(\rpke.\mathsf{SimulateTestKey}(pk), pk)$ is (computationally) indistinguishable from $(ptk, pk)$ where $(pk, ptk, sk) \samp \rpke.\mathsf{Setup}(1^\lambda)$. Finally, if the simulated test key output by $\mathsf{SimulateTestKey}$ satisfies $\rpke.\mathsf{Test}(ptk, ct) = 1$ for all strings $ct$, then we call it \emph{simulatable all-accept testing key}. This property allows us to not use any ciphertext testing in actual constructions, and use $\rpke.\mathsf{Test}$ only in security proofs.

\subsection{Construction}\label{sec:pkecons}
In this section, we give a rerandomizable encryption scheme based on LWE (\cref{sec:lwe}) that satisfies all of the security properties discussed above:  (i) rerandomization security, (ii) strong correctness, (iii) strong correctness with public testing, (iv) simulatable all-accept testing keys, (v) pseudorandom public encryption keys and (vi) statistical rerandomization with truly random public keys. Also honest ciphertexts can be honestly rerandomized any polynomial number of times while staying \emph{good}.

Let $m(\lambda), n(\lambda)$ denote LWE dimension and number of samples parameters, $\chi(\lambda)$ denote the error distribution, $B(\lambda)$ denote the bounds for the values in the support of $\chi(\lambda)$ (i.e. $\chi(\lambda) \subseteq [-B, B]$) and $q(\lambda)$ denote the modulus; with $q(\lambda) = \subexp(\lambda)$ and $B = \poly(\lambda)$. All algebra in the scheme will implicitly be in $\Z_q$. Let $\ell(\lambda)$ denote the length of plaintexts. Finally, let $\mathsf{CCObf}$ be a compute-and-compare obfuscation scheme for subexponentially unpredictable distributions (\cref{defn:ccobf}), which exists assuming LWE (\cref{thm:ccobf}).

\paragraph{\underline{$\rpke.\mathsf{Setup}(1^\lambda)$}}
\begin{enumerate}
    \item Sample $A \samp \Z_q^{n \times m}$, $s \samp \Z_q^n$ and $e \samp \chi^m$.
    \item Compute $y^\mathsf{T} = s^\mathsf{T}\cdot A + e^\mathsf{T}$.
    \item For $i \in [\ell(\lambda)]$
    \begin{enumerate}[label=\arabic*.]
    \item Sample $L_i \samp [0, \floor{\frac{q}{16}}]$.
    \item Let $P_i$ be the compute-and-compare program that on input $(\vec{a}, c)$ computes $f_{s}(\vec{a}, c)$ and compares it to $\floor{\frac{q}{4}} + L_i$ and outputs $1$ on equality and $0$ otherwise; where $f_{s}(\vec{a}, c) \vcentcolon= (c - s^\mathsf{T}\cdot a)$.  Sample $\mathsf{OP}_i \samp \mathsf{CCObf}(P_i).$
    \end{enumerate}
    \item Set $pk = (A, y)$, $ptk = (\mathsf{OP}_i)_{i \in [\ell(\lambda)]}$ and $sk = (s, (L_i)_{i \in [\ell(\lambda)]})$. Output $pk, ptk, sk$.
    \end{enumerate}

    \paragraph{\underline{$\rpke.\mathsf{Enc}(pk, \mu)$}}
    \begin{enumerate}
    \item Parse $(A, y) = pk$.
        \item For $i \in [\ell(\lambda)]$
    \begin{enumerate}[label=\arabic*.]
    \item Sample $r_i \samp \zo^{m}$
    \item Compute $ct_i = (A\cdot r, y^\mathsf{T}\cdot r + \mu_i \cdot \floor{\frac{q}{2}})$.
    \end{enumerate}
    \item Output $(ct_i)_{i \in [\ell(\lambda)]}$.
    \end{enumerate}

    \paragraph{\underline{$\rpke.\mathsf{ReRand}(pk, ct)$}}
    \begin{enumerate}
        \item Sample $(ct'_i)_{i \in [\ell(\lambda)]} \samp \rpke.\mathsf{Enc}(pk, 0^{\ell(\lambda)})$.
        \item Output $(ct_i + ct'_i)_{i \in [\ell(\lambda)]}$.
    \end{enumerate}

    \paragraph{\underline{$\rpke.\mathsf{Test}(ptk, ct)$}}
    \begin{enumerate}
    \item Parse $(\mathsf{OP}_i)_{i \in [\ell(\lambda)]} = ptk$.
        \item Parse $((\vec{a}_i, c_i)_{i \in [\ell(\lambda)]} = ct$.
        \item For $i \in [\ell(\lambda)]$
        \begin{enumerate}[label=\arabic*.]
        \item For $sh \in [-m\cdot B + 1, m\cdot B] \bigcup [2\cdot\floor{\frac{q}{4}} -m\cdot B,  2\cdot\floor{\frac{q}{4}} -1 +m\cdot B] \subset \Z_q$, 
        check if $\mathsf{OP}_i((\vec{a}_i, c_i + sh)) = 1$, if not, output $0$ and terminate.
        \end{enumerate}
        \item Output $1$.
    \end{enumerate}

    \paragraph{\underline{$\rpke.\mathsf{SimulateTestKey}(1^\lambda)$}}
    \begin{enumerate}
        \item Output $\mathsf{CCObf.Sim}(1^\lambda, 1^{m(\lambda)}, 1^{n(\lambda)}, 1^{q(\lambda)})$
    \end{enumerate}

    \paragraph{\underline{$\rpke.\mathsf{Dec}(sk, ct)$}}
    \begin{enumerate}
    \item Parse $(s, (L_i)_{i \in [\ell(\lambda)]}) = sk$.
        \item Parse $((\vec{a}_i, c_i)_{i \in [\ell(\lambda)]} = ct$.
        \item For $i \in [\ell(\lambda)]$, check if $|c_i - s^\mathsf{T}\vec{a}_i - L_i| < \floor{\frac{q}{4}}$. If so, set $\mu_i = 0$, otherwise set $\mu_i = 1$.
        \item Output $\mu$.
    \end{enumerate}

\begin{theorem}
    $\rpke$ satisfies CPA security.
\end{theorem}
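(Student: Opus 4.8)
The plan is to reduce to the CPA security of an $\ell(\lambda)$-fold parallel copy of Regev's single-bit encryption, after first removing the public testing key from the adversary's view. I would argue via a sequence of hybrids, starting from the real CPA game in which the adversary receives $pk=(A,y)$ together with the public testing key $ptk=(\mathsf{OP}_i)_{i\in[\ell(\lambda)]}$, submits $m_0,m_1$, and is given an encryption of $m_0$.

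\textbf{Step 1: simulate the testing key.} I would replace $ptk$ by $\rpke.\mathsf{SimulateTestKey}(1^\lambda)=\mathsf{CCObf.Sim}(1^\lambda,1^m,1^n,1^q)$, one index $i$ at a time. For each $i$, the obfuscated program $P_i$ is a compute-and-compare program with function $f_s$ and target $\floor{\frac{q}{4}}+L_i$, so to invoke \cref{thm:ccobf} I need the distribution $(P_i,\regi{aux})$ to be subexponentially unpredictable, where $\regi{aux}$ collects everything else the adversary sees — the encryption key $(A,y)$, the chosen messages, the challenge ciphertext, the already-simulated or still-real obfuscations $\mathsf{OP}_j$ for $j\neq i$ — together with the description of $f_s$ (which exposes $s$). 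The point is that $L_i$ is sampled uniformly and independently from $[0,\floor{\frac{q}{16}}]$ and nothing in $\regi{aux}$ or in $f_s$ depends on $L_i$ (encryption never uses $L$, and the other obfuscations only involve $L_j$, $j\neq i$), so conditioned on $\regi{aux}$ and $f_s$ the target $\floor{\frac{q}{4}}+L_i$ is still uniform over a set of size $\Theta(q)=\subexp(\lambda)$ and hence guessed with probability $O(1/q)$, which is subexponentially small since $q(\lambda)=\subexp(\lambda)$. After the $\ell(\lambda)$ hybrids, the testing key is independent of $sk=(s,(L_i)_i)$.

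\textbf{Step 2: reduce to Regev/LWE.} Since this is a chosen-plaintext (no decryption oracle) game, after Step 1 the adversary's view depends on $sk$ only through $y^\mathsf{T}=s^\mathsf{T}A+e^\mathsf{T}$ and the challenge ciphertext $(A r_i,\, y^\mathsf{T} r_i + (m_0)_i\floor{\frac{q}{2}})_{i\in[\ell(\lambda)]}$ (plus an independent simulated testing key). I would then (i) invoke the $(m,n,q,\chi)$-LWE assumption to replace $y^\mathsf{T}$ by a uniformly random $u^\mathsf{T}$, and (ii) apply the leftover hash lemma — using that $m$ is chosen sufficiently larger than $(n+1)\log q$, and that the $r_i\samp\zo^m$ are independent — to conclude that $(A,(A r_i,u^\mathsf{T} r_i)_{i\in[\ell(\lambda)]})$ is statistically close to uniform over $\Z_q^{n\times m}\times(\Z_q^{n+1})^{\ell(\lambda)}$, so the challenge ciphertext is statistically independent of $m_0$. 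Reversing all steps with $m_1$ in place of $m_0$ then yields indistinguishability of the two challenge cases.

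I expect the only genuine subtlety to be Step 1: carefully arguing subexponential unpredictability of the targets $\floor{\frac{q}{4}}+L_i$ against the \emph{entire} auxiliary state — in particular that it is harmless to give the compute-and-compare unpredictability experiment the description of $f_s$ (and hence $s$), because $L_i$ is information-theoretically independent of everything in the view, and that the regime $q(\lambda)=\subexp(\lambda)$ is exactly what makes the guessing bound subexponential so \cref{thm:ccobf} applies. The remaining pieces — the LWE switch, the leftover hash lemma, and unwinding the hybrids — are routine, modulo verifying that the parameters $m,n,q,B,\chi$ are set so that LWE is hard and the leftover hash lemma has statistically negligible error.
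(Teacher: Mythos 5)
Your proposal is correct, but it is considerably more thorough than the paper's own argument, which is a one-liner: the paper simply observes that the encryption key $(A,y)$ and the ciphertexts are syntactically identical to Regev's PKE and cites \cite{regev2009lattices}. The divergence comes from what you assume the CPA adversary sees. The paper implicitly treats the CPA public key as just $pk=(A,y)$ (the scheme's $\mathsf{Setup}$ outputs $pk$, $ptk$, and $sk$ as separate objects), so no hybrid over the testing key is needed and Step 2 of your argument is the entire proof. You instead give the adversary the testing key $ptk=(\mathsf{OP}_i)_i$ as well, and your Step 1 — replacing each $\mathsf{OP}_i$ by $\mathsf{CCObf.Sim}$ via subexponential unpredictability of the target $\floor{q/4}+L_i$, which is information-theoretically uniform over a set of size $\Theta(q)=\subexp(\lambda)$ even conditioned on $s$, the other $L_j$, and the challenge ciphertext — is exactly the right way to handle this; it mirrors the argument the paper itself uses later to prove the simulatable all-accept testing key property. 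Your version buys a strictly stronger statement (CPA security even given $ptk$), which is arguably what is actually needed downstream, since in the money constructions the testing key is embedded in obfuscated programs handed to the adversary. Both arguments are sound; yours closes a gap the paper leaves implicit. The one thing to make explicit if you write this up is the parameter condition $m\geq(n+1)\log q+\omega(\log\lambda)$ for the leftover hash lemma, which you correctly flag but the paper never states.
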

\begin{proof}
    Observe that the ciphertexts and the public key are the same as Regev's PKE, thus the result follows by \cite{regev2009lattices}.
\end{proof}
\begin{theorem}
    $\rpke$ satisfies rerandomization security, statistical rerandomization security for truly random public keys and pseudorandom public key property.
\end{theorem}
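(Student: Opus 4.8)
The theorem bundles three claims, and all three follow from the $(m,n,q,\chi)$-LWE assumption together with the leftover hash lemma \cite{impagliazzo1989pseudo}. Throughout, write $pk = (A,y)$ with $y^\mathsf{T} = s^\mathsf{T} A + e^\mathsf{T}$ as produced by $\rpke.\mathsf{Setup}$; note that the public testing key $ptk$ is irrelevant here, since rerandomization security (\cref{defn:rerandpke}) hands the adversary only $pk$. The \emph{pseudorandom public key} claim is immediate: $(A,y) = (A, s^\mathsf{T} A + e)$ is syntactically an LWE sample, hence computationally indistinguishable from $(A,u)$ with $u \samp \Z_q^m$, which is a uniformly random string of the appropriate length.

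\textbf{Rerandomization security.} Fix an efficient $\adve$ and argue by a short hybrid. First, replace $pk = (A,y)$ by $pk = (A,u)$ for uniform $u$; this is undetectable by LWE, and crucially the reduction needs only $pk$ — not $sk$ — to run the game, since both $s_0 = \rpke.\mathsf{ReRand}(pk, s)$ and $s_1 = \rpke.\mathsf{Enc}(pk, 0^{\ell})$ are computed from $pk$ alone. In this hybrid consider, for each coordinate $i \in [\ell]$, the matrix $M = \left[\begin{smallmatrix} A \\ u^\mathsf{T}\end{smallmatrix}\right] \in \Z_q^{(n+1)\times m}$ and the map $r \mapsto Mr$; for $r \in \{-1,0,1\}^m \setminus \{0\}$ at least one coordinate of $r$ is a unit in $\Z_q$, so this is a universal hash family keyed by $M$, and for parameters satisfying $m \ge (n+1)\log q + \omega(\log\lambda)$ (consistent with $q = \subexp(\lambda)$, $m = \poly(\lambda)$, since then $\log q$ is a fixed polynomial) the leftover hash lemma gives that $(M, Mr)$ for $r \samp \zo^m$ is within statistical distance $\negl(\lambda)$ of $(M,\text{uniform})$. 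Hence, coordinate by coordinate: the fresh encryption of $0$ contributes $(Ar_i', u^\mathsf{T} r_i')$, which is $\negl$-close to uniform over $\Z_q^{n+1}$, so $s_1$ is $\negl$-close to the uniform string; and the rerandomization contributes $(\vec a_i + Ar_i, c_i + u^\mathsf{T} r_i)$, which, being a fixed shift (determined by the adversarially and $pk$-dependently chosen $s$) of a near-uniform vector, is again $\negl$-close to uniform, so $s_0$ is $\negl$-close to the \emph{same} uniform string. Thus $s_0 \approx_s s_1$ and $\adve$ guesses $b$ with probability at most $1/2 + \negl(\lambda)$ in this hybrid; undoing the LWE switch costs only another negligible term.

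\textbf{Statistical rerandomization for truly random public keys.} Here $pk = (A,u)$ is genuinely uniform from the outset, so the LWE step in the previous paragraph is simply dropped and the remainder becomes purely statistical: conditioned on $pk$ and on the first-stage output $s$ of an \emph{arbitrary} (even unbounded) $\adve$, the leftover hash lemma already shows that both $s_0$ and $s_1$ lie within negligible statistical distance of the uniform distribution on $\Z_q^{(n+1)\ell}$. Hence an unbounded adversary has advantage $\negl(\lambda)$, which is exactly the required statement.

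\textbf{Expected main obstacle.} The arguments are short and the only points needing care are (i) verifying that the leftover hash lemma genuinely absorbs an arbitrary, adversarially- and $pk$-dependent ciphertext string $s$ — which works precisely because the rerandomization randomness $r_i$ is freshly sampled and independent of $s$, so adding $s$ is a deterministic shift of an already near-uniform value — and (ii) bookkeeping the LWE parameters so that $m$ is large enough for the leftover hash lemma despite $q$ being subexponential; this is a parameter-selection matter rather than a conceptual one.
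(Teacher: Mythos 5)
Your proof is correct and follows essentially the same route as the paper's: pseudorandom public keys from LWE directly, statistical rerandomization for truly random keys from the leftover hash lemma applied to $\bigl[\begin{smallmatrix} A \\ y^\mathsf{T}\end{smallmatrix}\bigr]\cdot r$, and computational rerandomization security by combining the two. You simply fill in details the paper leaves implicit (the universality of the hash family, the parameter condition on $m$ versus $(n+1)\log q$, and why the adversarial shift $s$ is absorbed).
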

\begin{proof}
    The first prove the statistical rerandomization security. For simplicity we will consider single bit messages. Observe that a ciphertext $(\vec{a}, c)$ is rerandomized by adding $(A \cdot r, y^\mathsf{T}\cdot r)$ where $r \samp \zo^m$ and $(A, y)$ is the public key. However, when the public key (consider it as a matrix by adding $y^\mathsf{T}$ as a row to $A$) is truly random, then by a simple application of leftover hash lemma \cite{impagliazzo1989pseudo}, we get that $(A \cdot r, y^\mathsf{T}\cdot r)$ is also truly random. Similarly, a fresh encryption of $0$ will also be a truly (independent) random string by leftover hash lemma. Hence, the statistical reradomization follows.

    For pseudorandomness, observe that the public key is simply LWE samples. Thus by LWE security (\cref{sec:lwe}) it is indistinguishable from a truly random matrix.
    
    To see rerandomization security, we simply combine the above two properties.
\end{proof}

\begin{theorem}
    $\rpke$ satisfies strong rerandomization correctness, strong rerandomization correctness with testing and simulatable all-accept testing keys property.
\end{theorem}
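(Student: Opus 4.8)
The plan is to prove the three properties in sequence, relying mainly on the hidden-shift structure and on the security of compute-and-compare obfuscation. Throughout, fix a coordinate $i \in [\ell(\lambda)]$ and work with a single-bit ciphertext $(\vec{a}, c)$; the full statement follows by a union bound over the $\ell(\lambda) = \poly(\lambda)$ coordinates. The key combinatorial fact, which I would establish first, is the one already asserted in the technical overview: after fixing $s$ and $\vec{a}$, a ciphertext $(\vec{a}, c)$ is \emph{bad} (i.e., some rerandomization randomness tape $r$ makes $\rpke.\Dec(sk, \cdot)$ flip on coordinate $i$) only if $c - s^\mathsf{T}\vec{a}$ lands in one of four intervals of total length $O(m \cdot B)$, each positioned relative to the threshold $\floor{q/4} + L_i$ (and its reflection $\floor{3q/4}+L_i$). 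The reason is that rerandomization adds $y^\mathsf{T} r' = s^\mathsf{T} A r' + e^\mathsf{T} r'$ to $c$ and $A r'$ to $\vec{a}$, so $c - s^\mathsf{T}\vec{a}$ changes by exactly $e^\mathsf{T} r' \in [-mB, mB]$; hence a decryption flip requires $c - s^\mathsf{T}\vec{a}$ to be within $mB$ of a decryption boundary. I would write this out carefully as a preliminary claim, matching the two displayed interval conditions in the overview.

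\emph{Strong correctness (Definition~\ref{defn:strongcor}).} Given the claim, a bad ciphertext forces $c - s^\mathsf{T}\vec{a}$ into a set of size $O(mB)$ determined by $L_i$. Since $L_i$ is sampled uniformly from $[0, \floor{q/16}]$ and (by CPA security / the fact that $ct, r$ are produced from $pk$ alone, which is independent of $L_i$) the adversary's output is independent of $L_i$, for each fixed coordinate the probability that the adversary's ciphertext is bad on that coordinate is at most $O(mB)/\floor{q/16} = \poly(\lambda)/\subexp(\lambda) = \negl(\lambda)$ — in fact subexponentially small. A union bound over the $\ell(\lambda)$ coordinates finishes it. I would be slightly careful to note that $\adve$ sees only $pk = (A,y)$, which does not depend on the $L_i$, so the independence argument is clean and needs no reduction.

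\emph{Strong correctness with public testing (Definition~\ref{defn:rerandpketest}) and simulatable all-accept testing keys.} Here the statement is about non-existence of a bad ciphertext that passes $\rpke.\Test$, with overwhelming probability over setup. First I would argue the \emph{ideal} version: if $ptk$ consisted of \emph{perfectly correct} obfuscations of the point-check programs $P_i$ (point $\floor{q/4}+L_i$ under $f_s$), then $\rpke.\Test(ptk, ct) = 1$ means that for every shift $sh$ in the specified $O(mB)$-size set, $f_s(\vec{a}_i, c_i + sh) \neq \floor{q/4}+L_i$, i.e. $c_i - s^\mathsf{T}\vec{a}_i + sh$ avoids both $\floor{q/4}+L_i$ and (via the shifted range around $2\floor{q/4}$) $\floor{3q/4}+L_i$; by the preliminary claim this exactly certifies that $(\vec{a}_i, c_i)$ is \emph{not} bad on coordinate $i$. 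So with perfectly correct CC obfuscation, \emph{no} bad-and-test-passing ciphertext exists, period. To pass from the ideal to the real scheme I invoke the correctness guarantee of $\mathsf{CCObf}$ (Definition~\ref{defn:ccobf}): with probability $\ge 1-\negl(\lambda)$ over the obfuscation randomness, $\mathsf{OP}_i$ agrees with $P_i$ on all inputs, so the above argument goes through verbatim. For the simulatable all-accept testing key property, I use CC-obfuscation security: the distribution of $(P_i, \cdot)$ is subexponentially unpredictable because the target $\floor{q/4}+L_i$ is subexponentially unpredictable given $f_s$ (i.e. given $s$) — $L_i$ is uniform in an interval of subexponential size and independent of everything in the auxiliary input — so $\mathsf{OP}_i \approx_c \mathsf{CCObf.Sim}(1^\lambda, \dots)$, and by the All-Zero Simulation property the simulated program outputs $0$ on every input, hence $\rpke.\Test$ with the simulated key outputs $1$ on every string $ct$. (This also re-confirms the pseudorandom/simulatable structure used elsewhere.) I would note that the two claims — strong correctness with public testing, and simulatable all-accept testing — are somewhat orthogonal: the first is about the real testing key, the second about swapping in the simulated one, and they are combined freely in the downstream applications.

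The main obstacle I anticipate is getting the interval bookkeeping in the preliminary claim exactly right: tracking how $c - s^\mathsf{T}\vec{a}$ moves under rerandomization, accounting for the $\floor{\cdot}$ roundings, the two decryption boundaries at $\floor{q/4}+L_i$ and $\floor{3q/4}+L_i$ (equivalently the "$+L_i$" shift applied before comparing to $\floor{q/4}$), and then checking that the shift set $[-mB+1, mB] \cup [2\floor{q/4}-mB, 2\floor{q/4}-1+mB]$ used inside $\rpke.\Test$ precisely sweeps a single point-obfuscation query across \emph{all four} bad intervals and nothing else. Everything after that claim is a short probabilistic argument plus two black-box invocations of $\mathsf{CCObf}$ correctness and security; the only mild subtlety there is confirming the subexponential-unpredictability hypothesis for the CC distribution, which holds because $q = \subexp(\lambda)$ makes the interval $[0,\floor{q/16}]$ of subexponential size.
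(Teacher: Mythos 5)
Your proposal is correct, and two of the three properties (strong correctness with public testing, and simulatable all-accept testing keys) are proved exactly as in the paper: the same interval bookkeeping showing that rerandomization perturbs $c - s^\mathsf{T}\vec{a}$ by $e^\mathsf{T}r \in [-mB, mB]$, the same observation that the shift set in $\rpke.\mathsf{Test}$ sweeps a single compute-and-compare query across the two bad windows around $\floor{q/4}+L_i$ and $q-\floor{q/4}+L_i$, and the same invocation of $\mathsf{CCObf}$ correctness and of $\mathsf{CCObf}$ security via subexponential unpredictability of the target. (You also implicitly use the intended semantics of $\rpke.\mathsf{Test}$ -- reject when the point program fires -- rather than the literal pseudocode, whose accept/reject polarity is flipped; that is a typo in the construction, not an issue with your argument.)

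Where you genuinely diverge is plain strong correctness. The paper derives it as a corollary of the other two properties via a two-step hybrid: insert a test with the simulated all-accept key (a no-op), swap in the real testing key (computationally indistinguishable), then conclude by strong correctness with testing. You instead give a direct information-theoretic argument: the adversary's view is only $pk=(A,y)$, which is sampled independently of the $L_i$, so conditioned on its output $(ct,r)$ the value $c_i - s^\mathsf{T}\vec{a}_i$ is fixed while $L_i$ remains uniform over $\approx q/16$ values, and the bad event forces $L_i$ into a set of size $O(mB)$; a union bound over $\ell(\lambda)$ coordinates gives a subexponentially small bound. Your route is more elementary, avoids the computational step entirely, and yields a statistical (indeed subexponential) bound rather than merely a negligible one. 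The paper's route is more modular and would be the right one if the strong-correctness adversary were additionally handed the testing key $ptk$ (which does depend on the $L_i$); under the definition as written, where the adversary sees only $pk$, both arguments are valid.
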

\begin{proof}
    We will first prove strong rerandomization correctness with testing. For simplicity we will prove the single bit plaintext case $\ell = 1$, the general case follows by the same argument. 
    Let $(\vec{a}, c)$ be a (malicious) ciphertext. Since we decrypt it by checking $|c - s^\mathsf{T}\cdot \vec{a} - L| < \floor{\frac{q}{4}}$, observe that it decrypts to $0$ if and only if $c - s^\mathsf{T}\cdot \vec{a} \in [L, \floor{\frac{q}{4}} + L - 1] \bigcup [q - \floor{\frac{q}{4}} + 1 + L, q - 1 + L]$. Similarly, it decrypts to $1$ if and only if $c - s^\mathsf{T}\cdot \vec{a} \in [\floor{\frac{q}{4}} + L, q - \floor{\frac{q}{4}} + L]$. To rerandomize, we add $(A\cdot r, y^\mathsf{T}\cdot r)$ and get $(\vec{a} + A\cdot r, c + y^\mathsf{T}\cdot r)$. To decrypt this, we check $|c + y^\mathsf{T}\cdot r -  s^\mathsf{T}\cdot (\vec{a} + A\cdot r) - L| < \floor{\frac{q}{4}}$, but we have $c + y^\mathsf{T}\cdot r -  s^\mathsf{T}\cdot (\vec{a} + A\cdot r) - L = c - s^\mathsf{T}\cdot \vec{a} + e^\mathsf{T}\cdot r$ where $e$ is the error vector and $r \samp \zo^{m(\lambda)}$, since $y^\mathsf{T} = s^\mathsf{T}\cdot A + e^\mathsf{T}$. However, observe that $e^\mathsf{T}\cdot r \in [-m \cdot B, m \cdot B]$ since the error distribution satisfies $\chi \subseteq [-B, B]$. Thus, for decryption of the original ciphertext and the rerandomized ciphertext to potentially differ, the original value $c - s^\mathsf{T}\cdot \vec{a}$ needs to be near rounding thresholds. More formally, we need to have $c - s^\mathsf{T}\vec{a} \in [\floor{\frac{q}{4}} + L - m\cdot B, \floor{\frac{q}{4}} + L + m\cdot B - 1]$ or $c - s^\mathsf{T}\vec{a} \in [q - \floor{\frac{q}{4}} + L - m\cdot B + 1, q - \floor{\frac{q}{4}} + L + m\cdot B]$. By correctness of $\mathsf{CCObf}$, this is exactly what our test algorithm $\mathsf{Test}$ is checking, thus strong rerandomization correctness with testing follows.

    Now, observe that the target values $\floor{\frac{q}{4}} + L_i$ of the compute-and-compare programs $P_i$ are subexponentially unpredictable. Thus, by the security of $\mathsf{CCObf}$, the obfuscated programs $\mathsf{OP}_i$ are indistinguishable from obfuscations of all zero programs. Hence, simulatable all-accept testing property follows. Further, we can argue strong rerandomization correctness as follows. Consider the strong rerandomization correctness game and call it the first hybrid, $\hyb_0$. In $\hyb_1$, we modify the challenger so that the challenger first tests adversary's malicious ciphertext choice using $\rpke.\mathsf{Test}(ptk, s)$ where $ptk \samp \rpke.\mathsf{SimulateTestKey}(1^\lambda)$, and if it does not pass, the adversary automatically loses. However, by all-accepting simulated key property, $\hyb_0 \equiv \hyb_1$. In $\hyb_2$, we instead use the actual testing key rather than the simulated one. By simulated key security,  we have $\hyb_1 \approx \hyb_2$. Finally, we have $\Pr[\hyb_2 = 1] \leq 1/2 + \negl(\lambda)$ by  strong rerandomization correctness with testing security.
\end{proof}

\section{Definitions}\label{sec:defn}
In this section, we give the formal definitions for quantum money schemes that support privacy and tracing.
\begin{definition}[Quantum Money with Privacy and Tracing]
    A quantum money scheme $\bank$ consists of the following efficient algorithms.
    \begin{itemize}
        \item $\mathsf{Setup}(1^\lambda)$: Takes in a security parameter, outputs a minting key $mk$, a tracing key $tk$ and a public verification key $vk$.
        \item $\mathsf{GenBanknote}(mk, t)$: Takes in the minting key $mk$ and a tag $t$, outputs a quantum banknote register.
        \item $\mathsf{Verify}(vk,\reg)$: Takes in the public key and a quantum register $\reg$, outputs $0$ or $1$.
        \item $\mathsf{ReRandomize}(vk, \reg)$: Takes in the public key and a quantum register $\reg$, outputs the updated register.

        \item $\mathsf{Trace}(tk, \reg)$: Takes in the tracing key and a quantum register $\reg$, outputs a tag value or $\perp$.
    \end{itemize}
\end{definition}

We will require the following properties.

\begin{definition}[Correctness] For any tag $t \in \zo^{t(\lambda)}$, \begin{equation*}
        \Pr[b = 1 : \begin{array}{c}
             vk, mk, tk \samp \bank.\mathsf{Setup}(1^\lambda)  \\
             \reg \samp \bank.\mathsf{GenBanknote}(mk, t) \\
             b \samp \bank.\mathsf{Verify}(vk, \reg)
        \end{array}] \geq 1 - \negl(\lambda).
    \end{equation*}
\end{definition}

We will also consider \emph{projectiveness}, which requires that $\mathsf{Verify}(vk, (sn, \cdot))$ implements a rank-$1$ projector.

We now define correctness after rerandomization, which says that a valid banknote will stay valid after rerandomization.
    \begin{definition}[Correctness After Rerandomization] For any efficient algorithm $\mathcal{B}$, \begin{equation*}
        \Pr[b = 0 \vee b' = 1 : \begin{array}{c}
             vk, mk, tk \samp \bank.\mathsf{Setup}(1^\lambda)  \\
             \reg \samp \mathcal{B}(vk, mk, tk) \\
             b \samp \bank.\mathsf{Verify}(vk, \reg) \\
             \reg \samp \bank.\mathsf{ReRandomize}(vk, \reg) \\
             b' \samp \bank.\mathsf{Verify}(vk, \reg)
        \end{array}] \geq 1 - \negl(\lambda).
    \end{equation*}
\end{definition}

Finally, we will require that banknotes do not grow in size with rerandomization, and call this property \emph{compactness}\footnote{We note that, for compact schemes, we can also simply make re-randomization the last step of the verification algorithm}.

We now define counterfeiting (i.e unclonability) security for quantum money schemes. The security requires that any (QPT) adversary that obtains $k$ banknotes will note be able to produce $k + 1$ banknotes. While our security notion is quite similar to previous counterfeiting definitions (such as \cite{AC12}), we will require unclonability security even against an adversary that has the tracing key. We give the formal game-based definition in \cref{sec:extradef}.

\subsection{Fresh Banknote Security}
Previous work has defined a privacy notion, called \emph{anonymity}, where an adversary either gets back their banknotes in the original order or in permuted order (see \cref{defn:anonold}). We introduce a new, stronger security notion called \emph{indistinguishability from fresh banknotes}.

\begin{definition}[Indistinguishability from Fresh Banknotes]\label{defn:freshqm}
     Consider the following game between a challenger and an adversary $\adve$.

    \paragraph{\underline{$\qmindresh{\adve}(1^\lambda)$}}
    \begin{enumerate}
    \item Sample $vk, mk, tk \samp \bank.\mathsf{Setup}(1^\lambda)$.
    \item Submit $\textcolor{blue}{vk, mk}$ to $\adve$.
    \item Adversary $\adve$ outputs a register $\reg_0$.
    \item Run $\bank.\mathsf{Verify}(vk, \reg_0)$. If it fails, output $0$ and terminate.
    \item Run $\bank.\mathsf{ReRandomize}(vk, \reg_0)$.
    \item Sample $\reg_1 \samp \bank.\mathsf{GenBanknote}(mk, 0^{t(\lambda)})$.
    \item Submit $\reg_b$ to the adversary $\adve$.
    \item Adversary $\adve$ outputs a bit $b'$.
    \item Output $1$ if and only if $b' = b$.
    \end{enumerate}

    We say that the quantum money scheme $\bank$ satisfies \emph{fresh banknote indistinguishability} if for any QPT adversary $\adve$, we have 
    \begin{equation*}
        \Pr[\qmindresh{\adve}(1^\lambda) = 1] \leq \frac{1}{2} + \negl(\lambda).
    \end{equation*}
\end{definition}

\begin{theorem}\label{thm:freshbnimplanon}
    Any quantum money scheme that satisfies \emph{basic fresh banknote indistinguishability} also satisfies \emph{anonymity}.
\end{theorem}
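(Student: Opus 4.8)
The plan is a hybrid argument over the banknote positions. Recall that in the anonymity game (\cref{defn:anonold}) the adversary submits some number $n = n(\lambda) = \poly(\lambda)$ of banknotes to the challenger, which verifies them, rerandomizes each one, and returns them either in the submitted order (challenge bit $b=0$) or in an order dictated by a permutation $\pi$ (challenge bit $b=1$); the adversary then outputs a guess for $b$. Fix a QPT adversary $\adve$ and a bit $b$. For $i \in \{0, 1, \dots, n\}$ define the hybrid $H_i^b$ that runs exactly like the anonymity game with challenge bit $b$, except that the first $i$ of the returned registers are replaced by freshly minted banknotes $\bank.\mathsf{GenBanknote}(mk, 0^{t(\lambda)})$ --- independent of $\adve$'s submitted banknotes and of $\pi$ --- while the remaining $n-i$ registers are handled as in the real game. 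Then $H_0^b$ is the anonymity game with challenge bit $b$, and in $H_n^b$ all returned registers are freshly minted, so $H_n^0$ and $H_n^1$ are the same process; call it $H_n$.

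Next I would show $H_i^b \approx_c H_{i+1}^b$ for every $i$ and $b$, by reduction to fresh banknote indistinguishability (\cref{defn:freshqm}, and likewise its basic variant). The reduction $\mathcal{B}$ receives $(vk, mk)$ from the fresh-banknote challenger, runs $\adve$ internally to obtain its $n$ banknotes, and assembles the response as follows: the first $i$ output positions get fresh banknotes minted by $\mathcal{B}$ with $mk$; positions $i+2, \dots, n$ get $\adve$'s corresponding submitted banknotes, rerandomized by $\mathcal{B}$ with $vk$ and placed according to the simulated $(b, \pi)$; and for position $i+1$, $\mathcal{B}$ forwards the submitted banknote that belongs in that slot (i.e.\ $\adve$'s banknote indexed by $\pi^{-1}(i+1)$ if $b=1$, or by $i+1$ if $b=0$) to the fresh-banknote challenger as its register $\reg_0$, and plants whatever register $\reg_{b^*}$ it receives back into position $i+1$. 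If the fresh-banknote challenge bit is $b^* = 0$, position $i+1$ holds the rerandomized submitted banknote and $\mathcal{B}$ perfectly simulates $H_i^b$; if $b^* = 1$, it holds a fresh banknote and $\mathcal{B}$ perfectly simulates $H_{i+1}^b$. Since both the anonymity game and the fresh-banknote game abort with output $0$ on a failed verification, $\mathcal{B}$ can check the $n-1$ positions it controls itself and delegate the check of position $i+1$ to its challenger, so the simulation is faithful; outputting $\adve$'s guess, $\mathcal{B}$ achieves distinguishing advantage $|\Pr[H_i^b = 1] - \Pr[H_{i+1}^b = 1]|$, which is therefore $\negl(\lambda)$.

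Combining the hybrids, $\adve$'s distinguishing advantage in the anonymity game is $|\Pr[H_0^0 = 1] - \Pr[H_0^1 = 1]| \le |\Pr[H_0^0=1] - \Pr[H_n=1]| + |\Pr[H_n=1] - \Pr[H_0^1=1]| \le 2n \cdot \negl(\lambda) = \negl(\lambda)$, where the middle step uses that $H_n$ does not depend on $b$. I expect the main obstacle to be bookkeeping rather than conceptual: lining up the two game formulations exactly --- which party mints the banknotes, where rerandomization is applied in the anonymity game, and how verification failures and the permutation are handled --- and checking that the hybrid argument remains valid even though the banknote registers may be entangled with $\adve$'s internal state (it does, since each $H_i^b$ is a well-defined global quantum process and $\mathcal{B}$ merely shuttles registers without measuring them). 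The only place a genuine mismatch could surface is if the ``basic'' fresh banknote definition withholds $mk$ from the adversary, which would stop $\mathcal{B}$ from minting the fresh banknotes for the first $i$ positions; in that case I would instead have $\mathcal{B}$ obtain each such fresh banknote from an independent invocation of the challenge oracle (or reorganize the hybrids to need only honestly generated banknotes), keeping the number of hybrid steps polynomial.
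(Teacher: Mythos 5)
Your proposal is correct and is exactly the "simple hybrid argument" the paper invokes without spelling out: a position-by-position hybrid replacing each returned (rerandomized) banknote with a fresh one, reducing each step to fresh banknote indistinguishability, and observing that the all-fresh endpoint is independent of the permutation and the bit $b$. Your bookkeeping concerns (entanglement, where verification happens, and the availability of $mk$ in the fresh-banknote game, which the definition does provide) are all handled appropriately.
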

The proof follows by a simple hybrid argument.

\subsection{Traceability}
In this section, we introduce tracing security for public-key quantum money.
\begin{definition}[Tracing Security]\label{defn:trace}
    Consider the following game between the challenger and an adversary $\adve$.
    \paragraph{$\underline{\tracegame{\adve}(1^\lambda)}$}
\begin{enumerate}
\item Initialize the list\footnote{In particular, it can contain the same element multiple times.} $\mathsf{TAGS} = []$.
    \item Sample $vk, mk, tk \samp \bank,\mathsf{Setup}(1^\lambda)$.
    \item Submit $\textcolor{blue}{vk, tk}$ to $\adve$.
    \item \underline{\textbf{Banknote Query Phase:}} For multiple rounds, $\adve$ queries for a banknote by sending a tag $t \in \zo^{t(\lambda)}$. For each query, the challenger executes $\regi{bn} \samp \bank.\mathsf{GenBanknote}(mk, t)$ and submits $\regi{bn}$ to the adversary. The challenger also adds $t$ to the list $\mathsf{TAGS}$.
    \item $\adve$ outputs a value $k$ and a $k$-partite register $(\reg_i)_{i \in [k]}$.
    \item For $i \in [k]$, the challenger tests $\bank.\mathsf{Verify}(vk, \reg_i) = 1$ and adds the output of $\bank.\mathsf{Dec}(tk, \reg_i)$ to the list $\mathsf{TAGS}'$. If any of the tests output $0$, the challenger outputs $0$ and terminates.
    \item The challenger checks if the list $\mathsf{SORT}(\mathsf{TAGS}')$ is a sublist of $\mathsf{SORT}(\mathsf{TAGS})$. If so, it outputs $0$. Otherwise, it outputs $1$.
\end{enumerate}
\end{definition}
\subsection{Untraceability}
In this section, we introduce the notion of \emph{untraceability} for quantum money for the first time. We will require that the banknotes are anonymous to everyone, including the malicious bank. We will consider this model in the common \emph{random} string model.
\begin{definition}[Untraceability]\label{defn:untrac}
     Consider the following game between a challenger and an adversary $\adve$.

    \paragraph{\underline{$\qmuntrac{\adve}(1^\lambda)$}}
    \begin{enumerate}
    \item Sample $crs \samp \zo^{q(\lambda)}$.
    \item Submit $crs$ to the adversary $\adve$.
    \item Adversary $\adve$ outputs keys $vk, mk$ and a register $\reg_0$.
    \item Run $\bank.\mathsf{Verify}(vk, \reg_0)$. If it fails, output $0$ and terminate.
    \item Sample $\reg_1 \samp \bank.\mathsf{GenBanknote}(mk)$.
    \item Run $\bank.\mathsf{Verify}(vk, \reg_1)$. If it fails, output $0$ and terminate.
    \item Sample $b \samp \zo$.
    \item Submit $\reg_b$ to the adversary $\adve$.
    \item Adversary $\adve$ outputs a bit $b'$.
    \item Output $1$ if and only if $b' = b$.
    \end{enumerate}

    We say that the quantum money scheme $\bank$ satisfies \emph{untraceability} if for any QPT adversary $\adve$, we have 
    \begin{equation*}
        \Pr[\qmuntrac{\adve}(1^\lambda) = 1] \leq \frac{1}{2} + \negl(\lambda).
    \end{equation*}
\end{definition}

\section{Construction with Anonymity and Traceability}
In this section, we give our public-key quantum money construction and prove that it satisfies unclonability, fresh banknote security, and tracing security.

We assume the existence of the following primitives that we use in our construction: 
(i) $\io$, subexponentially secure indistinguishability obfuscation,
(ii) $\rpke$, a rerandomizable public key encryption scheme with strong correctness and public testing (\cref{defn:rerandpketest})
    and (iii) $F$, a subexponentially secure puncturable PRF with input length $p_1(\lambda)$ and output length $p_2(\lambda)$.

We also utilize the following primitives that we only use in our security proofs: (i) $\ske$, a private-key encryption scheme with pseudorandom ciphertexts, and (ii) $\dss$, a signature scheme.

We also set the following parameters: $t(\lambda)$ to be the desired tag length, $sg(\lambda)$ to be the signature length of $\dss$ for messages of length $\lambda + t(\lambda)$, $c(\lambda)$ to be the ciphertext size of $\ske$ for messages of length $\lambda + sg(\lambda)$, $p_1(\lambda)$ to be the ciphertext size of $\rpke$ for messages of length $t(\lambda) + c(\lambda)$, $p_2(\lambda)$ to be the randomness size of the algorithm $\mathsf{SampleFullRank}$, and $p_3(\lambda)$ to be the randomness size of the algorithm $\pke.\mathsf{ReRandomize}$.

We now move onto our construction. Let $\cansbsp$ denote the \emph{canonical} $\lambda/2$-dimensional subspace of $\F_2^\lambda$ defined as $\mathsf{Span}(e_1, \dots, e_{\lambda/2})$ where $e_i \in \F_2^\lambda$ is the vector that has $1$ at the $i$-th index and $0$ at all the others.

\paragraph{\underline{$\bank.\mathsf{Setup}(1^\lambda)$}}
\begin{enumerate}
    \item Sample $K \samp F.\mathsf{Setup}(1^\lambda)$.
\item Sample $pk, sk \samp \rpke.\mathsf{Setup}(1^\lambda)$.
    \item Sample $\mathsf{OPMem} \samp \io(\mathsf{PMem})$ where $\mathsf{PMem}$ is the following program.
    
\begin{mdframed}
        {\bf $\underline{\mathsf{PMem}_{K}(id, v, b)}$}
        
        {\bf Hardcoded: $K$}
        \begin{enumerate}[label=\arabic*.]
            \item $T = \mathsf{SampleFullRank}(1^\lambda; F(K, id))$.
            \item Compute $w = T^{-1}(v)$ if $b = 0$; otherwise, compute $w = T^\mathsf{T}(v)$.
            \item Output $1$ if $w \in \cansbsp$ if $b = 0$ and if $w \in \cansbsp^\perp$ if $b = 1$. Otherwise, output $0$.
        \end{enumerate}
    \end{mdframed}

\item Sample $\mathsf{OPReRand} \samp \io(\mathsf{PReRand})$ where $\mathsf{PReRand}$ is the following program\footnote{We note that if we make the stronger assumption of $\rpke$ with \emph{all-accepting  simulatable testing keys}, we can actually remove the $\rpke.\mathsf{Test}$ line from the construction.}.
    
    \begin{mdframed}
        {\bf $\underline{\mathsf{PReRand}_{K}(id, s)}$}
        
        {\bf Hardcoded: $K, pk$}
        \begin{enumerate}[label=\arabic*.]
        \item Check if $\rpke.\mathsf{Test}(pk, id) = 1$. If not, output $\perp$ and terminate.
            \item $id' = \rpke.\mathsf{ReRand}(pk, id; s)$.
            \item $T_1 = \mathsf{SampleFullRank}(1^\lambda; F(K, id))$.
            \item $T_2 = \mathsf{SampleFullRank}(1^\lambda; F(K, id'))$.
            \item Output $id', T_2\cdot T_1^{-1}$.
        \end{enumerate}
\end{mdframed}

\item Set $vk = (\mathsf{OPMem}, \mathsf{OPReRand})$.
\item Set $mk = (K, pk)$.
\item Set $tk = sk$.
\item Output $vk, mk, tk$.
\end{enumerate}

\paragraph{\underline{$\bank.\mathsf{GenBanknote}(mk, tag)$}}
\begin{enumerate}
    \item Parse $(K, pk) = mk$.
    \item Sample $ict \samp \zo^{c(\lambda)}$.
    \item Sample $ct \samp \rpke.\mathsf{Enc}(pk, tag || ict)$.
    \item $T = \mathsf{SampleFullRank}(1^\lambda; F(K, ct))$.
    \item Set $\ket{\$} = \sum_{v \in A} \ket{T(v)}$.
    \item Output $ct, \ket{\$}$.
\end{enumerate}

\paragraph{\underline{$\bank.\mathsf{Verify}(vk, \reg)$}}
\begin{enumerate}
    \item Parse $(\mathsf{OPMem}, \mathsf{OPReRand}) = vk$.
    \item Parse $(id, \reg') = \reg$.
    \item Run $\mathsf{OPMem}$ coherently on $id, \reg', 0$. Check if the output is $1$, and then rewind (as in Gentle Measurement Lemma \cite{aarlemma}).
    \item Apply QFT to $\reg'$.
    \item Run $\mathsf{OPMem}$ coherently on $id, \reg', 1$. Check if the output is $1$, and then rewind.
    \item Output $1$ if both verifications passed above. Otherwise, output $0$.
\end{enumerate}

\paragraph{\underline{$\bank.\mathsf{ReRandomize}(vk, \reg)$}}
\begin{enumerate}
    \item Parse $(\mathsf{OPMem}, \mathsf{OPReRand}) = vk$.
    \item Parse $(ct, \reg') = \reg$.
    \item Sample $s \samp \zo^{p_3(\lambda)}$.
    \item $ct', T = \mathsf{OPReRand}(ct, s)$.
    \item Apply the linear map $T: \F_2^\lambda \to \F_2^\lambda$ coherently\footnote{That is, in superposition.  Note that since $T$ is an efficient (in both directions) bijection, we can indeed apply it in superposition with no garbage left.} to $\reg'$.
    \item Output $ct', \reg'$.
\end{enumerate}

\paragraph{\underline{$\bank.\mathsf{Trace}(tk, \reg)$}}
\begin{enumerate}
    \item Parse $sk = tk$.
    \item Parse $(ct, \reg') = \reg$.
    \item Compute $pl = \rpke.\mathsf{Dec}(sk, ct)$.
    \item If $pl = \bot$, output $\bot$. Otherwise, output first $t(\lambda)$ bits of $pl$.
\end{enumerate}

\begin{theorem}
    $\bank$ satisfies correctness after rerandomization and projectiveness.
\end{theorem}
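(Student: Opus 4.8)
The plan is to establish the two claims separately, handling projectiveness first since correctness after rerandomization depends on it. For \emph{projectiveness}, I would show that for every serial number $id$ (in particular every ciphertext string), the verification procedure $\bank.\mathsf{Verify}(vk, (id, \cdot))$ implements the rank-$1$ projector onto $\ket{T_{id} \cdot \cansbsp}$, where $T_{id} = \mathsf{SampleFullRank}(1^\lambda; F(K, id))$. The key fact is the classical Aaronson--Christiano / Zhandry observation that, for a subspace $A \leq \F_2^\lambda$, measuring membership in $A$ in the computational basis (via the obfuscated program $\mathsf{OPMem}(id, \cdot, 0)$, which by correctness of $\io$ computes exactly $\mathsf{PMem}$), then applying QFT, then measuring membership in $A^\perp$ in the Hadamard basis, composes to the projector $\ketbra{A}{A}$. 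Here $A = T_{id} \cdot \cansbsp$, and I need that $\mathsf{PMem}(id, v, 0) = 1 \iff T_{id}^{-1}(v) \in \cansbsp \iff v \in T_{id} \cdot \cansbsp = A$, and that $\mathsf{PMem}(id, v, 1) = 1 \iff T_{id}^{\mathsf{T}}(v) \in \cansbsp^\perp \iff v \in (T_{id}^{\mathsf{T}})^{-1}(\cansbsp^\perp) = (T_{id}^{-1})^{\mathsf{T}}(\cansbsp^\perp)$, which is precisely $(T_{id}\cdot\cansbsp)^\perp = A^\perp$ by the standard identity $(MW)^\perp = (M^{-1})^{\mathsf{T}} W^\perp$ for invertible $M$. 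Since $A$ has dimension $\lambda/2$, so does $A^\perp$, and the composed projector has rank $1$. The rewinding via the Gentle Measurement Lemma only changes the implemented map negligibly, but in fact, since each intermediate measurement is a \emph{projective} membership check, coherent evaluation followed by uncomputation is exact, so projectiveness holds perfectly.

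For \emph{correctness after rerandomization}, let $\reg = (ct, \reg')$ be the register output by the efficient algorithm $\mathcal{B}$, and condition on $\bank.\mathsf{Verify}(vk, \reg) = 1$. By projectiveness just established, conditioned on acceptance the post-verification state is exactly $\ket{T_{ct}\cdot\cansbsp}$ in register $\reg'$ (with serial number $ct$). Now $\bank.\mathsf{ReRandomize}$ runs $\mathsf{OPReRand}(ct, s)$ for uniform $s$: it first checks $\rpke.\mathsf{Test}(pk, ct) = 1$ and otherwise outputs $\perp$ — but if the test fails, the output register has a malformed serial number and would fail verification, so I must rule this out. This is where \emph{strong correctness with public testing} (\cref{defn:rerandpketest}) is needed: with overwhelming probability over $\bank.\mathsf{Setup}$, there is no string $ct$ that fails the test yet... wait — actually I need the \emph{contrapositive direction} of public testing, namely that honestly... no: $ct$ here is chosen by $\mathcal{B}$, but since $\reg$ passes $\bank.\mathsf{Verify}$, which only invokes $\mathsf{OPMem}$ and never $\rpke.\mathsf{Test}$, a malicious $ct$ with $\rpke.\mathsf{Test}(pk, ct) = 0$ could still pass verification. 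So I would argue: either $\mathsf{OPReRand}$ does not abort (handled below), or it does — and I claim the latter case, combined with $\mathsf{Verify}$ having accepted, contributes only a negligible term, because I need strong correctness to also guarantee that $\rpke.\mathsf{Test}$ accepts all ciphertexts on which rerandomization is "meaningful"; more carefully, the honest analysis should restrict attention to the event that $\rpke.\mathsf{Test}(pk, ct) = 1$, and a separate argument (or a mild strengthening, e.g. the all-accepting simulatable testing key variant mentioned in the footnote) handles the abort case. Assuming $\mathsf{OPReRand}$ does not abort, it outputs $id' = \rpke.\mathsf{ReRand}(pk, ct; s)$ and $T^* = T_2 \cdot T_1^{-1}$ where $T_1 = T_{ct}$ and $T_2 = T_{id'}$; applying $T^*$ coherently to $\ket{T_{ct}\cdot\cansbsp} = \ket{T_1 \cdot\cansbsp}$ yields $\sum_{w\in\cansbsp}\ket{T_2 T_1^{-1} T_1(w)} = \ket{T_2\cdot\cansbsp} = \ket{T_{id'}\cdot\cansbsp}$, which is exactly the honest banknote state for serial number $id'$. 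By projectiveness, $\bank.\mathsf{Verify}(vk, (id', \ket{T_{id'}\cdot\cansbsp})) = 1$ with certainty.

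The main obstacle, and the step I expect to require the most care, is the $\rpke.\mathsf{Test}$ abort in $\mathsf{PReRand}$: a dishonestly prepared register could carry a serial number $ct$ that makes $\mathsf{OPMem}$ accept some subspace state (so $\bank.\mathsf{Verify}$ passes) while $\rpke.\mathsf{Test}(pk, ct) = 0$, causing rerandomization to output $\perp$ and fail the second verification. Since $\mathsf{PMem}$ applies $\mathsf{SampleFullRank}(1^\lambda; F(K, ct))$ to \emph{any} string $ct$, it always induces some valid subspace, so $\mathsf{Verify}$ acceptance gives no guarantee about $\rpke.\mathsf{Test}$. I would resolve this either by invoking the \emph{all-accepting simulatable testing key} variant (footnoted in the construction), under which the $\rpke.\mathsf{Test}$ line can be removed entirely so there is no abort; or, staying with plain strong correctness with public testing, by noting that correctness after rerandomization as stated quantifies over efficient $\mathcal{B}$ and the probability that $\mathcal{B}$ produces a \emph{verifying} register whose serial number fails the public test can be bounded negligibly by reducing to strong correctness (a $ct$ failing the test but usable is tied to the "bad ciphertext" event). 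The remaining pieces — the subspace identities, coherence of applying $T^*$ in superposition (justified since $T^*$ is an efficiently computable bijection with efficiently computable inverse), and exactness of projective membership checks under the Gentle Measurement rewinding — are routine.
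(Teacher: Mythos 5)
Your projectiveness argument is the same as the paper's: show that $\mathsf{PMem}(id,\cdot,0)$ and $\mathsf{PMem}(id,\cdot,1)$ compute membership in $A = T_{id}\cdot\cansbsp$ and $A^\perp$ respectively (the paper does the second step via the inner-product identity $\langle T^{\mathsf{T}}v,u\rangle = \langle v,Tu\rangle$, which is your $(MW)^\perp = (M^{-1})^{\mathsf{T}}W^\perp$), and then invoke the Aaronson--Christiano dual-basis projection lemma. Your correctness-after-rerandomization computation ($T^* = T_2T_1^{-1}$ maps $\ket{T_1\cdot\cansbsp}$ to $\ket{T_2\cdot\cansbsp}$, which re-verifies by projectiveness) is also exactly the paper's argument.

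The one place you diverge is the $\rpke.\mathsf{Test}$ abort inside $\mathsf{PReRand}$, which the paper's proof silently skips over. You are right that this is a real issue: $\bank.\mathsf{Verify}$ never runs $\rpke.\mathsf{Test}$, and $\mathsf{PMem}$ derives a well-defined subspace from \emph{any} string $id$, so acceptance by $\mathsf{Verify}$ gives no guarantee that $id$ passes the public test. However, of the two resolutions you offer, only the first is viable. Your second resolution --- bounding the probability that an efficient $\mathcal{B}$ produces a verifying register whose serial number fails the test --- cannot work, because the correctness-after-rerandomization experiment hands $\mathcal{B}$ all of $vk, mk, tk$: with $mk = (K, pk)$ it can build the subspace state for any $id$ of its choice, and with $tk = sk = (s, (L_i))$ it knows the hidden shifts and can place $c - s^{\mathsf{T}}\vec{a}$ exactly on the rounding boundary, producing a $ct$ that verifies but fails $\rpke.\mathsf{Test}$ with certainty. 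So there is no negligible bound to be had; the abort case must be eliminated structurally, either via the all-accepting simulatable testing key variant flagged in the construction's footnote (so the test line is removed), or by adding $\rpke.\mathsf{Test}$ to $\bank.\mathsf{Verify}$ so that registers with test-failing serial numbers are rejected up front. With that fix in place, the rest of your argument (and the paper's) goes through.
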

See \cref{sec:projective} and \cref{sec:correctness} for the proofs.
\begin{theorem}
    $\bank$ satisfies fresh banknote indistinguishability (\cref{defn:freshqm}).
\end{theorem}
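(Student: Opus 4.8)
The plan is to show that, conditioned on the verification step of $\qmindresh{\adve}$ accepting (the only event in which the game can output $1$), the adversary's final view is computationally indistinguishable in the $b=0$ and $b=1$ cases. The first ingredient is the projectiveness property established above: after $\bank.\mathsf{Verify}(vk,(id,\cdot))$ accepts, the state register has been projected onto $\ket{T_{id}\cdot\cansbsp}$, where $T_{id}:=\mathsf{SampleFullRank}(1^\lambda;F(K,id))$ and $id$ is the serial number produced by $\adve$ (we may treat $id$ as classical, since it is read by $\bank.\mathsf{Verify}$ and emitted as a classical string by $\bank.\mathsf{ReRandomize}$; the general case follows by linearity). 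I would then unfold $\bank.\mathsf{ReRandomize}$: it draws fresh $s$, has $\mathsf{OPReRand}$ return $(id',T^\ast)$ with $id'=\rpke.\mathsf{ReRand}(pk,id;s)$ and $T^\ast=T_{id'}\cdot T_{id}^{-1}$, and applies $T^\ast$ coherently, which maps $\sum_{v\in\cansbsp}\ket{T_{id}(v)}$ to $\sum_{v\in\cansbsp}\ket{T_{id'}(v)}=\ket{T_{id'}\cdot\cansbsp}$. So in the $b=0$ case $\adve$ receives $(id',\ket{T_{id'}\cdot\cansbsp})$, and in the $b=1$ case it receives $(ct,\ket{T_{ct}\cdot\cansbsp})$ with $ct\samp\rpke.\mathsf{Enc}(pk,0^{t(\lambda)}||ict)$, $ict\samp\zo^{c(\lambda)}$. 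Since in both cases the quantum part is a fixed, efficiently preparable (given $K$) function of the classical serial number, it suffices to show that the serial numbers $id'$ and $ct$ are indistinguishable given $(vk,mk)$ and $\adve$'s internal state.

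I would establish this with two hybrids. First, replace $id'=\rpke.\mathsf{ReRand}(pk,id;s)$ with a fresh $\widetilde{ct}\samp\rpke.\mathsf{Enc}(pk,0^{t(\lambda)+c(\lambda)})$ (and the state with $\ket{T_{\widetilde{ct}}\cdot\cansbsp}$): this is indistinguishable by the rerandomization security of $\rpke$ (\cref{defn:rerandpke}). The reduction runs $\adve$ until it outputs $\reg_0$, verifies it, reads off the classical $id$, forwards $id$ to the $\rpke$ challenger as its adversarial ciphertext, receives $s_\beta\in\{\rpke.\mathsf{ReRand}(pk,id),\,\rpke.\mathsf{Enc}(pk,0^{t(\lambda)+c(\lambda)})\}$, forms $(s_\beta,\ket{T_{s_\beta}\cdot\cansbsp})$ using a self-sampled PRF key $K$ (independent of the $\rpke$ keys), resumes $\adve$, and outputs whether $\adve$'s guess is $0$. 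Here it is essential that \cref{defn:rerandpke} holds for an \emph{arbitrary} adversarial ciphertext, so what $id$ decrypts to is irrelevant. Second, replace $\widetilde{ct}\samp\rpke.\mathsf{Enc}(pk,0^{t(\lambda)+c(\lambda)})$ with $ct\samp\rpke.\mathsf{Enc}(pk,0^{t(\lambda)}||ict)$, $ict\samp\zo^{c(\lambda)}$: this is indistinguishable by CPA security of $\rpke$ (again $K$ is independent of the ciphertext). The final experiment is exactly the $b=1$ game, so chaining the two steps finishes the proof.

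The step I expect to be the main obstacle is the rewriting in the first paragraph, specifically ruling out the case where $\adve$'s serial number $id$ fails $\rpke.\mathsf{Test}(pk,id)$: if it does, $\mathsf{OPReRand}$ aborts, $\bank.\mathsf{ReRandomize}$ fails to produce a valid banknote, and $\adve$ can thereby detect $b=0$, so I must show this occurs with only negligible probability. A ciphertext failing $\rpke.\mathsf{Test}$ lies within $\poly(\lambda)$ of a secret rounding threshold, and exhibiting one would yield an accepting input of the compute-and-compare obfuscation hidden in the testing key, contradicting its security together with the sub-exponential unpredictability of the hidden shift. Alternatively, if one assumes $\rpke$ additionally has simulatable all-accept testing keys (which the LWE construction of \cref{sec:pkecons} satisfies), I would first swap the testing key hardwired in $\mathsf{PReRand}$ for a simulated all-accept one --- indistinguishable because the fixed obfuscator is applied to a program whose only change is an indistinguishable hardcoded string --- after which $\rpke.\mathsf{Test}$ never rejects and this difficulty disappears. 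The remaining points are routine: if $\bank.\mathsf{Verify}$ rejects, the game outputs $0$ regardless of $b$ and contributes nothing; the serial-number register may be taken classical as noted; and every reduction samples the PRF key itself since it is independent of the $\rpke$ parameters.
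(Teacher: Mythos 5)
Your proof is correct and follows essentially the same route as the paper's: unfold $\mathsf{ReRandomize}$, invoke rerandomization security of $\rpke$ to swap the rerandomized serial number for a fresh encryption, use projectiveness to reduce everything to indistinguishability of the classical serial numbers, and conclude by symmetry with the $b=1$ case. Your two refinements --- the explicit CPA step from $\rpke.\mathsf{Enc}(pk,0^{t(\lambda)+c(\lambda)})$ to $\rpke.\mathsf{Enc}(pk,0^{t(\lambda)}\|ict)$, and the argument that an efficient adversary cannot produce a serial number rejected by $\rpke.\mathsf{Test}$ --- patch two points the paper's proof passes over as ``semantic''/implicit, and are handled correctly.
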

See \cref{defn:anonproof} for the proof.
\begin{theorem}
    $\bank$ satisfies tracing security (\cref{defn:trace}).
\end{theorem}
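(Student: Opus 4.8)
The plan is to proceed by a sequence of hybrid arguments, in the spirit of the counterfeiting/unclonability reduction sketched in the technical overview, ultimately reducing to the $1 \to 2$-unclonability of subspace states (\cref{thm:onetotwo}) together with the strong correctness with public testing of $\rpke$ (\cref{defn:rerandpketest}) and the security of $\io$, the puncturable PRF $F$, and the auxiliary $\ske$ and $\dss$ schemes. First I would fix an adversary $\adve$ in $\tracegame{\adve}$ that makes (WLOG) exactly $q_B = q_B(\lambda)$ banknote queries with tags $t_1, \dots, t_{q_B}$, receiving banknotes with serial numbers $ct_1, \dots, ct_{q_B}$ (these are $\rpke$ ciphertexts of $t_j \| ict_j$ for freshly sampled $ict_j \samp \zo^{c(\lambda)}$). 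The key structural observation, which I would establish first as a claim, is that by the \emph{strong correctness with public testing} of $\rpke$, with overwhelming probability there is \emph{no} string $id^*$ and randomness $s$ with $\rpke.\mathsf{Test}(pk, id^*) = 1$ and $\rpke.\mathsf{Dec}(sk, id^*) \neq \rpke.\mathsf{Dec}(sk, \rpke.\mathsf{ReRand}(pk, id^*; s))$; applying this transitively along any chain of rerandomizations (each of which must pass the $\rpke.\mathsf{Test}$ check inside $\mathsf{PReRand}$, since otherwise that program outputs $\perp$ and the banknote becomes unverifiable), the decrypted tag is an invariant of the whole rerandomization cone. Hence every serial number reachable (by a verifying banknote) from $ct_j$ decrypts to a value whose first $t(\lambda)$ bits are $t_j$, and moreover the $c(\lambda)$-bit ``inner'' component $ict_j$ also stays invariant — the $ict_j$ are sampled freshly and will serve to distinguish the $q_B$ cones from one another.

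Next I would argue that if $\adve$ wins, i.e.\ outputs $k$ verifying banknotes whose traced tag-list $\mathsf{TAGS}'$ is \emph{not} a sublist of $\mathsf{TAGS}$, then by the previous paragraph each output banknote's serial number $id_i$ decrypts to some $t_{j_i} \| ict_{j_i}$ matching one of the original queries (a serial number not descending from any $ct_j$ would have to be an $\rpke$ ciphertext the bank never produced; here I would invoke the auxiliary signature scheme $\dss$ and the $\ske$ component exactly as in the Aaronson--Christiano--Zhandry reduction — one first switches, via $\io$ security and puncturing, to a $\mathsf{PMem}$/$\mathsf{PReRand}$ that only accepts serial numbers carrying a valid $\dss$ signature hidden inside the $\ske$ ciphertext $ict$, so that by $\dss$ unforgeability the adversary cannot mint a fresh cone). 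Given that, a counting argument (pigeonhole on the multiset of indices $j_i$ versus the multiset of queried indices) shows that some index $j$ must appear strictly more often among the outputs than the $t_j$-tagged queries account for — more precisely, since $\mathsf{SORT}(\mathsf{TAGS}')$ is not a sublist of $\mathsf{SORT}(\mathsf{TAGS})$, restricting to the cone of a single $(t_j, ict_j)$ pair there must be two output banknotes $id_{i_1}, id_{i_2}$ both descending from the \emph{same} original $ct_j$. The reduction to $\thm:onetotwo$ then plants the challenge subspace $\ket{A}$ (with its obfuscated membership programs $\io(A), \io(A^\perp)$) at a uniformly random query index $j^* \samp [q_B]$: it simulates $\mathsf{PMem}$ and $\mathsf{PReRand}$ so that on serial numbers in cone $j^*$ the hidden full-rank map composes with $A$'s generator appropriately, while on all other serial numbers it uses the honest PRF-derived maps (this hybrid is justified by $\io$ security plus puncturing $F$ at the relevant points). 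After $\adve$ outputs its banknotes, the reduction traces them with $sk$, identifies the two banknotes rooted at $ct_{j^*}$, and undoes their rerandomizations (running the recorded $T^*$ maps in reverse — note each rerandomization step applies a known invertible linear map) to recover two registers each projecting onto $\ket{A}$, breaking $1 \to 2$-unclonability with probability $\Omega(1/q_B^2)$ times $\adve$'s advantage.

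I expect the main obstacle to be making the ``two output banknotes rooted at the same $ct_{j^*}$'' step fully rigorous and quantitative. The subtlety is that $\adve$'s $k$ output registers may be entangled, the tracing measurements on them are performed coherently/destructively, and one must argue that conditioned on \emph{all} of them verifying (which the game checks before tracing) the traced tags are well-defined and the sublist violation genuinely forces a collision of cones rather than, say, a banknote landing in some cone with tiny amplitude. I would handle this by first applying Gentle Measurement (\cref{aarlemma}) to argue the verification projections barely disturb the state, then observing that the \emph{serial numbers} $id_i$ are classical strings output alongside the quantum registers, so the cone-membership structure (which $id_i$ descends from which $ct_j$) is entirely classical and determined before any quantum measurement — the pigeonhole is then a purely classical statement about these strings, and the only quantum content is extracting the two $\ket{A}$ copies afterward. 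A secondary technical point is ensuring the chain of $\io$/PRF-puncturing hybrids that install the signature check and the planted subspace is polynomially long and each step incurs only negligible loss; this is routine given subexponential security of $\io$ and $F$, following the template of \cite{Z19}, but must be spelled out with care for the $\mathsf{PReRand}$ program since it evaluates the PRF at \emph{two} points ($id$ and $id'$) and both may need puncturing.
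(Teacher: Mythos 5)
Your proposal is correct and follows essentially the same route as the paper: embedding counters and signatures inside the $ict$ component via $\ske$/$\dss$, using strong rerandomization correctness to make the decrypted payload an invariant of each rerandomization cone, invoking signature unforgeability (plus $0\to1$ unclonability) to rule out new cones, and then a pigeonhole argument reducing a sublist violation to a cone collision, handled by the same $\io$/PRF-puncturing hybrids and the final reduction to $1\to2$ unclonability of subspace states. The only differences are cosmetic (the paper phrases the pigeonhole step as adding a ``no duplicates in $\mathsf{PL}'$'' winning condition and guesses an index with loss $1/k$ rather than $\Omega(1/q_B^2)$).
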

See \cref{sec:firsttracproof} for the proof.
\begin{theorem}
    $\bank$ satisfies counterfeiting security (\cref{defn:bankcf}).
\end{theorem}
See \cref{sec:proofunclon} for the proof.

\subsection{Projectiveness}\label{sec:projective}
Let $T$ be a full rank linear map. Observe that a vector $v$ satisfies $T^{-1}(v) \in \cansbsp$ if and only if $v \in A^*$ where $A^*$ is the set $\{T(w): w \in \cansbsp\}$, which is a subspace of dimension $\lambda/2$ since $T$ is a full rank linear map. Similarly, we can show that a vector $v$ satisfies $T^\mathsf{T}(v) \in \cansbsp^\perp$ if and only if $v \in (A^*)^\perp$. Note that $T^\mathsf{T}(v) \in \cansbsp^\perp$ if and only if $\langle T^\mathsf{T}\cdot v, u\rangle = 0$ for all $u \in \cansbsp$. However, this inner product is equal to $\langle v, T \cdot u\rangle$, and as $u$ ranges over all $\cansbsp$, $T\cdot u$ ranges over all $A^*$. Thus, we get that the above is equivalent to $\langle v, u'\rangle = 0$ for all $u' \in A^*$, which is equivalent to $v \in (A^*)^\perp$.

The above shows that our verification algorithm is equivalent to the subspace state verification algorithm (for the subspace $A^*$) of Aaronson-Christiano \cite{AC12}, which they prove implements a projection onto the subspace. Thus, projectiveness of our scheme follows.

\subsection{Correctness}\label{sec:correctness}
We prove correctness after rerandomization. For a banknote $ct, \ket{\psi}$ that has been verified, we know by projectiveness that $\ket{\psi} = \sum_{v \in \cansbsp} \ket{T(v)}$ where $T = \mathsf{SampleFullRank}(1^\lambda; F(K, ct))$. During rerandomization, $\mathsf{OPReRand}$ outputs $ct'$ and $T'' = T' \cdot T^{-1}$ where $T' = \sampfrm{F(K, ct')}$. Applying $T''$ in superposition, we get the state $\sum_{v \in \cansbsp} \ket{T''(v)}$, which is perfectly the state for the serial number $ct''$.

\subsection{Proof of Unclonability (Counterfeiting) Security}\label{sec:proofunclon}
We prove counterfeiting security (i.e. unclonability) through a sequence of hybrids, each of which is constructed by modifying the previous one. We implicitly pad all the $\io$ obfuscated programs to an appropriate size.

\paragraph{$\underline{\hyb_0}$}: The original game $\cfgame{\adve}(1^\lambda)$.

\paragraph{$\underline{\hyb_1}$}: First, at the beginning of the game, the challenger samples $isk \samp \ske.\mathsf{Setup}(1^\lambda)$.
 It also initializes a stateful counter $cnt = 0$ and a list $\mathsf{CT} = []$. Further, we modify the way the challenger mints banknotes. Instead of calling $\bank.\mathsf{GenBanknote}(mk, tag)$, it now executes the following subroutine for each query.
\paragraph{\underline{Minting Subroutine$(tag)$}}
\begin{enumerate}
 \item Parse $(K, pk) = mk$.
 \item \textcolor{red}{Add $1$ to $cnt$.}

 \item \textcolor{red}{ Sample $ict \samp \ske.\mathsf{Enc}(isk, cnt || 0^{sg(\lambda)})$.}

    \item Sample $ct \samp \rpke.\mathsf{Enc}(pk, tag || ict)$.
    \item \textcolor{red}{Add $ct$ to the list $\mathsf{CT}$.}
    \item $T = \mathsf{SampleFullRank}(1^\lambda; F(K, ct))$.
    \item Set $\ket{\$} = \sum_{v \in \cansbsp} \ket{T(v)}$.
    \item Output $ct, \ket{\$}$.
    \end{enumerate}
\paragraph{$\underline{\hyb_2}$}: We change the way the challenger verifies the banknotes output by the adversary. Instead of executing $\mathsf{Verify}$, it instead executes the following subroutine for each banknote.
\paragraph{\underline{Verify Subroutine$(\reg)$}}
\begin{enumerate}
        \item Parse $(\mathsf{OPMem}, \mathsf{OPReRand}) = vk$.
    \item Parse $(id, \reg') = \reg$.
    \textcolor{red}{\item Parse $pl_1 || pl_2 = \rpke.\mathsf{Dec}(sk, id)$ with $|pl_1| = t(\lambda), |pl_2| = c(\lambda)$.
    \item Compute $ipl_1 || ipl_2 = \ske.\mathsf{Dec}(isk, pl_2)$ with $|ipl_1| = \lambda$ and $|ipl_2| = sg(\lambda)$.
    \item Check if $ipl_1 \in [k]$. Output $0$ and terminate the subroutine if not.}
    \item Run $\mathsf{OPMem}$ coherently on $id, \reg', 0$. Check if the output is $1$, and then rewind.
    \item Apply QFT to $\reg'$.
    \item Run $\mathsf{OPMem}$ coherently on $id, \reg', 1$. Check if the output is $1$, and then rewind.
    \item Output $1$ if both verifications passed above. Otherwise, output $0$.
\end{enumerate}

\paragraph{$\underline{\hyb_3}$}: We now sample $i^* \samp [k]$ and also initialize the list $\mathsf{INDICES} = []$ at the beginning of the game. We also modify the verification subroutine so that each $ipl_1$ value is added to $\mathsf{INDICES}$. At the end of the game, the challenger (in addition to the previous checks) also checks if $i^*$ appears at least twice in $\mathsf{INDICES}$, and outputs $0$ if not.

\paragraph{$\underline{\hyb_{4}}$}: We sample $K' \samp F.\mathsf{Setup}(1^\lambda)$ and a random full rank linear map $T^{*}: \F_2^\lambda \to \F_2^\lambda$ at the beginning of the game. We also compute $A^* = T^{*}(\cansbsp)$ and create the following function/program.
\begin{equation*}
    M_{K', \mathsf{CT}_{i^*}}(ct) = \begin{cases}
        \mathsf{SampleFullRank}(1^\lambda; F(K', id)), \text{ if } id \neq \mathsf{CT}_{i^*}\\
        I, \text{ if } id = \mathsf{CT}_{i^*}\\
    \end{cases}
\end{equation*}
Further, we now sample $\mathsf{OPMem}$ and $\mathsf{OPReRand}$ as $\mathsf{OPMem} \samp \io(\mathsf{PMem}')$ and $\mathsf{OPReRand} \samp \io(\mathsf{PReRand}')$.

\begin{mdframed}
        {\bf $\underline{\mathsf{PMem}'_{K}(id, v, b)}$}
        
        {\bf Hardcoded: $K, \textcolor{red}{sk, isk, i^*, T^*, M_{K', \mathsf{CT}_{i^*}}}$}
        \begin{enumerate}[label=\arabic*.]
        \item \textcolor{red}{ Parse $pl_1 || pl_2 = \rpke.\mathsf{Dec}(sk, id)$ with $|pl_1| = t(\lambda), |pl_2| = c(\lambda)$.}
    \item \textcolor{red}{Parse $ipl_1 || ipl_2 = \ske.\mathsf{Dec}(isk, pl_2)$ with $|ipl_1| = \lambda, |ipl_2| = sg(\lambda)$.}
    \item \textcolor{red}{If $ipl_1 = i^*$, set $T =  M_{K', \mathsf{CT}_{i^*}}(id)\cdot T^*$.} Otherwise $T = \mathsf{SampleFullRank}(1^\lambda; F(K, id))$.
    
            \item Compute $w = T^{-1}(v)$ if $b = 0$; otherwise, compute $w = T^\mathsf{T}(v)$.
            \item Output $1$ if $w \in \cansbsp$ if $b = 0$ and if $w \in \cansbsp^\perp$ if $b = 1$. Otherwise, output $0$.
        \end{enumerate}
    \end{mdframed}

    \begin{mdframed}
        {\bf $\underline{\mathsf{PReRand}'_{K}(id, s)}$}
        
        {\bf Hardcoded: $K, pk, \textcolor{red}{sk, isk, i^*, T^*, M_{K', \mathsf{CT}_{i^*}}}$}
        \begin{enumerate}[label=\arabic*.]
          \item Check if $\rpke.\mathsf{Test}(pk, id) = 1$. Otherwise, output $\perp$ and terminate.
     \item \textcolor{red}{ Parse $pl_1 || pl_2 = \rpke.\mathsf{Dec}(sk, id)$ with $|pl_1| = t(\lambda), |pl_2| = c(\lambda)$.}
    \item \textcolor{red}{Parse $ipl_1 || ipl_2 = \ske.\mathsf{Dec}(isk, pl_2)$ with $|ipl_1| = \lambda, |ipl_2| = sg(\lambda)$.}
            \item $id' = \rpke.\mathsf{ReRand}(pk, id; s)$.
            \item \textcolor{red}{If $ipl_1 = i^*$, set $T_1 =  M_{K', \mathsf{CT}_{i^*}}(id)\cdot T^*$.} Otherwise, $T_1 = \mathsf{SampleFullRank}(1^\lambda; F(K, id))$.
            \item \textcolor{red}{ Parse $pl'_1 || pl'_2 = \rpke.\mathsf{Dec}(sk, id')$ with $|pl'_1| = t(\lambda), |pl'_2| = c(\lambda)$.}
    \item \textcolor{red}{Parse $ipl'_1 || ipl'_2 = \ske.\mathsf{Dec}(isk, pl'_2)$ with $|ipl'_1| = \lambda, |ipl'_2| = sg(\lambda)$.}
            \item \textcolor{red}{If $ipl_1' = i^*$, set $T_2 =  M_{K', \mathsf{CT}_{i^*}}(id')\cdot T^*$.} Otherwise, $T_2 = \mathsf{SampleFullRank}(1^\lambda; F(K, id'))$.
            \item Output $id', T_2\cdot T_1^{-1}$.
        \end{enumerate}
\end{mdframed}
Finally, we modify the minting subroutine as follows.
\paragraph{\underline{Minting Subroutine($\reg$)}}
\begin{enumerate}
 \item Parse $(K, pk) = mk$.
 \item {Add $1$ to $cnt$.}
 \item { Sample $ict \samp \ske.\mathsf{Enc}(isk, cnt || 0^{sg(\lambda)})$.}
 
    \item Sample $ct \samp \rpke.\mathsf{Enc}(pk, tag || ict)$.
    \item \textcolor{red}{If $cnt = i^*$, set $\ket{\$} = \sum_{v \in A^*} \ket{v}$ and jump to the final step.}
    \item $T = \mathsf{SampleFullRank}(1^\lambda; F(K, ct))$.
    \item $A = T(\cansbsp)$.
    \item Set $\ket{\$} = \sum_{v \in A} \ket{v}$.
    \item Output $ct, \ket{\$}$.
    \end{enumerate}

\paragraph{$\underline{\hyb_{5}}$}: At the beginning of the game, after we sample $T^*$, we also sample $\mathsf{P}_0 \samp \io(A^*)$ and $\mathsf{P}_1 \samp \io((A^*)^\perp)$. Further, we now sample $\mathsf{OPMem}$ as $\mathsf{OPMem} \samp \io(\mathsf{PMem}'')$ .

\begin{mdframed}
        {\bf $\underline{\mathsf{PMem}''_{K}(id, v, b)}$}
        
        {\bf Hardcoded: $K, sk, isk, i^*, \textcolor{red}{\mathsf{P}_0, \mathsf{P}_1}$}
        \begin{enumerate}[label=\arabic*.]
        \item Parse $pl_1 || pl_2 = \rpke.\mathsf{Dec}(sk, id)$ with $|pl_2| = c(\lambda)$.
    \item {Parse $ipl_1 || ipl_2 = \ske.\mathsf{Dec}(isk, pl_2)$ with $|ipl_1| = \lambda$ and $|ipl_3| = sg(\lambda)$. }
    \item \textcolor{red}{If $ipl_1 = i^*$,}
    \begin{enumerate}[label=\arabic*.]
        \item\textcolor{red}{Set $T = M_{K', \mathsf{CT}_{i^*}}(id)$.}
        \item\textcolor{red}{ Compute $w = T^{-1}(v)$ if $b = 0$; otherwise, compute $w = (T^{-1})^\mathsf{T}(v)$.}
    \item\textcolor{red}{Output the output $\mathsf{P}_b(v)$ and terminate.}
    \end{enumerate}
    
    \item If $ipl_1 \neq i^*$,
    \begin{enumerate}[label=\arabic*.]
        \item Set $T = \mathsf{SampleFullRank}(1^\lambda; F(K, id))$.
        \item Compute $w = T^{-1}(v)$ if $b = 0$; otherwise, compute $w = (T)^\mathsf{T}(v)$.
        \item Output $1$ if $w \in \cansbsp$ if $b = 0$ and if $w \in \cansbsp^\perp$ if $b = 1$. Otherwise, output $0$.
    \end{enumerate}
        \end{enumerate}
    \end{mdframed}

\paragraph{$\underline{\hyb_{6}}$}: We now sample $\mathsf{OPReRand}$ as $\mathsf{OPReRand} \samp \io(\mathsf{PReRand}'')$ .
    \begin{mdframed}
        {\bf $\underline{\mathsf{PReRand}''_{K}(id, s)}$}
        
        {\bf Hardcoded: $K, pk, {sk, isk, i^*}, M_{K', \mathsf{CT}_{i^*}}$}
        \begin{enumerate}[label=\arabic*.]
                \item Check if $\rpke.\mathsf{Test}(pk, id) = 1$. Otherwise, output $\perp$ and terminate.
     \item { Parse $pl_1 || pl_2 = \rpke.\mathsf{Dec}(sk, id)$ with $|pl_2| = c(\lambda)$.}
    \item {Parse $ipl_1 || ipl_2 = \ske.\mathsf{Dec}(isk, pl_2)$ with $|ipl_1| = \lambda$ and $|ipl_2| = sg(\lambda)$. }
            \item $id' = \rpke.\mathsf{ReRand}(pk, id; s)$.
            \item \textcolor{red}{If $ipl_1 = i^*$, set $T_1 =  M_{K', \mathsf{CT}_{i^*}}(id)$.} Otherwise, $T_1 = \mathsf{SampleFullRank}(1^\lambda; F(K, id))$.
            \item \textcolor{red}{If $ipl_1 = i^*$, set $T_2 =  M_{K', \mathsf{CT}_{i^*}}(id')$.} Otherwise, $T_2 = \mathsf{SampleFullRank}(1^\lambda; F(K, id'))$.
            \item Output $id', T_2\cdot T_1^{-1}$.
        \end{enumerate}
\end{mdframed}

\begin{lemma}
    $\hyb_0 \approx \hyb_1$.
\end{lemma}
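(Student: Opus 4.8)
The plan is to observe that the only change from $\hyb_0$ to $\hyb_1$ that the adversary can possibly notice is in how the string $ict$ embedded inside each minted banknote's serial number is generated: in $\hyb_0$ it is a fresh uniformly random string in $\zo^{c(\lambda)}$ (drawn by $\bank.\mathsf{GenBanknote}$), whereas in $\hyb_1$ it is $\ske.\mathsf{Enc}(isk, cnt \| 0^{sg(\lambda)})$. The extra state introduced in $\hyb_1$ --- the key $isk$, the counter $cnt$, and the list $\mathsf{CT}$ --- is purely internal bookkeeping: $cnt$ and $\mathsf{CT}$ are never read by any algorithm the adversary interacts with, and crucially $isk$ is used \emph{only} to run $\ske.\mathsf{Enc}$ and nowhere else in $\hyb_1$ (it is the later hybrids that begin to use $isk$ for decryption). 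Everything downstream of $ict$ --- sampling $ct \samp \rpke.\mathsf{Enc}(pk, tag \| ict)$, deriving $T = \mathsf{SampleFullRank}(1^\lambda; F(K,ct))$, and forming the subspace state $\sum_{v \in \cansbsp}\ket{T(v)}$ --- is syntactically identical in both games, and the $\mathsf{Verify}$ procedure is still the original one.

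First I would reduce to the pseudorandom-ciphertext security of $\ske$ via a standard hybrid over the (polynomially many) banknote queries. Let $Q = Q(\lambda)$ be a polynomial upper bound on the number of queries $\adve$ makes, and define intermediate games $\hyb_0 = G_0, G_1, \dots, G_Q = \hyb_1$, where in $G_j$ the first $j$ banknote queries compute $ict$ as in the minting subroutine of $\hyb_1$ (that is, $ict \samp \ske.\mathsf{Enc}(isk, cnt \| 0^{sg(\lambda)})$) and the remaining queries compute $ict$ as a uniformly random string as in $\hyb_0$. Consecutive games $G_{j-1}$ and $G_j$ differ only in whether the $j$-th query's $ict$ is a uniform string or an $\ske$ encryption of the message $j \| 0^{sg(\lambda)}$, which is fixed and independent of $\adve$'s view. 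A distinguisher between $G_{j-1}$ and $G_j$ therefore yields, in a black-box way, a distinguisher against pseudorandom ciphertexts of $\ske$: the reduction runs the entire experiment itself (sampling $K$, $pk$, $sk$, answering all other queries, and running $\mathsf{Verify}$ at the end), using its own challenge string --- either $\ske.\mathsf{Enc}(isk, j \| 0^{sg(\lambda)})$ or a uniform string --- as the $j$-th query's $ict$; it can do so precisely because it never needs $isk$ for anything else. Summing the triangle inequality over $j$ gives $\left|\Pr[\hyb_0 = 1] - \Pr[\hyb_1 = 1]\right| \le Q \cdot \negl(\lambda) = \negl(\lambda)$.

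I expect no genuine obstacle here; this is a textbook reduction. The only points requiring minor care are: verifying that $isk$ truly does not occur anywhere else in $\hyb_1$ (so the reduction can simulate without it); noting that the plaintexts $cnt \| 0^{sg(\lambda)}$ are determined solely by the query index and hence may be hard-wired by the reduction without any additional information; and, if one prefers to bypass the query-by-query hybrid, invoking instead the multi-message form of pseudorandom-ciphertext security of $\ske$ (which follows from the single-message form by the same hybrid argument).
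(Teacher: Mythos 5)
Your proposal is correct and matches the paper's argument: the paper likewise notes that the switch from $\bank.\mathsf{GenBanknote}$ to the inlined minting subroutine is purely semantic, and that replacing the uniform strings $ict$ with $\ske$ ciphertexts is indistinguishable by pseudorandom-ciphertext security of $\ske$, since $isk$ is not used anywhere else in $\hyb_1$ and the experiment can be simulated given only the challenge ciphertexts. Your query-by-query hybrid just spells out the standard multi-ciphertext reduction that the paper leaves implicit.
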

\begin{proof}
    These hybrids differ in two places. First, the challenger executes the minting procedure directly instead of calling $\bank.\mathsf{GenBanknote}$. This is only a semantic change and makes no difference. Second, we replace the random strings $ict \samp \zo^{c(\lambda)}$ with ciphertexts of $\ske$. Thus, the security follows by the pseudorandom ciphertext security of $\ske$, since the experiments do not use $sk$ and can be simulated only using the ciphertexts.
\end{proof}

\begin{lemma}\label{lem:zerocopunlearn}
    $\hyb_1 \approx \hyb_2$.
\end{lemma}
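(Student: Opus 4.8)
The two experiments run identically unless, in $\hyb_1$, $\adve$ outputs a $(k+1)$-tuple of registers all of which pass $\bank.\mathsf{Verify}$ while at least one of the output serial numbers $id$ is \emph{bad} --- meaning that, writing $pl_1\|pl_2=\rpke.\mathsf{Dec}(sk,id)$ and $ipl_1\|ipl_2=\ske.\mathsf{Dec}(isk,pl_2)$, we have $ipl_1\notin[k]$; the extra lines of the $\hyb_2$ verification subroutine act only on the classical string $id$, so they neither disturb the quantum register nor change the outcome on any banknote whose serial number is \emph{not} bad. So it suffices to bound the probability of this event by $\negl(\lambda)$. The first step is to argue that a bad serial number is necessarily \emph{fresh}, i.e. lies outside the rerandomization cone of every minted banknote: each $ct_j\in\mathsf{CT}$ decrypts to $tag_j\|\ske.\mathsf{Enc}(isk,j\|0^{sg(\lambda)})$ and hence has $ipl_1=j\in[k]$, and --- since $\mathsf{OPReRand}(id,\cdot)$ outputs $\perp$ unless $\rpke.\mathsf{Test}(pk,id)=1$ --- strong rerandomization correctness with public testing of $\rpke$ (\cref{defn:rerandpketest}), applied transitively, guarantees that every serial number $\adve$ can reach from $ct_j$ (through $\bank.\mathsf{ReRandomize}$ or directly through $\mathsf{OPReRand}$) decrypts to the same plaintext as $ct_j$, so it too has $ipl_1=j\in[k]$. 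In particular the linear maps $T_2\cdot T_1^{-1}$ returned by $\mathsf{OPReRand}$ cannot carry a minted banknote to a bad serial number.

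Having reduced to fresh bad serial numbers, the plan is to show that no QPT adversary with $vk=(\mathsf{OPMem},\mathsf{OPReRand})$, $tk=sk$ and $k$ honestly minted banknotes can produce a register passing $\bank.\mathsf{Verify}$ for such a serial number $id^\star$. By projectiveness (\cref{sec:projective}), passing $\bank.\mathsf{Verify}$ on $id^\star$ means the register passes the rank-one projector onto $\ket{A^\star}$ where $A^\star=T_{id^\star}(\cansbsp)$, $T_{id^\star}=\sampfrm{F(K,id^\star)}$; and since $id^\star$ is fresh, $A^\star$ enters $\adve$'s view only through $\mathsf{OPMem}(id^\star,\cdot,\cdot)$, which is exactly the Aaronson--Christiano membership-oracle pair for $A^\star$ and $(A^\star)^\perp$. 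I would then reprogram the keys exactly along the lines of $\hyb_4$--$\hyb_6$: sample a uniformly random full-rank $T^\star$, set $A^\star:=T^\star(\cansbsp)$, and pass through $\io$-indistinguishable hybrids that (i) hardcode $T^\star$ into $\mathsf{OPMem},\mathsf{OPReRand}$ at the point $id^\star$, then (ii) replace the $T^\star$-based membership test at $id^\star$ by the obfuscated programs $\io(A^\star),\io((A^\star)^\perp)$ --- note $T^\star$ cancels out of the map $\mathsf{OPReRand}$ returns, so $A^\star$ remains hidden --- until the only surviving dependence on $A^\star$ is through $\io(A^\star),\io((A^\star)^\perp)$ (no minted banknote carries $id^\star$, so unlike in $\hyb_4$ no copy of $\ket{A^\star}$ is ever handed out). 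At that point an adversary producing a register that passes $\bank.\mathsf{Verify}$ on $id^\star$ contradicts the zero-copy hardness of minting subspace states, which itself follows from \cref{thm:onetotwo} (running such an adversary twice on the oracles yields a bipartite register both halves of which pass $\ketbra{A^\star}{A^\star}$).

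The main obstacle is the quantification over the fresh serial number: the reduction must embed the challenge at $id^\star$ before it knows which bad serial number $\adve$ will output, and there are $2^{p_1(\lambda)}$ candidates. I would handle this by complexity leveraging --- guess $id^\star\samp\zo^{p_1(\lambda)}$ up front and invoke the \emph{subexponential} security of $\io$, of the puncturable PRF $F$, and of subspace-state cloning (all already posited by the construction), scaling their security parameters by a fixed polynomial so that the $2^{-p_1(\lambda)}$ guessing loss is absorbed --- or, more cleanly, factor the statement out as a stand-alone $\unlearngame$ (zero-copy learning) lemma and prove it once. One should also verify that the reduction emulates $\adve$'s view faithfully: it samples $K$, $isk$ and $(pk,sk)\samp\rpke.\mathsf{Setup}(1^\lambda)$ itself, so it can both run the $\hyb_1$ minting subroutine for the $k$ honest banknotes (whose serial numbers are $\neq id^\star$ since they are not bad) and hand $\adve$ the tracing key $sk$, and that $\mathsf{OPReRand}$ queries whose input or output equals $id^\star$ are answered via the hardcoded $T^\star$ (and, after step (ii), via the hardcoded $\io$-based data), precisely as in $\hyb_4$--$\hyb_6$.
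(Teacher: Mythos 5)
Your proposal is correct and follows essentially the same route as the paper: isolate the event that a forged banknote's serial number decrypts outside $[k]$, use strong rerandomization correctness with public testing (enforced by the $\rpke.\mathsf{Test}$ gate in $\mathsf{OPReRand}$) to conclude such a serial number lies outside every minted banknote's rerandomization cone, reprogram the keys via $\io$ and PRF puncturing as in $\hyb_4$--$\hyb_6$, and conclude by unlearnability ($0\to 1$ unclonability) of subspace states. The only cosmetic difference is that you guess the offending serial number up front and absorb the $2^{-p_1(\lambda)}$ loss by complexity leveraging, whereas the paper runs the puncturing hybrids over all $2^{p_1(\lambda)}$ strings $id$; both rest on the same subexponential security assumptions already made in the construction.
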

\begin{proof}
We will show that by strong rerandomization correctness property, any banknote whose serial number decrypts to a value outside $[k]$ is \emph{rooted} (i.e is a rerandomization of) at a subspace state that was not even given to the adversary. Thus, the result then will follow by unlearnability of subspaces (we can also think of this as $0 \to 1$ unclonability).

We give a formal proof in \cref{sec:zerocopunlearn}.
\end{proof}

\begin{lemma}
    $\Pr[\hyb_3 = 1] \geq \frac{\Pr[\hyb_2 = 1]}{k}$
\end{lemma}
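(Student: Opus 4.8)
The plan is to observe that $\hyb_3$ is obtained from $\hyb_2$ purely by sampling an extra index $i^* \samp [k]$, uniformly and independently of all other randomness, by maintaining the list $\mathsf{INDICES}$ of the decrypted indices $ipl_1$ computed inside the verification subroutine, and by appending one extra losing condition at the very end: the challenger also outputs $0$ unless $i^*$ occurs at least twice in $\mathsf{INDICES}$. Since $i^*$ influences nothing else, the execution of $\hyb_3$ with this final $i^*$-check deleted is literally the execution of $\hyb_2$; in particular the list $\mathsf{INDICES}$ — and hence the set $D \subseteq [k]$ of values that occur at least twice in it — is a random variable determined entirely by that execution, and independent of $i^*$. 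Therefore $\Pr[\hyb_3 = 1] = \Pr[\hyb_2 = 1 \ \wedge\ i^* \in D]$, and the whole lemma reduces to lower bounding the probability that the independently guessed index $i^*$ lands in $D$.

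The core step is a pigeonhole argument. First I would argue that whenever $\hyb_2 = 1$, the list $\mathsf{INDICES}$ contains $k+1$ entries, all of them in $[k]$: the event $\hyb_2 = 1$ requires the verification subroutine to accept each of the $k+1$ banknotes output by $\adve$, which in particular means none of them triggered the rejection condition $ipl_1 \notin [k]$, so each banknote contributes exactly one value, lying in $[k]$, to $\mathsf{INDICES}$. Since $k+1 > k$, by pigeonhole some element of $[k]$ is repeated, i.e. $D \neq \emptyset$ on the event $\{\hyb_2 = 1\}$. Now conditioning on any fixed execution with $\hyb_2 = 1$ (which fixes $D$ with $|D| \geq 1$), the index $i^*$ remains uniform over $[k]$ and independent of $D$, so $\Pr[i^* \in D] = |D|/k \geq 1/k$. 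Averaging over all executions with $\hyb_2 = 1$ yields $\Pr[\hyb_3 = 1] = \Pr[\hyb_2 = 1 \ \wedge\ i^* \in D] \geq \Pr[\hyb_2 = 1]/k$, which is the claimed bound.

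I do not expect a real obstacle here: this is a standard guessing/pigeonhole reduction and the estimate is a single line. The only points needing care are (i) verifying that the $\hyb_2$ winning condition genuinely forces all $k+1$ decrypted indices to be recorded in $\mathsf{INDICES}$ and to lie in $[k]$, so that pigeonhole actually produces a repeated value; and (ii) being explicit that in $\hyb_3$ the index $i^*$ is drawn independently of everything else, which is exactly what licenses the conditioning step. (One may assume $k \geq 1$: for $k = 0$ there are no honestly minted banknotes, the check $ipl_1 \in [k]$ can never pass, hence $\Pr[\hyb_2 = 1] = 0$ and the statement is vacuous.)
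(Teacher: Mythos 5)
Your proposal is correct and follows essentially the same route as the paper: the paper likewise notes that all recorded $ipl_1$ values lie in $[k]$ while the list has $k+1$ entries, invokes pigeonhole to get a repeated value $i^{**}$, and observes that the independently sampled $i^*$ hits it with probability $1/k$. Your phrasing via the set $D$ of duplicated values (giving $|D|/k \geq 1/k$) is a marginally cleaner way to state the same one-line estimate.
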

\begin{proof}
    Observe that all $ipl_1$ added to $\mathsf{INDICES}$ are required to be in $[k]$, whereas the list at the end will have size $k + 1$ (assuming the challenger has not terminated with output $0$ already). Thus, by pigeonhole principle, there is a value $i^{**} \in [k]$ such that  it appears in $\mathsf{INDICES}$ more than once. Our random guess $i^*$ will satisfy $i^* = i^{**}$ with probability $1/k$.
\end{proof}

\begin{lemma}\label{lem:switchtokprim}
    $\hyb_3 \approx \hyb_4$.
\end{lemma}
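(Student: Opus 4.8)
The plan is to interpolate between $\hyb_3$ and $\hyb_4$ by a short chain of sub-hybrids whose net effect is to \emph{re-base} the $i^*$-th banknote and its entire rerandomization cone through a freshly sampled uniformly random full-rank map $T^*\in\mathsf{GL}(\lambda,\F_2)$: we replace the implicitly-defined map $\mathsf{SampleFullRank}(1^\lambda;F(K,id))$ on serial numbers $id$ that decrypt to index $i^*$ by $M_{K',\mathsf{CT}_{i^*}}(id)\cdot T^*$, and the $i^*$-th minted state by $\ket{A^*}=\ket{T^*\cdot\cansbsp}$. The driving observation is that such a re-basing is a \emph{consistent relabeling}: setting $T^\diamond:=\mathsf{SampleFullRank}(1^\lambda;F(K,\mathsf{CT}_{i^*}))$ and, for every $i^*$-indexed serial number $id$, $\widehat M(id):=\mathsf{SampleFullRank}(1^\lambda;F(K,id))\cdot(T^\diamond)^{-1}$ (so that $\widehat M(\mathsf{CT}_{i^*})=I$), the map used by $\mathsf{PMem}/\mathsf{PReRand}$ on $i^*$-indexed serial numbers in $\hyb_3$ is exactly $\widehat M(id)\cdot T^\diamond$, and the $i^*$-th minted state is $\ket{T^\diamond\cdot\cansbsp}$ — which is syntactically the $\hyb_4$ shape with $(T^*,M_{K',\mathsf{CT}_{i^*}})$ instantiated as $(T^\diamond,\widehat M)$. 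Thus it suffices to argue that $(T^\diamond,\widehat M)$ and $(T^*,M_{K',\mathsf{CT}_{i^*}})$ are indistinguishable in the way they are consumed by the obfuscated programs and the minted state.

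I would handle the two coordinates in turn. For the reference map $T^\diamond$: puncture $F$ at the single point $\mathsf{CT}_{i^*}$, hardcode $F(K,\mathsf{CT}_{i^*})$ wherever needed, and invoke puncturing security of $F$ together with the fact that $\mathsf{SampleFullRank}$ pushes the uniform distribution on its coins forward to the uniform distribution on $\mathsf{GL}(\lambda,\F_2)$; this turns $T^\diamond$ into a fresh uniform full-rank $T^*$. (That $\mathsf{CT}_{i^*}$ is pinned down only during the game, not at setup, is a technicality one resolves as usual for planted-challenge arguments — the challenger pre-commits the randomness used to form the $i^*$-th serial number, and the single adversary-chosen component, the tag, influences neither the planted state $\ket{A^*}$ nor the $\mathsf{CT}_{i^*}$-special case of $M$, so the relevant hardcodings can be prepared consistently.) For $M$: one should show that, conditioned on a fixed $T^*$, the family $\widehat M(id)=\mathsf{SampleFullRank}(F(K,id))\cdot(T^*)^{-1}$ over $i^*$-indexed $id\neq\mathsf{CT}_{i^*}$ is indistinguishable from $\mathsf{SampleFullRank}(F(K',id))$ for an independent fresh key $K'$ — using that the $F(K,id)$ are pseudorandom and that right-multiplication by the fixed invertible $(T^*)^{-1}$ is a bijection of $\mathsf{GL}(\lambda,\F_2)$, so that both families are computationally close to a uniformly random assignment of full-rank maps to serial numbers.

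The delicate part — and where I expect the main obstacle — is making the $M$-step rigorous \emph{inside} the $\io$ hybrid structure. $\io$ security only allows swapping obfuscated circuits that agree on \emph{all} inputs, whereas passing from $\mathsf{SampleFullRank}(F(K,\cdot))$ to $\mathsf{SampleFullRank}(F(K',\cdot))\cdot T^*$ genuinely changes the membership program's output on every $i^*$-indexed serial number, and the set of those is precisely the rerandomization cone $\mathsf{C}_{\mathsf{CT}_{i^*}}$ — implicitly defined and possibly exponentially large — so one cannot puncture $F(K,\cdot)$ on it pointwise. (It helps slightly that after re-basing the \emph{delta} map output by $\mathsf{PReRand}$ no longer involves $T^*$, since $(\widehat M(id')T^*)(\widehat M(id)T^*)^{-1}=\widehat M(id')\widehat M(id)^{-1}$, so for $\mathsf{PReRand}$ the change is a pure PRF-key swap on the cone; but $\mathsf{PMem}$, which must name the actual subspace, still depends on $T^*$.) The route I would pursue is to use strong rerandomization correctness of $\rpke$ (already imported in $\hyb_2$) so that ``decrypts to index $i^*$'' is a predicate invariant under rerandomization and computable inside the programs via $\rpke.\mathsf{Dec}$ then $\ske.\mathsf{Dec}$, and then to find a sequence of $\io$ hops between genuinely equivalent circuits, interleaved with single-point punctures at $\mathsf{CT}_{i^*}$, that collectively realizes the swap without ever modifying $F(K,\cdot)$ simultaneously on the whole cone; getting this organization right is the crux. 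The underlying distributional facts — bijectivity of $M\mapsto M\cdot T^*$ on $\mathsf{GL}(\lambda,\F_2)$ and uniformity of $\mathsf{SampleFullRank}$'s output — are routine and I would not dwell on them.
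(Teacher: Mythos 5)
Your decomposition of the change into (a) re-basing the reference map $T^\diamond\to T^*$ via a single puncture at $\mathsf{CT}_{i^*}$ and (b) swapping the relative maps from $\mathsf{SampleFullRank}(F(K,\cdot))\cdot(T^*)^{-1}$ to $\mathsf{SampleFullRank}(F(K',\cdot))$ on the $i^*$-cone is sound, and the distributional facts you cite (uniformity of $\mathsf{SampleFullRank}$'s output, bijectivity of right-multiplication by the fixed $(T^*)^{-1}$ on $\mathsf{GL}(\lambda,\F_2)$) are indeed the right ingredients. However, you have correctly located the crux of the argument --- how to realize step (b), a functional change on an implicitly defined and possibly exponentially large set of serial numbers, inside an $\io$ proof --- and then left it unresolved. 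The organization you sketch, ``$\io$ hops interleaved with single-point punctures at $\mathsf{CT}_{i^*}$,'' cannot work on its own: puncturing $F$ only at the point $\mathsf{CT}_{i^*}$ licenses a change of the obfuscated circuits' behavior at that one input, whereas the swap must alter $\mathsf{PMem}$'s behavior at every serial number in the cone.

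The missing idea, and the one the paper uses, is an \emph{exponential} hybrid over the entire domain $\zo^{p_1(\lambda)}$ of serial-number strings: in hybrid $j$ the new association is used for all $id<j$ and the old one for $id\geq j$; to pass from $j$ to $j+1$ one punctures both $K$ and $K'$ at the single point $j$, hardcodes the two evaluations, replaces them by truly random coins (at which point the old and new associations at the point $j$ become identically distributed, by exactly the uniformity and bijectivity facts you mention), and then depunctures. Each sub-step is either an $\io$ swap of functionally identical circuits or a single-point PRF puncturing step, so it costs only the distinguishing advantage of $\io$ or $F$; summing over the $2^{p_1(\lambda)}$ hybrids is precisely what forces the construction's explicit assumption of \emph{subexponentially secure} $\io$ and puncturable PRF, together with the padding of the obfuscated programs to a common size. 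Without invoking this domain-wide enumeration and the subexponential security it consumes, your proof does not go through; with it, the rest of your argument (including the observation that the $T^*$ factor cancels in the $\mathsf{PReRand}$ delta, and that ``decrypts to $i^*$'' is the case selector inside the programs) matches the paper's.
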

\begin{proof}
This follows through a hybrid argument using the puncturing security of the PRF and the security $\io$, where we create hybrids over all strings $id$. Through sufficient padding, the result follows by subexponential security of the PRF scheme and $\io$.

We give a formal proof in \cref{sec:switchtokprim}.
\end{proof}

\begin{lemma}
    $\hyb_4 \approx \hyb_5$.
\end{lemma}
\begin{proof}
We will show that the programs $\mathsf{PMem}'$ and $\mathsf{PMem}''$ have the same functionality. Then, the result follows by the security of $\io$. 

The behaviour of the two programs can possibly differ only on inputs such that $ipl = i^*$. However, by the same argument as in the proof of projectiveness of our scheme, we know that $\mathsf{PMem}'$ implements membership checking programs for the subspaces $A^*, (A^*)^\perp$ when $ipl = i^*$. By correctness of the obfuscation used to create $\mathsf{P}_0, \mathsf{P}_1$; $\mathsf{PMem}''$ does the same thing. Thus, we get that the programs $\mathsf{PMem}'$ and $\mathsf{PMem}''$ have the same functionality, and the result follows by the security of $\io$.
\end{proof}

\begin{lemma}\label{lem:unclonstrongrerand}
    $\hyb_5 \approx \hyb_6$.
\end{lemma}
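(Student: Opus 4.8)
The plan is to show that the only two circuits that differ between these hybrids, namely $\mathsf{PReRand}'$ (obfuscated in $\hyb_5$) and $\mathsf{PReRand}''$ (obfuscated in $\hyb_6$), compute exactly the same function with overwhelming probability over the challenger's setup coins; then $\hyb_5 \approx \hyb_6$ follows from the security of $\io$, after padding both programs to a common size. Every other part of the two experiments is syntactically identical, so no further reduction is needed.

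First I would condition on a \emph{good} key pair: by strong correctness with public testing (\cref{defn:rerandpketest}), except with negligible probability over $pk, sk \samp \rpke.\mathsf{Setup}(1^\lambda)$, there is \emph{no} pair $(ct, r)$ with $\rpke.\mathsf{Test}(pk, ct) = 1$ and $\rpke.\mathsf{Dec}(sk, ct) \neq \rpke.\mathsf{Dec}(sk, \rpke.\mathsf{ReRand}(pk, ct; r))$. Conditioning on this event costs only a negligible additive term. Next I would compare the two programs input by input. On an input $(id, s)$ with $\rpke.\mathsf{Test}(pk, id) \neq 1$, both programs output $\perp$ and terminate. Otherwise let $id' = \rpke.\mathsf{ReRand}(pk, id; s)$; by the conditioning, $\rpke.\mathsf{Dec}(sk, id) = \rpke.\mathsf{Dec}(sk, id')$, hence $pl_1 || pl_2 = pl'_1 || pl'_2$, and since $\ske.\mathsf{Dec}$ is deterministic also $ipl_1 = ipl'_1$. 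If $ipl_1 \neq i^*$, both programs set $T_1 = \mathsf{SampleFullRank}(1^\lambda; F(K, id))$ and $T_2 = \mathsf{SampleFullRank}(1^\lambda; F(K, id'))$ and output $(id', T_2 \cdot T_1^{-1})$, so they agree. If $ipl_1 = i^*$ (equivalently $ipl'_1 = i^*$), write $M = M_{K', \mathsf{CT}_{i^*}}$; then $\mathsf{PReRand}'$ outputs $\bigl(id',\ (M(id')\cdot T^*)\cdot(M(id)\cdot T^*)^{-1}\bigr) = \bigl(id',\ M(id')\cdot M(id)^{-1}\bigr)$, the $T^*$ factors cancelling because $T^*$ is an invertible linear map and each $M(\cdot)$ is invertible (it is either the identity $I$ or an output of $\mathsf{SampleFullRank}$), while $\mathsf{PReRand}''$ outputs $\bigl(id',\ M(id')\cdot M(id)^{-1}\bigr)$ directly. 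Thus the two circuits coincide on the good event, and $\io$ security yields $\hyb_5 \approx \hyb_6$.

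The point to be careful about — and the reason plain strong correctness (\cref{defn:strongcor}) does not suffice — is that $\io$ security requires the two circuits to agree on \emph{every} input, not just that a differing input be computationally hard to find. The $\rpke.\mathsf{Test}$ gate at the top of both $\mathsf{PReRand}'$ and $\mathsf{PReRand}''$, together with the public-testing guarantee, is exactly what forces $ipl_1 = ipl'_1$ on all test-passing inputs, after which the remaining discrepancy is the purely algebraic identity $T^* (T^*)^{-1} = I$, which holds unconditionally. (One should also note that $\rpke.\mathsf{Dec}$ of the scheme in \cref{sec:pkecons} never outputs $\perp$, so the parsing steps are always well-defined; more generally it is enough that $\rpke.\mathsf{Dec}$ returns the same value on $id$ and $id'$, which is precisely what strong correctness with testing gives.)
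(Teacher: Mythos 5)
Your proof is correct and follows essentially the same route as the paper's: condition on the statistical guarantee of strong correctness with public testing so that every test-passing $id$ decrypts identically to its rerandomization, conclude that both programs take the same branch on the $ipl_1 =^? i^*$ check so the $T^*$ factors cancel, and invoke $\io$ security on the functionally equivalent circuits. Your explicit conditioning on the good setup event and the remark that mere computational strong correctness would not suffice for the $\io$ step are slightly more careful than the paper's write-up but add nothing substantively different.
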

\begin{proof}
Observe that by the strong rerandomization correctness of $\rpke$, there does not exist $id$ and $s$ such that $\rpke.\mathsf{Test}(pk, id) = 1$, but $id$ and $id' = \mathsf{ReRand}(pk, ct; s)$ decrypt to different values. Thus, in $\mathsf{PReRand}'$, we know that $id$ and $id'$ will decrypt to the same value $pl_2$, which will decrypt to the same value $ipl_1$ due to deterministic decryption of $\ske$.
Thus, separately decrypting $id'$ to obtain $ipl_1'$ versus directly using $ipl_1$ instead makes no difference. Further, this means that for both $id$ and $id'$, we will be in the same case with respect to the test $ipl_1 =^? i^*$. Therefore, the factor $T^*$ cancels outs when we compute $T_2 \cdot T_1^{-1} = (M_{K', \mathsf{CT}_{i^*}}(id')\cdot T^*) \cdot (M_{K', \mathsf{CT}_{i^*}}(id)\cdot T^*)^{-1}$. Therefore, the programs $\mathsf{PReRand}',\mathsf{PReRand}''$ have exactly the same functionality. The result follows by the security of $\io$.
\end{proof}

\begin{lemma}\label{lem:unclonlasth}
    $\Pr[\hyb_6 = 1] = \negl(\lambda)$.
\end{lemma}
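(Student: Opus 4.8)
The plan is to reduce $\hyb_6$ to the $1 \to 2$-unclonability of subspace states (\cref{thm:onetotwo}). The whole point of the rewritings carried out in $\hyb_4$--$\hyb_6$ is that, in $\hyb_6$, the planted subspace $A^* = T^*(\cansbsp)$ influences the challenger's behaviour \emph{only} through the three objects $\ket{A^*}$ (the $i^*$-th minted banknote), $\mathsf{P}_0 \samp \io(A^*)$ and $\mathsf{P}_1 \samp \io((A^*)^\perp)$ (both hardcoded inside $\mathsf{OPMem}$); the linear map $T^*$ itself no longer appears anywhere, because $\mathsf{PReRand}''$ uses only the publicly computable bijection $M_{K',\mathsf{CT}_{i^*}}$, and $\mathsf{PMem}''$ uses only $\mathsf{P}_0, \mathsf{P}_1$ and $M_{K',\mathsf{CT}_{i^*}}$. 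Since a uniformly random $\lambda/2$-dimensional subspace of $\F_2^\lambda$ has exactly the same distribution as $T^*(\cansbsp)$ for a uniformly random full-rank $T^*$, I would build a QPT adversary $\mathcal{B}$ for the game of \cref{thm:onetotwo} as follows: on input $\ket{A}, \io(A), \io(A^\perp)$, it uses $A$ in the role of $A^*$, takes $\io(A), \io(A^\perp)$ as $\mathsf{P}_0, \mathsf{P}_1$, hands its single copy of $\ket{A}$ to $\adve$ as the $i^*$-th queried banknote, and simulates the rest of $\hyb_6$ honestly (it samples $K$, the $\rpke$ keys $(pk, sk)$, the $\ske$ key $isk$, the index $i^*$ and the PRF key $K'$ all by itself). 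This is a perfect simulation of $\hyb_6$.

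Next I would analyse the event that this simulation outputs $1$. In that case $\adve$ has produced $k+1$ registers that all pass the Verify Subroutine and, by the check added in $\hyb_3$, the value $i^*$ occurs at least twice among the recorded indices; so there are distinct indices $j_1 \neq j_2$ with $\reg_{j_\ell} = (id_{j_\ell}, \reg'_{j_\ell})$ verifying and with $id_{j_\ell}$ decrypting (first under $\rpke.\mathsf{Dec}(sk, \cdot)$, then under $\ske.\mathsf{Dec}(isk, \cdot)$) to a string whose leading $\lambda$ bits equal $i^*$. Write $T_{j_\ell}$ for the full-rank map $M_{K',\mathsf{CT}_{i^*}}(id_{j_\ell})$, which $\mathcal{B}$ can compute (it knows $K'$ and $\mathsf{CT}_{i^*}$). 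Exactly the projectiveness argument of \cref{sec:projective}, applied with $A$ in place of $\cansbsp$ and $T_{j_\ell}$ in place of the PRF-sampled map, shows that $\mathsf{PMem}''(id_{j_\ell}, \cdot, 0)$ implements the membership program of the subspace $T_{j_\ell}(A)$ and $\mathsf{PMem}''(id_{j_\ell}, \cdot, 1)$ that of $(T_{j_\ell}(A))^\perp$ (here I use that $\io$-correctness makes $\mathsf{P}_b$ agree with the plain membership program on \emph{all} inputs). Hence the computational/Hadamard-basis test on $\reg'_{j_\ell}$ is precisely the rank-one projector onto $\ket{T_{j_\ell}(A)} = \sum_{v \in A}\ket{T_{j_\ell}(v)}$; since the decryption steps of the Verify Subroutine act only on the classical serial number and the basis tests are run coherently with rewinding (Gentle Measurement), conditioned on verification passing $\reg'_{j_\ell}$ is left exactly in the state $\ket{T_{j_\ell}(A)}$. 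So $\mathcal{B}$ concludes by applying the coherent bijection $T_{j_\ell}^{-1}$ to $\reg'_{j_\ell}$ for $\ell = 1, 2$ --- mapping each to $\ket{A}$ --- and outputs the two resulting registers, which live on disjoint subsystems of $\adve$'s $(k+1)$-partite output. Thus whenever the simulation outputs $1$, the challenger of \cref{thm:onetotwo} recovers $\ket{A} \otimes \ket{A}$ and outputs $1$, giving $\Pr[\hyb_6 = 1] \le \Pr[\mathcal{B}\text{ wins}] = \negl(\lambda)$.

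The step I expect to be the main obstacle is making sure every equivalence invoked is \emph{exact}, so that the reduction succeeds with certainty (not just with noticeable probability) conditioned on $\hyb_6 = 1$. Concretely: (i) the two obfuscated membership programs together with QFT must implement \emph{precisely} the rank-one projector onto the correct subspace state --- this needs projectiveness (\cref{sec:projective}) together with the fact that $\io$-correctness makes $\mathsf{P}_b$ functionally identical to the plain membership program everywhere; (ii) the linear map that $\mathsf{PMem}''$ associates to $id_{j_\ell}$ must be exactly the publicly computable $M_{K',\mathsf{CT}_{i^*}}(id_{j_\ell})$ that $\mathcal{B}$ inverts --- this is precisely the cancellation of $T^*$ arranged in the passage to $\hyb_6$ and justified via strong rerandomization correctness of $\rpke$ (cf.\ \cref{lem:unclonstrongrerand} and \cref{defn:strongcor}); and (iii) the classical decryption bookkeeping inserted into the Verify Subroutine in $\hyb_2$ must not disturb the quantum registers. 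Once these three points are nailed down, the desired bound is immediate from \cref{thm:onetotwo}.
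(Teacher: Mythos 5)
Your proposal is correct and follows essentially the same route as the paper: simulate $\hyb_6$ using only the single copy $\ket{A^*}=\ket{A}$ and $\io(A),\io(A^\perp)$ in place of $\mathsf{P}_0,\mathsf{P}_1$ (all other keys sampled by the reduction), then use projectiveness to conclude that the two forged banknotes whose serial numbers decrypt to $i^*$ are exactly $\ket{M_{K',\mathsf{CT}_{i^*}}(id_{j_\ell})\cdot A^*}$, and invert the publicly computable maps to obtain $\ket{A^*}\otimes\ket{A^*}$, contradicting \cref{thm:onetotwo}. Your added care about the exactness of the projector and the disappearance of $T^*$ from $\hyb_6$ is consistent with (and slightly more explicit than) the paper's argument.
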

\begin{proof}
Suppose otherwise for a contradiction.
This result follows by the $1 \to 2$ unclonability of the subspace state $\ket{A^*}$. Observe that this hybrid can be simulated using only a single copy of $\ket{A^*}$ along with $\io(A^*), \io((A^*)^\perp)$ (by sampling the other keys e.g. $K, K'$ ourselves). At the end, we know that at least two of the forged banknotes are such that their serial numbers decrypt to $i^*$. Let $ct_j$ and $ct_\ell$ denote the serial numbers of these banknotes. We also know that by projectiveness of the verification, these two banknote states (when they pass the verification) will be exactly $\ket{M_{K', \mathsf{CT}_{i^*}}(ct_j)\cdot A^*}$ and $\ket{M_{K', \mathsf{CT}_{i^*}}(ct_\ell)\cdot A^*}$. However, since we (the reduction) sample $K'$, we can actually take both of these back to $\ket{A^*}$ by applying $(M_{K', \mathsf{CT}_{i^*}}(ct_j))^{-1}, (M_{K', \mathsf{CT}_{i^*}}(ct_\ell))^{-1}$ in superposition, and obtain two copies of $\ket{A^*}$. However, this means that we cloned $\ket{A^*}$ with non-negligible probability, which is a contradiction by \cref{thm:onetotwo}.
\end{proof}

Now, suppose for a contradiction that there exists a QPT $\adve$ such that $\Pr[\cfgame{\adve}(1^\lambda) = 1]$ is non-negligible. Then, we get that $\Pr[\hyb_6 = 1]$ is also non-negligible, which is a contradiction by \cref{lem:unclonlasth}.

\subsection{Proof of Fresh Banknote Indistinguishability}\label{defn:anonproof}
We will prove security through a sequence of hybrids, each of which is obtained by modifying the previous one. Let $\adve$ be a QPT adversary.
\paragraph{$\underline{\hyb_0}$}: The original game $\qmindresh{\adve}(1^\lambda)$.

\paragraph{$\underline{\hyb_1}$}: In the re-randomization step (in Step 5), instead of performing $\mathsf{ReRandomize}(vk, \reg_0)$, we instead execute the following subroutine.
\paragraph{\underline{Rerandomize Subroutine}}
\begin{enumerate}
    \item Parse $(ct^*, \reg'_0) = \reg_0$.
    \item Sample $s^* \samp \zo^{p_3(\lambda)}$.
                \item $ct^{**}_0 = \rpke.\mathsf{ReRand}(pk, ct^*; s^*)$.
            \item $T_1 = \mathsf{SampleFullRank}(1^\lambda; F(K, ct^*))$.
            \item $T^*_0 = \mathsf{SampleFullRank}(1^\lambda; F(K, ct^{**}_0))$.
            \item Set $M^* = (T^*_0)\cdot T_1^{-1}$.
    \item Apply the linear map $M^*: \F_2^\lambda \to \F_2^\lambda$ coherently to $\reg'_0$.
    \item Set $\reg_0 = (ct^{**}_0, \reg'_0)$.
\end{enumerate}

\paragraph{$\underline{\hyb_2}$}: In the rerandomization subroutine, we replace the line
$$ct^{**}_0 = \rpke.\mathsf{ReRand}(pk, ct^*; s^*)$$
with
$$ct^{**}_0 \samp \rpke.\mathsf{Enc}(pk, 0^{t(\lambda) + c(\lambda)}).$$

\paragraph{$\underline{\hyb_3}$}: We further modify the rerandomization subroutine as follows.
\paragraph{\underline{Rerandomize Subroutine}}
\begin{enumerate}
 \item Parse $(ct^*, \reg'_0) = \reg_0$.
    \item Sample $ct^{**}_0 \samp \rpke.\mathsf{Enc}(pk, 0^{t(\lambda) + c(\lambda)})$.
            \item $T^*_0 = \mathsf{SampleFullRank}(1^\lambda; F(K, ct^{**}_0))$.
            \item $A_0^* = T^*_0(\cansbsp)$.
    \item Set $\reg'_0 = \sum_{v \in A^*}\ket{v}$.
    \item Set $\reg_0 = (ct^{**}_0, \reg'_0)$.
\end{enumerate}

\paragraph{$\underline{\hyb_4}$}: In the challenge phase, for the case $b = 1$, instead of executing
    $$\reg_1 \samp \mathsf{GenBanknote}(mk, 0^{t(\lambda)})$$
    we execute
    \begin{align*}
    &ct^{**}_1 \samp \rpke.\mathsf{Enc}(pk, 0^{t(\lambda) + c(\lambda)}) \ \\
    &T^*_1 = \mathsf{SampleFullRank}(1^\lambda; F(K, ct^{**}_1))\\
    &\reg'_1 =     \sum_{v \in \cansbsp}\ket{T^*_1(v)} \\
    &\reg_1 = (ct^{**}_1, \reg'_1)
    \end{align*}

\begin{lemma}
    $\hyb_0 \approx \hyb_1.$
\end{lemma}
\begin{proof}
    $\hyb_1$ simply unwraps $\mathsf{ReRandomize}(vk, \reg_0)$, including having the challenger execute the code of the program $\mathsf{PReRand}$ rather than using $\mathsf{OPReRand}$. By correctness of $\io$, the result follows.
\end{proof}

\begin{lemma}
$\hyb_1 \approx \hyb_2.$    
\end{lemma}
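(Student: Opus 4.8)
The plan is to reduce directly to the rerandomization security of $\rpke$ (\cref{defn:rerandpke}). Observe that $\hyb_1$ and $\hyb_2$ differ in exactly one line of the rerandomize subroutine: in $\hyb_1$ the challenger sets $ct^{**}_0 = \rpke.\mathsf{ReRand}(pk, ct^*; s^*)$ for a freshly sampled $s^* \samp \zo^{p_3(\lambda)}$, while in $\hyb_2$ it sets $ct^{**}_0 \samp \rpke.\mathsf{Enc}(pk, 0^{t(\lambda) + c(\lambda)})$. The tape $s^*$ (resp.\ the ciphertext $ct^{**}_0$) is used nowhere else in $\hyb_1$ (resp.\ $\hyb_2$) beyond producing $ct^{**}_0$ and the subsequently derived map $T^*_0 = \mathsf{SampleFullRank}(1^\lambda; F(K, ct^{**}_0))$, so a distinguisher between the two hybrids is exactly a distinguisher between a rerandomization of an adversarially chosen ciphertext and a fresh encryption of $0$.

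The key point that makes the reduction go through is that the experiment $\qmindresh{\adve}$ (and hence both hybrids) never touches the tracing key $tk = sk$: it only uses $vk = (\mathsf{OPMem}, \mathsf{OPReRand})$ and $mk = (K, pk)$, and $vk$ is a deterministic function of $(K, pk)$. Hence I would build a reduction $\mathcal{B}$ that, given only $pk$ from its rerandomization challenger, samples $K \samp F.\mathsf{Setup}(1^\lambda)$ on its own, assembles $mk$ and $vk$, and runs $\adve$ on $(vk, mk)$. When $\adve$ outputs $\reg_0$, $\mathcal{B}$ runs $\bank.\mathsf{Verify}(vk, \reg_0)$; if verification fails, $\mathcal{B}$ outputs a uniformly random guess (both hybrids output $0$ in this branch, so it does not affect the gap). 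Otherwise $\mathcal{B}$ reads off the classical serial number $ct^*$ from $\reg_0 = (ct^*, \reg'_0)$, submits $ct^*$ to its challenger as the challenge ciphertext $s$, receives $s_\beta$, sets $ct^{**}_0 = s_\beta$, and completes the rerandomize subroutine and the remainder of $\qmindresh{\adve}$ verbatim using its knowledge of $K$.

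By construction $\mathcal{B}$ perfectly simulates $\hyb_1$ when its challenge bit $\beta = 0$ and $\hyb_2$ when $\beta = 1$, so if $\mathsf{out}$ denotes the output of the simulated game then $\Pr[\mathsf{out} = 1 \mid \beta = 0] = \Pr[\hyb_1 = 1]$ and $\Pr[\mathsf{out} = 1 \mid \beta = 1] = \Pr[\hyb_2 = 1]$. Having $\mathcal{B}$ output $1 - \mathsf{out}$ as its guess wins the rerandomization game with probability $\tfrac12 + \tfrac12(\Pr[\hyb_1 = 1] - \Pr[\hyb_2 = 1])$, and having it output $\mathsf{out}$ wins with probability $\tfrac12 + \tfrac12(\Pr[\hyb_2 = 1] - \Pr[\hyb_1 = 1])$; since both quantities are $\le \tfrac12 + \negl(\lambda)$ by \cref{defn:rerandpke}, we obtain $\bigl|\Pr[\hyb_1 = 1] - \Pr[\hyb_2 = 1]\bigr| \le \negl(\lambda)$, i.e.\ $\hyb_1 \approx \hyb_2$. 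The only subtleties, none of them serious, are verifying that $sk$ is genuinely unneeded to simulate the game, handling the verification-failure branch, and noting that $ct^*$ is a classical string that can be extracted and forwarded to the external challenger without disturbing the quantum register $\reg'_0$.
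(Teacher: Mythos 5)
Your proof is correct and takes exactly the route the paper intends: the paper's own proof of this lemma is a one-line appeal to the rerandomization security of $\rpke$ (\cref{defn:rerandpke}), and your write-up simply fills in the standard reduction — noting that $sk = tk$ is never used, that $ct^*$ is classical and can be forwarded as the adversarial challenge ciphertext, and that the advantage bookkeeping works out. No gaps.
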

\begin{proof}
    The result follows by rerandomization security of $\rpke$ (\cref{lem:rerandfreshct}).
\end{proof}

\begin{lemma}
   $\hyb_2 \approx \hyb_3.$    
\end{lemma}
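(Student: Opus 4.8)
The plan is to show that $\hyb_2$ and $\hyb_3$ are not merely computationally close but \emph{perfectly} identical, i.e.\ $\hyb_2 \equiv \hyb_3$; this obviously suffices. The two experiments differ only in the rerandomize subroutine that the challenger applies to $\reg_0$, so I would argue that this subroutine induces exactly the same distribution on $\reg_0$ (and hence on everything that follows, which is unchanged) in both hybrids.

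First I would note that the rerandomize subroutine is reached only when $\bank.\mathsf{Verify}(vk, \reg_0)$ accepts in Step~4 of $\qmindresh{\adve}$; otherwise both hybrids output $0$ and halt before the subroutine, so I may condition on acceptance. Writing $(ct^*, \reg'_0) = \reg_0$ and $T_1 = \mathsf{SampleFullRank}(1^\lambda; F(K, ct^*))$, I would then invoke projectiveness of $\bank$ (\cref{sec:projective}): $\bank.\mathsf{Verify}(vk, (ct^*, \cdot))$ implements the rank-$1$ projective measurement $\{\ketbra{\psi}{\psi}, I - \ketbra{\psi}{\psi}\}$ on the quantum register, where $\ket{\psi} \propto \sum_{v \in \cansbsp} \ket{T_1(v)}$. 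Because this projector has rank $1$, conditioned on acceptance the post-measurement register $\reg'_0$ is exactly the pure state $\ket{\psi}$, disentangled from any side register that $\adve$ keeps — this is the one place where I would be careful, since $\adve$ may hand over a $\reg_0$ entangled with its internal state, but a successful rank-$1$ projective measurement collapses $\reg'_0$ to $\ket{\psi}$ regardless of the joint state before verification.

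Next, I would observe that the classical data produced by the subroutine is sampled identically in both hybrids: $ct^*$ is classical, $ct^{**}_0 \samp \rpke.\mathsf{Enc}(pk, 0^{t(\lambda)+c(\lambda)})$ and $T^*_0 = \mathsf{SampleFullRank}(1^\lambda; F(K, ct^{**}_0))$ agree, while the string $s^*$ that $\hyb_2$ samples is used nowhere else. Finally I would compare the quantum parts: in $\hyb_2$ the subroutine applies $M^* = T^*_0 \cdot T_1^{-1}$ coherently to $\reg'_0 = \ket{\psi}$; since $T^*_0, T_1$ are full rank, $M^*$ is a bijection of $\F_2^\lambda$ efficiently computable in both directions, so it can be applied in superposition with no leftover garbage, and $M^*(T_1(v)) = T^*_0(v)$ for all $v$. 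Hence the output state is $\sum_{v \in \cansbsp} \ket{T^*_0(v)} = \sum_{w \in A^*_0} \ket{w}$ with $A^*_0 = T^*_0(\cansbsp)$ — exactly what $\hyb_3$ assigns to $\reg'_0$. This is the same computation underlying correctness after rerandomization (\cref{sec:correctness}). Since all remaining steps of the two experiments coincide, $\hyb_2 \equiv \hyb_3$, and there is no real obstacle beyond the entanglement subtlety noted above.
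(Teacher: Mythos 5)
Your proof is correct and follows essentially the same route as the paper: it invokes projectiveness of verification to pin down the post-acceptance state as $\sum_{v\in\cansbsp}\ket{T_1(v)}$ and then the correctness-after-rerandomization computation to show that applying $M^* = T^*_0\cdot T_1^{-1}$ yields exactly the state $\hyb_3$ assigns, so the hybrids coincide perfectly. Your extra care about the rank-$1$ projector disentangling $\reg'_0$ from the adversary's side registers is a valid and welcome elaboration of the same argument.
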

\begin{proof}
     As proven in the proof of correctness after rerandomization, once we rerandomize a banknote to the new serial number $ct_0^{**}$, our banknote state becomes the perfect banknote state associated with $ct^{**}$. Thus the result follows.
\end{proof}
\begin{lemma}
   $\hyb_3 \equiv \hyb_4.$    
\end{lemma}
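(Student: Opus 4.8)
The passage from $\hyb_3$ to $\hyb_4$ only replaces the single line $\reg_1 \samp \bank.\mathsf{GenBanknote}(mk, 0^{t(\lambda)})$ by an explicit block, and that block is — up to one point — just the inlined definition of $\bank.\mathsf{GenBanknote}$. So the plan is: first inline $\mathsf{GenBanknote}$, then check that the one remaining difference is harmless. Inlining $\bank.\mathsf{GenBanknote}(mk, 0^{t(\lambda)})$ means: parse $(K, pk) = mk$, sample $ict \samp \zo^{c(\lambda)}$, set $ct \samp \rpke.\mathsf{Enc}(pk, 0^{t(\lambda)}\| ict)$, $T = \mathsf{SampleFullRank}(1^\lambda; F(K, ct))$, and output $(ct, \sum_{v\in\cansbsp}\ket{T(v)})$. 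This is a purely syntactic rewriting and does not change the experiment's output distribution; in particular $\sum_{v\in\cansbsp}\ket{T(v)}$ is literally the state $\reg'_1$ in the $\hyb_4$ block.

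The only genuine difference that then remains is the plaintext handed to $\rpke.\mathsf{Enc}$: the inlined $\mathsf{GenBanknote}$ encrypts $0^{t(\lambda)}\|ict$ for a fresh uniform $ict$, whereas $\hyb_4$ encrypts $0^{t(\lambda)+c(\lambda)}$. Since $ict$ is sampled uniformly and never used anywhere else in the experiment (it enters only as an argument to $\rpke.\mathsf{Enc}$), this change is justified by the CPA security of $\rpke$. Concretely, I would build a reduction $\adve_{\mathsf{CPA}}$ that obtains $pk$ from the CPA challenger, submits the message pair $m_0 = 0^{t(\lambda)}\|ict$ (with $ict\samp\zo^{c(\lambda)}$ sampled by the reduction) and $m_1 = 0^{t(\lambda)+c(\lambda)}$, receives a challenge ciphertext $ct^\star$, and then simulates the whole experiment: it samples $K \samp F.\mathsf{Setup}(1^\lambda)$, forms $mk = (K, pk)$ and $vk = (\io(\mathsf{PMem}), \io(\mathsf{PReRand}))$ using only $pk$ and $K$, runs $\adve$, performs $\bank.\mathsf{Verify}$ and the $\hyb_3$ rerandomize subroutine on $\reg_0$ (these need only $vk$, $pk$, $K$), samples $b\samp\zo$, and on $b=1$ uses $ct^\star$ as the ciphertext component of $\reg_1$ with $T^*_1 = \mathsf{SampleFullRank}(1^\lambda;F(K,ct^\star))$; finally it outputs a guess for the CPA bit as a function of whether $\adve$ guessed $b$ correctly. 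The point making this simulation go through is that the tracing key $tk=sk$ of $\rpke$ is never touched anywhere in the fresh-banknote experiment, so everything is computable from $pk$ and $K$ alone. Hence $|\Pr[\hyb_3 = 1] - \Pr[\hyb_4 = 1]|$ is bounded by the reduction's CPA advantage and is negligible — and it is exactly $0$ if the $\hyb_4$ plaintext is read as the verbatim inlining $0^{t(\lambda)}\|ict$, which is the sense in which the two hybrids are "equivalent".

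I do not expect a real obstacle here; the step is essentially bookkeeping. The only things to double-check are (i) that neither the rerandomize subroutine of $\hyb_3$ nor $\bank.\mathsf{Verify}$ ever reads $ict$ or $sk$ (this licenses both the syntactic inlining and the well-definedness of the CPA reduction), and (ii) that the implicit padding of the $\io$-obfuscated programs is consistent across the two hybrids. The payoff of reaching $\hyb_4$ is immediate: $\reg_0$ and $\reg_1$ are now produced by identical procedures, so $\reg_b$ is independent of $b$ and $\Pr[\hyb_4 = 1] = 1/2$; combining this with the chain $\hyb_0 \approx \hyb_1 \approx \hyb_2 \approx \hyb_3 \approx \hyb_4$ yields fresh-banknote indistinguishability for $\bank$.
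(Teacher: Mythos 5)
Your proposal is correct, and it is in fact more careful than the paper's own argument. The paper dismisses this step with one line --- ``the challenger simply executes the minting itself, rather than calling $\bank.\mathsf{GenBanknote}$; this is only a semantic change'' --- and asserts exact equivalence $\hyb_3 \equiv \hyb_4$. You correctly observe that the inlining is \emph{not} verbatim: $\mathsf{GenBanknote}(mk, 0^{t(\lambda)})$ encrypts $0^{t(\lambda)} \| ict$ for a fresh uniform $ict \samp \zo^{c(\lambda)}$, whereas $\hyb_4$ encrypts $0^{t(\lambda)+c(\lambda)}$, so the two ciphertext distributions need not coincide and the claimed $\equiv$ is, strictly speaking, an overstatement. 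Your fix --- a CPA reduction on the message pair $(0^{t(\lambda)}\|ict,\ 0^{t(\lambda)+c(\lambda)})$, which is well-defined because neither $\bank.\mathsf{Verify}$ nor the $\hyb_3$ rerandomize subroutine ever touches $sk$ or $ict$ --- is sound and downgrades the claim only to $\hyb_3 \approx \hyb_4$, which is all the surrounding hybrid chain needs. Note that making the $b=1$ plaintext equal to $0^{t(\lambda)+c(\lambda)}$ (rather than keeping $0^{t(\lambda)}\|ict$) is actually necessary for the \emph{next} lemma, since the perfect symmetry giving $\Pr[\hyb_4=1]\le 1/2$ requires both challenge cases to encrypt the identical plaintext; so your CPA step is not optional bookkeeping but the honest way to reach the hybrid the paper wants. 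The only cosmetic adjustment is that the lemma statement should then read $\approx$ rather than $\equiv$.
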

\begin{proof}
     In this hybrid the challenger simply executes the minting itself, rather than calling $\bank.\mathsf{GenBanknote}$. This is only a semantic change.
\end{proof}
\begin{lemma}
    $\Pr[\hyb_4 = 1] \leq 1/2.$
\end{lemma}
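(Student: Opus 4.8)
The plan is to observe that in $\hyb_4$, conditioned on the adversary's first message passing verification, the challenge register $\reg_b$ is distributed independently of $b$ — in fact independently of the adversary's first-phase output altogether — so no adversary can guess $b$ with any advantage.

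Concretely, I would first unwind how each of $\reg_0$ and $\reg_1$ is produced in $\hyb_4$. For $\reg_0$: once the verification check passes, the rerandomize subroutine (as modified in $\hyb_3$) discards the adversary's state and freshly samples $ct^{**}_0 \samp \rpke.\mathsf{Enc}(pk, 0^{t(\lambda)+c(\lambda)})$, computes $T^*_0 = \mathsf{SampleFullRank}(1^\lambda; F(K, ct^{**}_0))$, and sets $\reg_0 = (ct^{**}_0, \sum_{v\in A^*_0}\ket{v})$ with $A^*_0 = T^*_0(\cansbsp)$. For $\reg_1$: the challenger samples $ct^{**}_1 \samp \rpke.\mathsf{Enc}(pk, 0^{t(\lambda)+c(\lambda)})$, computes $T^*_1 = \mathsf{SampleFullRank}(1^\lambda; F(K, ct^{**}_1))$, and sets $\reg_1 = (ct^{**}_1, \sum_{v\in\cansbsp}\ket{T^*_1(v)})$. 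Using the elementary identity $\sum_{v\in T(\cansbsp)}\ket{v} = \sum_{v\in\cansbsp}\ket{T(v)}$, these are the exact same sampling procedure, and each depends only on the fixed values $pk, K$ together with fresh internal randomness. Hence the distribution of $(ct^{**}_b, \reg'_b)$ does not depend on $b$ and is independent of the adversary's first-phase state.

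It then follows that, conditioned on the event that $\bank.\mathsf{Verify}(vk, \reg_0)$ accepts, the adversary's view in the challenge phase carries no information about $b$, so its output bit $b'$ is independent of $b$ and $\Pr[b'=b] = 1/2$. Since the game outputs $0$ whenever verification fails, we conclude $\Pr[\hyb_4 = 1] \le \Pr[\text{verification accepts}]\cdot \tfrac12 \le \tfrac12$. There is essentially no obstacle here: the only things to be careful about are (i) confirming that the modifications introduced across $\hyb_1$–$\hyb_3$ have indeed removed every dependence of the post-rerandomization $\reg_0$ on the adversary's challenge register, and (ii) matching up the syntactic descriptions of $\reg_0$ and $\reg_1$ via the subspace-state identity above.
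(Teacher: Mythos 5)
Your proof is correct and is exactly the paper's argument: the paper disposes of this lemma by noting that the two challenge cases $b=0$ and $b=1$ are completely symmetrical in $\hyb_4$, which is precisely the observation you spell out (both branches freshly sample $ct^{**}_b \samp \rpke.\mathsf{Enc}(pk, 0^{t(\lambda)+c(\lambda)})$ and prepare the corresponding subspace state, independently of the adversary's residual state). Your extra care in matching the two syntactic descriptions via $\sum_{v\in T(\cansbsp)}\ket{v} = \sum_{v\in\cansbsp}\ket{T(v)}$ and in handling the verification-failure event is sound but does not change the route.
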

\begin{proof}
    The results follows due to the fact that two challenge cases $b= 0$ and $b=1$ are completely symmetrical.
\end{proof}

Thus, we get $\Pr[\qmindresh{\adve}(1^\lambda) = 1] \leq 1/2 + \negl(\lambda)$, completing the proof.

\subsection{Proof of Tracing Security}\label{sec:firsttracproof}
We will prove security through a sequence of hybrids, each of which is constructed by modifying the previous one. Let $\adve$ be a QPT adversary.
\paragraph{$\underline{\hyb_0}$:} The original game $\tracegame{\adve}(1^\lambda)$.

\paragraph{$\underline{\hyb_1}$:} At the beginning of the game, the challenger initializes a stateful counter for tags, $\mathsf{cnt}(t)$, that is initially set to $0$ for each tag. It also initializes the lists $\mathsf{PL} = []$, $\mathsf{PL}' = []$. It also samples
\begin{align*}
    &isk \samp \ske.\mathsf{Setup}(1^\lambda)\\
    &ivk, isgk \samp \dss.\mathsf{Setup}(1^\lambda).
\end{align*}

We also modify the way the challenger answers banknote queries. Instead of executing $$\regi{bn} \samp \mathsf{GenBanknote}(mk, tag)$$ on a query for $tag$, it instead executes the following subroutine.
\paragraph{\underline{Minting Subroutine$(tag)$}}
\begin{enumerate}
    \item Parse $(K, pk) = mk$.
    \item \textcolor{red}{Increase $\mathsf{cnt}(tag)$ by $1$.}
    \item \textcolor{red}{Sample $sig \samp \dss.\mathsf{Sign}(sk, tag || \mathsf{cnt}(tag))$.}
    \item \textcolor{red}{Sample $ict \samp \ske.\mathsf{Enc}(isk, \mathsf{cnt}(tag) || sig)$.}
    \item \textcolor{red}{Add $tag || \mathsf{cnt}(tag)$ to the list $\mathsf{PL}$.}
    \item Sample $ct \samp \rpke.\mathsf{Enc}(pk, tag || ict)$.
    \item $T = \mathsf{SampleFullRank}(1^\lambda; F(K, ct))$.
    \item Set $\ket{\$} = \sum_{v \in \cansbsp} \ket{T(v)}$.
    \item Output $ct, \ket{\$}$.
\end{enumerate}

\paragraph{$\underline{\hyb_2}$:} We change the way the challenger verifies the banknotes output by the adversary. Instead of executing $\mathsf{Verify}$, it instead executes the following subroutine for each banknote.
\paragraph{\underline{Verify Subroutine$(\reg)$}}
\begin{enumerate}
        \item Parse $(\mathsf{OPMem}, \mathsf{OPReRand}) = vk$.
    \item Parse $(id, \reg') = \reg$.
    \textcolor{red}{\item Parse $pl_1 || pl_2 = \rpke.\mathsf{Dec}(sk, id)$ with $|pl_1| = t(\lambda), |pl_2| = c(\lambda)$.
    \item Parse $ipl_1 || ipl_2 = \ske.\mathsf{Dec}(isk, pl_2)$ with $|ipl_1| = \lambda$, $|ipl_2| = sg(\lambda)$.
    \item Check if $\dss.\mathsf{Verify}(ivk, pl_1 || ipl_1, ipl_2) = 1$. Output $0$ and terminate the subroutine if the checks fail.
    \item Add $ipl_1 || ipl_2$ to the list $\mathsf{PL}'$.
    }
    \item Run $\mathsf{OPMem}$ coherently on $id, \reg', 0$. Check if the output is $1$, and then rewind.
    \item Apply QFT to $\reg'$.
    \item Run $\mathsf{OPMem}$ coherently on $id, \reg', 1$. Check if the output is $1$, and then rewind.
    \item Output $1$ if both verifications passed above. Otherwise, output $0$.
\end{enumerate}

\paragraph{$\underline{\hyb_3}$:} In the verification subroutine, after checking $\dss.\mathsf{Verify}(ivk, pl_1 || ipl_1, ipl_2) = 1$, the challenger also checks if $ipl_1 || ipl_2 \in \mathsf{PL}$, and outputs $0$ and terminates if not.
 
\paragraph{$\underline{\hyb_4}$:} We add an additional winning condition at the end of the game: If the list $\mathsf{PL}'$ contains any duplicate elements, the challenger outputs $0$ and hence the adversary loses.

\begin{lemma}
    $\hyb_0 \approx \hyb_1.$
\end{lemma}
\begin{proof}
    Since the adversary does not have the secret key $isk$, the result follows by the semantic security of $\ske$.
\end{proof}

\begin{lemma}
    $\hyb_1 \approx \hyb_2.$
\end{lemma}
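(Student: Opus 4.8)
The two games differ only in that, in $\hyb_2$, the challenger's banknote-verification subroutine additionally decrypts the serial number $id$ down to a triple $(pl_1,ipl_1,ipl_2)$ --- a tag, a counter, and an alleged signature --- rejects (outputs $0$ and terminates) unless $\dss.\mathsf{Verify}(ivk, pl_1||ipl_1, ipl_2)=1$, and appends $ipl_1||ipl_2$ to the list $\mathsf{PL}'$. The append is pure bookkeeping and does not enter the win condition at this stage, and the extra rejection can only decrease the adversary's success probability, so $\Pr[\hyb_2=1]\le\Pr[\hyb_1=1]$ is immediate. For the converse I plan to show that in $\hyb_1$, conditioned on the game outputting $1$, with overwhelming probability every banknote the adversary submits that survives the existing $\mathsf{OPMem}$-based verification also passes this new signature check; then the extra rejection essentially never fires and $\Pr[\hyb_1=1]\le\Pr[\hyb_2=1]+\negl(\lambda)$.

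The core of that claim is the same as the proof of \cref{lem:zerocopunlearn} in \cref{sec:zerocopunlearn}, with the signature check playing the role of the check ``$ipl_1\in[k]$'' used there. The relevant facts are: (i) by correctness of $\rpke$, $\ske$ and $\dss$, every serial number minted in $\hyb_1$ decrypts (under $sk$, then under $isk$) to a triple $(tag,\mathsf{cnt}(tag),sig)$ with $\dss.\mathsf{Verify}(ivk, tag||\mathsf{cnt}(tag), sig)=1$, so minted serial numbers pass the new check; (ii) by strong rerandomization correctness with public testing (\cref{defn:rerandpketest}) --- using that $\mathsf{PReRand}$ operates only on ciphertexts that pass $\rpke.\mathsf{Test}$, that honestly minted ciphertexts and their honest rerandomizations remain $\rpke.\mathsf{Test}$-good, and that $\ske.\mathsf{Dec}$ and $\dss.\mathsf{Verify}$ are deterministic --- every serial number obtainable from a minted serial number by repeated application of $\mathsf{OPReRand}$ decrypts to the same triple and hence also passes the new check; (iii) by projectiveness of $\bank.\mathsf{Verify}$ (\cref{sec:projective}), any banknote with serial number $id^*$ that passes verification has quantum part exactly $\ket{T_{id^*}\cdot\cansbsp}$, where $T_{id^*}=\mathsf{SampleFullRank}(1^\lambda;F(K,id^*))$. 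Combining these (and recalling that in the tracing game the adversary holds $tk=sk$ but not the minting key $mk=(K,pk)$, so it cannot compute $T_{id^*}$ on its own), if some adversarial banknote passes verification but fails the new check, then its serial number $id^*$ is neither minted nor reachable from a minted one via $\mathsf{OPReRand}$, so the adversary was never handed (nor can reconstruct from the difference maps alone) the state $\ket{T_{id^*}\cdot\cansbsp}$ --- yet it produced precisely that state.

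To turn this into a contradiction I would reduce to the unlearnability of subspace states (which follows from the $1\to2$ unclonability of \cref{thm:onetotwo}), exactly along the lines of \cref{sec:zerocopunlearn}: guess the offending $id^*$ --- via a hybrid over serial-number strings, at a subexponentially small cost --- puncture $K$ at $id^*$ using puncturing security of $F$, and replace the portion of $\mathsf{OPMem}$ handling $id^*$ by the externally supplied membership programs $\io(A),\io(A^\perp)$ for the unclonability challenge subspace $A$; this change is functionality-preserving and hence $\io$-indistinguishable, exactly as in the passage from $\hyb_4$ to $\hyb_6$ of the unclonability proof (and, as there, $\mathsf{OPReRand}$ never needs $A$ in the clear because the planted map cancels in the difference $T_2 T_1^{-1}$). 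The reduction samples $isk,ivk,isgk,sk,K'$ itself, simulates $\hyb_1$, and on success outputs the register the adversary placed on $id^*$ as a forged copy of $\ket{A}$. I expect this step to be the main obstacle: because the offending serial number is chosen adaptively from a (sub)exponentially large domain, the reduction must be carried through a hybrid over all serial-number strings, leaning on the subexponential security of $F$ and $\io$ and the usual padding of the obfuscated programs. Granting this, the bad event is negligible and $\hyb_1\approx\hyb_2$ follows.
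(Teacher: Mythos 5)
Your proposal is correct and follows essentially the same route as the paper: the paper also reduces the event ``passes $\mathsf{OPMem}$-verification but fails the new signature check'' to the unlearnability ($0\to 1$ unclonability) of subspace states, exactly as in \cref{lem:zerocopunlearn}/\cref{sec:zerocopunlearn}, with the hidden signatures serving as the efficient ``rooted at an original banknote'' test (in place of the check $ipl_1\in[k]$) and with the same subexponential hybrid over serial-number strings, PRF puncturing, and $T^*$-cancellation in $\mathsf{OPReRand}$. Your write-up is in fact more detailed than the paper's two-sentence sketch, and the step you flag as the main obstacle is precisely the one the paper delegates to \cref{sec:switchtokprim}.
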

\begin{proof}
    As in \cref{lem:zerocopunlearn}, this follows due to $0 \to 1$ unclonability (i.e. unlearnability) of the subspace states. The proof follows similarly to \cref{lem:zerocopunlearn} (proven in \cref{sec:zerocopunlearn}). The main difference in this case is that we will need to perform the \emph{rooted at original}  banknote (banknotes whose serial numbers decrypt to one of the original tags) versus \emph{outside banknote} test for tag values chosen by the adversary, whereas in \cref{lem:zerocopunlearn} the original set was just $[k]$, which is efficiently. In this case, thanks to the hidden signatures $isig$, we can still verify that a tag is one of the original tags efficiently without having to hardcode all the tag values inside the $\io$ program.
\end{proof}

\begin{lemma}
    $\hyb_2 \approx \hyb_3.$
\end{lemma}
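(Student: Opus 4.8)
The plan is to reduce the gap between the two hybrids to the EUF-CMA security of the signature scheme $\dss$ (\cref{def:digdefn}). The only change from $\hyb_2$ to $\hyb_3$ is an extra abort inside the verify subroutine: after the check $\dss.\mathsf{Verify}(ivk, pl_1 || ipl_1, ipl_2) = 1$ succeeds on one of the adversary's output banknotes, the challenger now additionally outputs $0$ and terminates unless the recovered $(tag,counter)$ pair $pl_1 || ipl_1$ appears in the list $\mathsf{PL}$. Since this modification can only turn a potential $1$-output into a $0$-output, we have $\Pr[\hyb_3 = 1] \le \Pr[\hyb_2 = 1]$, and the two probabilities differ by at most the probability of the event $\mathsf{BAD}$ that, in a run of $\hyb_2$, some banknote output by $\adve$ has a serial number that decrypts --- via $\rpke.\mathsf{Dec}(sk, \cdot)$ and then $\ske.\mathsf{Dec}(isk, \cdot)$ --- to a triple $(pl_1, ipl_1, ipl_2)$ with $\dss.\mathsf{Verify}(ivk, pl_1 || ipl_1, ipl_2) = 1$ but $pl_1 || ipl_1 \notin \mathsf{PL}$. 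So it suffices to show $\Pr[\mathsf{BAD}] = \negl(\lambda)$.

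First I would note that $\mathsf{PL}$ is exactly the list of messages ever passed to $\dss.\mathsf{Sign}$ in the game: the minting subroutine is the only place a signature is produced, and on each query for a tag $t$ it both signs $t || \mathsf{cnt}(t)$ and appends $t || \mathsf{cnt}(t)$ to $\mathsf{PL}$. Thus $\mathsf{BAD}$ is precisely the event that $\adve$'s output contains a valid $\dss$ signature on a message that was never signed by the challenger. I would then build an EUF-CMA adversary $\mathcal{B}$ for $\dss$ that, on input $ivk$, samples all remaining keys and obfuscated programs itself ($K$, $pk, sk$, $isk$, $\mathsf{OPMem}$, $\mathsf{OPReRand}$, and so on) and simulates $\hyb_2$ for $\adve$ faithfully, obtaining every signature needed by the minting subroutine from its own signing oracle. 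Because $\mathcal{B}$ holds $sk$ and $isk$, it can run the verify subroutine on each of $\adve$'s output banknotes; if it finds one whose recovered triple $(pl_1, ipl_1, ipl_2)$ satisfies $\dss.\mathsf{Verify}(ivk, pl_1 || ipl_1, ipl_2) = 1$ while $pl_1 || ipl_1$ was never queried, it outputs $(pl_1 || ipl_1, ipl_2)$ as its forgery. This forgery is valid exactly when $\mathsf{BAD}$ occurs, so $\Pr[\mathsf{BAD}]$ is bounded by $\mathcal{B}$'s EUF-CMA advantage, which is negligible, and the lemma follows.

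The main point to be careful about is bookkeeping rather than anything conceptual. The serial-number component of a banknote is a classical string and both $\rpke.\mathsf{Dec}$ and $\ske.\mathsf{Dec}$ are deterministic, so the triple $(pl_1, ipl_1, ipl_2)$ extracted by the verify subroutine is a well-defined function of $\adve$'s classical output, and there is no ambiguity in reading off $\mathcal{B}$'s candidate forgery. One should also check that distinct minting queries for the same tag use distinct counters $\mathsf{cnt}(t)$, so that $\mathsf{PL}$ faithfully records the signing-oracle query set even when $\dss.\mathsf{Sign}$ is randomized, and that $\mathcal{B}$ indeed never queried its oracle on the message it outputs --- which holds precisely because that message lies outside $\mathsf{PL}$, i.e. because $\mathsf{BAD}$ has occurred.
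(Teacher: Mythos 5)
Your proof is correct and follows exactly the same route as the paper, which argues in one line that the adversary only ever receives signatures on messages recorded in $\mathsf{PL}$, so producing a banknote whose payload verifies under $ivk$ but lies outside $\mathsf{PL}$ is an EUF-CMA forgery against $\dss$. Your write-up just makes the reduction $\mathcal{B}$ and the bookkeeping (determinism of decryption, distinct counters per tag, and the fact that $\mathsf{PL}$ coincides with the signing-oracle query set) explicit.
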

\begin{lemma}
    Observe that the adversary only obtains signatures for the elements of $\mathsf{PL}$. Thus, the result follows by the unforgeability security of the signature scheme $\dss$.
\end{lemma}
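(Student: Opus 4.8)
The approach is to observe that $\hyb_2$ and $\hyb_3$ behave identically except for a single extra abort in the verification subroutine, and to charge the resulting gap to the EUF-CMA security of $\dss$ (\cref{def:digdefn}). Concretely, in both hybrids the minting subroutine (from $\hyb_1$) answers a query $tag$ by signing the message $m := tag || \mathsf{cnt}(tag)$ under $isgk$, appending $m$ to the list $\mathsf{PL}$, and embedding $(\mathsf{cnt}(tag), sig)$ into the serial number after wrapping it first under $\ske$ (as $ict$) and then under $\rpke$; and in both hybrids the verification of an adversarial banknote with serial number $id$ computes $pl_1 || pl_2 = \rpke.\mathsf{Dec}(sk, id)$, then $ipl_1 || ipl_2 = \ske.\mathsf{Dec}(isk, pl_2)$, and requires $\dss.\mathsf{Verify}(ivk, pl_1 || ipl_1, ipl_2) = 1$. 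The only new behaviour of $\hyb_3$ is that it additionally outputs $0$ whenever this check passes but the verified message $pl_1 || ipl_1$ does not lie in $\mathsf{PL}$, i.e.\ was never actually signed by the challenger. Thus $\hyb_3$ only ever rejects more often than $\hyb_2$, so $0 \le \Pr[\hyb_2 = 1] - \Pr[\hyb_3 = 1] \le \Pr[E]$, where $E$ denotes the event that, in $\hyb_2$, at least one of the (polynomially many) banknotes output by $\adve$ yields a pair $(pl_1 || ipl_1, ipl_2)$ that verifies under $ivk$ while $pl_1 || ipl_1 \notin \mathsf{PL}$.

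To bound $\Pr[E]$ I would build a reduction $\mathcal{B}$ to EUF-CMA. Given $ivk$ and classical access to the signing oracle $\dss.\mathsf{Sign}(isgk, \cdot)$, the reduction samples everything else itself — $K \samp F.\mathsf{Setup}(1^\lambda)$, $(pk, sk) \samp \rpke.\mathsf{Setup}(1^\lambda)$, $isk \samp \ske.\mathsf{Setup}(1^\lambda)$, and the obfuscated programs $\mathsf{OPMem}, \mathsf{OPReRand}$ exactly as in the construction, neither of which depends on $isgk$ — and then runs $\adve$ on $vk = (\mathsf{OPMem}, \mathsf{OPReRand})$ and $tk = sk$, simulating $\hyb_2$ step for step, with the one change that wherever the minting subroutine would produce a signature itself it instead queries the oracle on $tag || \mathsf{cnt}(tag)$. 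Since every such signature is used only after being encrypted under $isk$ (which $\mathcal{B}$ holds), the view of $\adve$ is distributed exactly as in $\hyb_2$. When $\adve$ outputs its candidate banknotes, $\mathcal{B}$ runs the verification subroutine on each — it holds both $sk$ and $isk$ — and the first time some banknote yields a pair $(pl_1 || ipl_1, ipl_2)$ with $\dss.\mathsf{Verify}(ivk, pl_1 || ipl_1, ipl_2) = 1$ and $pl_1 || ipl_1$ not among its earlier oracle queries, it outputs that pair as its forgery. Then $\mathcal{B}$ wins precisely when $E$ occurs, so $\Pr[E] \le \negl(\lambda)$ by EUF-CMA security, and hence $\hyb_2 \approx \hyb_3$.

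I do not expect a genuine obstacle here; the one point that deserves care is that $\mathcal{B}$ must faithfully simulate $\hyb_2$ to an adversary that \emph{already holds the tracing key}. This is exactly why the construction wraps each embedded signature inside an $\ske$ ciphertext before it ever appears in a serial number: possession of $tk = sk$ lets $\adve$ recover $tag || ict$ but not the signature, so the only signing-key-dependent objects anywhere in the game are those $\ske$-encrypted signatures, which $\mathcal{B}$ obtains from the oracle and re-encrypts under its own $isk$ — no further signing-key information is ever exposed. (Since $\adve$ is QPT, it makes only polynomially many minting queries and outputs only polynomially many banknotes, so both answering the oracle queries and scanning the output for a forgery cost only polynomial factors.)
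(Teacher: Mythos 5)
Your proposal is correct and is exactly the argument the paper intends: the paper's own justification is the two-sentence observation that the adversary only ever receives signatures on messages in $\mathsf{PL}$ (always hidden inside $\ske$ ciphertexts), so any accepted banknote whose signed payload lies outside $\mathsf{PL}$ yields an EUF-CMA forgery, and your reduction is the standard fleshing-out of that claim, including the correct handling of the adversary's possession of $tk = sk$. You have also correctly read through the paper's notational slips (the signing call should use $isgk$, and the membership check in $\hyb_3$ should be on $pl_1 || ipl_1$, the signed message, rather than $ipl_1 || ipl_2$).
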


\begin{lemma}\label{lem:traceh2h3}
    $\hyb_3 \approx \hyb_4.$
\end{lemma}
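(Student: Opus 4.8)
The plan is to show that $\Pr[\hyb_3 = 1 \,\wedge\, \mathsf{PL}' \text{ contains a duplicate}] = \negl(\lambda)$. Since $\hyb_4$ differs from $\hyb_3$ only by the extra losing condition ``$\mathsf{PL}'$ contains a duplicate'', we have $\Pr[\hyb_3 = 1] - \Pr[\hyb_4 = 1] = \Pr[\hyb_3 = 1 \,\wedge\, \mathsf{PL}' \text{ has a duplicate}]$, so this suffices. Intuitively, a duplicate in $\mathsf{PL}'$ means two of the adversary's verifying forged banknotes are traced to the \emph{same} label $(t^*, c^*)\in\mathsf{PL}$, i.e. to the unique banknote the bank minted on the query round where the per-tag counter $\mathsf{cnt}(t^*)$ reached $c^*$; producing two such banknotes amounts to cloning a single subspace state, which should contradict the $1\to 2$ unclonability of subspace states (\cref{thm:onetotwo}). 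I would make this precise by replaying the embedding technique of \cref{sec:proofunclon}.

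Concretely, starting from $\hyb_3$, I would first guess the query round $r^* \samp [Q]$ that produces the duplicated label (where $Q = \poly(\lambda)$ bounds the number of banknote queries), losing only a $1/Q$ factor; write $(t^*,c^*)$ and $ct^*$ for that round's label and serial number. Then, exactly as in hybrids $\hyb_4$–$\hyb_6$ of \cref{sec:proofunclon}, sample a fresh PRF key $K'$ and set $A^* := T^*(\cansbsp)$ for a uniformly random full-rank $T^*$, and replace $\mathsf{OPMem},\mathsf{OPReRand}$ by obfuscations of the variants that, on every serial number whose decrypted label equals $(t^*,c^*)$, use the map $M_{K',ct^*}(id)\cdot T^*$ (with $M_{K',ct^*}(id) = I$ if $id = ct^*$ and $\mathsf{SampleFullRank}(1^\lambda; F(K',id))$ otherwise), and eventually the obfuscated subspace-membership programs for $A^*$ and $(A^*)^\perp$ directly. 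This hop is indistinguishable by the puncturing security of $F$ together with the security of $\io$ (a fresh random full-rank map composed with the fixed $T^*$ is again uniform, so the rewired $\mathsf{PMem}$ keeps the same functionality after derandomizing the PRF), and by strong rerandomization correctness of $\rpke$, which guarantees that $id$ and any $\rpke.\mathsf{ReRand}(pk, id; s)$ decrypt to the same label, so that the two $\mathsf{PReRand}$ variants are functionally identical. The one difference from \cref{sec:proofunclon} is bookkeeping: here the label is the pair $(t^*,c^*) = (\mathrm{tag},\mathsf{cnt}(\mathrm{tag}))$ rather than a global counter, but since this pair was signed by $\dss$ and the signature is hidden inside the $\ske$ layer, the obfuscated programs can test ``the decrypted label is $(t^*,c^*)$'' efficiently, and the verification subroutine (via $\hyb_1$–$\hyb_3$) already rejects banknotes whose label is not in $\mathsf{PL}$; hence $r^*$ and the embedding are well-defined without hardcoding the adversarially chosen tags.

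In the final hybrid the whole experiment can be run using only a single copy of $\ket{A^*}$ together with $\io(A^*),\io((A^*)^\perp)$ received from the $1\to2$ challenger (sampling $K,K',isk,ivk,isgk$ and the $\rpke$ keys ourselves and planting $\ket{A^*}$ as round $r^*$'s banknote). Conditioned on the adversary winning with a duplicate at forged-note indices $j\neq\ell$ and on the guess $r^*$ being correct, both forged banknotes verify and have decrypted label $(t^*,c^*)$, so by projectiveness their states are exactly $\ket{M_{K',ct^*}(id_j)\cdot A^*}$ and $\ket{M_{K',ct^*}(id_\ell)\cdot A^*}$; knowing $K'$, we apply $M_{K',ct^*}(id_j)^{-1}$ and $M_{K',ct^*}(id_\ell)^{-1}$ coherently to recover two copies of $\ket{A^*}$, contradicting \cref{thm:onetotwo}. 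This gives $\Pr[\hyb_3 = 1 \wedge \mathsf{PL}' \text{ has a duplicate}] \le Q\cdot\negl(\lambda) = \negl(\lambda)$, hence $\hyb_3 \approx \hyb_4$. The main obstacle is not a new idea but carefully transporting the $\io$/PRF-puncturing hybrid chain (and the strong-rerandomization-correctness argument for the $\mathsf{PReRand}$ program) of \cref{sec:proofunclon} into this setting; the one genuinely new point to get right is identifying the duplicated label with a specific minted banknote, which works precisely because of the per-tag counter and the unforgeable hidden signatures.
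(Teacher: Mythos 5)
Your proposal is correct and follows essentially the same route as the paper, whose own proof of this lemma is just a pointer to the unclonability argument (hybrids $\hyb_3$ through $\hyb_6$ of \cref{sec:proofunclon}): a duplicate in $\mathsf{PL}'$ means a single minted banknote was cloned, and one reruns the guess-the-index, PRF-puncturing/$\io$ embedding of $\ket{A^*}$, and strong-rerandomization-correctness steps to reduce to \cref{thm:onetotwo}. Your additional bookkeeping observation — that the label is the pair $(\mathrm{tag},\mathsf{cnt}(\mathrm{tag}))$, testable inside the obfuscated programs via the decrypted payload — is exactly the adaptation the paper leaves implicit.
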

\begin{proof}
    Observe that this is essentially unclonability of each indiviual banknote: Any duplicate value in $\mathsf{PL}'$ means that a banknote with that payload value was cloned. Thus, the proof follows similarly to the proof of unclonability of our scheme (especially hybrids $\hyb_3$ through $\hyb_6$ in \cref{sec:proofunclon}).
\end{proof}

\begin{lemma}
    $\Pr[\hyb_4 = 1] = 0.$
\end{lemma}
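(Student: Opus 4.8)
The plan is to show that, with all the extra bookkeeping of $\hyb_1$--$\hyb_4$ in place, the final comparison in $\tracegame{\adve}$ can never favour the adversary, so the challenger outputs $0$ on every run. I would first restrict attention to runs in which the challenger has not already aborted; there, all $k+1$ registers $\reg_i$ output by $\adve$ pass the verify subroutine, so for each $i$ the serial number $id_i$ satisfies $\rpke.\mathsf{Dec}(sk,id_i) = pl_1^{(i)} || pl_2^{(i)}$ and $\ske.\mathsf{Dec}(isk,pl_2^{(i)}) = ipl_1^{(i)} || ipl_2^{(i)}$, the entry $pl_1^{(i)} || ipl_1^{(i)}$ was appended to $\mathsf{PL}'$ (and, by $\hyb_2$ together with $\hyb_3$, it lies in $\mathsf{PL}$), and the entry that Step 6 of the game appends to $\mathsf{TAGS}'$, namely $\bank.\mathsf{Dec}(tk,\reg_i)$, is the first $t(\lambda)$ bits of $\rpke.\mathsf{Dec}(sk,id_i)$, i.e.\ $pl_1^{(i)}$. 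Hence $\mathsf{TAGS}'$ is exactly the image of $\mathsf{PL}'$ under the projection $\pi$ onto the leading $t(\lambda)$-bit tag block.

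Next I would record the mirror structure on the minting side: each banknote query for a tag $t$ appends $t$ to $\mathsf{TAGS}$ and appends $t || \mathsf{cnt}(t)$ to $\mathsf{PL}$ (with $\mathsf{cnt}(t)$ the running count of queries for $t$), so $\mathsf{TAGS} = \pi(\mathsf{PL})$, and --- the key point --- the entries of $\mathsf{PL}$ are pairwise distinct, since the pair $(t,\mathsf{cnt}(t))$ is fresh at every query. Combining this with the two new abort conditions: by $\hyb_3$ every entry of $\mathsf{PL}'$ belongs to $\mathsf{PL}$, and by $\hyb_4$ the entries of $\mathsf{PL}'$ are pairwise distinct; hence $\mathsf{PL}'$ is a distinct sub-collection of a list with pairwise distinct entries, which means $\mathsf{SORT}(\mathsf{PL}')$ is a sublist of $\mathsf{SORT}(\mathsf{PL})$ (equivalently, $\mathsf{PL}'$ is a sub-multiset of $\mathsf{PL}$).

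The last step is the elementary observation that a coordinate projection is monotone for the sub-multiset order: if $\mathsf{PL}'$ is a sub-multiset of $\mathsf{PL}$ then $\pi(\mathsf{PL}')$ is a sub-multiset of $\pi(\mathsf{PL})$, since the multiplicity of any value $v$ in $\pi(M)$ equals $\sum_{w:\pi(w)=v}\mathrm{mult}_M(w)$, which is monotone in $M$. Applying this gives that $\mathsf{SORT}(\mathsf{TAGS}')$ is a sublist of $\mathsf{SORT}(\mathsf{TAGS})$, so in Step 7 the challenger outputs $0$; as this holds on every non-aborting run, $\Pr[\hyb_4 = 1] = 0$. I do not expect a genuine obstacle in this lemma: all the cryptographic content has already been spent in $\hyb_1 \approx \hyb_2$ (every verifying forged banknote is rooted at a minted one, via $0\to 1$ unclonability), $\hyb_2 \approx \hyb_3$ (unforgeability of $\dss$) and $\hyb_3 \approx \hyb_4$ (per-banknote unclonability); what remains is only careful bookkeeping, chiefly identifying the list $\mathsf{TAGS}'$ of Step 6 with the projection of the list $\mathsf{PL}'$ maintained by the hybrid challenger, and checking that every register surviving the verify subroutine contributes a tag block that is the leading part of a genuine entry of $\mathsf{PL}$.
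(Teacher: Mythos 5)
Your proposal is correct and follows essentially the same route as the paper: non-aborting runs force $\mathsf{PL}'$ to be a duplicate-free sub-collection of $\mathsf{PL}$ (via the $\hyb_3$ membership check and the $\hyb_4$ no-duplicates check), and projecting onto the tag block then shows $\mathsf{SORT}(\mathsf{TAGS}')$ is a sublist of $\mathsf{SORT}(\mathsf{TAGS})$, so the challenger always outputs $0$. Your write-up is in fact more explicit than the paper's about the bookkeeping (identifying $\mathsf{TAGS}'$ with the projection of $\mathsf{PL}'$ and noting monotonicity of projection under the sub-multiset order), but it is the same argument.
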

\begin{proof}
Observe that at the end of the game, if the challenger already has not terminated with $0$, we will have that $\mathsf{PL}'$ will be a subset of $\mathsf{PL}$. Further, for the adversary to win the game, each element in $\mathsf{PL}'$ can appear once. Finally, note that $\mathsf{PL}, \mathsf{PL}'$ actually contains the tags (with a multiplicity counter attached to each tag) of the banknotes given to the adversary and the banknotes output by the adversary, respectively. Thus, combining these means that $\mathsf{TAGS}'$ will contain the same tags as $\mathsf{TAGS}$, but fewer or equal multiplicity for each tag. This implies that $\mathsf{SORT}(\mathsf{TAGS}')$ is a sublist of $\mathsf{SORT}(\mathsf{TAGS})$. Hence, the adversary loses surely in $\hyb_4$.
\end{proof}

By above, we conclude that $\Pr[\tracegame{\adve}(1^\lambda) = 1] = \negl(\lambda)$ for any QPT adversary $\adve$.

\section{Construction with Untraceability}\label{sec:consuntrac}
In this section, we give our public-key quantum money construction with untraceability, in the common (uniformly) random string model. In our scheme, verification automatically rerandomizes a banknote.

We assume the existence of the following primitives that we use in our construction.
(i) $\io$, subexponentially secure indistinguishability obfuscation, (ii) $\rpke$, a rerandomizable public key encryption scheme with strong correctness, public testing, pseudorandom public encryption keys, simulatable testing keys and statistical rerandomization security for truly random public keys (\cref{defn:trurandstatrerand}), (iii) $F$, a subexponentially secure puncturable PRF with input length $p_1(\lambda)$ and output length $p_2(\lambda)$, and (iv) $\mathsf{NIZK}$, a non-interactive zero knowledge argument system in the common random string model (CRS) for the language $L$ defined below.

We also set the following parameters:  $p_1(\lambda)$ to be the ciphertext size of $\rpke$ for messages of length $\lambda$, $p_2(\lambda)$ to be the randomness size of the algorithm $\mathsf{SampleFullRank}$, $p_3(\lambda)$ to be the randomness size of the algorithm $\pke.\mathsf{ReRandomize}$, $q_1(\lambda)$ to be the CRS length of $\mathsf{NIZK}$, $q_2(\lambda)$ to be the public key size of $\rpke$, $q(\lambda) = q_1(\lambda) + q_2(\lambda)$ to be the CRS length of our scheme.

Finally, we define the $NP$ language $L$ as consisting of $x$ such that
\begin{equation*}
    \exists \textcolor{red}{K, r}~~ x = \io(\mathsf{PMem}_{\textcolor{red}{K}}; \textcolor{red}{r})
\end{equation*}

\paragraph{\underline{$\mathsf{Setup}(crs)$}}
\begin{enumerate}
\item Parse\footnote{See \cref{appn:remark} for a remark on representing the public key as a binary string.} $crs_{nizk} || pk = crs$ with $|crs_{nizk}| = q_1(\lambda)$ and $|pk| = q_2(\lambda).$
    \item Sample $K \samp F.\mathsf{Setup}(1^\lambda)$.
    \item Sample random tape $r_{iO}$ and compute $\mathsf{OPMem} = \io(\mathsf{PMem}; r_{iO})$ where $\mathsf{PMem}$ is the following program.
    
\begin{mdframed}
        {\bf $\underline{\mathsf{PMem}_{K}(id, v, b)}$}
        
        {\bf Hardcoded: $K$}
        \begin{enumerate}[label=\arabic*.]
            \item $T = \mathsf{SampleFullRank}(1^\lambda; F(K, id))$.
            \item Compute $w = T^{-1}(v)$ if $b = 0$; otherwise, compute $w = T^\mathsf{T}(v)$.
            \item Output $1$ if $w \in \cansbsp$ if $b = 0$ and if $w \in \cansbsp^\perp$ if $b = 1$. Otherwise, output $0$.
        \end{enumerate}
    \end{mdframed}

\item Sample $ptk \samp \rpke.\mathsf{SimulateTestKey}(pk)$.
\item Sample $\mathsf{OPReRand} \samp \io(\mathsf{PReRand})$ where $\mathsf{PReRand}$ is the following program\footnote{We note that if we make the stronger assumption of $\rpke$ with \emph{all-accepting simulatable testing keys}, we can actually remove the $\rpke.\mathsf{Test}$ line from the construction.}.
    
    \begin{mdframed}
        {\bf $\underline{\mathsf{PReRand}_{K}(id, s)}$}
        
        {\bf Hardcoded: $K, pk, ptk$}
        \begin{enumerate}[label=\arabic*.]
            \item Check if $\rpke.\mathsf{Test}(ptk, id) = 1$. If not, output $\perp$ and terminate.
            \item $id' = \rpke.\mathsf{ReRand}(pk, id; s)$.
            \item $T_1 = \mathsf{SampleFullRank}(1^\lambda; F(K, id))$.
            \item $T_2 = \mathsf{SampleFullRank}(1^\lambda; F(K, id'))$.
            \item Output $T_2\cdot T_1^{-1}$.
        \end{enumerate}
\end{mdframed}

\item Sample the proof $\pi \samp \mathsf{NIZK}.\mathsf{Prove}(crs_{nizk}, x = \mathsf{OPMem}, w = (K, r_{iO}))$.

\item Set $vk = (\mathsf{OPMem}, \mathsf{OPReRand}, \pi)$.
\item Set $mk = (K, pk)$.
\item Output $vk, mk$.
\end{enumerate}

\paragraph{\underline{$\mathsf{GenBanknote}(mk)$}}
\begin{enumerate}
    \item Parse $(K, pk) = mk$.
    \item Sample $ct \samp \rpke.\mathsf{Enc}(pk, 0^\lambda)$.
    \item $T = \mathsf{SampleFullRank}(1^\lambda; F(K, ct))$.
    \item Set $\ket{\$} = \sum_{v \in \cansbsp} \ket{T(v)}$.
    \item Output $ct, \ket{\$}$.
\end{enumerate}

\paragraph{\underline{$\mathsf{Verify}(crs, vk, \reg)$}}
\begin{enumerate}
    \item Parse $crs_{nizk} || pk = crs$ with $|pk| = q_2(\lambda).$
    \item Parse $(\mathsf{OPMem}, \mathsf{OPReRand}, \pi) = vk$.
    \item Check\footnote{We note that in reality, a user can just perform this verification only once on $vk$ and later on keep using $vk$.} if $\mathsf{NIZK}.\mathsf{Verify}(crs_{nizk}, x = \mathsf{OPMem}, \pi) = 1$. If not, output $0$ and terminate.
    \item Parse $(id, \reg') = \reg$.
    \item Run $\mathsf{OPMem}$ coherently on $id, \reg', 0$. Check if the output is $1$, and then rewind.
    \item Apply QFT to $\reg'$.
    \item Run $\mathsf{OPMem}$ coherently on $id, \reg', 1$. Check if the output is $1$, and then rewind.
    \item If any verification failed, output $0$ and terminate.
    \item Sample $s \samp \zo^{p_3(\lambda)}$.
    \item $T = \mathsf{OPReRand}(id, s)$.
    \item Compute $id' = \rpke.\mathsf{ReRand}(pk, id; s).$
    \item Apply the linear map $T: \F_2^\lambda \to \F_2^\lambda$ coherently to $\reg'$.
    \item Run $\mathsf{OPMem}$ coherently on $id', \reg', 0$. Check if the output is $1$, and then rewind.
    \item Apply QFT to $\reg'$.
    \item Run $\mathsf{OPMem}$ coherently on $id', \reg', 1$. Check if the output is $1$, and then rewind.
    \item If any verification failed, output $0$. Otherwise, output $1$.
\end{enumerate}

\begin{theorem}
    $\bank$ satisfies correctness and projectiveness.
\end{theorem}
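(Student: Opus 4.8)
The plan is to reduce both claims to the analysis already carried out for the anonymous-and-traceable scheme, since the membership-checking program $\mathsf{PMem}_K$ used here is literally the same one, and the only genuinely new ingredients in $\mathsf{Verify}$ are (i) a deterministic classical $\mathsf{NIZK}$ check and (ii) an in-verification rerandomization step.

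\paragraph{\textbf{Projectiveness.}} First I would fix $crs$, $vk$ (with the $\mathsf{NIZK}$ proof in $vk$ accepting) and a serial number $id$, and set $T_{id} = \mathsf{SampleFullRank}(1^\lambda; F(K,id))$ and $A^*_{id} = T_{id}(\cansbsp)$, a subspace of $\F_2^\lambda$ of dimension $\lambda/2$. By exactly the argument of \cref{sec:projective} — the equivalences $T_{id}^{-1}(v)\in\cansbsp \iff v\in A^*_{id}$ and $T_{id}^\mathsf{T}(v)\in\cansbsp^\perp \iff v\in (A^*_{id})^\perp$, combined with the Aaronson--Christiano subspace-verification analysis \cite{AC12} — running $\mathsf{OPMem}$ coherently with $b=0$, then applying QFT, then running $\mathsf{OPMem}$ coherently with $b=1$ implements the rank-one binary projective measurement $\{\ketbra{A^*_{id}}{A^*_{id}}, I - \ketbra{A^*_{id}}{A^*_{id}}\}$ on $\reg'$ and leaves an accepted state untouched. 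The $\mathsf{NIZK}$ check does not touch $\reg'$. For the rerandomize-and-reverify block (steps 9--16): once the first projection accepts, the register holds $\ket{A^*_{id}}$; the program $\mathsf{OPReRand}$ returns $T = T_{id'}\cdot T_{id}^{-1}$ for $id' = \rpke.\mathsf{ReRand}(pk, id; s)$, and applying $T$ coherently sends $\ket{A^*_{id}}$ \emph{exactly} to $\ket{A^*_{id'}}$ (this algebraic identity is the content of \cref{sec:correctness}), so the second projection, which is onto $\ket{A^*_{id'}}$, then accepts with certainty. Hence, for every fixed internal randomness $s$, the accept event of $\mathsf{Verify}$ on input $(id,\cdot)$ is the fixed rank-one measurement $\{\ketbra{A^*_{id}}{A^*_{id}}, I-\ketbra{A^*_{id}}{A^*_{id}}\}$ on $\reg'$; in particular the acceptance probability equals $|\braket{A^*_{id}}{\psi}|^2$ independently of $s$. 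This gives projectiveness and records the mapping $M : id \mapsto \ket{A^*_{id}}$ that the untraceability proof will invoke.

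\paragraph{\textbf{Correctness.}} Here I would simply push an honestly minted banknote through $\mathsf{Verify}$. $\mathsf{GenBanknote}(mk)$ outputs $ct \samp \rpke.\mathsf{Enc}(pk, 0^\lambda)$ and $\ket{\$} = \sum_{v\in\cansbsp}\ket{T_{ct}(v)} = \ket{A^*_{ct}}$. In $\mathsf{Verify}$: the $\mathsf{NIZK}$ check passes with overwhelming probability by completeness of $\mathsf{NIZK}$ in the common random string model, since $\pi \samp \mathsf{NIZK}.\mathsf{Prove}(crs_{nizk}, \mathsf{OPMem}, (K, r_{iO}))$ for the true statement $\mathsf{OPMem} = \io(\mathsf{PMem}_K; r_{iO})$. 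The first projection (steps 5--7) accepts with certainty since $\ket{A^*_{ct}}$ lies in its $+1$ eigenspace. In the rerandomization step, $\mathsf{OPReRand}(ct, s)$ does not output $\perp$ because $\rpke.\mathsf{Test}(ptk, ct) = 1$, using the property of our $\rpke$ instantiation that an honestly generated ciphertext is accepted by $\rpke.\mathsf{Test}$ even under a simulated testing key $ptk$ (immediate under all-accepting simulatable testing keys, and in general since honest ciphertexts are accepted and the simulated key is indistinguishable); by correctness of $\io$ it then returns $T = T_{id'}\cdot T_{ct}^{-1}$ with $id' = \rpke.\mathsf{ReRand}(pk, ct; s)$. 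Applying $T$ coherently maps $\ket{A^*_{ct}}$ to $\ket{A^*_{id'}}$ exactly, and the final projection (steps 13--15) — which by projectiveness is the projector onto $\ket{A^*_{id'}}$ — accepts with certainty. Thus $\mathsf{Verify}$ outputs $1$ with overwhelming probability.

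\paragraph{\textbf{Main obstacle.}} There is no deep obstacle; the content lies entirely in checking that the in-verification rerandomization is benign. The two points needing care are: (a) that the subspace-state transformation $\ket{A^*_{id}} \mapsto \ket{A^*_{id'}}$ induced by $T = T_{id'}\cdot T_{id}^{-1}$ is \emph{exact} rather than merely approximate — which holds because $\mathsf{SampleFullRank}$ produces genuine linear bijections of $\F_2^\lambda$ and $T$ is applied coherently in both directions with no garbage; and (b) identifying the correct reading of ``projectiveness'' when $\mathsf{Verify}$ draws fresh rerandomization randomness $s$ — the resolution being that the accept/reject measurement \emph{on the input register} is the same rank-one projector $\ketbra{A^*_{id}}{A^*_{id}}$ for every $s$, with only the post-verification serial number/state depending on $s$, which again follows from the exactness in (a). A minor bookkeeping point is the reliance on the $\rpke$ guarantee that honest ciphertexts pass $\rpke.\mathsf{Test}$ under the simulated key, so that $\mathsf{OPReRand}$ never aborts on honest inputs.
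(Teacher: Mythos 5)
Your proposal is correct and follows essentially the same route as the paper, which simply defers to the projectiveness and correctness arguments of the anonymous-and-traceable scheme (\cref{sec:projective}, \cref{sec:correctness}); you additionally spell out the benign handling of the new ingredients (the NIZK check, the in-verification rerandomization, and the fact that honest ciphertexts pass $\rpke.\mathsf{Test}$ under the simulated testing key), which is exactly the intended content.
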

The proof of this theorem follows similarly to \cref{sec:projective} and \cref{sec:correctness}.
\begin{theorem}
    $\bank$ satisfies untraceability (\cref{defn:untrac}).
\end{theorem}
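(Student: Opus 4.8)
The plan is to prove untraceability via a two-step hybrid argument that first moves to an \emph{unbounded} challenger able to extract the hidden serial-number-to-state mapping from the (maliciously chosen) verification key, and then invokes the statistical rerandomization security of $\rpke$ under a truly uniform public key. Write $crs = crs_{nizk}\,\|\,pk$ with $|pk| = q_2(\lambda)$; since $crs$ is sampled uniformly, $pk$ is a truly uniform string, and note that inside $\bank.\mathsf{Verify}$ the serial number is always rerandomized honestly via $\rpke.\mathsf{ReRand}(pk,\cdot\,;\cdot)$ using this $pk$, irrespective of the adversarially supplied $vk, mk$.

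\textbf{Step 1 (extracting the projective mapping).} Let $\hyb_0$ be $\qmuntrac{\adve}(1^\lambda)$. In $\hyb_1$ the challenger is unbounded. Condition on both $\bank.\mathsf{Verify}(vk,\reg_0)$ and $\bank.\mathsf{Verify}(vk,\reg_1)$ accepting (else both hybrids output $0$). Since $\bank.\mathsf{Verify}$ first runs $\mathsf{NIZK}.\mathsf{Verify}(crs_{nizk}, \mathsf{OPMem}, \pi)$ and $\adve$ is efficient, (adaptive) soundness of $\mathsf{NIZK}$ implies that, except with negligible probability, $\mathsf{OPMem}=\io(\mathsf{PMem}_{K};r)$ for some $K,r$; the unbounded challenger recovers such a $K$ by brute force. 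By correctness of $\io$, $\mathsf{OPMem}$ computes $\mathsf{PMem}_{K}$, so exactly the analysis of \cref{sec:projective} shows that the computational/Hadamard-basis tests with $\mathsf{OPMem}$ on a serial number $id$ implement the rank-$1$ projector onto $\ket{A_K(id)}$, where $A_K(id)\vcentcolon=\{\,\sampfrm{F(K,id)}(v): v\in\cansbsp\,\}$. Hence, on an accepting run, the register emerging from $\bank.\mathsf{Verify}(vk,\reg_0)$ is $(id_0',\,\ket{A_K(id_0')})$ up to negligible trace distance, where $id_0'=\rpke.\mathsf{ReRand}(pk,id_0;s_0)$ is the serial number recomputed inside that call; likewise $\bank.\mathsf{Verify}(vk,\reg_1)$ yields $(ct_1',\,\ket{A_K(ct_1')})$ with $ct_1'=\rpke.\mathsf{ReRand}(pk,ct_1;s_1)$ and $ct_1$ the serial number of the freshly minted $\reg_1$. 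In $\hyb_1$ the challenger discards those post-verification quantum states and re-prepares them directly as $\ket{A_K(id_0')}$ and $\ket{A_K(ct_1')}$ using the extracted $K$. Thus $\hyb_0$ and $\hyb_1$ are statistically indistinguishable.

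\textbf{Step 2 (truly random CRS laundering).} In $\hyb_1$, conditioned on reaching the challenge step, both candidate banknotes have the form $(\xi,\ket{A_K(\xi)})$ where $\xi$ is a rerandomization, under the truly uniform $pk$ and with fresh randomness, of a string ($id_0$ or $ct_1$) that is part of $\adve$'s view at that point. By statistical rerandomization security for truly random public keys (\cref{defn:trurandstatrerand}) — instantiated with the (unbounded) distinguisher that outputs $id_0$, respectively $ct_1$ — conditioned on $\adve$'s entire view each of $id_0'$ and $ct_1'$ is statistically close to a fresh $\rpke.\mathsf{Enc}(pk,0^{\lambda})$, hence statistically close to each other and independent of the view. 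Since $A_K(\cdot)$ is a fixed function and $K$ is determined by $\mathsf{OPMem}$ (part of the view), it follows that $\reg_0$ and $\reg_1$ are statistically indistinguishable given everything else $\adve$ holds. Therefore the challenge bit $b$ is statistically hidden and $\Pr[\hyb_1=1]\le \tfrac12+\negl(\lambda)$, whence $\Pr[\qmuntrac{\adve}(1^\lambda)=1]\le\tfrac12+\negl(\lambda)$.

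\textbf{Main obstacle.} The crux is the $\hyb_0\to\hyb_1$ transition: one must argue that on every accepting run the post-verification state equals (up to negligible error) the canonical subspace state $\ket{A_K(id')}$ even though $\mathsf{OPReRand}$ is adversarial. This works precisely because $\bank.\mathsf{Verify}$ re-tests the state against $\mathsf{OPMem}$ on the \emph{honestly} recomputed $id'=\rpke.\mathsf{ReRand}(pk,id;s)$, while NIZK soundness forces $\mathsf{OPMem}$ to be a genuine $\io(\mathsf{PMem}_K)$ and therefore makes that re-test a true projection; the $\mathsf{OPReRand}$ output only has to pass this test, so a malicious $\mathsf{OPReRand}$ can at worst lower the acceptance probability but cannot change the accepted state. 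The remaining ingredients — unbounded extraction of $K$, and the final leftover-hash-lemma-style laundering — are routine, and the only price paid is that the reduction is unbounded, which is fine since untraceability is established against the uniform CRS rather than a computational assumption.
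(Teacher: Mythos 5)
Your proposal is correct and follows essentially the same route as the paper's proof: brute-force extraction of $K$ from $\mathsf{OPMem}$ justified by NIZK soundness (moving to an unbounded challenger), projectiveness of the $\mathsf{OPMem}$-based verification to replace the post-verification state with the canonical $\ket{A_K(id')}$, and finally statistical rerandomization security under the truly uniform $pk$ taken from the CRS. Your "main obstacle" remark about the malicious $\mathsf{OPReRand}$ being neutralized by the re-test against $\mathsf{OPMem}$ on the honestly recomputed $id'$ is exactly the point the paper's hybrids $\hyb_3$/$\hyb_4$ make implicit.
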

See \cref{sec:proofuntrac} for the proof.

\begin{theorem}
    $\bank$ satisfies counterfeiting security (\cref{defn:bankcf}).
\end{theorem}
See \cref{sec:proofuntracunclon} for the proof.

\subsection{Proof of Untraceability}\label{sec:proofuntrac}
We will prove security through a sequence of hybrids, each of which is obtained by modifying the previous one.  Let $\adve$ be a QPT adversary.
\paragraph{$\underline{\hyb_0}$}: The original game $\qmuntrac{\adve}(1^\lambda)$.

\paragraph{$\underline{\hyb_1}$}: In the verification steps (Step 4 and 6 of the experiment), instead of performing $\bank.\mathsf{Verify}(vk, \reg)$, the challenger executes the code of $\bank.\mathsf{Verify}$ directly on $vk, \reg$. Further, instead of verifying the proof $\pi$ inside $vk$ twice (in Step 4 and 6), it performs it only once when $\adve$ outputs $vk$ at Step 3.

\paragraph{$\underline{\hyb_2}$}: After verifying the NIZK argument $\pi$ at Step 3, the challenger executes the following. It iterates over all strings $K, r$ and checks if $\mathsf{OPMem}$ (parsed from $vk$ as in $\bank.\mathsf{Verify}$) satisfies $\mathsf{OPMem} = \io(\mathsf{PMem}_K; r)$. If the challenger finds such a value, it sets $K^*, r^*$ to these value. If it cannot find such a value, it outputs $0$ and terminates.

Note that from this hybrid onwards, the challenger (and the experiments) will be exponential time.

\paragraph{$\underline{\hyb_3}$}: We change the way the challenger performs the banknote verification.
\paragraph{\underline{Verify Subroutine}$(\reg)$}
\begin{enumerate}
  \item Parse $crs_{nizk} || pk = crs$ with $|pk| = q_2(\lambda).$
    \item Parse $(\mathsf{OPMem}, \mathsf{OPReRand}, \pi) = vk$.
    \item Parse $(id, \reg') = \reg$.
    \item \textcolor{red}{Compute $T_{id} = \sampfrm{F(K^*, id)}.$}
    \item \textcolor{red}{Coherently run the following on $\reg'$: On input $v$, check if $T_{id}^{-1}(v) \in \cansbsp$.  Check if the output is $1$, and then rewind.}
    \item Apply QFT to $\reg'$.
    \item \textcolor{red}{Coherently run the following on $\reg'$: On input $v$, check if $T_{id}^{\mathsf{T}}(v) \in \cansbsp^\perp$.  Check if the output is $1$, and then rewind.}
    \item If any verification failed, output $0$ and terminate.
    \item Sample $s \samp \zo^{p_3(\lambda)}$.
    \item $T = \mathsf{OPReRand}(id, s)$.
    \item Compute $id' = \rpke.\mathsf{ReRand}(pk, id; s).$
    \item Apply the linear map $T: \F_2^\lambda \to \F_2^\lambda$ coherently to $\reg'$.
        \item \textcolor{red}{Compute $T_{id'} = \sampfrm{F(K^*, id')}.$}
    \item \textcolor{red}{Coherently run the following on $\reg'$: On input $v$, check if $T_{id'}^{-1}(v) \in \cansbsp$.  Check if the output is $1$, and then rewind.}
    \item Apply QFT to $\reg'$.
    \item \textcolor{red}{Coherently run the following on $\reg'$: On input $v$, check if $T_{id'}^{\mathsf{T}}(v) \in \cansbsp^\perp$.  Check if the output is $1$, and then rewind.}
    \item If any verification failed, output $0$. Otherwise, output $1$.
    \end{enumerate}

\paragraph{$\underline{\hyb_4}$}: Let $id_b$ the initial serial number and $id'_b$ be the rerandomized serial number ($id'_b = \rpke.\mathsf{ReRand}(pk ,id_b; s)$) obtained during verification subroutine for $\reg_b$. We change the challenge output behaviour of the challenger. Instead of sending $\reg_b$ to the adversary $\adve$ in the challenge case $b$, it instead sends $(id'_b, \ket{\psi_b})$ to the adversary where $\ket{\psi_b} = \sum_{v \in \cansbsp} \ket{T_{id'_b}(v)}$ and $T_{id'_b} = \sampfrm{F(K^*, id'_b)}.$

\begin{lemma}
    $\hyb_0 \equiv \hyb_1$.
\end{lemma}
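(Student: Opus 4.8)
The plan is to show that $\hyb_0$ and $\hyb_1$ induce \emph{exactly} the same distribution on the output bit, which justifies writing $\equiv$ rather than $\approx$. The modifications made in passing from $\hyb_0$ to $\hyb_1$ are of two kinds: (i) replacing the invocation of $\bank.\mathsf{Verify}(vk,\reg)$ at Steps~4 and~6 of the experiment by an in-line execution of the code of $\bank.\mathsf{Verify}$ on $vk,\reg$; and (ii) hoisting the single NIZK-verification step out of that code and performing it once, immediately after $\adve$ outputs $vk$ at Step~3, rather than inside $\bank.\mathsf{Verify}$ (where it would otherwise be executed at Step~4 and again, if that step passes, at Step~6).

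First I would observe that change (i) is purely syntactic: inlining the code of a subroutine does not alter the joint distribution of any classical value or quantum register produced in the experiment, so it has no effect on the output. Next, for change (ii), I would use the fact that $\mathsf{NIZK}.\mathsf{Verify}$ is a deterministic algorithm invoked on inputs $(crs_{nizk}, \mathsf{OPMem}, \pi)$ that are all fixed by the time $\adve$ has produced $vk$ at Step~3 — here $crs_{nizk}$ is the honestly sampled prefix of $crs$, while $\mathsf{OPMem}$ and $\pi$ are parsed deterministically from $vk$. Hence every invocation of $\mathsf{NIZK}.\mathsf{Verify}$ on these inputs returns the same bit $\beta$. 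If $\beta = 0$, then in $\hyb_0$ the first call to $\bank.\mathsf{Verify}$ (at Step~4) aborts with output $0$, while in $\hyb_1$ the hoisted check aborts with output $0$ at Step~3; both games output $0$. If $\beta = 1$, then in $\hyb_0$ the NIZK check succeeds at Step~4 and (if that step is reached) again at Step~6, in both cases contributing nothing further, so deleting it from Steps~4 and~6 in $\hyb_1$ changes nothing. In this case the remaining computation — the coherent $\mathsf{OPMem}$ membership tests in the two bases, the QFTs, the rerandomization via $\mathsf{OPReRand}$, the sampling of the challenge bit, the transmission of $\reg_b$, and $\adve$'s guess $b'$ — is executed identically in the two hybrids. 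Therefore $\Pr[\hyb_0 = 1] = \Pr[\hyb_1 = 1]$, and in fact the two games are perfectly equivalent.

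There is no real obstacle here; this is a bookkeeping step. The only point requiring slight care is to confirm that hoisting the NIZK check preserves the abort-on-failure semantics of $\bank.\mathsf{Verify}$, i.e.\ that a failed NIZK verification still causes the game to output $0$ and terminate, which holds by construction of $\hyb_1$.
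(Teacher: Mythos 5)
Your proof is correct and matches the paper's (one-line) justification, which simply declares the change semantic; you supply the details — inlining is syntactic, and hoisting the deterministic $\mathsf{NIZK}.\mathsf{Verify}$ check on inputs fixed at Step~3 preserves both the accept/reject outcome and the abort semantics. The only implicit assumption worth noting is that $\mathsf{NIZK}.\mathsf{Verify}$ is deterministic (standard, and needed for running it once versus twice to be equivalent), which you correctly identify.
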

\begin{proof}
    This is only a semantic change.
\end{proof}

\begin{lemma}
    $\hyb_1 \approx \hyb_2$.
\end{lemma}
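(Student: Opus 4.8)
The plan is to argue that $\hyb_1$ and $\hyb_2$ are perfectly identical except on a negligible-probability event controlled by the soundness of $\mathsf{NIZK}$. First I would note that the only operational difference between the two hybrids is the exhaustive witness search inserted after Step~3 in $\hyb_2$: the values $K^*, r^*$ that this search produces are not read anywhere in $\hyb_2$ (they are first used only in $\hyb_3$), so the search is behaviorally invisible \emph{unless} it fails, in which case $\hyb_2$ aborts with output $0$ while $\hyb_1$ proceeds. Conditioned on $\mathsf{NIZK}.\mathsf{Verify}(crs_{nizk}, \mathsf{OPMem}, \pi)$ rejecting, both hybrids already output $0$; conditioned on it accepting and some $(K,r)$ with $\mathsf{OPMem} = \io(\mathsf{PMem}_K; r)$ existing, the search in $\hyb_2$ succeeds and the continuation matches $\hyb_1$ identically. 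Hence $\hyb_1$ and $\hyb_2$ can differ only on the event $E$ that $\pi$ is an accepting proof but $\mathsf{OPMem} \notin L$.

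Next I would bound $\Pr[E]$ by adaptive computational soundness of the argument system $\mathsf{NIZK}$. The reduction $\mathcal{B}$ receives the common random string $crs_{nizk}$ from the soundness challenger, samples the remaining block $pk \samp \zo^{q_2(\lambda)}$ of the CRS itself, runs the untraceability adversary $\adve$ on $crs = crs_{nizk}\,\|\,pk$, parses $(\mathsf{OPMem}, \mathsf{OPReRand}, \pi)$ from the $vk$ that $\adve$ outputs, and forwards $(\mathsf{OPMem}, \pi)$ to its challenger. $\mathcal{B}$ is QPT and it wins exactly when $E$ occurs, so $\Pr[E] = \negl(\lambda)$; therefore $|\Pr[\hyb_1 = 1] - \Pr[\hyb_2 = 1]| \le \Pr[E] = \negl(\lambda)$.

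There is no real obstacle here; the only points requiring a word of justification are bookkeeping. First, membership in $L$ is decidable — the candidate witness $(K, r)$ ranges over strings of a fixed polynomial length, so the brute-force loop in $\hyb_2$ terminates — which makes $\hyb_2$ well defined even though it (and every subsequent hybrid) runs in exponential time, exactly as flagged in the description of $\hyb_2$. Second, soundness must be invoked in its \emph{adaptive} form because $\adve$ chooses the statement $\mathsf{OPMem}$ only after seeing $crs_{nizk}$; this is precisely the guarantee of the Peikert–Shiehian NIZK in the common random string model, so nothing extra is needed.
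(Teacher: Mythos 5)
Your proposal is correct and follows the same route as the paper: the only behavioral difference between the hybrids is the possible abort when the exhaustive witness search fails, and this event is exactly ``$\pi$ accepts but $\mathsf{OPMem}\notin L$,'' which is negligible by the (adaptive) computational soundness of $\mathsf{NIZK}$. Your write-up simply makes the reduction and the termination of the brute-force search more explicit than the paper does.
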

\begin{proof}
    By the computational soundness of $\mathsf{NIZK}$, if it was the case that $\mathsf{OPMem} \not\in L$, then the proof $\pi$ produced by QPT adversary $\adve$ would pass verification with only negligible probability. Thus, if the challenger has not already terminated, we know that the proof $\pi$ passed and (except with negligibly small probability), $\mathsf{OPMem} \in L$. Hence, by definition of the language $L$, there exists $K, r$ such that $\mathsf{OPMem} = \io(\mathsf{PMem}_K; r)$, which the challenger will be able to find through exhaustive search.
\end{proof}

\begin{lemma}
    $\hyb_2 \equiv \hyb_3$.
\end{lemma}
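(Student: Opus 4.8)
The plan is to treat $\hyb_2 \equiv \hyb_3$ as a purely syntactic rewriting of the verification subroutine that is justified entirely by the \emph{perfect} correctness of $\io$. Recall that from $\hyb_2$ onward the (now exponential-time) challenger holds a pair $(K^*, r^*)$ with $\mathsf{OPMem} = \io(\mathsf{PMem}_{K^*}; r^*)$, found by exhaustive search, and that this pair exists whenever the challenger has not already aborted. By the correctness clause of the $\io$ definition, $\mathsf{OPMem}(id, v, b) = \mathsf{PMem}_{K^*}(id, v, b)$ for \emph{every} input $(id, v, b)$, with probability $1$ over the obfuscation randomness.

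Next I would unfold the definition of $\mathsf{PMem}_{K^*}$. Writing $T_{id} := \mathsf{SampleFullRank}(1^\lambda; F(K^*, id))$, the program on input $(id, v, 0)$ outputs $1$ iff $T_{id}^{-1}(v) \in \cansbsp$, and on input $(id, v, 1)$ outputs $1$ iff $T_{id}^{\mathsf{T}}(v) \in \cansbsp^\perp$; the analogous statement holds for the rerandomized serial number $id'$ with $T_{id'} = \mathsf{SampleFullRank}(1^\lambda; F(K^*, id'))$. Hence, as \emph{functions} of $v$, the map $v \mapsto \mathsf{OPMem}(id, v, 0)$ equals $v \mapsto [\,T_{id}^{-1}(v) \in \cansbsp\,]$ and $v \mapsto \mathsf{OPMem}(id, v, 1)$ equals $v \mapsto [\,T_{id}^{\mathsf{T}}(v) \in \cansbsp^\perp\,]$, and likewise for $id'$. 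Since coherent evaluation of a classical function (append an answer register, compute, check/measure the answer, uncompute all garbage, then rewind) depends only on the function being computed and not on the particular circuit implementing it, the red-highlighted coherent operations on $\reg'$ in the $\hyb_3$ verify subroutine — both the pre-rerandomization block (steps 4--7) and the post-rerandomization block (steps 13--16) — are exactly the same unitaries as the corresponding $\mathsf{OPMem}$ invocations in $\hyb_2$. Consequently they induce identical measurement statistics and identical post-measurement states.

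Finally I would note that every other line of the verify subroutine — the one-time NIZK check performed in $\hyb_1$, the sampling of $s$, the call $\mathsf{OPReRand}(id, s)$, the computation of $id'$, the application of the linear map $T$, and the QFT steps — is copied verbatim from $\hyb_2$ and is untouched by this modification. Therefore the two experiments are perfectly identical as distributions over the challenger's output bit, giving $\hyb_2 \equiv \hyb_3$. There is essentially no obstacle here; the only point needing care is to invoke the \emph{exact}-equality form of $\io$ correctness, so that the conclusion is genuine equivalence ($\equiv$) rather than mere computational indistinguishability.
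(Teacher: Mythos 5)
Your proposal is correct and follows essentially the same route as the paper: both arguments rest on the fact that $\mathsf{OPMem} = \io(\mathsf{PMem}_{K^*}; r^*)$ (established in $\hyb_2$) together with the perfect correctness of $\io$, so that executing the code of $\mathsf{PMem}_{K^*}$ directly is functionally identical to running $\mathsf{OPMem}$, and hence the coherent verification steps induce identical unitaries and measurement statistics. The paper states this in one line; your write-up merely spells out the same reasoning in more detail, including the (correct) observation that coherent evaluation depends only on the function computed, not the circuit implementing it.
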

\begin{proof}
    The change here is that the challenger executes the code $\mathsf{PMem}_{K^*}$ directly instead of using $\mathsf{OPMem}$. However, we have at this point that $\mathsf{OPMem} = \io(\mathsf{PMem}_{K^*}; r^*)$. By perfect correctness of $\io$, the result follows.
\end{proof}

\begin{lemma}
    $\hyb_3 \equiv \hyb_4$.
\end{lemma}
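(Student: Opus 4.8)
The plan is to argue that the change introduced in $\hyb_4$ is purely cosmetic: the verification subroutine — and hence the event that the game reaches the challenge phase — is literally unchanged from $\hyb_3$, and I will show that, \emph{conditioned on that subroutine outputting $1$}, the register $\reg_b$ that $\hyb_3$ would forward to $\adve$ is already, on the nose, equal to $(id'_b, \ket{\psi_b})$ with $\ket{\psi_b} = \sum_{v \in \cansbsp}\ket{T_{id'_b}(v)}$ and $T_{id'_b} = \sampfrm{F(K^*, id'_b)}$. This gives exact (not merely computational) equivalence $\hyb_3 \equiv \hyb_4$.

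The key ingredient is projectiveness of the membership-check verification, established exactly as in \cref{sec:projective}: for a full-rank $T$, coherently testing $T^{-1}(v) \in \cansbsp$, applying QFT, and coherently testing $T^{\mathsf{T}}(v) \in \cansbsp^\perp$, postselected on both outcomes being $1$, implements the rank-one projector $\ketbra{A}{A}$ onto the subspace state of $A := T(\cansbsp)$ (the Aaronson--Christiano analysis, applied here with $T = T_{id'_b}$). The last two coherent checks of the $\hyb_3$ verification subroutine are precisely this procedure with $T = T_{id'_b}$ and classical part $id'_b = \rpke.\mathsf{ReRand}(pk, id_b; s)$. Hence, conditioned on the subroutine outputting $1$ — in particular on those last two checks succeeding — the post-measurement state on $\reg'$ is, by the defining property of a rank-one projector (projecting onto $\ket{\phi}$ and accepting leaves the register in $\ket{\phi}$, unentangled from all side registers), exactly $\ket{A} = \sum_{v \in A}\ket{v} = \sum_{v \in \cansbsp}\ket{T_{id'_b}(v)} = \ket{\psi_b}$, no matter what $\reg'$ contained beforehand. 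So $\reg_b = (id'_b, \ket{\psi_b})$, which is exactly what $\hyb_4$ sends.

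I would close by disposing of two edge cases so the equivalence is genuinely perfect. First, if $\mathsf{OPReRand}(id_b, s)$ returns $\perp$ (i.e. $\rpke.\mathsf{Test}$ rejects $id_b$), the subroutine outputs $0$, the game never reaches the challenge step, and nothing is compared — consistent in both hybrids. Second, for $b = 1$ the input banknote is honestly generated, so $id_b$ is an honest $\rpke$ ciphertext, passes $\rpke.\mathsf{Test}$, and $\mathsf{OPReRand}$ returns $T = T_{id'_b}\cdot T_{id_b}^{-1}$; the first pair of checks projects $\reg'$ onto $\ket{T_{id_b}(\cansbsp)}$ and applying $T$ carries it to $\ket{T_{id'_b}(\cansbsp)}$, so the final pair of checks succeeds with probability one — but the argument above does not even need this, since it only uses that the final pair of checks \emph{did} succeed, irrespective of the intermediate history. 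I expect no real obstacle; the only point requiring care is phrasing the ``a successful rank-one projection destroys all prior entanglement'' step precisely, since at verification time $\reg'$ may be entangled with $\adve$'s auxiliary state.
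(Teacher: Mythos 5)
Your proof is correct and follows the same route as the paper's: both reduce the claim to projectiveness of the verification (as in \cref{sec:projective}), observing that the final accepting rank-one projection already leaves $\reg'$ in exactly $\ket{\psi_b}$, unentangled from the adversary's side information, so $\hyb_3$ and $\hyb_4$ send identical states. Your treatment of the $\perp$ and $b=1$ edge cases is a welcome extra level of care but does not change the argument.
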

\begin{proof}
    As shown in \cref{sec:projective}, the verification procedure implemented here is projective. Thus, in $\hyb_3$ the challenger is already outputting $(id'_b, \ket{\psi_b})$ in the challenge phase, and there is no change between $\hyb_3$ and $\hyb_4$.
\end{proof}

\begin{lemma}
    $\Pr[\hyb_4 = 1] \leq \frac{1}{2} + \negl(\lambda)$.
\end{lemma}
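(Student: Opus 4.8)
The plan is to add two further hybrids, $\hyb_5$ and $\hyb_6$, that replace each rerandomized serial number with an independent fresh encryption of $0^\lambda$, and then argue that the resulting game is perfectly symmetric in the challenge bit $b$.

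First I would define $\hyb_5$ to be identical to $\hyb_4$ except that, in the verify subroutine applied to $\reg_0$, the assignment $id'_0 = \rpke.\mathsf{ReRand}(pk, id_0; s)$ (where $id_0$ is the serial number extracted from the adversary's register $\reg_0$ and $pk$ is the $pk$-block of $crs$) is replaced by a fresh sample $id'_0 \samp \rpke.\mathsf{Enc}(pk, 0^\lambda)$; since the challenge state $\ket{\psi_0} = \sum_{v \in \cansbsp}\ket{T_{id'_0}(v)}$ with $T_{id'_0} = \sampfrm{F(K^*, id'_0)}$ is a deterministic function of $id'_0$ and $K^*$, this modification is well-defined. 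To show $\hyb_4 \approx \hyb_5$ I would reduce to statistical rerandomization security of $\rpke$ for truly uniform public keys (\cref{defn:trurandstatrerand}): the reduction embeds the public key $pk$ it receives from the rerandomization challenger as the $pk$-block of $crs$ (legitimate because in $\hyb_4$ that block is a uniformly random string), simulates the rest of the --- exponential-time --- experiment itself (including the exhaustive search for $(K^*, r^*)$ and the honest computation of $id'_1$ from the same $pk$), and once it has extracted $id_0$ from $\reg_0$ submits $id_0$ to its challenger, uses the returned string as $id'_0$, and reconstructs $\ket{\psi_0}$ from it and $K^*$. Depending on the challenger's hidden bit the reduction perfectly simulates $\hyb_4$ or $\hyb_5$, and since statistical rerandomization security holds against unbounded adversaries, $\hyb_4 \approx \hyb_5$.

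Next I would define $\hyb_6$ to be identical to $\hyb_5$ except that the same replacement is made in the verify subroutine applied to $\reg_1$: the assignment $id'_1 = \rpke.\mathsf{ReRand}(pk, id_1; s)$ is replaced by $id'_1 \samp \rpke.\mathsf{Enc}(pk, 0^\lambda)$, where $id_1$ is the serial number of the freshly minted $\reg_1 \samp \bank.\mathsf{GenBanknote}(mk)$. The argument for $\hyb_5 \approx \hyb_6$ is the same, the only difference being that the reduction now also runs $\bank.\mathsf{GenBanknote}$ on the adversary's $mk$ itself to obtain $id_1$ before submitting it to the rerandomization challenger (and $id'_0$ is by now a fresh encryption the reduction samples on its own). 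Finally, in $\hyb_6$ both challenge serial numbers $id'_0, id'_1$ are independent fresh samples of $\rpke.\mathsf{Enc}(pk, 0^\lambda)$ and each challenge state $\ket{\psi_b}$ is the deterministic function of $id'_b$ and $K^*$ described above; since the bit $b$ is sampled only at Step 7, after all other randomness of the experiment, the challenge $(id'_b, \ket{\psi_b})$ is, conditioned on the experiment reaching the challenge phase, identically distributed for $b=0$ and $b=1$. Hence $\adve$'s guess is independent of $b$ there, and on every early-termination branch the output is $0$, so $\Pr[\hyb_6 = 1] \le 1/2$; chaining the two statistical steps yields $\Pr[\hyb_4 = 1] \le 1/2 + \negl(\lambda)$.

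The step I expect to require the most care is the first reduction. One must verify that the statistical rerandomization guarantee can be invoked even though, from $\hyb_2$ onward, the challenger (hence the reduction) is exponential-time --- which is fine precisely because \cref{defn:trurandstatrerand} is stated against unbounded adversaries --- and that the adversary's leftover, possibly entangled, quantum state does not break the reduction even though it may depend on $id_0$ (which the adversary itself chose): this is harmless because statistical closeness of $\rpke.\mathsf{ReRand}(pk, id_0)$ to a fresh encryption under a truly random $pk$ is preserved under arbitrary correlated side information. A minor additional point is checking that substituting the challenger's public key for the $pk$-block of the uniformly random $crs$ is a perfect simulation, which holds because an honest $crs$ has that block uniformly random and the reductions only ever read that block.
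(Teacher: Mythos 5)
Your proposal is correct and follows essentially the same route as the paper: the paper likewise invokes statistical rerandomization security for truly random public keys to argue that $id'_0$ and $id'_1$ are (statistically) indistinguishable from fresh encryptions and hence from each other, notes that the challenge states are deterministic functions of $id'_b$ and $K^*$ so knowledge of $K^*$ does not help, and observes that the exponential-time challenger is harmless precisely because the guarantee is statistical. You merely make the two intermediate hybrids and the final symmetry argument explicit, which the paper leaves implicit.
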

\begin{proof}
    First note that the public key $pk$ is a truly random key, since it is taken from the CRS. Thus, by the \emph{statistical rerandomization security using truly random public key} property (\cref{defn:trurandstatrerand}), the ciphertexts $id'_0, id'_1$ (which are rerandomizations of $id_0, id_1$) are indistinguishable to the adversary, thus the challenge banknotes $(id'_0, \ket{\psi_0}), (id'_1, \ket{\psi_1})$ are also indistinguishable. Note that this still applies even though the adversary knows $K^*$ - since the security comes from indistinguishability of $id'_0, id'_1$. Finally, we note that while the challenger/experiments are exponential time, we are relying on \emph{statistical rerandomization security}, thus the security indeed applies.
\end{proof}

Combining the above shows $\Pr[\qmuntrac{\adve} = 1] \leq  \frac{1}{2} + \negl(\lambda)$, completing the proof.

\subsection{Proof of Unclonability (Counterfeiting) Security}\label{sec:proofuntracunclon}
We will prove security through a sequence of hybrids, each of which is obtained by modifying the previous one. The proof will follow similarly to \cref{sec:proofunclon} after $\hyb_3$. Let $\adve$ be a QPT adversary.

\paragraph{$\underline{\hyb_0}$}: The original game $\cfgame{\adve}(1^\lambda)$.

\paragraph{$\underline{\hyb_1}$}: At the beginning, the challenger samples $pk, ptk, sk \samp \rpke.\mathsf{Setup}(1^\lambda)$. Then, instead of sampling the CRS as $crs \samp \zo^{q(\lambda)}$, it samples $crs_{nizk} \samp \zo^{q_1(\lambda)}$ and sets $crs = crs_{nizk} || pk$.

\paragraph{$\underline{\hyb_2}$}: During $\bank.\mathsf{Setup}$, instead of sampling $ptk \samp \rpke.\mathsf{SimulateTestKey}(pk)$, the challenger instead uses $ptk$ it obtained from $\rpke.\mathsf{Setup}(1^\lambda)$.

\paragraph{$\underline{\hyb_3}$}: During $\bank.\mathsf{Setup}$, instead of sampling $crs_{nizk} \samp \zo^{q_1(\lambda)}$, the challenger instead simulates it as $(crs_{nizk}, st) \samp  \mathsf{NIZK}.\mathsf{SimulateCRS}(1^\lambda)$. Then, instead of honestly proving  $\pi \samp \mathsf{NIZK}.\mathsf{Prove}(crs_{nizk}, x = \mathsf{OPMem}, w = (K, r_{iO}))$, the challenger instead simulates the NIZK proof $\pi$ as $\pi \samp \mathsf{NIZK}.\mathsf{SimulateProof}(st, \mathsf{OPMem})$.

\paragraph{$\underline{\hyb_4}$}:  First, the challenger initializes a stateful counter $cnt = 1$ at the beginning. Further, we change the way the challenger samples banknotes. During minting, instead of sampling $ct \samp \rpke.\mathsf{Enc}(pk, 0^\lambda)$, it now samples $ct \samp \rpke.\mathsf{Enc}(pk, \textcolor{red}{cnt})$ and adds $1$ to $cnt$ afterwards.

\paragraph{$\underline{\hyb_5}$}: First, the challenger initializes a list $\mathsf{PL} = []$ at the beginning. Further, we change the verification subroutine of the challenger applied to adversary's forged banknotes. For a banknote $(id, \reg)$, after verifying the banknote, the challenger additionally performs the following check. It computes $pl = \rpke.\mathsf{Dec}(sk, id)$, adds $pl$ to the list $\mathsf{INDICES}$ and checks if $pl \in [k]$. If not, it outputs $0$ and terminates.

\paragraph{$\underline{\hyb_6}$}:  At the beginning of the game, the challenger samples a random index $i^*$. Also, we will set $id^*$ to the serial number of the $i^*$-th banknote produced by the challenger. Further, we require an additional condition for the adversary to win. At the end of the game, the challenger checks if $i^*$ appears at least twice in $\mathsf{INDICES}$. If not, the challenger outputs $0$ and the adversary loses.

\paragraph{$\underline{\hyb_7}$}: The challenger samples $K' \samp F.\mathsf{Setup}(1^\lambda)$ and a random full rank linear map $T^{*}: \F_2^\lambda \to \F_2^\lambda$ at the beginning of the game. We also compute $A^* = T^{*}(\cansbsp)$ and create the following function/program.
\begin{equation*}
    M_{K', \mathsf{CT}_{i^*}}(ct) = \begin{cases}
        \mathsf{SampleFullRank}(1^\lambda; F(K', id)), \text{ if } id \neq id^*\\
        I, \text{ if } id = id^*\\
    \end{cases}
\end{equation*}
Further, we now sample $\mathsf{OPMem}$ as $\mathsf{OPMem} \samp \io(\mathsf{PMem}')$ instead of sampling a random tape $r_{iO}$ and setting $\mathsf{OPMem} = \io(\mathsf{PMem}; r_{iO})$. We also change sampling of $\mathsf{OPReRand}$ to $\mathsf{OPReRand} \samp \io(\mathsf{PReRand}')$.

\begin{mdframed}
        {\bf $\underline{\mathsf{PMem}'_{K}(id, v, b)}$}
        
        {\bf Hardcoded: $K, \textcolor{red}{sk, isk, i^*, T^*, M_{K', \mathsf{CT}_{i^*}}}$}
        \begin{enumerate}[label=\arabic*.]
        \item \textcolor{red}{ Compute $pl = \rpke.\mathsf{Dec}(sk, id).$}
    \item \textcolor{red}{If $pl = i^*$, set $T =  M_{K', \mathsf{CT}_{i^*}}(id)\cdot T^*$.} Otherwise $T = \mathsf{SampleFullRank}(1^\lambda; F(K, id))$.
    
            \item Compute $w = T^{-1}(v)$ if $b = 0$; otherwise, compute $w = T^\mathsf{T}(v)$.
            \item Output $1$ if $w \in \cansbsp$ if $b = 0$ and if $w \in \cansbsp^\perp$ if $b = 1$. Otherwise, output $0$.
        \end{enumerate}
    \end{mdframed}

    \begin{mdframed}
        {\bf $\underline{\mathsf{PReRand}'_{K}(id, s)}$}
        
        {\bf Hardcoded: $K, ptk, \textcolor{red}{sk, isk, i^*, T^*, M_{K', \mathsf{CT}_{i^*}}}$}
        \begin{enumerate}[label=\arabic*.]
          \item Check if $\rpke.\mathsf{Test}(ptk, id) = 1$. Otherwise, output $\perp$ and terminate.
     \item \textcolor{red}{ Compute $pl = \rpke.\mathsf{Dec}(sk, id).$}
            \item $id' = \rpke.\mathsf{ReRand}(pk, id; s)$.
            \item \textcolor{red}{If $pl = i^*$, set $T_1 =  M_{K', \mathsf{CT}_{i^*}}(id)\cdot T^*$.} Otherwise, $T_1 = \mathsf{SampleFullRank}(1^\lambda; F(K, id))$.

            \item \textcolor{red}{ Compute $pl' = \rpke.\mathsf{Dec}(sk, id').$}
            \item \textcolor{red}{If $pl' = i^*$, set $T_2 =  M_{K', \mathsf{CT}_{i^*}}(id')\cdot T^*$.} Otherwise, $T_2 = \mathsf{SampleFullRank}(1^\lambda; F(K, id'))$.
            \item Output $T_2\cdot T_1^{-1}$.
        \end{enumerate}
\end{mdframed}
Finally, we modify the minting subroutine as follows.
\paragraph{\underline{Minting Subroutine($\reg$)}}
\begin{enumerate}
 \item Parse $(K, pk) = mk$.
 \item {Add $1$ to $cnt$.}
 
    \item Sample $ct \samp \rpke.\mathsf{Enc}(pk, 0^\lambda)$.
    \item \textcolor{red}{If $cnt = i^*$, set $id^* = ct$ and $\ket{\$} = \sum_{v \in A^*} \ket{v}$, and jump to the final step.}
    \item $T = \mathsf{SampleFullRank}(1^\lambda; F(K, ct))$.
    \item Set $\ket{\$} = \sum_{v \in \cansbsp} \ket{T(v)}$.
    \item Output $ct, \ket{\$}$.
    \end{enumerate}

\paragraph{$\underline{\hyb_{8}}$}: At the beginning of the game, after we sample $T^*$, we also sample $\mathsf{P}_0 \samp \io(A^*)$ and $\mathsf{P}_1 \samp \io((A^*)^\perp)$. Further, we now sample $\mathsf{OPMem}$ as $\mathsf{OPMem} \samp \io(\mathsf{PMem}'')$ .

\begin{mdframed}
        {\bf $\underline{\mathsf{PMem}''_{K}(id, v, b)}$}
        
        {\bf Hardcoded: $K, sk, isk, i^*, \textcolor{red}{\mathsf{P}_0, \mathsf{P}_1}$}
        \begin{enumerate}[label=\arabic*.]
        \item \textcolor{red}{ Compute $pl = \rpke.\mathsf{Dec}(sk, id).$}
    \item \textcolor{red}{If $pl = i^*$,}
    \begin{enumerate}[label=\arabic*.]
        \item\textcolor{red}{Set $T = M_{K', \mathsf{CT}_{i^*}}(id)$.}
        \item\textcolor{red}{ Compute $w = T^{-1}(v)$ if $b = 0$; otherwise, compute $w = (T^{-1})^\mathsf{T}(v)$.}
    \item\textcolor{red}{Output the output $\mathsf{P}_b(v)$ and terminate.}
    \end{enumerate}
    
    \item If $ipl_1 \neq i^*$,
    \begin{enumerate}[label=\arabic*.]
        \item Set $T = \mathsf{SampleFullRank}(1^\lambda; F(K, id))$.
        \item Compute $w = T^{-1}(v)$ if $b = 0$; otherwise, compute $w = (T)^\mathsf{T}(v)$.
        \item Output $1$ if $w \in \cansbsp$ if $b = 0$ and if $w \in \cansbsp^\perp$ if $b = 1$. Otherwise, output $0$.
    \end{enumerate}
        \end{enumerate}
    \end{mdframed}

\paragraph{$\underline{\hyb_{9}}$}: We now sample $\mathsf{OPReRand}$ as $\mathsf{OPReRand} \samp \io(\mathsf{PReRand}'')$ .
    \begin{mdframed}
        {\bf $\underline{\mathsf{PReRand}''_{K}(id, s)}$}
        
        {\bf Hardcoded: $K, ptk, {sk, isk, i^*}, M_{K', \mathsf{CT}_{i^*}}$}
        \begin{enumerate}[label=\arabic*.]
                \item Check if $\rpke.\mathsf{Test}(ptk, id) = 1$. Otherwise, output $\perp$ and terminate.
     
     \item { Compute $pl = \rpke.\mathsf{Dec}(sk, id).$}
            \item $id' = \rpke.\mathsf{ReRand}(pk, id; s)$.
            \item \textcolor{red}{If $pl = i^*$, set $T_1 =  M_{K', \mathsf{CT}_{i^*}}(id)$.} Otherwise, $T_1 = \mathsf{SampleFullRank}(1^\lambda; F(K, id))$.
            \item \textcolor{red}{If $pl = i^*$, set $T_2 =  M_{K', \mathsf{CT}_{i^*}}(id')$.} Otherwise, $T_2 = \mathsf{SampleFullRank}(1^\lambda; F(K, id'))$.
            \item Output $T_2\cdot T_1^{-1}$.
        \end{enumerate}
\end{mdframed}

\begin{lemma}
    $\hyb_0 \approx \hyb_1$
\end{lemma}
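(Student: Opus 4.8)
The plan is to reduce the indistinguishability $\hyb_0 \approx \hyb_1$ directly to the \emph{pseudorandom public encryption keys} property of $\rpke$ (one of the properties we assumed of $\rpke$ for this construction). The only difference between the two hybrids is the distribution of the last $q_2(\lambda)$ bits of the common random string: recall $q(\lambda) = q_1(\lambda) + q_2(\lambda)$, so in $\hyb_0$ the whole $crs \samp \zo^{q(\lambda)}$ is uniform and in particular its length-$q_2(\lambda)$ suffix is uniform, whereas in $\hyb_1$ that suffix is replaced by a genuine public key $pk$ produced by $\rpke.\mathsf{Setup}(1^\lambda)$. The companion values $ptk, sk$ are sampled in $\hyb_1$ but are never read anywhere in $\hyb_1$ (they only get used in later hybrids), so they are irrelevant to the present step. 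Everything else — the challenger's call to $\bank.\mathsf{Setup}(crs)$, the banknote query phase, and the final verification of the adversary's $k+1$ forged banknotes — is executed identically in both hybrids and accesses $crs$ only as an opaque bit string.

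Concretely, I would build a distinguisher $\mathcal{D}$ for the pseudorandom public key game of $\rpke$. The distinguisher $\mathcal{D}$ receives a challenge string $z$ of length $q_2(\lambda)$ that is either uniformly random or a real $\rpke$ public key, samples $crs_{nizk} \samp \zo^{q_1(\lambda)}$ on its own, sets $crs = crs_{nizk} || z$, and then simulates the entire counterfeiting game $\cfgame{\adve}(1^\lambda)$ faithfully from this $crs$, finally outputting the game's output bit. This simulation is possible because the honest challenger of $\cfgame{\adve}$ run on $crs$ never needs the RPKE secret key (and in $\hyb_1$ the sampled $sk, ptk$ are unused). When $z$ is uniform, $\mathcal{D}$ perfectly reproduces $\hyb_0$; when $z$ is a real public key, it perfectly reproduces $\hyb_1$. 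Hence $\bigl|\Pr[\hyb_0 = 1] - \Pr[\hyb_1 = 1]\bigr|$ is at most the advantage of $\mathcal{D}$ in the pseudorandom public key game, which is negligible.

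I do not expect a genuine obstacle here; this is the routine ``plant a real public key as the suffix of the CRS'' step that merely sets the stage for the substantive later hybrids (where $sk$ and $ptk$ actually enter the obfuscated programs). The one point deserving a sentence of care is the observation that $\bank.\mathsf{Setup}$, $\bank.\mathsf{GenBanknote}$, and $\bank.\mathsf{Verify}$ all treat the parsed $pk$ purely as a string and are well-defined regardless of whether it is a real or a random public key, so the reduction's simulation is exact rather than merely statistically close.
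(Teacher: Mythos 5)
Your proposal is correct and matches the paper's argument, which simply invokes the pseudorandom public encryption key property of $\rpke$; your explicit reduction (planting the challenge string as the length-$q_2(\lambda)$ suffix of the CRS and noting that $sk$ and $ptk$ are unused in $\hyb_1$) is just a spelled-out version of that same one-line step.
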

\begin{proof}
   The result follows from the pseudorandom public encryption key property of $\rpke$.
\end{proof}

\begin{lemma}
    $\hyb_1 \approx \hyb_2$
\end{lemma}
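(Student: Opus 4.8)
The plan is to reduce directly to the \emph{simulatable testing key} property of $\rpke$. The only difference between $\hyb_1$ and $\hyb_2$ is the way the public testing key $ptk$ — which is hardcoded inside the obfuscated program $\mathsf{OPReRand}$ — is produced: in $\hyb_1$ (exactly as in the real scheme) it is sampled as $ptk \samp \rpke.\mathsf{SimulateTestKey}(pk)$, whereas in $\hyb_2$ it is the \emph{honest} testing key output by $\rpke.\mathsf{Setup}(1^\lambda)$ together with $pk$ (and $sk$). The crucial observation is that neither $\hyb_1$ nor $\hyb_2$ uses the RPKE secret key $sk$: decryption steps (and the programs $\mathsf{PMem}', \mathsf{PReRand}'$ that hardcode $sk$) are only introduced from $\hyb_5$/$\hyb_7$ onward, so the entire experiment through $\hyb_2$ can be run knowing only $(pk, ptk)$.

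Concretely, I would construct an efficient distinguisher $\mathcal{B}$ against the simulatable testing key game: $\mathcal{B}$ receives a challenge pair $(pk, ptk^*)$, where $ptk^*$ is either $\rpke.\mathsf{SimulateTestKey}(pk)$ or the real $ptk$ from $\rpke.\mathsf{Setup}(1^\lambda)$. It then samples $crs_{nizk} \samp \zo^{q_1(\lambda)}$ and sets $crs = crs_{nizk} || pk$, samples $K \samp F.\mathsf{Setup}(1^\lambda)$ itself, builds $\mathsf{OPMem} = \io(\mathsf{PMem}_K; r_{iO})$ and $\mathsf{OPReRand} = \io(\mathsf{PReRand}_{K, pk, ptk^*})$ using $ptk^*$ as the hardcoded testing key, produces the honest proof $\pi$ with witness $(K, r_{iO})$, and runs the rest of the counterfeiting game $\cfgame{\adve}(1^\lambda)$ — answering banknote queries via $\bank.\mathsf{GenBanknote}$ and verifying the forged banknotes — exactly as in $\hyb_1$/$\hyb_2$. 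Finally $\mathcal{B}$ outputs $1$ iff the experiment outputs $1$.

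When $ptk^* = \rpke.\mathsf{SimulateTestKey}(pk)$, the view of $\adve$ inside $\mathcal{B}$ is distributed exactly as in $\hyb_1$, and when $ptk^*$ is the honest testing key it is exactly $\hyb_2$; hence $|\Pr[\hyb_1 = 1] - \Pr[\hyb_2 = 1]|$ equals $\mathcal{B}$'s distinguishing advantage, which is $\negl(\lambda)$ by the simulatable testing key property of $\rpke$. There is no real obstacle here: the only thing to verify is the bookkeeping claim that no component of the experiment up to this point depends on $sk$ (and that $\mathsf{PMem}, \mathsf{PReRand}$ are still the honest programs hardcoding only $K, pk, ptk$), which is immediate from the definitions of $\hyb_1$ and $\hyb_2$. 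The genuinely delicate steps of the unclonability argument are the later $\io$-based rewrites ($\hyb_7$–$\hyb_9$) and the final reduction to $1 \to 2$ unclonability of subspace states, not this switch.
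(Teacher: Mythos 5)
Your proposal is correct and takes the same route as the paper, which simply invokes the indistinguishability of simulated and honest testing keys of $\rpke$ in one sentence; you have merely spelled out the reduction explicitly, including the (correct) bookkeeping observation that $sk$ is unused in $\hyb_1$ and $\hyb_2$ so the distinguisher can simulate the whole experiment from $(pk, ptk^*)$ alone.
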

\begin{proof}
    The result follows from the indistinguishability of simulated testing keys and honest testing keys of $\rpke$.
\end{proof}

\begin{lemma}
    $\hyb_2 \approx \hyb_3$.
\end{lemma}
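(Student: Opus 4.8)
The only difference between $\hyb_2$ and $\hyb_3$ is in how the pair $(crs_{nizk}, \pi)$ is produced: in $\hyb_2$ the challenger samples $crs_{nizk} \samp \zo^{q_1(\lambda)}$ honestly and sets $\pi \samp \mathsf{NIZK}.\mathsf{Prove}(crs_{nizk}, x = \mathsf{OPMem}, w = (K, r_{iO}))$, while in $\hyb_3$ it sets $(crs_{nizk}, st) \samp \mathsf{NIZK}.\mathsf{SimulateCRS}(1^\lambda)$ and $\pi \samp \mathsf{NIZK}.\mathsf{SimulateProof}(st, \mathsf{OPMem})$. Everything else — the LWE-based public key material $pk, ptk, sk \samp \rpke.\mathsf{Setup}(1^\lambda)$, the PRF key $K$, the programs $\mathsf{OPMem}$ and $\mathsf{OPReRand}$, the minting of banknotes, and the verification of the forged banknotes — is identical in the two hybrids. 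I would therefore prove $\hyb_2 \approx \hyb_3$ by a direct reduction to the (computational, single-theorem, post-quantum) zero-knowledge property of $\mathsf{NIZK}$ in the common random string model.

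\textbf{Reduction.} Consider a reduction $\cR$ that interacts with the ZK challenger for $\mathsf{NIZK}$ and internally plays the counterfeiting game against $\adve$. First, $\cR$ samples $K \samp F.\mathsf{Setup}(1^\lambda)$, picks a random tape $r_{iO}$, and forms $\mathsf{OPMem} = \io(\mathsf{PMem}_K; r_{iO})$. It submits the statement $x = \mathsf{OPMem}$ together with the witness $w = (K, r_{iO})$ to its ZK challenger and receives a pair $(crs_{nizk}, \pi)$ — generated either via $(crs_{nizk} \samp \zo^{q_1(\lambda)}, \pi \samp \mathsf{NIZK}.\mathsf{Prove}(\cdots))$ or via $(\mathsf{SimulateCRS}, \mathsf{SimulateProof})$. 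Then $\cR$ samples $pk, ptk, sk \samp \rpke.\mathsf{Setup}(1^\lambda)$ itself (exactly as both hybrids do), sets $crs = crs_{nizk} \| pk$, constructs $\mathsf{OPReRand} \samp \io(\mathsf{PReRand})$ and $vk = (\mathsf{OPMem}, \mathsf{OPReRand}, \pi)$, $mk = (K, pk)$, and runs the remaining steps of the game with $\adve$ (minting, and verifying the $k+1$ forged banknotes), finally outputting whatever the game outputs. By construction $\mathsf{OPMem} = \io(\mathsf{PMem}_K; r_{iO})$, so $(x, w)$ is a valid instance–witness pair for the language $L$, and the ZK guarantee applies. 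When the ZK challenger uses the honest prover, $\cR$ perfectly simulates $\hyb_2$; when it uses the simulator, $\cR$ perfectly simulates $\hyb_3$. Hence $\bigl|\Pr[\hyb_2 = 1] - \Pr[\hyb_3 = 1]\bigr|$ is bounded by the zero-knowledge advantage of $\cR$, which is $\negl(\lambda)$.

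\textbf{Main obstacle.} There is no substantive obstacle: this is the textbook use of zero knowledge, and the steps are routine. The only points that need to be checked are bookkeeping ones — that $\cR$ is QPT (it only runs honest sampling algorithms, obfuscations, and the QPT adversary $\adve$), so the post-quantum ZK property is invoked legitimately; that a single proof is ever generated, so single-theorem ZK suffices; and that the supplied witness is indeed valid for $L$, which holds by the very definition of $\mathsf{OPMem}$ in $\bank.\mathsf{Setup}$. I would also remark in passing that $\mathsf{NIZK}.\mathsf{SimulateCRS}$ here outputs a string that is indistinguishable from a uniformly random string, consistent with the common \emph{random} string model, so the later hybrids that treat $crs_{nizk}$ as uniform are unaffected.
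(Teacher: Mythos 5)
Your proposal is correct and takes the same approach as the paper, whose proof of this lemma is simply the one-line invocation of the computational zero-knowledge property of $\mathsf{NIZK}$; you have just written out the standard reduction explicitly (and your bookkeeping — that the statement $\mathsf{OPMem}$ is independent of $crs_{nizk}$, that only one proof is generated, and that the witness $(K, r_{iO})$ is valid for $L$ — is all accurate).
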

\begin{proof}
    The result follows by the computational zero knowledge security of $\mathsf{NIZK}$.
\end{proof}

\begin{lemma}
    $\hyb_3 \approx \hyb_4$.
\end{lemma}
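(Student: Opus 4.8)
The plan is to reduce $\hyb_3 \approx \hyb_4$ to the (multi-message) CPA security of $\rpke$. The first observation is that $\hyb_3$ and $\hyb_4$ differ in exactly one spot: the serial number attached to each banknote that $\adve$ receives is $ct \samp \rpke.\mathsf{Enc}(pk, 0^\lambda)$ in $\hyb_3$ and $ct \samp \rpke.\mathsf{Enc}(pk, cnt)$ in $\hyb_4$, where $cnt$ takes the distinct values $1, 2, \dots$ across the (polynomially many) banknote queries. Everything else is literally identical: the common random string $crs = crs_{nizk} || pk$ with $crs_{nizk}$ and the proof $\pi$ already being simulated as of $\hyb_3$, the obfuscated programs $\mathsf{OPMem}$ and $\mathsf{OPReRand}$, the quantum part $\sum_{v \in \cansbsp}\ket{T(v)}$ with $T = \sampfrm{F(K, ct)}$ of each banknote, and the verification of $\adve$'s forgeries via $\bank.\mathsf{Verify}$. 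The structural fact I would lean on is that neither $\hyb_3$ nor $\hyb_4$ ever reads the decryption key $sk$: the $sk$-dependent rewrites of the verification subroutine and of the obfuscated programs only begin at $\hyb_5$, so both experiments can be carried out given $pk$ and public material alone.

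Next I would set up the reduction $\mathcal{R}$, which receives $pk$ from the CPA challenger. Because $\mathcal{R}$ does not hold $sk$, it cannot recompute the honest public testing key $ptk$ consistently with this $pk$ (that key is compute-and-compare obfuscations built from the secret $s$ and the hidden shifts $L_i$ inside $sk$); so I would first pass to auxiliary hybrids $\hyb_3'$ and $\hyb_4'$ obtained from $\hyb_3$ and $\hyb_4$ by replacing the honest $ptk$ with $\rpke.\mathsf{SimulateTestKey}(pk)$, using the simulatable testing key property of $\rpke$ to get $\hyb_3 \approx \hyb_3'$ and $\hyb_4 \approx \hyb_4'$, and then reduce $\hyb_3' \approx \hyb_4'$ to CPA security. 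The reduction samples $(crs_{nizk}, st) \samp \mathsf{NIZK}.\mathsf{SimulateCRS}(1^\lambda)$ and sets $crs = crs_{nizk} || pk$; samples the PRF key $K$ itself; sets $ptk \samp \rpke.\mathsf{SimulateTestKey}(pk)$; builds $\mathsf{OPMem} \samp \io(\mathsf{PMem}_K)$ and $\mathsf{OPReRand} \samp \io(\mathsf{PReRand})$ and the simulated proof $\pi \samp \mathsf{NIZK}.\mathsf{SimulateProof}(st, \mathsf{OPMem})$. It answers $\adve$'s $j$-th banknote query by submitting the message pair $(0^\lambda, j)$ to the CPA challenger, receiving $ct$, and returning $(ct, \sum_{v \in \cansbsp}\ket{T(v)})$ with $T = \sampfrm{F(K, ct)}$ — which uses only $ct$ and $K$. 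When $\adve$ outputs its forged banknotes, $\mathcal{R}$ runs the unchanged $\bank.\mathsf{Verify}$ on each (this needs only $vk$ and $pk$, the latter to recompute $id' = \rpke.\mathsf{ReRand}(pk, id; s)$) and outputs the resulting verdict. When the CPA challenger encrypts the left message $0^\lambda$ on every query, $\mathcal{R}$'s view is exactly $\hyb_3'$, and when it encrypts the right message $j$ on the $j$-th query, it is exactly $\hyb_4'$; hence $|\Pr[\hyb_3' = 1] - \Pr[\hyb_4' = 1]|$ is bounded by $\mathcal{R}$'s advantage in the multi-message CPA game, which is negligible since multi-message CPA security follows from the CPA security of $\rpke$ by the standard hybrid over the queries (in each step $\mathcal{R}$ computes the ciphertexts of the already-fixed queries itself, using only $pk$).

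The main obstacle — really the only nontrivial point — is the $ptk$ versus $sk$ mismatch just described, which I would handle exactly via the simulatable testing key property at the cost of the two trivial auxiliary hybrids $\hyb_3', \hyb_4'$; equivalently, one can note that CPA security of $\rpke$ persists against an adversary additionally handed $ptk$. The only remaining thing to verify is that no other piece of $\hyb_3$ or $\hyb_4$ secretly depends on $sk$; inspecting the construction together with the list of hybrids confirms that $sk$ first appears only in $\hyb_5$, so $\hyb_3$ and $\hyb_4$ really can be simulated from $pk$ alone.
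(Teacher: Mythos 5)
Your proposal is correct and takes the same route as the paper, whose proof of this step is just the one-line appeal to CPA security of $\rpke$; your reduction fills in exactly the details that one-liner leaves implicit. In particular, your observation that the experiment at this point uses the \emph{real} $ptk$ (reinstated in $\hyb_2$), which a CPA reduction holding only $pk$ cannot regenerate, and your fix via the simulatable-testing-key property (or equivalently, CPA security in the presence of $ptk$) is a legitimate technicality that the paper glosses over, and your handling of it is sound.
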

\begin{proof}
    The result follows from the CPA-security of $\rpke$.
\end{proof}

\begin{lemma}
    $\hyb_4 \approx \hyb_5$.
\end{lemma}
\begin{proof}
    This result follows essentially from $0 \to 1$ unclonability of subspace states and by strong rerandomization correctness of $\rpke$. This proof follows similarly to \cref{lem:zerocopunlearn}.
\end{proof}

\begin{lemma}
    $\Pr[\hyb_6 = 1] \geq \frac{\Pr[\hyb_5 = 1]}{k}.$
\end{lemma}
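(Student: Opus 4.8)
The plan is to carry out the standard index-guessing (pigeonhole) argument, exactly paralleling the corresponding step in the counterfeiting proof of the anonymous-and-traceable scheme. The only differences between $\hyb_5$ and $\hyb_6$ are that the challenger (i) samples a uniformly random $i^* \samp [k]$ at the start of the game, (ii) records $id^*$ as the serial number of the $i^*$-th minted banknote, and (iii) adds, at the end, the extra requirement that $i^*$ occur at least twice in $\mathsf{INDICES}$. The first observation I would make is that neither $i^*$ nor $id^*$ is used anywhere else in $\hyb_6$: since $\bank.\mathsf{GenBanknote}$ takes no tag in the untraceable scheme, every minted banknote is produced identically, and the obfuscated programs $\mathsf{OPMem}, \mathsf{OPReRand}$ are still the honest ones. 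Hence the joint distribution of the adversary's view, the verification transcripts, and the list $\mathsf{INDICES}$ is exactly as in $\hyb_5$, and in particular $i^*$ is independent of $\mathsf{INDICES}$.

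Next I would argue that, conditioned on $\hyb_5 = 1$, the list $\mathsf{INDICES}$ contains some value that appears at least twice. Indeed, if $\hyb_5 = 1$ then the adversary output $k+1$ banknotes all of which passed the (modified) verify subroutine, so $\mathsf{INDICES}$ received $k+1$ entries, each of which was checked to lie in $[k]$ (otherwise the challenger would already have output $0$ and terminated). By the pigeonhole principle, some $i^{**} \in [k]$ occurs at least twice in $\mathsf{INDICES}$.

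Putting these together, I would conclude:
\begin{align*}
\Pr[\hyb_6 = 1] &= \Pr\!\left[\hyb_5 = 1 \ \wedge\ i^* \text{ appears} \geq 2 \text{ times in } \mathsf{INDICES}\right]\\
&= \E\!\left[\mathds{1}[\hyb_5 = 1]\cdot \Pr_{i^* \samp [k]}\!\left[i^* \text{ appears} \geq 2 \text{ times in } \mathsf{INDICES}\right]\right]\\
&\geq \E\!\left[\mathds{1}[\hyb_5 = 1]\cdot \tfrac{1}{k}\right] = \frac{\Pr[\hyb_5 = 1]}{k},
\end{align*}
where the second equality uses independence of $i^*$ from the rest of the experiment, and the inequality uses that, whenever $\hyb_5 = 1$, there is a repeated index $i^{**}$ and the uniform guess $i^*$ hits it with probability at least $1/k$.

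There is essentially no real obstacle here: the entire content is the pigeonhole count together with the observation that the $i^*$-th banknote is indistinguishable to the adversary from any other minted banknote, so the guess does not influence the execution. The only point that must be stated carefully is this independence of $i^*$ from the rest of $\hyb_6$ (in particular that the adversary never learns which minted banknote is the ``$i^*$-th'' one); everything else is immediate.
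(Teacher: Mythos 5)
Your proposal is correct and follows essentially the same route as the paper: the pigeonhole argument that the $k+1$ entries of $\mathsf{INDICES}$, all lying in $[k]$, must contain a repeat, together with the observation that the uniformly random guess $i^*$ hits such a repeated index with probability at least $1/k$. Your additional care in noting that $i^*$ is independent of the adversary's view in $\hyb_6$ (since the programs are only modified starting in $\hyb_7$) is a correct and worthwhile explication of a step the paper leaves implicit.
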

\begin{proof}
    The adversary outputs $k+1$ forged banknotes, thus $|\mathsf{INDICES}| = k + 1$. However, all the values in $\mathsf{INDICES}$ are between $1$ and $k$. Thus, there exists $i^{**}$ such that $i^{**}$ appears at least twice in $\mathsf{INDICES}$, and the random $i^*$ satisfies $i^* = i^{**}$ with probability $1/k$.
\end{proof}

\begin{lemma}
    $\hyb_6 \approx \hyb_7$.
\end{lemma}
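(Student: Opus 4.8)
The plan is to follow the proof of \cref{lem:switchtokprim} (the analogous step $\hyb_3 \approx \hyb_4$ in \cref{sec:proofunclon}) essentially the same way. The only difference between $\hyb_6$ and $\hyb_7$ is that, for every serial number $id$ with $\rpke.\mathsf{Dec}(sk,id) = i^*$, the full-rank linear map attached to $id$ --- as used inside $\mathsf{OPMem}$, inside $\mathsf{OPReRand}$, and (when $id = id^*$) by the minting subroutine for the $i^*$-th banknote --- is switched from $\mathsf{SampleFullRank}(1^\lambda; F(K,id))$ to $M_{K',\mathsf{CT}_{i^*}}(id)\cdot T^*$. The key observation enabling this is that $\mathsf{PMem}$ and $\mathsf{PReRand}$ evaluate the PRF keys only at inputs computed from their arguments, so replacing a key by a key punctured at a single point (with the punctured value hardcoded) is functionality-preserving and therefore invisible under $\io$ security.

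First I would interpolate with a chain of hybrids $\hyb_6 = \hyb_6^{(0)}, \hyb_6^{(1)}, \dots, \hyb_6^{(N)} = \hyb_7$ indexed by the strings $id \in \zo^{p_1(\lambda)}$ in lexicographic order, where $N = 2^{p_1(\lambda)}$ and in $\hyb_6^{(j)}$ the new map $M_{K',\mathsf{CT}_{i^*}}(\cdot)\cdot T^*$ is used for the first $j$ strings satisfying $\rpke.\mathsf{Dec}(sk,\cdot) = i^*$ and the old map for the rest (updating $\mathsf{OPMem}$, $\mathsf{OPReRand}$, and the $i^*$-th minted state consistently). For the step $\hyb_6^{(j-1)} \to \hyb_6^{(j)}$: if $\rpke.\mathsf{Dec}(sk,j) \neq i^*$ nothing changes; otherwise I would (i) puncture $K$ (and, in the later part of the chain, $K'$) at $j$ inside both obfuscated programs and hardcode the punctured values, using $\io$ security; (ii) replace those hardcoded PRF outputs by uniform strings, using puncturing security of $F$; (iii) use that $\mathsf{SampleFullRank}$ applied to uniform randomness is a uniformly random element of $\mathsf{GL}(\lambda,\F_2)$ and that right-multiplication by the fixed full-rank $T^*$ is a bijection of $\mathsf{GL}(\lambda,\F_2)$, so the old and new hardcoded maps are identically distributed (and for $j = id^*$ the new map is exactly $T^*$, since $M_{K',\mathsf{CT}_{i^*}}(id^*) = I$), which also matches the change to the $i^*$-th minted subspace state; and (iv) undo (ii) and (i) with the new map in place. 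Each sub-step is indistinguishable either by $\io$ security or by puncturing security of $F$, so the chain uses $O(N)$ such reductions.

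The main obstacle is that $\{id : \rpke.\mathsf{Dec}(sk,id) = i^*\}$ is exponentially large, so the hybrid chain has exponential length. As in \cref{sec:switchtokprim}, this is handled by working in the subexponential-security regime: with $2^{-\lambda^{c}}$-secure $\io$ and puncturable PRF for a sufficiently large constant $c$, the accumulated advantage $2^{p_1(\lambda)}\cdot\poly(\lambda)\cdot 2^{-\lambda^{c}}$ is negligible, and all obfuscated circuits appearing in the chain must be padded up to a common polynomial size bound so that $\io$ security applies throughout. A second, cosmetic subtlety is that $id^*$ is only fixed at minting time; this is harmless because the $i^*$-th query is the unique minting call producing a serial number that decrypts to $i^*$, and step (iii) preserves the distribution of the corresponding state, so the reduction can identify the index $j = id^*$ on the fly. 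Composing the chain yields $\hyb_6 \approx \hyb_7$.
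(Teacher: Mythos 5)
Your proposal is correct and takes essentially the same route as the paper: the paper's proof of this lemma simply defers to the analogous step $\hyb_3\approx\hyb_4$ of the traceable construction, which is argued exactly as you do — an exponential-length hybrid over all strings $id$, puncturing $K$ and $K'$ at each point, swapping the hardcoded map using uniformity of $\mathsf{SampleFullRank}$ and group rerandomization by $T^*$, and absorbing the $2^{p_1(\lambda)}$ loss via subexponentially secure $\io$ and PRF with appropriate circuit padding. Your write-up is in fact more detailed than the paper's sketch, and the subtlety you flag about $id^*$ being fixed only at minting time is a real (unaddressed) presentational wrinkle in the paper's hybrid as well.
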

\begin{proof}
    Follows similarly to \cref{lem:switchtokprim}.
\end{proof}

\begin{lemma}
    $\hyb_7 \approx \hyb_8$.
\end{lemma}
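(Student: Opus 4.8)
The plan is to argue that the programs $\mathsf{PMem}'$ (used in $\hyb_7$) and $\mathsf{PMem}''$ (used in $\hyb_8$) compute exactly the same function; once this is established, $\hyb_7 \approx \hyb_8$ is immediate from the security of $\io$, after padding the two circuits to a common size and noting that the extra obfuscations $\mathsf{P}_0,\mathsf{P}_1$ introduced in $\hyb_8$ are sampled independently and referenced only inside $\mathsf{PMem}''$, so the $\io$ reduction can produce everything else on its own. Since the only difference between $\hyb_7$ and $\hyb_8$ is this one obfuscated program, no further steps are needed. This mirrors the argument for $\hyb_4 \approx \hyb_5$ in \cref{sec:proofunclon}.

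First I would dispense with the easy case. On any input $(id, v, b)$ for which $pl := \rpke.\mathsf{Dec}(sk, id) \neq i^*$, the two programs are textually identical: both set $T = \mathsf{SampleFullRank}(1^\lambda; F(K, id))$ and run the same computational/Hadamard basis membership test. So all the content is in the $pl = i^*$ branch.

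Then I would analyze the $pl = i^*$ branch. Write $M := M_{K', \mathsf{CT}_{i^*}}(id)$ and $A^* := T^*(\cansbsp)$. In $\mathsf{PMem}'$ the hardcoded map is $T = M \cdot T^*$, so by the computation in \cref{sec:projective} the $b = 0$ branch accepts $v$ iff $T^{-1}(v) \in \cansbsp$, that is, iff $v \in T(\cansbsp) = M(A^*)$; and by the primal/dual duality spelled out there, the $b = 1$ branch accepts $v$ iff $v \in (M(A^*))^\perp$, that is, iff $M^\mathsf{T}(v) \in (A^*)^\perp$. On the other side, in this branch $\mathsf{PMem}''$ first applies the appropriate linear transformation to undo the action of $M$ on $v$ and then runs $\mathsf{P}_b$, which by correctness of $\io$ is the exact membership-checking program for $A^*$ when $b = 0$ and for $(A^*)^\perp$ when $b = 1$. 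Unwinding, $\mathsf{PMem}''$ accepts $v$ iff $M^{-1}(v) \in A^*$ for $b = 0$ (equivalently $v \in M(A^*)$) and iff $M^\mathsf{T}(v) \in (A^*)^\perp$ for $b = 1$ --- precisely the predicates computed by $\mathsf{PMem}'$. Hence the two programs agree on all inputs, and $\io$ security gives $\hyb_7 \approx \hyb_8$.

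The step I expect to be the main obstacle is the linear-algebra bookkeeping in the Hadamard ($b = 1$) branch: one has to track carefully how conjugating the hardcoded subspace by the random full-rank map $M$ interacts with transposition, so that the ``undo $M$, then apply $\mathsf{P}_b$'' computation of $\mathsf{PMem}''$ genuinely reproduces the $M(A^*)$ and $(M(A^*))^\perp$ membership tests of $\mathsf{PMem}'$ rather than an off-by-inverse-transpose variant. Everything else --- the $pl \neq i^*$ case, the size padding needed to invoke $\io$, and the fact that $\mathsf{P}_0,\mathsf{P}_1$ can be freely sampled on the $\hyb_7$ side of the reduction --- is routine.
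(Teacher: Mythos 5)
Your proposal is correct and takes essentially the same route as the paper: establish that $\mathsf{PMem}'$ and $\mathsf{PMem}''$ are functionally equivalent (trivially on the $pl \neq i^*$ branch, and on the $pl = i^*$ branch via the projectiveness-style linear algebra together with correctness of the $\io$ used to produce $\mathsf{P}_0, \mathsf{P}_1$), then invoke $\io$ security for $\mathsf{OPMem}$. Your extra care with the Hadamard-branch transpose is well placed --- the intended transformation is indeed $w = M^\mathsf{T}(v)$ fed into $\mathsf{P}_1$, and the paper's literal program text for $\mathsf{PMem}''$ contains an off-by-inverse-transpose typo there --- but this does not change the argument.
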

\begin{proof}
    Observe that by correctness of the $\io$ used to sample $\mathsf{P}_1, \mathsf{P}_2$, the programs $\mathsf{PMem}', \mathsf{PMem}''$. Thus, the result follows by the security of $\io$ used to sample $\mathsf{OPMem}$.
\end{proof}

\begin{lemma}
    $\hyb_8 \approx \hyb_9$.
\end{lemma}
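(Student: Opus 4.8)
The plan is to show $\hyb_8 \approx \hyb_9$ by observing that the only change between these two hybrids is the program obfuscated into $\mathsf{OPReRand}$ --- namely $\mathsf{PReRand}'$ in $\hyb_8$ versus $\mathsf{PReRand}''$ in $\hyb_9$ (in particular $\mathsf{OPMem}$, obfuscated from $\mathsf{PMem}''$, is unchanged) --- and then arguing that these two programs compute the exact same function, with overwhelming probability over the sampling of $pk, ptk, sk \samp \rpke.\mathsf{Setup}(1^\lambda)$. The claim then follows from the security of $\io$, after padding the two circuits to a common size. This parallels the argument of \cref{lem:unclonstrongrerand}.

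First I would invoke strong rerandomization correctness with public testing of $\rpke$ (\cref{defn:rerandpketest}): except with negligible probability over $\rpke.\mathsf{Setup}(1^\lambda)$, there is no pair $(id, s)$ with $\rpke.\mathsf{Test}(ptk, id) = 1$ and $\rpke.\mathsf{Dec}(sk, id) \neq \rpke.\mathsf{Dec}(sk, \rpke.\mathsf{ReRand}(pk, id; s))$. Condition on this good event. Fix any input $(id, s)$. If $\rpke.\mathsf{Test}(ptk, id) \neq 1$, both programs output $\perp$ and agree. Otherwise, writing $id' = \rpke.\mathsf{ReRand}(pk, id; s)$, $pl = \rpke.\mathsf{Dec}(sk, id)$ and $pl' = \rpke.\mathsf{Dec}(sk, id')$, the good event forces $pl = pl'$, so the branch test ``$pl' = i^*$'' used by $\mathsf{PReRand}'$ for $T_2$ coincides with the test ``$pl = i^*$'' used by $\mathsf{PReRand}''$, and both programs take the same branch.

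Next I would check that the returned values agree in each branch. Outside the $i^*$-branch, both programs set $T_1 = \mathsf{SampleFullRank}(1^\lambda; F(K, id))$, $T_2 = \mathsf{SampleFullRank}(1^\lambda; F(K, id'))$ and output $T_2 \cdot T_1^{-1}$. Inside the $i^*$-branch, $\mathsf{PReRand}'$ outputs $\big(M_{K', \mathsf{CT}_{i^*}}(id') \cdot T^*\big)\big(M_{K', \mathsf{CT}_{i^*}}(id) \cdot T^*\big)^{-1} = M_{K', \mathsf{CT}_{i^*}}(id') \cdot \big(M_{K', \mathsf{CT}_{i^*}}(id)\big)^{-1}$, since the $T^*$ factor cancels; this is exactly what $\mathsf{PReRand}''$ outputs. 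Hence the two circuits are functionally identical on the good event, and conditioned on it the security of $\io$ gives $\hyb_8 \approx \hyb_9$; the negligible-probability bad event changes the distinguishing advantage only negligibly.

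The main obstacle --- really a minor technical point --- is that the functional equivalence of $\mathsf{PReRand}'$ and $\mathsf{PReRand}''$ holds only conditioned on a good event over the $\rpke$ setup rather than unconditionally, so $\io$ security must be applied with respect to that conditional distribution and the negligible-probability bad event must be absorbed separately; this is handled exactly as in \cref{lem:unclonstrongrerand}.
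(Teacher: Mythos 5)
Your proof is correct and follows essentially the same route as the paper: both argue that strong rerandomization correctness with public testing forces $pl = pl'$ whenever the test passes, so the branch choices align, the $T^*$ factor cancels in $T_2 \cdot T_1^{-1}$, and $\mathsf{PReRand}'$ and $\mathsf{PReRand}''$ are functionally equivalent, after which $\io$ security closes the argument. Your extra care about conditioning on the overwhelming-probability good event over $\rpke.\mathsf{Setup}$ is a welcome refinement that the paper's $\io$ security definition (which tolerates a negligible failure probability of equivalence) already accommodates.
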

\begin{proof}
    This follows similarly to \cref{lem:unclonstrongrerand}: Note that due to strong rerandomization correctness of $\rpke$, there does not exist $id$ and $s$ such that $\rpke.\mathsf{Test}(tpk, id) = 1$ but $\rpke.\mathsf{Dec}(sk, id) \neq \rpke.\mathsf{Dec}(sk, \rpke.\mathsf{ReRand}(pk, id; s))$. Thus, in $\mathsf{PReRand}'$, we have that $pl$ and $pl'$ always have the same value. Thus, the tests $pl =^? i^*$ and $pl' =^? i^*$ will always go to the same case, and as a result the factor $T^*$ will be cancelled out when $T_2\cdot T_1^{-1}$ is computed. Hence, $\mathsf{PReRand}'$ and $\mathsf{PReRand}''$ have exactly the same functionality, and the result follows by security of $\io$.
\end{proof}

\begin{lemma}
    $\Pr[\hyb_9 = 1] \leq \negl(\lambda).$
\end{lemma}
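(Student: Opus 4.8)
The plan is to mirror \cref{lem:unclonlasth} and reduce to the $1\to2$ unclonability of subspace states (\cref{thm:onetotwo}). Assuming for contradiction that $\Pr[\hyb_9 = 1]$ is non-negligible, I would build an efficient reduction $\mathcal{R}$ playing the game of \cref{thm:onetotwo}: $\mathcal{R}$ receives $\ket{A^*}, \io(A^*), \io((A^*)^\perp)$ from its challenger and uses them to simulate $\hyb_9$ towards $\adve$, sampling every other secret itself ($K$, $K'$, the $\rpke$ keys $pk, ptk, sk$ — so in particular $\mathcal{R}$ knows $sk$ — and the simulated $\mathsf{NIZK}$ CRS and trapdoor). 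The observation that makes this possible is that in $\hyb_9$ every quantity depending on $A^*$ does so only through the hardcoded obfuscated programs $\mathsf{P}_0 = \io(A^*)$, $\mathsf{P}_1 = \io((A^*)^\perp)$ inside $\mathsf{PMem}''$ and through the state $\ket{A^*}$ that the minting subroutine hands out as the $i^*$-th banknote; in particular $M_{K',\mathsf{CT}_{i^*}}$, $\mathsf{PMem}''$ and $\mathsf{PReRand}''$ no longer mention $T^*$, so $\mathcal{R}$ needs neither $T^*$ nor an explicit description of $A^*$, and the simulation is efficient.

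The heart of the reduction is as follows. I would condition on the event $\hyb_9 = 1$, which by assumption occurs with non-negligible probability. Then all $k+1$ forged banknotes pass verification and $i^*$ occurs at least twice in $\mathsf{INDICES}$, so there are two of them, with adversarially chosen serial numbers $ct_j, ct_\ell$, satisfying $\rpke.\mathsf{Dec}(sk, ct_j) = \rpke.\mathsf{Dec}(sk, ct_\ell) = i^*$. By the projectiveness analysis of \cref{sec:projective} applied to $\mathsf{PMem}''$ (which, on inputs whose serial number decrypts to $i^*$, implements the membership-checking programs for $M_{K',\mathsf{CT}_{i^*}}(id)\cdot A^*$ and its complement) together with the Gentle-Measurement rewinding \cite{aarlemma}, each of these two banknotes is left by $\mathsf{Verify}$ in the pure state $\ket{M_{K',\mathsf{CT}_{i^*}}(ct')\cdot A^*}$, where $ct'$ is the rerandomized serial number produced inside $\mathsf{Verify}$ (which also decrypts to $i^*$ by strong rerandomization correctness of $\rpke$, and on which $\mathsf{PReRand}''$ outputs exactly the correction map $M_{K',\mathsf{CT}_{i^*}}(ct')\cdot M_{K',\mathsf{CT}_{i^*}}(ct)^{-1}$). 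Since $\mathcal{R}$ sampled $K'$, it knows both maps $M_{K',\mathsf{CT}_{i^*}}(ct'_j)$ and $M_{K',\mathsf{CT}_{i^*}}(ct'_\ell)$, and applies their inverses coherently (they are efficiently computable bijections in both directions), turning the two registers into two copies of $\ket{A^*}$, which it submits to its challenger — contradicting \cref{thm:onetotwo}.

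The step I expect to be the main obstacle is justifying that, conditioned on $\hyb_9 = 1$, the two $i^*$-rooted registers are left in exactly these pure states by $\mathsf{Verify}$. This requires combining two ingredients carefully: (i) projectiveness, so that the two dual-basis tests in $\mathsf{Verify}$ act as rank-one projectors onto $\ket{M_{K',\mathsf{CT}_{i^*}}(id)\cdot A^*}$ and do not disturb a banknote that passes; and (ii) strong rerandomization correctness of $\rpke$, so that the internal rerandomization step of $\mathsf{Verify}$ maps $ct$ to a $ct'$ that still decrypts to $i^*$ and the coherent map $\mathsf{OPReRand}$ applies is precisely the one carrying $\ket{M_{K',\mathsf{CT}_{i^*}}(ct)\cdot A^*}$ to $\ket{M_{K',\mathsf{CT}_{i^*}}(ct')\cdot A^*}$ with no garbage left behind. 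Once this is in place, combining the present lemma with the hybrid chain $\hyb_0 \approx \cdots \approx \hyb_9$ and the $1/k$ loss at $\hyb_5 \to \hyb_6$ gives $\Pr[\cfgame{\adve}(1^\lambda) = 1] = \negl(\lambda)$ for every QPT $\adve$.
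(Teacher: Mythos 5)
Your proposal is correct and follows essentially the same route as the paper: simulate $\hyb_9$ given a single copy of $\ket{A^*}$ together with $\io(A^*),\io((A^*)^\perp)$, use projectiveness to argue the two forged banknotes whose serial numbers decrypt to $i^*$ are left in the states $\ket{M_{K',\mathsf{CT}_{i^*}}(\cdot)\cdot A^*}$, and invert the known maps $M_{K',\mathsf{CT}_{i^*}}$ to obtain two copies of $\ket{A^*}$, contradicting \cref{thm:onetotwo}. Your added care about the internal rerandomization step of $\mathsf{Verify}$ (via strong rerandomization correctness of $\rpke$) is a detail the paper leaves implicit but is consistent with its argument.
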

\begin{proof}
The result follows by the $1\to 2$ unclonability of subspace states.
    The proof follows similarly to \cref{lem:unclonlasth}: we can show that from the forged banknotes of the adversary, we can convert two of them to $\ket{A^*}$, whereas the experiment can be simulated given a single copy $\ket{A^*}$. This means cloning $\ket{A^*}$, which is a contradiction by \cref{thm:onetotwo}.
\end{proof}

\section{Quantum Voting Schemes}\label{sec:qvs}
\subsection{Definitions}
In this section, we recall security notions for voting schemes. We will be working in the \emph{universal verifiability} setting: The users will cast their votes by posting them on a public bulletin board, and anyone can verify the validity of a cast vote using the public-key. Further, we will be working in the common (uniformly) random string model and will allow voting tokens to be quantum states (however, voting is done by simply posting a classical string on the bulletin board).

The first requirement we have is \emph{correctness}. In the quantum voting tokens setting, we extend the correctness notion to also apply against malicious voting tokens and verification keys. A voter will be assured that their vote will be correctly verified by the public.
\begin{definition}[Correctness against Malicious Keys and Tokens]    For any efficient algorithm $\mathcal{B}$ and candidate choice $c$, \begin{equation*}
        \Pr[b = 0 \vee b' = 1 : \begin{array}{c}
             \reg, vk \samp \mathcal{B}(vk, mk, tk) \\
             b \samp \qvs.\mathsf{VerifyVotingToken}(vk, \reg) \\
             vo \samp \qvs.\mathsf{Vote}(\reg, c) \\
             b' \samp \qvs.\mathsf{VerifyVote}(vk, vo, c)
        \end{array}] = 1.
    \end{equation*}
\end{definition}

We also introduce the notion of privacy against authorities, where privacy of votes holds against everyone, including the voting authority that creates the voting tokens. For our privacy requirement, we require that a vote from a token that the adversary has seen is indistinguishable from a vote from a fresh voting token. By a simple hybrid argument, this easily implies indistinguishability of votes of any two voters. In fact, we require privacy even for tokens created by the adversary.

\begin{definition}[Privacy]\label{defn:privqvs}
     Consider the following game between a challenger and a stateful adversary $\adve$.

    \paragraph{\underline{$\qvsuntrac{\adve}(1^\lambda)$}}
    \begin{enumerate}
    \item Sample $crs \samp \zo^{q(\lambda)}$.
    \item Submit $crs$ to the adversary $\adve$.
    \item Adversary $\adve$ outputs keys $vk, mk$, a register $\reg_0$ and candidate choice $c$.
    \item Run $\qvs.\mathsf{VerifyVotingToken}(vk, \reg_0)$. If it fails, output $0$ and terminate.
    \item Sample $\reg_1 \samp \qvs.\mathsf{GenVotingToken}(mk)$.
    \item Run $\qvs.\mathsf{VerifyVotingToken}(vk, \reg_1)$. If it fails, output $0$ and terminate.
    \item Sample $b \samp \zo$.
    \item Sample $vo \samp \qvs.\mathsf{Vote}(\reg_b, c)$.
    \item Submit $vo$ to the adversary $\adve$.
    \item Adversary $\adve$ outputs a bit $b'$.
    \item Output $1$ if and only if $b' = b$.
    \end{enumerate}

    We say that the quantum voting scheme $\qvs$ satisfies \emph{privacy} if for any QPT adversary $\adve$, we have 
    \begin{equation*}
        \Pr[\qvsuntrac{\adve}(1^\lambda) = 1] \leq \frac{1}{2} + \negl(\lambda).
    \end{equation*}
\end{definition}
We remark that our scheme will actually satisfy the stronger notion where the voting token (or a fresh token) is returned to the adversary (without being cast, as a quantum state), and the adversary still will not be able to win the privacy game. This trivially implies the definition above.

The final essential property we require is uniqueness: adversaries should not be able to more than once.
\begin{definition}[Uniqueness]\label{defn:qvsuniq}
     Consider the following game between a challenger and an adversary $\adve$.
    \paragraph{$\underline{\qvsuniq{\adve}(1^\lambda)}$}
\begin{enumerate}
    \item Sample $vk, mk \samp \qvs.\mathsf{Setup}(1^\lambda)$ and submit $vk$ to $\adve$.
    \item \underline{\textbf{Token Query Phase:}} For multiple rounds, $\adve$ queries for a voting token and the challenger executes $\regi{token} \samp \qvs.\mathsf{GenVotingToken}(mk)$ and submits $\regi{token}$ to the adversary. Let $k$ be the number of queries made by the adversary.
    \item $\adve$ outputs $k+1$ cast-votes $(vo_i = (c_i, s_i, \ell_i))_{i \in [k + 1]}$ where $c_i$ denotes the candidate choice and $\ell_i$ denotes the tag\footnote{A voter assigned serial number that is part of the cast-vote.}.
    \item The challenger tests $\qvs.\mathsf{Verify}(vk, (c_i, s_i, \ell_i)) = 1$ for $i \in [k + 1]$. It also checks that all tags $\ell_i$ are unique. It outputs $1$ if all the tests pass; otherwise, it outputs $0$.
\end{enumerate}
We say that the quantum voting scheme $\qvs$ satisfies \emph{uniqueness security} if for any QPT adversary $\adve$,

\begin{equation*}
    \Pr[\cfgame{\adve}(1^\lambda) = 1] \leq \negl(\lambda).
\end{equation*}
\end{definition}
\subsection{Construction}
In this section, we give our universally verifiable quantum voting scheme construction, in the common random string model. In our scheme, verifying a voting token automatically rerandomizes it. Our scheme will be similar to our untraceable quantum money scheme (\cref{sec:consuntrac}), with the main difference being that it will contain $2\cdot\lambda$ money states in voting tokens.

We assume the existence of the same primitives as in \cref{sec:consuntrac}.

\paragraph{\underline{$\qvs.\mathsf{Setup}(crs)$}}
\begin{enumerate}
\item Parse $crs_{nizk} || pk = crs$ with $|crs_{nizk}| = q_1(\lambda)$ and $|pk| = q_2(\lambda).$
    \item Sample $K \samp F.\mathsf{Setup}(1^\lambda)$.
    \item Sample random tape $r_{iO}$ and compute $\mathsf{OPMem} = \io(\mathsf{PMem}; r_{iO})$ where $\mathsf{PMem}$ is the following program.
    
\begin{mdframed}
        {\bf $\underline{\mathsf{PMem}_{K}(id, (v_i)_{i \in [2\cdot \lambda]}, b)}$}
        
        {\bf Hardcoded: $K$}
        \begin{enumerate}[label=\arabic*.]
            \item Parse $r_1 || \dots || r_{2\lambda} = F(K, id)$ into $2\cdot \lambda$ equal size strings.
            \item For $i \in [2\cdot \lambda]$,
            \begin{enumerate}[label=\arabic*.]
                \item Set $T_i = \mathsf{SampleFullRank}(1^\lambda; r_i)$.
                \item Compute $w_i = T^{-1}(v)$ if $b_i = 0$; otherwise, compute $w_i = T^\mathsf{T}(v_i)$.
                \item Check if $w_i \in \cansbsp$ if $b_i = 0$ and if $w_i \in \cansbsp^\perp$ if $b_i = 1$.
            \end{enumerate}
            \item If all the checks pass, output $1$. Otherwise, output $0$.
        \end{enumerate}
    \end{mdframed}

\item Sample $ptk \samp \rpke.\mathsf{SimulateTestKey}(pk)$.
\item Sample $\mathsf{OPReRand} \samp \io(\mathsf{PReRand})$ where $\mathsf{PReRand}$ is the following program\footnote{We note that if we make the stronger assumption of $\rpke$ with \emph{all-accepting  simulatable testing keys}, we can actually remove the $\rpke.\mathsf{Test}$ line from the construction.}.

    \begin{mdframed}
        {\bf $\underline{\mathsf{PReRand}_{K}(id, s)}$}
        
        {\bf Hardcoded: $K, pk, ptk$}
        \begin{enumerate}[label=\arabic*.]
        \item Check if $\rpke.\mathsf{Test}(ptk, id) = 1$. If not, output $\perp$ and terminate.
            
        \item $id' = \rpke.\mathsf{ReRand}(pk, id; s)$.
        \item Parse $r_1 || \dots || r_{2\lambda} = F(K, id)$ into $2\cdot \lambda$ equal size strings.
        \item Parse $r'_1 || \dots || r'_{2\lambda} = F(K, id')$ into $2\cdot \lambda$ equal size strings.
            \item For $i \in [2\cdot \lambda]$,
        
\begin{enumerate}[label=\arabic*.]
                \item Set $T_{i,1} = \mathsf{SampleFullRank}(1^\lambda; r_i)$.
                    \item Set $T_{i,2} = \mathsf{SampleFullRank}(1^\lambda; r'_i)$.
                \item Set $T_{i, \mathsf{out}} = T_{i,2}\cdot T_{i,1}^{-1}$.
            \end{enumerate}   
            
            \item Output $(T_{i, \mathsf{out}})_{i \in [2\cdot\lambda]}$
        \end{enumerate}
\end{mdframed}

\item Sample the proof $\pi \samp \mathsf{NIZK}.\mathsf{Prove}(crs_{nizk}, x = \mathsf{OPMem}, w = (K, r_{iO}))$.

\item Set $vk = (\mathsf{OPMem}, \mathsf{OPReRand}, \pi)$.
\item Set $mk = (K, pk)$.
\item Output $vk, mk$.
\end{enumerate}

\paragraph{\underline{$\qvs.\mathsf{GenVotingToken}(mk)$}}
\begin{enumerate}
    \item Parse $(K, pk) = mk$.
    \item Sample $ct \samp \rpke.\mathsf{Enc}(pk, 0^\lambda)$.
        \item Parse $r_1 || \dots || r_{2\lambda} = F(K, ct)$ into $2\cdot \lambda$ equal size strings.
         \item For $i \in [2\cdot \lambda]$,
\begin{enumerate}[label=\arabic*.]
\item $T_i = \mathsf{SampleFullRank}(1^\lambda; r_i)$.
\item Set $\ket{\$_i} = \sum_{v \in \cansbsp} \ket{T_i(v)}$.
\end{enumerate}

    \item Output $ct, \bigotimes_{i \in [2\cdot \lambda]}\ket{\$_i}$.
\end{enumerate}

\paragraph{\underline{$\qvs.\mathsf{VerifyVotingToken}(crs, vk, \reg)$}}
\begin{enumerate}
    \item Parse $crs_{nizk} || pk = crs$ with $|pk| = q_2(\lambda).$
    \item Parse $(\mathsf{OPMem}, \mathsf{OPReRand}, \pi) = vk$.
    \item Check if $\mathsf{NIZK}.\mathsf{Verify}(crs_{nizk}, x = \mathsf{OPMem}, \pi) = 1$. If not, output $0$ and terminate.
    \item Parse $(id, \reg') = \reg$.
    \item Run $\mathsf{OPMem}$ coherently on $id, \reg', 0^{2\cdot \lambda}$. Check if the output is $1$, and then rewind.
    \item Apply QFT to $\reg'$.
    \item Run $\mathsf{OPMem}$ coherently on $id, \reg', 1^{2\cdot \lambda}$. Check if the output is $1$, and then rewind.
    \item If any verification failed, output $0$ and terminate.
    \item Sample $s \samp \zo^{p_3(\lambda)}$.
    \item $(T_1, \dots, T_{2\cdot\lambda}) = \mathsf{OPReRand}(id, s)$.
    \item Compute $id' = \rpke.\mathsf{ReRand}(pk, id; s).$
    \item For $i \in [2\cdot \lambda]$, apply the linear map $T_i: \F_2^\lambda \to \F_2^\lambda$ coherently to the $i$-th part of the register $\reg'$.
    \item Run $\mathsf{OPMem}$ coherently on $id, \reg', 0^{2\cdot \lambda}$. Check if the output is $1$, and then rewind.
    \item Apply QFT to $\reg'$.
    \item Run $\mathsf{OPMem}$ coherently on $id', \reg', 1^{2\cdot \lambda}$. Check if the output is $1$, and then rewind.
    \item If any verification failed, output $0$. Otherwise, output $1$.
\end{enumerate}

\paragraph{\underline{$\qvs.\mathsf{Vote}(\reg, c)$}}
\begin{enumerate}
    \item Sample $r \in \zo^\lambda$.
    \item Parse $(id, (\reg'_i)_{i \in [2\cdot\lambda]}) = \reg$.
    \item For $i \in [2\cdot\lambda]$: If $(c || r)_i = 0$, measure $\reg'_i$ in the computational basis to obtain $v_i$. If $(c || r)_i = 1$, measure $\reg'_i$ in the Hadamard basis (i.e. QFT-and-measure) to obtain $v_i$.
    \item Output $c, (id, (v_i)_{i \in [2\cdot\lambda]}), r$.
\end{enumerate}

\paragraph{\underline{$\qvs.\mathsf{VerifyCastVote}(\reg, c, vo)$}}
\begin{enumerate}
    \item Parse $(\mathsf{OPMem}, \mathsf{OPReRand}, \pi) = vk$.
    \item Parse $(id, (v_i)_{i \in [2\cdot\lambda]}, r) = vo$.
    \item Check if $\mathsf{OPMem}(id, (v_i)_{i \in [2\cdot\lambda]}, c || r)$.
\end{enumerate}

\begin{theorem}
    $\qvs$ satisfies correctness against malicious tokens and keys.
\end{theorem}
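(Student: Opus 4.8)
The plan is to derive correctness against malicious tokens and keys from two ingredients: computational soundness of $\mathsf{NIZK}$, which pins down the maliciously chosen $\mathsf{OPMem}$, and the projectiveness of the token-verification procedure (established for the underlying subspace-state tests in \cref{sec:projective}), which forces the post-verification register into a canonical form. Fix an efficient $\mathcal{B}$ and a candidate $c$, let the challenger sample an honest $crs$, and condition on $\qvs.\mathsf{VerifyVotingToken}$ accepting (otherwise $b=0$ and there is nothing to prove). Since $\pi$ inside the adversarial $vk$ verified against the honest $crs_{nizk}$, computational soundness of $\mathsf{NIZK}$ gives $\mathsf{OPMem} = \io(\mathsf{PMem}_{K^*}; r^*)$ for some $K^*, r^*$ except with negligible probability; by perfect correctness of $\io$, $\mathsf{OPMem}$ therefore computes exactly the function $\mathsf{PMem}_{K^*}$. (If one insists on the literal bound $=1$, one instantiates $\mathsf{NIZK}$ with a statistically sound argument system, which is also compatible with all the other proofs.)

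Next I would analyze the last block of $\qvs.\mathsf{VerifyVotingToken}$, namely the two membership rounds run after rerandomization on the honestly recomputed $id' = \rpke.\mathsf{ReRand}(pk,id;s)$. For $i\in[2\lambda]$ write $T'_i = \mathsf{SampleFullRank}(1^\lambda; r'_i)$ with $r'_1\|\dots\|r'_{2\lambda} = F(K^*,id')$ and $A'_i = T'_i(\cansbsp)$; since $T'_i$ is full rank, $A'_i$ is a genuine $\lambda/2$-dimensional subspace, and by the computation in \cref{sec:projective} we have $(T'_i)^{-1}(v)\in\cansbsp \iff v\in A'_i$ and $(T'_i)^{\mathsf{T}}(v)\in\cansbsp^\perp \iff v\in (A'_i)^\perp$. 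Hence coherently running $\mathsf{OPMem}(id',\cdot,0^{2\lambda})$ and postselecting on $1$ is the projector $\bigotimes_i\Pi_{A'_i}$, and the QFT-conjugated round with selector bits $1^{2\lambda}$ is $\bigotimes_i \bigl(\mathsf{QFT}\,\Pi_{(A'_i)^\perp}\,\mathsf{QFT}\bigr)$; by the Aaronson--Christiano argument these compose, on each coordinate, to the rank-one projector onto $\ket{A'_i}$, and a tensor product of rank-one projectors is rank one. Therefore, conditioned on these rounds accepting, the register is exactly $\bigl(id',\bigotimes_{i\in[2\lambda]}\ket{A'_i}\bigr)$ --- no matter what the possibly malicious $\mathsf{OPReRand}$ did in the intervening rerandomization step, which is precisely why the second membership round is placed after rerandomization.

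It then remains to verify that $\qvs.\mathsf{Vote}$ followed by $\qvs.\mathsf{VerifyCastVote}$ accepts this register with certainty. For $r\samp\zo^\lambda$: if $(c\|r)_i=0$, measuring $\ket{A'_i}$ in the computational basis yields $v_i\in A'_i$; if $(c\|r)_i=1$, since $\mathsf{QFT}\ket{A'_i} = \ket{(A'_i)^\perp}$ for a $\lambda/2$-dimensional subspace, measuring in the Hadamard basis yields $v_i\in(A'_i)^\perp$. The cast vote is $(c,(id',(v_i)_i),r)$, and $\qvs.\mathsf{VerifyCastVote}$ evaluates $\mathsf{OPMem}(id',(v_i)_i,c\|r)=\mathsf{PMem}_{K^*}(id',(v_i)_i,c\|r)$, which recomputes the same $T'_i$ and checks precisely $v_i\in A'_i$ on the coordinates with $(c\|r)_i=0$ and $v_i\in(A'_i)^\perp$ on those with $(c\|r)_i=1$; all checks pass, so $b'=1$. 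The step I expect to be the main obstacle is making this projectiveness claim fully rigorous: one must argue that the coherent evaluation of $\mathsf{OPMem}$ with all-zero (resp. all-one) selector bits really implements the \emph{product} of the per-register subspace-membership measurements (so the A--C composition-into-a-rank-one-projector argument applies coordinatewise) and that the output of the adversarial $\mathsf{OPReRand}$ step stays in the domain of these measurements; a secondary, purely definitional point is the $=1$ versus $1-\negl(\lambda)$ gap noted above.
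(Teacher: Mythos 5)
Your proposal is correct and takes the same route as the paper, whose proof of this theorem is a one-line appeal to the projectiveness argument of \cref{sec:projective}: verification (after rerandomization) projects onto the perfect token $\bigl(id',\bigotimes_i\ket{A'_i}\bigr)$, from which vote casting and cast-vote verification trivially succeed. Your write-up is in fact more careful than the paper's, since it makes explicit the NIZK-soundness step needed to pin down the malicious $\mathsf{OPMem}$ as some $\mathsf{PMem}_{K^*}$ before the projectiveness argument can be applied, and it correctly flags the $=1$ versus $1-\negl(\lambda)$ discrepancy that the paper's definition glosses over.
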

\begin{proof}
    By the same argument as in the projectiveness proof of our quantum money schemes \cref{sec:projective}, we can show that the token verification algorithm of our scheme projects onto  a perfect token. Thus, the result follows.
\end{proof}

\begin{theorem}
    $\qvs$ satisfies privacy.
\end{theorem}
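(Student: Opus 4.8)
The plan is to mirror the untraceability proof of our quantum money scheme (\cref{sec:proofuntrac}) essentially verbatim, replacing the single subspace state by the $2\cdot\lambda$-tuple of subspace states that makes up a voting token. Concretely, I proceed through a sequence of hybrids starting from $\hyb_0 = \qvsuntrac{\adve}(1^\lambda)$. In $\hyb_1$, the challenger inlines the code of $\qvs.\mathsf{VerifyVotingToken}$, and since both $\reg_0$ and $\reg_1$ are verified against the same maliciously chosen $vk = (\mathsf{OPMem}, \mathsf{OPReRand}, \pi)$, it checks the NIZK proof $\pi$ only once (right when $\adve$ outputs $vk$); this is purely syntactic. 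In $\hyb_2$, immediately after this NIZK check, the challenger performs an exhaustive search over all pairs $(K, r)$ to find $(K^*, r^*)$ with $\mathsf{OPMem} = \io(\mathsf{PMem}_{K^*}; r^*)$, outputting $0$ and aborting if none exists; by computational soundness of $\mathsf{NIZK}$ and the definition of the language $L$, such a witness exists except with negligible probability, so $\hyb_1 \approx \hyb_2$. From this point on the challenger (and hence the experiment) is exponential time, which is harmless since the remaining transitions invoke only perfect $\io$ correctness and a statistical property.

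In $\hyb_3$, the challenger replaces every coherent evaluation of $\mathsf{OPMem}$ inside token verification by the direct evaluation of $\mathsf{PMem}_{K^*}$ --- that is, it recomputes the linear maps $(T_{id,i})_{i\in[2\lambda]}$ and $(T_{id',i})_{i\in[2\lambda]}$ from $F(K^*, id)$ and $F(K^*, id')$ and performs the computational/Hadamard membership checks itself, while still obtaining the (possibly malicious) rerandomization maps from $\mathsf{OPReRand}(id, s)$ and applying them to the quantum register. By perfect correctness of $\io$, $\hyb_2 \equiv \hyb_3$. In $\hyb_4$, using that the membership checks against $(T_{id',i})_i$ implement a projection onto $\bigotimes_{i\in[2\lambda]}\ket{T_{id',i}\cdot\cansbsp}$ (this is \cref{sec:projective} applied tensor-factor-wise), the challenger simply discards $\reg_b$ once its token verification passes and, for the challenge, freshly prepares the canonical token $(id'_b, \bigotimes_i\ket{T_{id'_b,i}\cdot\cansbsp})$ with $T_{id'_b,i} = \sampfrm{F(K^*,id'_b)}$ and then runs $\qvs.\mathsf{Vote}(\cdot, c)$ on it; conditioned on verification passing this is literally the same state, so $\hyb_3 \equiv \hyb_4$. (For the stronger variant mentioned in the remark, in $\hyb_4$ the challenger returns this canonical token as a quantum state instead of casting a vote, and the rest of the argument is identical.)

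Finally, I claim $\Pr[\hyb_4 = 1] \le 1/2 + \negl(\lambda)$. In $\hyb_4$ the only quantity carrying information about the bit $b$ is the rerandomized serial number $id'_b$: the challenge vote $vo = (c, (id'_b, (v_i)_i), r)$ is produced by drawing a uniform tag $r$ and measuring the canonical token $\bigotimes_i\ket{T_{id'_b,i}\cdot\cansbsp}$, which is a fixed randomized function of $id'_b$ and of quantities the adversary already holds ($c$ and $K^*$). Since $pk$ is read off the common \emph{uniformly random} string $crs$, it is a truly random public key, so by statistical rerandomization security for truly random public keys (\cref{defn:trurandstatrerand}) both $id'_0 = \rpke.\mathsf{ReRand}(pk, id_0; s)$ (a rerandomization of the adversary's $id_0$) and $id'_1 = \rpke.\mathsf{ReRand}(pk, id_1; s)$ (a rerandomization of the honestly generated $id_1$) are statistically close to a fresh encryption of $0^{p(\lambda)}$, hence $id'_0 \approx_s id'_1$; this bound holds even against the unbounded challenger, so the two challenge cases are statistically indistinguishable and the claim follows, giving $\Pr[\qvsuntrac{\adve}(1^\lambda)=1]\le 1/2+\negl(\lambda)$.

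The main obstacle is the same one as in the untraceability proof: transitioning to an exponential-time challenger that recovers the hidden serial-number-to-token mapping via NIZK soundness and exhaustive search, and then arguing --- through projectiveness --- that whatever the maliciously obfuscated $\mathsf{OPReRand}$ does, a successfully verified token is forced onto the canonical state determined by $K^*$ and the rerandomized serial number, so that all residual leakage reduces to the (statistical) indistinguishability of $id'_0$ and $id'_1$. Beyond that, the only new bookkeeping relative to \cref{sec:proofuntrac} is carrying the $2\cdot\lambda$ tensor factors through each hybrid, which is routine.
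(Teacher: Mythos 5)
Your proposal is correct and takes essentially the same route as the paper, which simply states that privacy follows analogously to the untraceability proof of the quantum money scheme (\cref{sec:proofuntrac}); your hybrids $\hyb_0$--$\hyb_4$ mirror that argument, and you correctly handle the one genuinely new point, namely that the classical cast vote is a randomized function of $id'_b$ (together with $c$ and $K^*$), so statistical indistinguishability of $id'_0$ and $id'_1$ under truly random public keys carries over to the vote distributions.
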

\begin{proof}
    This follows similarly to the untraceability of our quantum money scheme (\cref{sec:proofuntrac}).
\end{proof}
\begin{theorem}
    $\qvs$ satisfies uniqueness security.
\end{theorem}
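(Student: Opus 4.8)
The plan is to follow, almost verbatim, the counterfeiting proof of our untraceable money scheme in \cref{sec:proofuntracunclon}, adapting it in two places: every token now carries $2\cdot\lambda$ subspace states rather than one, and the final reduction is to the \emph{direct product hardness} of subspace states (\cref{thm:dphard}) rather than to $1\to 2$ unclonability (\cref{thm:onetotwo}). Throughout, I take the challenger to sample the common random string honestly (and hand it to $\adve$), since the uniqueness game fixes honest keys. First I would run the opening sequence of hybrids as in $\hyb_0$--$\hyb_5$ of \cref{sec:proofuntracunclon}: (i) replace the $\rpke$-part of the $crs$ by an honestly generated $pk$ together with its $sk$, justified by the pseudorandom public keys property of $\rpke$; (ii) switch to an honestly generated testing key $ptk$, justified by simulatability of testing keys; (iii) simulate the NIZK common random string and the proof $\pi$, justified by zero knowledge (this is needed because later hybrids change $\mathsf{OPMem}$ to an obfuscation of a different program, for which no honest witness exists); (iv) make the $i$-th queried token encrypt the counter value $i$ (instead of $0^\lambda$) in its serial number, justified by CPA security of $\rpke$; and (v) have the challenger decrypt the serial number of each of $\adve$'s $k+1$ cast-votes and abort unless all decrypt to values in $[k]$. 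Step (v) follows exactly as in \cref{lem:zerocopunlearn} from the $0\to1$ unclonability (unlearnability) of subspace states together with strong rerandomization correctness of $\rpke$: a cast-vote whose serial number decrypts outside $[k]$ is rooted at a token never handed to the adversary, so producing it amounts to finding vectors in a freshly sampled subspace (and its complement) with no information about it. With $k=0$ this already yields the claim that a party receiving no token cannot produce a valid cast-vote.

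Next I would isolate the combinatorial event that drives the reduction. Since all $k+1$ tags $\ell_i$ are distinct and, by step (v), each cast-vote's serial number decrypts to some index in $[k]$, by pigeonhole two cast-votes are rooted at a common index $i^*$; as their tags differ, the two $2\cdot\lambda$-bit strings $c\,||\,\ell$ that select the measurement bases differ at some coordinate lying in the $\lambda$-bit tag part. The challenger now guesses $i^*\in[k]$ and this coordinate $p$ up front and aborts if the guess is inconsistent with $\adve$'s output, losing only a $1/(k\lambda)$ factor. I would then carry out, \emph{at coordinate $p$ of token $i^*$}, the analogue of $\hyb_7$--$\hyb_9$ of \cref{sec:proofuntracunclon}: plant a freshly sampled subspace $A^{*}=T^{*}(\cansbsp)$ as the $p$-th subspace of token $i^*$ and rewrite $\mathsf{PMem}$ and $\mathsf{PReRand}$, using a fresh PRF key $K'$ and the map $M_{K',id^{*}}$, so that they handle the $p$-th coordinate of serial numbers decrypting to $i^*$ via $A^{*}$ (indistinguishable by puncturing security of $F$ and $\io$, as in \cref{lem:switchtokprim}); then replace that handling by hardcoded obfuscations $\io(A^{*})$ and $\io((A^{*})^{\perp})$ inside $\mathsf{PMem}$ (same functionality by the projectiveness argument of \cref{sec:projective}, indistinguishable by $\io$); and finally strip the stray factor $T^{*}$ out of $\mathsf{PReRand}$, which is legitimate because strong rerandomization correctness of $\rpke$ forces $\rpke.\mathsf{Dec}(sk,id)=\rpke.\mathsf{Dec}(sk,\rpke.\mathsf{ReRand}(pk,id;s))$ whenever $\rpke.\mathsf{Test}(ptk,id)=1$, so the two occurrences of $T^{*}$ cancel (same functionality, indistinguishable by $\io$, as in \cref{lem:unclonstrongrerand}). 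All other $2\cdot\lambda-1$ coordinates of every token, and all coordinates of every token $\neq i^*$, keep using the PRF-derived maps.

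In the resulting final hybrid the whole experiment can be simulated given only a single copy of $\ket{A^{*}}$ together with $\io(A^{*})$ and $\io((A^{*})^{\perp})$ (the reduction samples $K, K', sk$ and the remaining keys itself). When $\adve$ wins, two of its cast-votes, say $j_1$ and $j_2$, are rooted at token $i^*$ and their selection strings disagree at coordinate $p$; say $(c_{j_1}||\ell_{j_1})_p=0$ and $(c_{j_2}||\ell_{j_2})_p=1$. Passing $\qvs.\mathsf{VerifyCastVote}$ forces, at coordinate $p$: that $v^{(j_1)}_p$ lies in the rerandomized copy of $A^{*}$ attached to the serial number of $j_1$, and that $v^{(j_2)}_p$ lies in the corresponding rerandomized copy of $(A^{*})^{\perp}$ attached to the serial number of $j_2$. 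Since the reduction knows $K'$, it can recompute and undo the rerandomization linear maps on both vectors, obtaining a vector in $A^{*}$ and a vector in $(A^{*})^{\perp}$; this wins the direct product game with probability at least $\Pr[\qvsuniq{\adve}(1^\lambda)=1]/(k\lambda) - \negl(\lambda)$, contradicting \cref{thm:dphard} unless that probability is negligible.

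The step I expect to be the main obstacle is the bookkeeping in the middle paragraph. One must verify that decrypting a serial number that may have been rerandomized many (adversarially chosen) times still pins down the original token — which is exactly where strong rerandomization correctness is invoked, transitively — and that after the rerandomization maps are stripped off, the two vectors extracted from the two cast-votes really do lie in $A^{*}$ and $(A^{*})^{\perp}$ for the \emph{same} subspace (as opposed to two differently-rerandomized images). Once the coordinate-wise planting and the cancellation of $T^{*}$ in $\mathsf{PReRand}$ are set up correctly, as in \cref{sec:proofuntracunclon}, the remaining indistinguishability steps are routine.
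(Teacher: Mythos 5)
Your proposal is correct and follows essentially the same route as the paper, which only sketches this proof as "the unclonability proof of \cref{sec:proofuntracunclon} generalized to multiple subspace states, reduced to direct product hardness (\cref{thm:dphard})"; your version fills in the generalization by guessing the repeated root index $i^*$ and a coordinate $p$ in the tag part where the two base-selection strings differ, planting $A^*$ there, and extracting $v\in A^*$, $w\in (A^*)^\perp$ exactly as intended. The $1/(k\lambda)$ loss and the transitive use of strong rerandomization correctness to pin down the root are consistent with the paper's argument.
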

\begin{proof}
    Through the same proof (generalized to multiple subspace states) as the unclonability proof of our quantum money scheme (\cref{sec:proofuntracunclon}), we can reduce the uniqueness security of our scheme to the direct product hardness of subspace states (\cref{thm:dphard}), since creating $k+1$ valid votes with all different tags will amount to adversary finding $v, w$ with $v \in A$ and $w \in A^\perp$.
\end{proof}

\section{Acknowledgements}
Alper Çakan was supported by the following grants of Vipul Goyal: NSF award 1916939, DARPA SIEVE program, a gift from Ripple, a DoE NETL award, a JP Morgan Faculty Fellowship, a PNC center for financial services innovation award, and a Cylab seed funding award.

\bibliographystyle{alpha}
\bibliography{refs}
\appendix

\section{Additional Definitions}\label{sec:extradef}
\begin{definition}[Counterfeiting Security]\label{defn:bankcf}
    Consider the following game between the challenger and an adversary $\adve$.
    \paragraph{$\underline{\cfgame{\adve}(1^\lambda)}$}
\begin{enumerate}
    \item Sample $vk, mk, tk \samp \bank.\mathsf{Setup}(1^\lambda)$ and submit $vk, tk$ to $\adve$.
    \item \underline{\textbf{Banknote Query Phase:}} For multiple rounds, $\adve$ queries for a banknote by sending a tag $t$. For each query, the challenger executes $\regi{bn} \samp \bank.\mathsf{GenBanknote}(mk, t)$ and submits $\regi{bn}$ to the adversary. Let $k$ be the number of queries made by the adversary.
    \item $\adve$ outputs a $(k+1)$-partite register $(\reg_i)_{i \in [k + 1]}$.
    \item The challenger tests $\bank.\mathsf{Verify}(vk, \reg_i) = 1$ for $i \in [k + 1]$. It outputs $1$ if all the tests pass; otherwise, it outputs $0$.
\end{enumerate}
We say that the quantum money scheme $\bank$ satisfies \emph{counterfeiting security} if for any QPT adversary $\adve$,

\begin{equation*}
    \Pr[\cfgame{\adve}(1^\lambda) = 1] \leq \negl(\lambda).
\end{equation*}
\end{definition}

\subsection{Anonymity}\label{defn:anonold}
In this section, we introduce various anonymity notions for public-key quantum money. Similar to previous work \cite{amr20,bs21}, our anonymity definition either randomly permutes or does not touch the banknote registers, according to a random bit; and the adversary needs to predict which case it is. However, our definition is significantly stronger than previous work: The adversary receives the minting key at the beginning whereas in previous definitions it received it at the final step of the security game. This also means that unlike previous definitions, we do not need to have a banknote query phase: the adversary can mint its own challenge banknotes.
\begin{definition}[Anonymity]
    Consider the following game between a challenger and an adversary $\adve$.

    \paragraph{\underline{$\qmanon{\adve}(1^\lambda)$}}
    \begin{enumerate}
    \item Sample $vk, mk, tk \samp \bank.\mathsf{Setup}(1^\lambda)$.
    \item Submit $\textcolor{blue}{vk, mk}$ to $\adve$.
    \item Adversary $\adve$ outputs a value $k$ and a (possibly entangled) $k$-partite register $(\reg_i)_{i \in [k]}$.
    \item Run $\bank.\mathsf{Verify}(vk, \reg_i)$ for $i \in [k]$. If any of them fails, output $0$ and terminate.
    \item Run $\bank.\mathsf{ReRandomize}(vk, \reg_i)$ for $i \in [k]$.
    \item Sample a permutation $\pi: [k] \to [k]$ and a bit $b$.
    \item If $b = 0$, submit $(\reg_i)_{i \in [k]}$; otherwise, submit $(\reg_{\pi(i)})_{i \in [k]}$ to the adversary  $\adve$.
    \item Adversary $\adve$ outputs a bit $b'$.
    \item Output $1$ if and only if $b' = b$.
    \end{enumerate}

    We say that the quantum money scheme $\bank$ satisfies \emph{anonymity} if for any QPT adversary $\adve$, we have 
    \begin{equation*}
        \Pr[\qmanon{\adve}(1^\lambda) = 1] \leq \frac{1}{2} + \negl(\lambda).
    \end{equation*}
\end{definition}

\section{Omitted Proofs}
\subsection{Proof of \cref{lem:switchtokprim}}\label{sec:switchtokprim}
This lemma follows by a (standard) PRF puncturing argument where we create hybrids for all strings $id$. In each hybrid $j$, we will switch to the new serial number - linear map association for all $id$ satisfying $id < j$. To go from $j$ to $j+1$, we do a small number of subhybrids where we first puncture the PRF key at $j$, replace the associated $T$ with a truly random sample. We also puncture $K'$. This allows us to switch over to the new mapping, and then depuncture $K'$ and $K$.
\subsection{Proof of \cref{lem:zerocopunlearn}}\label{sec:zerocopunlearn}
To prove this, first we do the same argument as in \cref{sec:switchtokprim} to switch to a completely new association between serial numbers and linear maps $T$ for any serial number $id$ that decrypts to a value not in $[k]$. Then, by the same argument as in the rest of the unclonability proof and as in \cref{lem:unclonlasth}, we can show that if the adversary forges any valid banknote whose serial number decrypts to a value $\not\in [k]$, we can extract a subspace state $\ket{A^*}$, with using only $\io(A^*), \io((A^*)^\perp)$. This is a contradiction to the unlearnability (or $0 \to 1$ unclonability) of subspaces.

\section{Remark on Representing the Public Key}\label{appn:remark}
 A remark here is in order. In our RPKE constructions, we will have that the public key is a matrix in $\Z_q^{m \times n}$, not a binary string. However, we can always represent such a matrix entrywise, where each number $\Z_q$ can be publicly and canonically represented as a binary string of length $\ceil{\log_2(q)}$ (any extra binary strings can be rolled over and assigned to unique values in $\Z_q$), and we can efficiently go back and forth between any value in $\Z_q$ and any binary string in $\zo^{\ceil{\log_2(q)}}$. Thus, without loss of generality we can assume that the public key is a binary string.

\end{document}